\newcolumntype{C}{>{{}}c<{{}}}
\definecolor{ddgreen}{rgb}{255,0,0}
\definecolor{damethyst}{rgb}{0.4, 0.2, 0.6}
\newcommand{\Mod}[1]{\ (\mathrm{mod}\ #1)}
\newcommand{\expref}[2]{\texorpdfstring{\hyperref[#2]{#1~\ref{#2}}}{#1~\ref{#2}}}
\tikzset{sArrow/.style={->,>=stealth,thick}}
\tikzset{gArrow/.style={->,>=stealth,thick,gray}}
\tikzset{arrowLabel/.style={auto}}
\tikzset{blocResource/.style={draw,minimum width=4cm,minimum height=1.9cm}}
\tikzset{brnode/.style={minimum width=3.4cm,minimum height=.2cm}}
\tikzset{largeResource/.style={draw,minimum width=3.736cm,minimum height=3.25cm}}
\tikzset{medResource/.style={draw,minimum width=3.736cm,minimum height=1.75cm}}
\tikzset{thinResource/.style={draw,minimum width=1.618*2cm,minimum height=1cm}}
\tikzset{protocol/.style={draw,minimum width=1.545cm,minimum height=2.5cm}}
\tikzset{protocolLong/.style={draw,minimum height=1cm,minimum width=2.8cm}}
\tikzset{pnode/.style={minimum width=1cm,minimum height=.5cm}}
\tikzset{simulator/.style={draw,minimum width=2.8cm,minimum height=1.7cm}}
\DeclareMathAlphabet\mathbfcal{OMS}{cmsy}{b}{n}
\newtheorem{construction}{Construction}
\newtheorem{definition}{Definition}
\newtheorem{theorem}{Theorem}
\newtheorem{corollary}{Corollary}
\newtheorem{lemma}{Lemma}
\newtheorem{claim}{Claim}
\newtheorem*{remark}{Remark}
\newtheorem*{conjecture}{Conjecture}
\newmdtheoremenv[backgroundcolor=gray!10,
                 linewidth=0pt,
                 innerleftmargin=4pt,
                 innerrightmargin=4pt,
                 innertopmargin=-2pt,
                 innerbottommargin=4pt,
            splitbottomskip=4pt]{protocol}[prot]{Protocol}
\newmdtheoremenv[backgroundcolor=gray!10,
                 linewidth=0pt,
                 innerleftmargin=4pt,
                 innerrightmargin=4pt,
                 innertopmargin=-2pt,
                 innerbottommargin=4pt,
            splitbottomskip=4pt]{experiment}[exp]{Experiment}
\DeclarePairedDelimiter{\ip}{\langle}{\rangle}
\newcommand{\N}{\mathbb{N}}
\newcommand{\Z}{\mathbb{Z}}
\renewcommand{\vec}[1]{\mathbf{#1}}
\newcommand{\id}{\mathbb{1}}
\newcommand{\ketbra}[2]{\left|#1\right\rangle\!\!\left\langle #2\right|}
\newcommand{\Tr}{\mathrm{Tr}}
\newcommand{\tr}{\mathrm{tr}}
\newcommand{\proj}[1]{\ensuremath{|#1\rangle \langle #1|}}
\renewcommand{\rho}{\varrho}
\newcommand{\POVM}{\ensuremath{\mathsf{POVM}}\xspace}
\newcommand{\FT}{\ensuremath{\mathsf{FT}}\xspace}
\newcommand{\ot}{\otimes}
\newcommand{\eps}{\varepsilon}
\newcommand{\PKE}{\ensuremath{\mathsf{PKE}}\xspace}
\newcommand{\DualPKE}{\ensuremath{\mathsf{DualPKE}}\xspace}
\newcommand{\DualFHE}{\ensuremath{\mathsf{DualFHE}}\xspace}
\newcommand{\DualFHECD}{\ensuremath{\mathsf{DualFHE}_\mathsf{CD}}\xspace}
\newcommand{\QKD}{\ensuremath{\mathsf{QKD}}\xspace}
\newcommand{\hmin}{H_{\min}}
\newcommand{\negl}{\ensuremath{\operatorname{negl}}\xspace}
\newcommand{\from}{\ensuremath{\leftarrow}}
\newcommand{\bit}{\{0,1\}}
\newcommand{\sk}{\mathsf{sk}\xspace}
\newcommand{\pk}{\mathsf{pk}\xspace}
\newcommand{\vk}{\mathsf{vk}\xspace}
\newcommand{\ct}{\mathsf{CT}\xspace}
\newcommand{\rand}{\raisebox{-1pt}{\ensuremath{\,\xleftarrow{\raisebox{-1pt}{$\scriptscriptstyle\$$}}\,}}}
\DeclareMathOperator{\supp}{supp}
\newcommand{\KeyGen}{\ensuremath{\mathsf{KeyGen}}\xspace}
\newcommand{\Enc}{\ensuremath{\mathsf{Enc}}\xspace}
\newcommand{\Dec}{\ensuremath{\mathsf{Dec}}\xspace}
\newcommand{\Eval}{\ensuremath{\mathsf{Eval}}\xspace}
\newcommand{\Extract}{\ensuremath{\mathsf{Extract}}\xspace}
\newcommand{\Exp}{\ensuremath{\mathsf{Exp}}\xspace}
\newcommand{\Adv}{\ensuremath{\mathsf{Adv}}\xspace}
\newcommand{\aux}{\ensuremath{\mathsf{aux}}\xspace}
\newcommand{\poly}{\operatorname{poly}}
\newcommand{\guess}{\operatorname{guess}}
\newcommand{\algo}{\mathcal}
\newcommand{\Verify}{\ensuremath{\mathsf{Verify}}\xspace}
\newcommand{\Vrfy}{\ensuremath{\mathsf{Vrfy}}\xspace}
\newcommand{\Del}{\ensuremath{\mathsf{Del}}\xspace}
\newcommand{\QPT}{\ensuremath{\mathsf{QPT}}\xspace}
\newcommand{\LWE}{\ensuremath{\mathsf{LWE}}\xspace}
\newcommand{\SIS}{\ensuremath{\mathsf{SIS}}\xspace}
\newcommand{\ISIS}{\ensuremath{\mathsf{ISIS}}\xspace}
\newcommand{\CPTP}{\mathsf{CPTP}}
\newcommand{\FHE}{\ensuremath{\mathsf{FHE}\xspace}}
\newcommand{\HE}{\ensuremath{\mathsf{HE}\xspace}}
\newcommand{\NAND}{\ensuremath{\mathsf{NAND}\xspace}}
\newcommand{\PPT}{\ensuremath{\mathsf{PPT}}\xspace}
\newcommand{\PKECD}{\ensuremath{\mathsf{PKE}_{\mathsf{CD}}}\xspace}
\newcommand{\DualPKECD}{\ensuremath{\mathsf{DualPKE_\mathsf{CD}}}\xspace}
\newcommand{\FHECD}{\ensuremath{\mathsf{FHE}_\mathsf{CD}}\xspace}
\newcommand{\HECD}{\ensuremath{\mathsf{HE}_\mathsf{CD}}\xspace}
\newcommand{\INDCPA}{\ensuremath{\mathsf{IND\mbox{-}CPA}}\xspace}
\newcommand{\INDCPACD}{\ensuremath{\mathsf{IND\mbox{-}CPA\mbox{-}\mathsf{CD}}}\xspace}
\newif\ifnotes\notesfalse
\begin{document}

\title{Quantum Proofs of Deletion for Learning with Errors}

\sloppy

\author{Alexander Poremba\footnote{\href{aporemba@caltech.edu}{aporemba@caltech.edu}}}
\affil{California Institute of Technology}
%\date{December 14, 2020}
\maketitle

\begin{abstract}
Quantum information has the property that measurement is an inherently destructive process.
This feature is most apparent in the principle of complementarity, which states that mutually incompatible observables cannot be measured at the same time. Recent work by Broadbent and Islam (TCC 2020) builds on this aspect of quantum mechanics to realize a cryptographic notion called \emph{certified deletion}.
While this remarkable notion enables a classical verifier to be convinced that a (private-key) quantum ciphertext has been deleted by an untrusted party, it offers no additional layer of functionality. 

In this work, we augment the proof-of-deletion paradigm with fully homomorphic encryption ($\mathsf{FHE}$). 
We construct the first fully homomorphic encryption scheme with certified deletion -- an interactive protocol which enables an untrusted quantum server to compute on encrypted data and, if requested, to simultaneously prove data deletion to a client. Our scheme has the desirable property that verification of a deletion certificate is \emph{public}; meaning anyone can verify that deletion has taken place.
Our main technical ingredient is an interactive protocol by which a quantum prover can convince a classical verifier that a sample from the Learning with Errors ($\mathsf{LWE}$) distribution in the form of a quantum state was deleted.
As an application of our protocol, we construct a \emph{Dual-Regev} public-key encryption scheme with certified deletion, which we then extend towards a (leveled) $\mathsf{FHE}$ scheme of the same type. We introduce the notion of \emph{Gaussian-collapsing} hash functions -- a special case of collapsing hash functions defined by Unruh (Eurocrypt 2016) -- and we prove the security of our schemes under the assumption that the Ajtai hash function satisfies a certain
\emph{strong} Gaussian-collapsing property in the presence of leakage.

Our results enable a form of everlasting cryptography and give rise to new privacy-preserving quantum cloud applications, such as private machine learning on encrypted data with certified data deletion.
\end{abstract}

\newpage
\tableofcontents
\newpage

\section{Introduction}
\label{sec:intro}

Data protection has become a major challenge in the age of cloud computing and artificial intelligence. The European Union, Argentina, and California recently introduced new data privacy regulations which grant individuals the right to request the deletion of their personal data by \emph{media companies} and other \emph{data collectors} -- a legal concept that is commonly referred to as the \emph{right to be forgotten}~\cite{GargGV20}. While new data privacy regulations have been put into practice in several jurisdictions, formalizing data deletion remains a fundamental challenge for cryptography. A key question, in particular, prevails:\ \\
\ \\
\emph{How can we certify that user data stored on a remote cloud server has been deleted?}\ \\
\ \\
Without any further assumptions, the task is clearly impossible to realize in conventional cloud computing. This is due to the fact that there is no way of preventing the data collector from generating and distributing additional copies of the user data. Although it impossible to achieve in general, \emph{proofs-of-secure-erasure}~\cite{cryptoeprint:2010:217,tcc-2011-23513} can achieve a limited notion of data deletion under \emph{bounded memory assumptions}. Recently, Garg, Goldwasser and Vasudevan~\cite{GargGV20} proposed rigorous definitions that attempt to formalize the \emph{right to be forgotten}
from the perspective of classical cryptography. However, a fundamental challenge in the work of Garg et al.~\cite{GargGV20} lies in the fact that the data collector is always assumed to be \emph{honest}, which clearly limits the scope of the formalism.

A recent exciting idea is to use quantum information in the context of data privacy~\cite{Coiteux_Roy_2019,Broadbent_2020}.  Contrary to classical data, it is fundamentally impossible to create copies of an unknown quantum state thanks to the \emph{quantum no-cloning theorem}~\cite{Wootters1982Single}.
Broadbent and Islam~\cite{Broadbent_2020} proposed a quantum encryption scheme which enables a user to certify the deletion of a quantum ciphertext. Unlike classical proofs-of-secure-erasure, this cryptographic notion of certified deletion is achievable unconditionally in a fully malicious adversarial setting~\cite{Broadbent_2020}.
All prior protocols for certified deletion enable a client to delegate data in the form of plaintexts and ciphertexts with no additional layer of functionality. A key question raised by Broadbent and Islam~\cite{Broadbent_2020} is the following:\ \\
\ \\
\emph{Can we enable a remote cloud server to compute on encrypted data, while simultaneously allowing the server to prove data deletion to a client?}\ \\
\ \\
This cryptographic notion can be seen as an extension of fully homomorphic encryption schemes~\cite{Rivest1978,homenc,BrakerskiVaikuntanathan2011} which allow for arbitrary computations over encrypted data. 
Prior work on certified deletion makes use of very specific encryption schemes that seem incompatible with such a functionality; for example, the private-key encryption scheme of Broadbent and Islam~\cite{Broadbent_2020} requires a classical \emph{one-time pad}, whereas the authors in \cite{hiroka2021quantum} use a particular \emph{hybrid encryption} scheme in the context of public-key cryptography.
While
homomorphic encryption enables a wide range of applications including private queries to a search engine and machine learning classification on encrypted data~\cite{eprint-2014-25801}, a fundamental limitation remains: once the protocol is complete, the cloud server is still in possession of the client's encrypted data. This may allow adversaries to break the encryption scheme retrospectively, i.e. long after the execution of the protocol.
This potential threat especially concerns data which is required to remain confidential for many years, such as medical records or government secrets.

\emph{Fully homomorphic encryption with certified deletion} seeks to address this limitation as it allows a quantum cloud server to compute on encrypted data while simultaneously enabling the server to prove data deletion to a client, thus effectively achieving a form of \emph{everlasting security}~\cite{10.1007/978-3-540-70936-7_3,hiroka2021certified}.

\subsection{Main results}

Our contributions are the following.

\paragraph{Quantum superpositions of $\LWE$ samples.} We use Gaussian states to encode samples from the Learning with Errors ($\LWE$) distribution~\cite{Regev05} for the purpose of \emph{certified deletion} while simultaneously preserving their full cryptographic functionality. Because verification of a deletion certificate amounts to checking whether it is a solution to the \emph{(inhomogenous) short integer solution} problem~\cite{DBLP:conf/stoc/Ajtai96}, our encoding results in encryption schemes with certified deletion which are publicly verifiable -- in contrast to prior work based on hybrid encryption and BB84 states~\cite{Broadbent_2020,hiroka2021certified}.
% In some sense, our encoding is the first technique that does not follow hybrid approach
Our technique suggests a generic template for \emph{certified deletion} protocols which can be applied to many other cryptographic primitives based on $\LWE$.

\paragraph{Gaussian-collapsing hash functions.} To analyze the security of our quantum encryption schemes based on Gaussian states, we introduce the notion of \emph{Gaussian-collapsing} hash functions -- a special class of so-called \emph{collapsing} hash functions defined by Unruh~\cite{cryptoeprint:2015/361}. Informally, a hash function $h$ is \emph{Gaussian-collapsing} if it is computationally difficult to distinguish a superposition of Gaussian-weighted pre-images under $h$ from a single (measured) pre-image. We prove that the \emph{Ajtai collision-resistant hash function}~\cite{DBLP:conf/stoc/Ajtai96} is Gaussian-collapsing
assuming the quantum subexponential hardness of decisional \LWE.

\paragraph{Dual-Regev public-key encryption with certified deletion.} Using Gaussian superpositions, we construct a public-key encryption scheme with certified deletion which is based on the \emph{Dual-Regev} scheme introduced by Gentry, Peikert and Vaikuntanathan~\cite{cryptoeprint:2007:432}. We prove the security of our scheme under the assumption that
Ajtai’s hash function satisfies a certain strong Gaussian-collapsing property in the presence of leakage.

\paragraph{(Leveled) fully homomorphic encryption with certified deletion.} We construct the first (leveled) fully homomorphic encryption ($\FHE$) scheme with certified deletion based on our aforementioned \emph{Dual-Regev} encryption scheme with the identical security guarantees. Our $\FHE$ scheme is based on the (classical) \emph{dual homomorphic encryption} scheme used by Mahadev~\cite{mahadev2018classical}, which is a variant of the $\FHE$ scheme by Gentry, Sahai and Waters~\cite{GSW2013}. Our protocol supports the evaluation of polynomial-sized Boolean circuits on encrypted data and, if requested, also enables the server to prove data deletion to a client.

\subsection{Overview}\label{sec:overview}

How can we certify that sensitive information has been deleted by an untrusted party? Quantum information allows us to achieve a cryptographic notion called \emph{certified deletion}~\cite{Coiteux_Roy_2019,Fu_2018,Broadbent_2020}. The main idea behind this concept is the \emph{principle of complementarity}. This feature allows us to encode information in two mutually incompatible bases -- a notion that has no counterpart in a classical world.

Broadbent and Islam~\cite{Broadbent_2020} construct a private-key quantum encryption scheme with certified deletion using a BB84-type protocol that closely resembles the standard quantum key distribution ($\QKD$) protocol~\cite{BB84,Tomamichel2017largelyself}.
The crucial idea behind the scheme is that the information which is necessary to decrypt is encoded in the \emph{computational basis}, whereas \emph{certifying deletion} requires a measurement in the incompatible \emph{Hadamard basis}.
The scheme in~\cite{Broadbent_2020} achieves a rigorous notion of certified deletion security: once the ciphertext is successfully deleted, the plaintext $m$ remains hidden even if the private key is later revealed.

Using a standard \emph{hybrid encryption scheme}, Hiroka, Morimae, Nishimaki and Yamakawa~\cite{hiroka2021quantum} extended the scheme in~\cite{Broadbent_2020} to both public-key and attribute-based encryption with certified deletion via the notion of \emph{receiver non-committing} ($\mathsf{RNC}$) encryption~\cite{10.5555/1756169.1756191,10.1145/237814.238015}. The security proof in\cite{hiroka2021quantum} relies heavily on the fact that the classical public-key encryption is \emph{non-committing}, i.e. it comes with the ability to equivocate ciphertexts to encryptions of arbitrary plaintexts.
As a complementary result, the authors also gave a public-key encryption scheme with certified deletion which is \emph{publicly verifiable} assuming the
existence of one-shot signatures and extractable witness encryption. This property enables anyone to verify a deletion certificate using a publicly available verification key.

All prior protocols for certified deletion enable a client to delegate data in the form of ciphertexts with no additional layer of functionality. In this work, we answer a question raised by Broadbent and Islam~\cite{Broadbent_2020} affirmatively, namely whether it is possible to construct a \emph{homomorphic} quantum encryption scheme with certified deletion. This cryptographic notion is remarkably powerful as it would allow a quantum cloud server to compute on encrypted data, while simultaneously enabling the server to prove data deletion to a client. So far, however, none of the encryption schemes with certified deletion can enable such a functionality. Worse yet, the hybrid encryption paradigm appears insufficient in order to construct homomorphic encryption with certified deletion (see \expref{Section}{sec:related}), and thus an entirely new approach is necessary. 

Our techniques deviate from the hybrid encryption paradigm of previous works~\cite{Broadbent_2020,hiroka2021certified}
and allow us to construct the \emph{first} homomorphic quantum encryption scheme with certified deletion which has the desirable feature of being publicly verifiable. The main technical
ingredient of our scheme is an interactive protocol by which a quantum prover can convince a classical verifier that a sample from the \emph{Learning with Errors}~\cite{Regev05} distribution in the form of a quantum state was deleted.

\paragraph{Quantum superpositions of $\LWE$ samples.}
The \emph{Learning with Errors} $(\LWE)$ problem was introduced by Regev~\cite{Regev05} and has given rise to numerous cryptographic applications, including public-key encryption~\cite{cryptoeprint:2007:432}, homomorphic encryption~\cite{cryptoeprint:2011/344,GSW2013} and attribute-based encryption~\cite{cryptoeprint:2014/356}.

The problem is described as follows. Let $n,m \in \N$ and $q\geq 2$ be a prime modulus, and $\alpha \in (0,1)$ be a noise ratio parameter. In its decisional formulation, the $\LWE_{n,q,\alpha q}^m$ problem asks to distinguish between
a sample $(\vec A \rand \Z_q^{n \times m},\vec s \cdot \vec A+ \vec e \Mod{q})$ from the $\LWE$ distribution and a uniformly random sample $(\vec A \rand \Z_q^{n \times m},\vec u \rand \Z_q^m)$. Here, $\vec s \rand  \Z_q^n$ is a uniformly random row vector and $\vec e \sim D_{\Z^m,\alpha q}$ is a row vector which is sampled according to the discrete Gaussian distribution $D_{\Z^m,\alpha q}$. The latter distribution assigns probability proportional to $\rho_r(\vec x)=e^{-\pi \|\vec x\|^2/r^2}$ to every lattice point $\vec x \in \Z^m$, for $r = \alpha q >0$.

%Our work assumes the hardness of $\LWE_{n,q,\alpha q}^m$ with subexponential parameter $1/\alpha$, and thus relies on the worst-case hardness of approximating short vector problems (e.g. the shortest independent vectors problem, $\mathsf{SIVP}$) in lattices to within a subexponential factor in $n$~\cite{Regev05,10.1145/3055399.3055489}.

How can we certify that a (possibly malicious) party has deleted a sample from the $\LWE$ distribution? 
The main technical insight of our work is that one can encode $\LWE$ samples as \emph{quantum superpositions} for the purpose of certified deletion while simultaneously preserving their full cryptographic functionality. 
Superpositions of $\LWE$ samples have been considered by Grilo, Kerenidis and Zijlstra~\cite{Grilo_2019} in the context of quantum learning theory and by Alagic, Jeffery, Ozols and Poremba~\cite{cryptography4010010}, as well as by Chen, Liu and Zhandry~\cite{chen2021quantum}, in the context of quantum cryptanalysis of $\LWE$-based cryptosystems.\\
Let us now describe the main idea behind our constructions. Consider the Gaussian superposition,\footnote{
A standard tail bound shows that the discrete Gaussian $D_{\Z^m,\sigma}$ is essentially only supported on $\{\vec x \in \Z^m : \|\vec x\|_\infty \leq \sigma \sqrt{m}\}$. We choose $\sigma \ll q/\sqrt{m}$ and consider the domain $\Z^m \cap (-\frac{q}{2},\frac{q}{2}]^m$ instead. For simplicity, we also ignore that $\ket{\hat\psi}$ is not normalized.}
    $$
    \ket{\hat\psi}_{XY} = \sum_{\vec x \in \Z_q^m} \rho_\sigma(\vec x) \ket{\vec x}_X \otimes \ket{\vec A \cdot \vec x \Mod{q}}_Y.
    $$
Here, we let $\sigma = 1/\alpha$ and use $\Z_q^m$ to represent $\Z^m \cap (-\frac{q}{2},\frac{q}{2}]^m$.
By measuring system $Y$ in the computational basis with outcome $\vec y \in \Z_q^n$, the state $\ket{\hat\psi}$ \emph{collapses} into the quantum superposition
\begin{align}\label{eq:dual-state-intro}
    \ket{\hat\psi_{\vec y}} = \sum_{\substack{\vec x \in \Z_q^m:\\ \vec A \vec x= \vec y \Mod{q}}} \rho_\sigma(\vec x) \ket{\vec x}.
\end{align}
Note that the state $\ket{\hat\psi_{\vec y}}$ is now a superposition of \emph{short} Gaussian-weighted solutions $\vec x \in \Z_q^m$ subject to the constraint $\vec A \cdot \vec x = \vec y \Mod{q}$. In other words, by measuring the above state in the computational basis, we obtain a solution to the so-called \emph{(inhomogenous) short integer solution} $(\ISIS)$ problem specified by $(\vec A,\vec y)$ (see \expref{Definition}{def:ISIS}). The quantum state $\ket{\hat\psi_{\vec y}}$ in Eq.~\eqref{eq:dual-state-intro} has the following \emph{duality property}; namely, by applying the (inverse) $q$-ary quantum Fourier transform we obtain the state
\begin{align}\label{eq:primal-state-intro}
\ket{\psi_{\vec y}} =\sum_{\vec s \in \Z_q^n} \sum_{\vec e \in \Z_q^m} \rho_{\frac{q}{\sigma}}(\vec e) \, \omega_q^{-\ip{\vec s,\vec y}} \ket{\vec s \vec A + \vec e \Mod{q}},
\end{align}
where $\omega_q = e^{2 \pi i/q}$ is the primitive $q$-th root of unity. We make this statement more precise in \expref{Lemma}{lem:switching}. Throughout this work, we will refer to $\ket{\psi_{\vec y}}$ and $\ket{\hat\psi_{\vec y}}$ as the \emph{primal} and \emph{dual} Gaussian state, respectively. 
Notice that the resulting state $\ket{\psi_{\vec y}}$ is now a quantum superposition of samples from the $\LWE$ distribution. This relationship was first observed in the work of Stehlé et al.~\cite{cryptoeprint:2009/285} who gave quantum reduction from $\SIS$ to $\LWE$ based on Regev's reduction~\cite{Regev05}, and was later implicitly used by Roberts~\cite{Roberts19} and Kitagawa et al.~\cite{kitagawa2021secure} to construct quantum money and secure software leasing schemes.

Our quantum encryption schemes with certified deletion exploit the fact that a measurement of $\ket{\psi_{\vec y}}$ in the \emph{Fourier basis} yields a short solution to the $\ISIS$ problem specified by $(\vec A,\vec y)$, whereas ciphertext information which is necessary to decrypt is encoded using $\LWE$ samples in the \emph{computational basis}.

\paragraph{Dual-Regev public-key encryption with certified deletion.}
 
The key ingredient of our homomorphic encryption scheme with certified deletion is the \emph{Dual-Regev} public-key encryption scheme introduced by Gentry, Peikert and Vaikuntanathan~\cite{cryptoeprint:2007:432}. 
%Unlike Regev's original $\PKE$ scheme in~\cite{Regev05}, the Dual-Regev $\PKE$ scheme has the property that the ciphertext takes the form of a regular sample from the $\LWE$ distribution together with an additive shift $b \cdot  \lfloor\frac{q}{2} \rfloor$ that depends on the plaintext $b \in \bit$. 
Using Gaussian states, we can encode Dual-Regev ciphertexts for the purpose of certified deletion while simultaneously preserving their full cryptographic functionality. 
Our scheme Dual-Regev scheme with certified deletion consists of the following efficient algorithms:
\begin{itemize}
\item To generate a pair of keys $(\sk,\pk)$, sample a random matrix $\vec A \in \Z_q^{n \times (m+1)}$ together with a particular short trapdoor vector $\vec t \in \Z^{m+1}$ such that $\vec A \cdot \vec t = \vec 0 \Mod{q}$, and let $\pk = \vec A$ and $\sk = \vec t$.

\item To encrypt $b \in \bit$ using the public key $\pk=\vec A$, generate the following for a random $\vec y \in \Z_q^n$:
$$
\vk \leftarrow (\vec A,\vec y), \quad\quad
\ket{\ct} \leftarrow \sum_{\vec s \in \Z_q^n} \sum_{\vec e \in \Z_q^{m+1}} \rho_{\frac{q}{\sigma}}(\vec e) \, \omega_q^{-\ip{\vec s,\vec y}} \ket{\vec s \vec A + \vec e +b \cdot (0,\dots,0, \lfloor\frac{q}{2} \rfloor)},
$$
where $\vk$ is a public verification key and $\ket{\ct}$ is the quantum ciphertext for $\sigma = 1/\alpha$.

\item To decrypt a ciphertext $\ket{\ct}$ using the secret key $\sk$, measure in the computational basis to obtain an outcome $\vec c \in \Z_q^{m+1}$, and output $0$, if $\vec c^T \cdot \sk\in \Z_q$
is closer to $0$ than to $\lfloor\frac{q}{2}\rfloor$,
and output $1$, otherwise.
\end{itemize}

To delete the ciphertext $\ket{\ct}$, we simply perform measurement in the Fourier basis. In \expref{Corollary}{cor:switching}, we show that the Fourier transform of the ciphertext $\ket{\ct}$ results in the \emph{dual} quantum state
\begin{align}\label{eq:dual-with-phase}
\ket{\widehat{\ct}}=\sum_{\substack{\vec x \in \Z_q^{m+1}:\\ \vec A \vec x = \vec y \Mod{q}}}\rho_{\sigma}(\vec x) \, \omega_q^{\ip{\vec x,b \cdot (0,\dots,0,  \lfloor\frac{q}{2} \rfloor)}} \,\ket{\vec x}.
\end{align}
Notice that a Fourier basis measurement of $\ket{\ct}$ necessarily erases all information about the plaintext $b \in \bit$ and results in a \emph{short} vector $\pi \in \Z_q^{m+1}$ such that $\vec A \cdot \pi = \vec y \Mod{q}$. In other words, to verify a deletion certificate we can simply check whether it is a solution to the $\ISIS$ problem specified by the verification key $\vk=(\vec A,\vec y)$.
Our scheme has the desirable property that verification of a certificate $\pi$ is public; meaning anyone in possession of $(\vec A,\vec y)$ can verify that $\ket{\ct}$ has been successfully deleted. Moreover, due to the tight connection between worst-case lattice problems and the average-case $\ISIS$ problem~\cite{DBLP:journals/siamcomp/MicciancioR07,cryptoeprint:2007:432}, it is computationally difficult to produce a valid deletion certificate from $(\vec A,\vec y)$ alone.

To formalize security, we use the notion of \emph{certified deletion security} (i.e. $\INDCPACD$ security) \cite{Broadbent_2020,hiroka2021certified} which roughly states that, once deletion of the ciphertext is successful, the plaintext remains hidden even if the secret key is later revealed (see \expref{Definition}{def:INDCPACD}).
We prove the security of our schemes under the assumption that the Ajtai \emph{collision-resistant} hash function $h_{\vec A}(\vec x) = \vec A \cdot \vec x \Mod{q}$ satisfies a certain strong \emph{collapsing property} in the presence of leakage.

\begin{figure}[!]
\centering
\begin{tikzpicture}[> = latex, scale = 1]

\draw (0,0) rectangle (12,7);

\node at (2.5,5.7) {$ \displaystyle\sum_{\substack{\vec x \in \Z_q^m\\ \vec A \vec x = \vec y \Mod{q}}}\rho_{\sigma}(\vec x) \,\ket{\vec x} $};

\node at (9.5,6) {$\ket{\vec x_0}$, \,\, $\vec x_0 \sim D_{\Lambda_q^{\vec y}(\vec A),\frac{\sigma}{\sqrt{2}}}$};

\node at (3,1) {$\displaystyle \sum_{\vec s \in \Z_q^n} \sum_{\vec e \in \Z_q^m} \rho_{\frac{q}{\sigma}}(\vec e) \, \omega_q^{-\ip{\vec s,\vec y}} \ket{\vec s \vec A + \vec e}$};

\node at (6.3,6) {$\approx_c$};
\node at (6.4,1.1) {$\approx_c$};

\node at (9.3,1) {$\displaystyle \sum_{\vec u \in \Z_q^m} \omega_q^{-\ip{\vec u,\vec x_0}}\ket{\vec u}$};

\draw[<->] (2,2) -- (2,4.6);
\draw[<->] (9.5,2) -- (9.5,4.6);

\node at (1.5,3.3) {$\FT_q$};
\node at (10,3.3) {$\FT_q$};

\node at (3,3.35) {(\expref{Lem.}{lem:switching})};

\node at (6.3,5.4) {(\expref{Thm.}{thm:GaussCollapse})};

\node at (6.4,1.85) {(\expref{Thm.}{thm:pseudorandom-SLWE})};

\end{tikzpicture}
\caption{Technical overview of the main quantum states and their properties used throughout this work. The computational indistinguishability property holds under the (subexponential) quantum hardness of the (decisional) $\LWE$ assumption (\expref{Definition}{def:decisional-lwe}). Here, $\Lambda_q^{\vec y}(\vec A) = \{ \vec x \in \Z^m  :  \vec A \cdot \vec x = \vec y \Mod{q}\}$ denotes a particular coset of the $q$-ary lattice $\Lambda_q^\bot(\vec A) = \{\vec x \in \Z^m: \, \vec A \cdot\vec x = \vec 0 \Mod{q}\}$ defined in \expref{Section}{sec:lattices}.}\label{fig:diagram}
\end{figure}

\paragraph{Gaussian-collapsing hash functions.}

Unruh~\cite{cryptoeprint:2015/361} introduced the notion of collapsing hash functions in his seminal work on computationally binding quantum commitments. Informally, a hash function $h$ is called \emph{collapsing} if it is computationally difficult to distinguish between a superposition of pre-images, i.e. $\sum_{\vec x: \, h(\vec x)=\vec y} \alpha_{\vec x} \ket{\vec x}$, and a single measured pre-image $\ket{\vec x_0}$ such that $h(\vec x_0) = \vec y$.
Motivated by the properties of the dual Gaussian state in Eq.~\eqref{eq:dual-state-intro}, we consider a special class of hash functions which are \emph{collapsing} with respect to Gaussian superpositions. We say that a hash function $h$ is $\sigma$-\emph{Gaussian-collapsing} (formally defined in \expref{Definition}{def:gaussian-collapsing}), for some $\sigma >0$, if the following states are computationally indistinguishable:
    $$
    \sum_{\vec x:\,\, h(\vec x) = \vec y} \rho_\sigma(\vec x) \ket{\vec x} \quad \approx_c \quad \ket{\vec x_0}, \,\,\, \textrm{s.t. } \,\,h(\vec x_0) = \vec y.\,\,
    $$
Here, $\vec x_0$ is the result of a computational basis measurement of the the Gaussian superposition (on the left).
Notice that any collapsing hash function $h$ is necessarily also \emph{Gaussian-collapsing}, since a superposition of Gaussian-weighted vectors constitutes a special class of inputs to $h$. Liu and Zhandry~\cite{cryptoeprint:2019/262} implicitly showed that the \emph{Ajtai hash function} $h_{\vec A}(\vec x) = \vec A \cdot \vec x \Mod{q}$ is collapsing -- and thus \emph{Gaussian-collapsing} -- via the notion of \emph{lossy functions} and (decisional) $\LWE$. In \expref{Theorem}{thm:GaussCollapse}, we give a simple and direct proof of the Gaussian-collapsing property assuming (decisional) $\LWE$, which might be of independent interest. 

The fact Ajtai's hash function is Gaussian-collapsing has several implications for the security of our schemes. Because our Dual-Regev ciphertext corresponds to the Fourier transform of the state in Eq.~\eqref{eq:dual-with-phase}, the Gaussian-collapsing property immediately implies the semantic (i.e., $\INDCPA$) security under decisional $\LWE$ (see \expref{Theorem}{thm:pseudorandom-SLWE}).
We refer to~\expref{Figure}{fig:diagram} for an overview of our Gaussian states and their properties.

To prove the stronger notion of $\INDCPACD$ security of our Dual-Regev scheme with certified deletion, we have to show that, once deletion has taken place, the plaintext remains hidden even if the secret key (i.e., a short trapdoor vector $\vec t$ in the kernel of $\vec A$) is later revealed. 
In other words, it is sufficient to show that Ajtai's hash function satisfies a particular \emph{strong Gaussian-collapsing property} in the presence of leakage; namely, once an adversary $\algo A$ produces a valid short certificate $\pi$ with the property that $\vec A\cdot \pi = \vec y \Mod{q}$, then $\algo A$ cannot tell whether the input at the beginning of the experiment corresponded to a Gaussian superposition of pre-images or a single (measured) pre-image, even if $\algo A$ later receives a short trapdoor vector $\vec t$ in the kernel of $\vec A$.
Here, it is crucial that $\algo A$ receives the trapdoor vector $\vec t$ only \emph{after} $\algo A$ provides a valid pre-image witness $\pi$, otherwise $\algo A$ could trivially distinguish the two states by applying the Fourier transform and using the trapdoor $\vec t$ to distinguish between a superposition of $\LWE$ samples and a uniform superposition.

Unfortunately, we currently do not know how to prove the \emph{strong} Gaussian-collapsing property of the Ajtai hash function from standard assumptions (such as $\LWE$ or $\ISIS$). The problem emerges when we attempt to give a reduction between the $\INDCPACD$ security of our Dual-Regev public-key encryption
scheme with certified deletion and the $\LWE$ (or $\ISIS$) problem. In order to simulate the $\INDCPACD$ game successfully, we have to eventually forward a short trapdoor vector $\vec t \in \Z^{m+1}$ (i.e. the secret key) to the adversary once deletion has taken place. Notice, however, that the reduction has no way of obtaining a short trapdoor vector $\vec t$ such that $\vec A \cdot \vec t = \vec 0 \Mod{q}$ as it is trying to break the underlying $\LWE$ (or $\ISIS$) problem with respect to $\vec A$ in the first place (!) Recently, Hiroka, Morimae, Nishimaki and Yamakawa~\cite{hiroka2021certified} managed to overcome similar technical difficulties using the notion of \emph{receiver non-committing} ($\mathsf{RNC}$) encryption~\cite{10.5555/1756169.1756191,10.1145/237814.238015} in the context of \emph{hybrid encryption} in order to produce a \emph{fake} secret key. In our case, we cannot rely on similar techniques involving $\mathsf{RNC}$ encryption as it seems difficult to reconcile with homomorphic encryption, which is the main focus of this work.
Instead, we choose to formalize the strong Gaussian-collapsing property of the Ajtai hash function as a simple and falsifiable conjecture in \expref{Conjecture}{conj:SGC}.
We prove the following result in \expref{Theorem}{thm:Dual-Regev-PKE-CD} (assuming that \expref{Conjecture}{conj:SGC} holds):\\
\ \\ 
\textbf{Theorem} (informal): \emph{The Dual-Regev $\PKE$ scheme with certified deletion (see \expref{Construction}{cons:dual-regev-cd}) is $\INDCPACD$-secure under the strong Gaussian-collapsing assumption in \expref{Conjecture}{conj:SGC}.}\\
\ \\
To see why \expref{Conjecture}{conj:SGC} is plausible, consider the following natural attack. Given as input either
a Gaussian superposition of pre-images or a single (measured) pre-image, we perform the quantum Fourier transform, reversibly shift the outcome by a fresh $\LWE$ sample\footnote{To \emph{smudge} the Gaussian error of the initial superposition, we can choose an error from a discrete Gaussian distribution which has a significantly larger standard deviation.} and store the result in an auxiliary register. If the input corresponds to a superposition, we obtain a separate $\LWE$ sample which is \emph{re-randomized}, whereas if the input is a single (measured) pre-image, the outcome remains random. 
Hence, if the aforementioned procedure succeeded without disturbing the initial quantum state, we could potentially provide a valid certificate $\pi$ and also distinguish the auxiliary system with access to the trapdoor. However, by shifting the state by another $\LWE$ sample, we have necessarily entangled the two systems in a way that prevents us from finding a valid certificate via a Fourier basis measurement. We make this fact more precise in \expref{Section}{sec:uncertainty}, where we prove a general \emph{uncertainty relation for Fourier basis projections} (\expref{Theorem}{thm:uncertainty}) that rules out a large class of attacks, including the \emph{shift-by-$\LWE$-sample} attack described above.

Next, we extend our Dual-Regev scheme towards a (leveled) $\FHE$ scheme with certified deletion.

\paragraph{Dual-Regev fully homomorphic encryption with certified deletion.} Our (leveled) $\FHE$ scheme with certified deletion is based on the (classical) Dual-Regev leveled $\FHE$ scheme used by Mahadev \cite{mahadev2018classical} -- a variant of the scheme due to Gentry, Sahai and Waters~\cite{GSW2013}. 
Let $n,m \in \N$, let $q\geq 2$ be a prime modulus, and let $\alpha \in (0,1)$ be the noise ratio with $\sigma = 1/\alpha$. Let $N = (n+1) \lceil\log q\rceil$ and let $\vec G \in \Z_q^{(m+1) \times N}$ denote the \emph{gadget matrix} (defined in \expref{Section}{sec:FHE_construction}) designed to convert a binary representation of a vector back to its $\Z_q$ representation. The scheme consists of the following efficient algorithms:
\begin{itemize}
\item To generate a pair of keys $(\sk,\pk)$, sample a random matrix $\vec A \in \Z_q^{(m+1) \times n}$ together with a particular short trapdoor vector $\vec t \in \Z^{m+1}$ such that $\vec t\cdot\vec A = \vec 0 \Mod{q}$, and let $\pk = \vec A$ and $\sk = \vec t$.

\item To encrypt a bit $x\in \bit$ using the public key $\vec A \in \Z_q^{(m+1) \times n}$, generate the following pair consisting of a verification key and ciphertext for a random $\vec Y \in \Z_q^{n \times N}$ with columns $\vec y_1,\dots,\vec y_N \in \Z_q^{n}$:
$$
\vk \leftarrow (\vec A,\vec Y), \quad\,\,
\ket{\ct} \leftarrow \sum_{\vec S \in \Z_q^{n \times N}} \sum_{\vec E \in \Z_q^{(m+1)\times N}} \rho_{q/\sigma}(\vec E) \, \omega_q^{-\Tr[\vec S^T \vec Y]} \ket{\vec A\cdot \vec S + \vec E + x \cdot \vec G},
$$
where $\vec G \in \Z_q^{(m+1)\times N}$ denotes the \emph{gadget matrix} and where $\sigma = 1/\alpha$.

%\item To apply a $\NAND$ gate on ciphertexts $\ct_0$ and $\ct_1$, output the ciphertext $\vec G - \ct_0 \cdot \vec G^{-1}(\ct_1) \Mod{q}$.

\item To decrypt a quantum ciphertext $\ket{\ct}$ using the secret key $\sk$, measure in the computational basis to obtain an outcome $\vec C \in \Z_q^{(m+1)\times N}$ and compute $c = \sk^T \cdot \vec c_N \in \Z_q$, where $\vec c_N \in \Z_q^{m+1}$ is the $N$-th column of $\vec C$, and then output $0$, if $c$
is closer to $0$ than to $\lfloor\frac{q}{2}\rfloor$,
and output $1$, otherwise.
\end{itemize}
We remark that deletion and verification take place as in our Dual-Regev scheme with certified deletion.

%Let us now describe how to perform homomorphic operations on the encrypted data. 
Our $\FHE$ scheme supports the evaluation of polynomial-sized Boolean circuits consisting entirely of $\NAND$ gates, which are universal for classical computation. 
Inspired by the classical homomorphic $\NAND$ operation of the Dual-Regev scheme~\cite{GSW2013,mahadev2018classical}, we define an analogous quantum operation $U_\NAND$ in \expref{Definition}{def:homomorphic-NAND-gate} which allows us to apply a $\NAND$ gate directly onto Gaussian states. When applying homomorphic operations, the new ciphertext maintains the form of an $\LWE$ sample with respect to the same public key $\pk$, albeit for a new $\LWE$ secret and a new (non-necessarily Gaussian) noise term of bounded magnitude.
Notice, however, that the resulting ciphertext is now a highly entangled state since the unitary operation $U_\NAND$ induces entanglement between the $\LWE$ secrets and Gaussian error terms of the superposition.
This raises the following question: How can a server perform homomorphic computations and, if requested, afterwards prove data deletion to a client? In some sense, applying a single homomorphic $\NAND$ gates breaks the structure of the Gaussian states in a way that prevents us from obtaining a valid deletion certificate via a Fourier basis measurement.
Our solution to the problem involves a single additional round of interaction between the quantum server and the client in order to \emph{certify deletion}.

After performing a Boolean circuit $C$ via a sequence of $U_\NAND$ gates starting from the ciphertext $\ket{\ct} = \ket{\ct_1} \otimes \dots \otimes \ket{\ct_\ell}$ in system $C_{\mathsf{in}}$ corresponding to an encryption of $x = (x_1,\dots,x_\ell) \in \bit^\ell$, the server simply sends the quantum system $C_{\mathsf{out}}$ containing an encryption of $C(x)$ to the client. Then, using the secret key $\sk$ (i.e., a trapdoor for the public matrix $\pk$), it is possible for the client to \emph{extract} the outcome $C(x)$ from the system $C_{\mathsf{out}}$ with overwhelming probability without significantly damaging the state. We show that it is possible to rewind the procedure in a way that results in a state which is negligibly close to the original state in system $C_{\mathsf{out}}$. At this step of the protocol, the client has learned the outcome of the homomorphic application of the circuit $C$ while the server is still in possession of a large number of auxiliary systems (denoted by $C_\aux$) which mark intermediate applications of the gate $U_\NAND$. 
We remark that this is where the standard $\FHE$ protocol ends.
In order to enable \emph{certified deletion}, the client must now return the system $C_{\mathsf{out}}$ to the server. Having access to all three systems $C_{\mathsf{in}}C_{\mathsf{aux}} C_{\mathsf{out}}$, the server is then able to undo the sequence of homomorphic $\NAND$ gates in order to return to the original product state in system $C_{\mathsf{in}}$ (up to negligible trace distance). Since the ciphertext in the server's possession is now approximately a simple product of Gaussian states, the server can perform a Fourier basis measurement of systems $C_{\mathsf{in}}$, as required. Once the protcol is complete, it is therefore possible for the client to know $C(x)$ and to be convinced that data deletion has taken place. We prove the following in \expref{Theorem}{thm:FHE-CD-security}.\\ 
\\ 
\textbf{Theorem} (informal): \emph{Our Dual-Regev (leveled) $\FHE$ scheme with certified deletion (\expref{Construction}{cons:FHE-cd}) is $\INDCPACD$-secure under the strong Gaussian-collapsing assumption in \expref{Conjecture}{conj:SGC}.}

\paragraph{Open problems.} Our results leave open many interesting future research directions. For example, is it possible to prove \expref{Conjecture}{conj:SGC} -- and thus the $\INDCPACD$ security of our constructions -- from the hardness of $\LWE$ or $\ISIS$?  Another interesting direction is the following. Since the verification of our proofs of deletion only requires classical computational capabilities, this leaves open the striking possibility that all communication that is required for fully homomorphic encryption with certified deletion can be dequantized entirely, similar to work of Mahadev~\cite{mahadev2018classical} on delegating quantum computations, as well as recent work on classically-instructed parallel remote state preparation by Gheorghiu, Metger and Poremba~\cite{gheorghiu2022quantum}.

\subsection{Applications}

\paragraph{Data retention and the right to be forgotten.}

The European Union, Argentina, and California recently introduced new data privacy regulations -- often referred to as the \emph{right to be forgotten}~\cite{GargGV20} -- which grant individuals the right to request the deletion of their personal data by media companies. However, formalizing data deletion still remains a fundamental challenge for cryptography.
Our fully homomorphic encryption scheme with certified deletion achieves a rigorous notion of \emph{long-term data privacy}: it enables a remote quantum cloud server to compute on encrypted data and -- once it is deleted and publicly verified -- the client's data remain safeguarded against a future leak that reveals the secret key.

\paragraph{Private machine learning on encrypted data.} Machine learning algorithms are used for wide-ranging classification tasks, such as medical predictions, spam detection and face recognition. While
homomorphic encryption enables a form of privacy-preserving machine learning ~\cite{eprint-2014-25801}, a fundamental limitation remains: once the protocol is complete, the cloud server is still in possession of the client's encrypted data. This threat especially concerns data which is required to remain confidential for many years. Our results remedy this situation by enabling private machine learning on encrypted data with certified data deletion.

\paragraph{Everlasting cryptography.} 
Assuming that the server has not broken the computational assumption before data deletion has taken place, our results could potentially transform a long-term $\LWE$ assumption~\cite{Regev05} into a temporary one, and thus effectively achieve a form of \emph{everlasting security}~\cite{10.1007/978-3-540-70936-7_3,hiroka2021certified}.

\subsection{Related work}\label{sec:related}

The first work to formalize a notion resembling \emph{certified deletion} is due to
Unruh \cite{Unruh2013} who proposed a quantum timed-release
encryption scheme that is \emph{revocable}. The protocol allows a user to \emph{return} the ciphertext of a quantum timed-release encryption scheme, thereby losing all access to the data. Unruh's security proof exploits the \emph{monogamy of entanglement} in order to guarantee that the quantum revocation process necessarily erases all information about the plaintext. 
Subsequently, Coladangelo, Majenz and Poremba~\cite{coladangelo2020quantum} adapted this property to $\emph{revocable}$ programs in the context of \emph{secure software leasing}, a weaker notion of \emph{quantum copy-protection} which was proposed by Ananth and La Placa~\cite{ananth2020secure}.

Fu and Miller~\cite{Fu_2018} gave the first quantum protocol that proves deletion of a single bit using classical interaction alone. Subsequently, Coiteux-Roy and Wolf~\cite{Coiteux_Roy_2019} proposed a $\QKD$-like conjugate coding protocol that enables certified deletion of a classical plaintext, albeit without a complete security proof.

Independently of~\cite{Coiteux_Roy_2019}, Broadbent and Islam~\cite{Broadbent_2020} construct a private-key quantum encryption scheme with a rigorous definition of certified deletion using a BB84-type protocol that closely resembles the standard quantum key distribution protocol~\cite{BB84,Tomamichel2017largelyself}.
There, the ciphertext (without the optional quantum error correction part) consists of random BB84 states $\ket{x^\theta} = H^{\theta_1} \ket{x_1} \otimes \dots \otimes H^{\theta_n} \ket{x_n}$ together with a one-time pad encryption of the form $f(x_{| \theta_i=0}) \oplus m \oplus u$, where $u$ is a random string (i.e. a one-time pad key), $f$ is a two-universal hash function and $x_{| \theta_i=0}$ is the substring of $x$ to which no Hadamard gate is applied. The main idea behind the scheme is that the information which is necessary to decrypt is encoded in the \emph{computational basis}, whereas \emph{certifying deletion} requires a \emph{Hadamard basis} measurement. Therefore, if the verification of a deletion certificate is successful, $x_{| \theta_i=0}$ must have high entropy, and thus $f(x_{| \theta_i=0})$ is statistically close to uniform (i.e. $f$ serves as an extractor).
The private-key quantum encryption scheme of Broadbent and Islam~\cite{Broadbent_2020} achieves the notion of \emph{certified deletion security}: once the ciphertext is successfully deleted, the plaintext $m$ remains hidden even if the private key $(\theta,f,u)$ is later revealed.

Using a standard \emph{hybrid encryption scheme}, Hiroka, Morimae, Nishimaki and Yamakawa~\cite{hiroka2021quantum} extended the scheme in~\cite{Broadbent_2020} to both public-key and attribute-based encryption with certified deletion via the notion of \emph{receiver non-committing} ($\mathsf{RNC}$) encryption~\cite{10.5555/1756169.1756191,10.1145/237814.238015}; for example, to obtain a public-key encryption scheme with certified deletion, one simply outputs a quantum ciphertext of the \cite{Broadbent_2020} scheme together with a classical (non-committing) public-key encryption of its private key.
Given access to the $\mathsf{RNC}$ secret key, it is therefore possible to decrypt the quantum ciphertext. Crucially, the hybrid encryption scheme also inherits the certified deletion property of the~\cite{Broadbent_2020} scheme; namely, once deletion has taken place, the plaintext remains hidden even if the $\mathsf{RNC}$ secret key is later revealed. The security proof in~\cite{hiroka2021quantum} relies heavily on the fact that the classical public-key encryption is \emph{non-committing}, i.e. it comes with the ability to equivocate ciphertexts to encryptions of arbitrary plaintexts. To obtain a homomorphic encryption scheme with certified deletion, one would have to instantiate the hybrid encryption scheme with a classical (non-committing) homomorphic encryption scheme which is not known to exist. While generic transformations for non-committing encryption have been studied~\cite{cryptoeprint:2018/974}, they tend to be incompatible with basic homomorphic computations.
Moreover, it is unclear whether the candidate hybrid approach for homomorphic encryption is even secure: for all we know, a malicious adversary could use homomorphic evaluation to decouple the quantum part from the classical part of the ciphertext in order to obtain a classical encryption of the plaintext, thereby violating certified deletion security.

Hiroka, Morimae, Nishimaki and Yamakawa~\cite{hiroka2021certified} studied \emph{certified everlasting zero-knowledge proofs} for $\mathsf{QMA}$ via the notion of \emph{everlasting security} which was first formalized by M\"{u}ller-Quade and Unruh~\cite{10.1007/978-3-540-70936-7_3}. 
A recent paper by Coladangelo, Liu, Liu and Zhandry~\cite{coladangelo2021hidden} introduces \emph{subspace coset states} in the context of unclonable crytography in a way that loosely resembles our use of primal and dual Gaussian states.

In a subsequent and independent work, Khurana and Bartusek~\cite{https://doi.org/10.48550/arxiv.2207.01754} consider generic transformations for encryption schemes with certified deletion. Similar to Broadbent and Islam~\cite{Broadbent_2020}, they use a hybrid approach via BB84 states to construct (privately verifiable) public-key, attribute-based and homomorphic encryption schemes with \emph{certified everlasting security}: once deletion is successful, the security notion guarantees that the plaintext remains hidden even if the adversary is henceforth computationally unbounded.

\paragraph{Previous version of this paper.} We remark that a prior version of this paper was posted to \emph{arXiv}\footnote{\url{https://arxiv.org/abs/2203.01610v1}} and presented as unpublished work at QIP 2022. This paper contains substantial new improvements to the previous constructions: compared to the prior version of the paper which presented security proofs in the \emph{semi-honest} adversarial model, this work features security proofs in a fully malicious setting under the plausible \emph{strong Gaussian-collapsing property} of the Ajtai hash function, and also offers revised Dual-Regev encryption schemes with certified deletion that enable \emph{public verification} of deletion certificates.

\paragraph{Acknowledgments.} 
The author would like to thank Urmila Mahadev for pointing out an attack on an earlier version of our protocols, and for the idea behind the proof of \expref{Theorem}{thm:GaussCollapse}.
The author would also like to thank Thomas Vidick, Prabhanjan Ananth and Vinod Vaikuntanathan for many insightful discussions.  The author is also grateful for useful comments made by anonymous reviewers. The author
is partially supported by AFOSR YIP award number FA9550-16-1-0495 and the Institute for Quantum Information and Matter (an NSF Physics Frontiers Center; NSF Grant PHY-1733907), and
is also grateful for the hospitality of the Simons Institute for the Theory of Computing, where part
of this research was carried out.

\section{Preliminaries}
\label{sec:prelim}

\paragraph{Notation.}

For $x \in \mathbb{C}^n$, we denote the $\ell^2$ norm by $\| \vec x \|$. For $\vec x \in \mathbb{C}^n$, we occasionally also use the max norm $\| \vec x \|_\infty = \max_i |x_i|$.
We denote the expectation value of a random variable $X$ which takes values in $\algo X$ by $\mathbb{E}[X] = \sum_{x \in \algo X} x \Pr[ X = x]$.  The notation $x \rand \algo X$ denotes sampling of $x$ uniformly at random from $\algo X$, whereas $x \sim D$ denotes sampling of an element $x$ according to the distribution $D$. 
We call a non-negative real-valued function $\mu : \mathbb{N} \rightarrow \mathbb{R}^+$ negligible if $\mu(n) = o(1/p(n))$, for every polynomial $p(n)$. Given an integer $m\in \N$ and modulus $q \geq 2$, we represent elements in $\Z_q^m$ as integers $\Z^m \cap (-\frac{q}{2},\frac{q}{2}]^m$.

\subsection{Quantum computation}

For a comprehensive overview of quantum computation, we refer to the introductory texts \cite{NielsenChuang11,Wilde13}. We denote a finite-dimensional complex Hilbert space by $\mathcal{H}$, and we use subscripts to distinguish between different systems (or registers). For example, we let $\mathcal{H}_{A}$ be the Hilbert space corresponding to a system $A$. 
The tensor product of two Hilbert spaces $\algo H_A$ and $\algo H_B$ is another Hilbert space denoted by $\algo H_{AB} = \algo H_A \otimes \algo H_B$.
The Euclidean norm of a vector $\ket{\psi} \in \algo H$ over the finite-dimensional complex Hilbert space $\mathcal{H}$ is denoted as $\| \psi \| = \sqrt{\braket{\psi|\psi}}$. Let $L(\algo H)$
denote the set of linear operators over $\algo H$. A quantum system over the $2$-dimensional Hilbert space $\mathcal{H} = \mathbb{C}^2$ is called a \emph{qubit}. For $n \in \mathbb{N}$, we refer to quantum registers over the Hilbert space $\mathcal{H} = \big(\mathbb{C}^2\big)^{\otimes n}$ as $n$-qubit states. More generally, we associate \emph{qudits} of dimension $d \geq 2$ with a $d$-dimensional Hilbert space $\mathcal{H} = \mathbb{C}^d$. We use the word \emph{quantum state} to refer to both pure states (unit vectors $\ket{\psi} \in \mathcal{H}$) and density matrices $\rho \in \mathcal{D}(\mathcal{H)}$, where we use the notation $\mathcal{D}(\mathcal{H)}$ to refer to the space of positive semidefinite matrices of unit trace acting on $\algo H$. 
For simplicity, we frequently consider \emph{subnormalized states}, i.e. states in the space of positive semidefinite operators over $\algo H$ with trace norm not exceeding $1$, denoted by $\algo S_{\leq}(\algo H)$.
The \emph{trace distance} of two density matrices $\rho,\sigma \in \mathcal{D}(\mathcal{H)}$ is given by
$$
\| \rho - \sigma \|_\tr = \frac{1}{2} \Tr\left[ \sqrt{ (\rho - \sigma)^\dag (\rho - \sigma)}\right].
$$
We frequently use the compact notation $\rho \approx_\eps \sigma$ which means that there exists some $\eps \in [0,1]$ such that $\| \rho - \sigma \|_\tr \leq \eps$.
The \emph{purified distance} is defined as $P(\rho,\sigma) = \sqrt{1 - F(\rho,\sigma)^2}$, where $F(\rho,\sigma) = \| \sqrt{\rho}\sqrt{\sigma}\|_1$ denotes the fidelity.
A \textit{classical-quantum} (CQ) state $\rho \in \mathcal{D}(\mathcal{H}_{XB})$ depends on a classical variable in system $X$ which is correlated with a quantum system $B$. If the classical system $X$ is distributed according to a probability distribution $P_{\algo X}$ over the set $\algo X$, then all possible joint states $\rho_{XB}$ can be expressed as
\begin{align*}
\rho_{XB} = \sum_{x \in \algo X} P_{\algo X}(x) \proj{x}_X \otimes \rho_B^x.
\end{align*}

\paragraph{Quantum channels and measurements.}

A quantum channel $\Phi:  L(\algo H_A) \rightarrow L(\algo H_B)$ is a linear map between linear operators over the Hilbert spaces $\algo H_A$ and $\algo H_B$. Oftentimes, we use the compact notation $\Phi_{A \rightarrow B}$ to denote a quantum channel between $L(\algo H_A)$ and $L(\algo H_B)$. We say that a channel $\Phi$ is \emph{completely positive} if, for a reference system $R$ of arbitrary size, the induced map $\id_R \otimes \Phi$ is positive, and we call it \emph{trace-preserving} if $\Tr[\Phi(X)] = \Tr[X]$, for all $X \in L(\algo H)$. A quantum channel that is both completely positive and trace-preserving is called a quantum $\CPTP$ channel. 
Let $\algo X$ be a set. A \emph{generalized measurement} on a system $A$ is a set of linear operators $\{\vec M_A^x\}_{x \in \algo X}$ such that
$$
\sum_{x \in \algo X} \left(\vec M_A^x \right)^\dag\left(\vec M_A^x \right) = \id_A.
$$
We can represent a measurement as a $\CPTP$ map $\algo M_{A \rightarrow X}$ that maps states on system $A$ to measurement outcomes in a register denoted by $X$. For example, let $\rho \in \algo D(\algo H_{AB})$ be a bipartite state. Then,
$$
\algo M_{A \rightarrow X}: \quad \rho_{AB} \quad \mapsto \quad \sum_{x \in \algo X} \proj{x}_X \otimes \tr_A \left[\vec M_A^x \rho_{AB}{\vec M_A^x}^\dag \right],
$$
yields a normalized classical-quantum state. A positive-operator valued measure ($\POVM$) on a quantum system $A$ is a set of Hermitian positive semidefinite operators $\{\vec M_A^x\}_{x \in \algo X}$ such that
$$
\sum_{x \in \algo X} \vec M_A^x = \id_A.
$$
Oftentimes, we identify a $\POVM$ $\{\vec M_A^x\}_{x \in \algo X}$ with an associated generalized measurement $\{\sqrt{\vec M_A^x}\}_{x \in \algo X}$. The \emph{overlap} $\mathsf{c}$ of two $\POVM$s $\{\vec M_A^x\}_{x \in
\algo X}$ and $\{\vec N_A^y\}_{y \in \algo X}$ acting on a quantum system $A$ is defined by
$$ \mathsf{c} =  \max_{\substack{x,y}}   \left\| \sqrt{\vec M_A^x} \sqrt{\vec N_A^y}\right\|_\infty^2.$$
We say that two measurements are \emph{mutually unbiased}, if the overlap satisfies $\mathsf{c} =  1/d$, where $d = \dim(\algo H_A)$ is the dimension of the associated Hilbert space.

\paragraph{Quantum algorithms.} By a polynomial-time \textit{quantum algorithm} (or $\QPT$ algorithm) we mean a polynomial-time uniform family of quantum circuits given by $\mathcal{C} = \bigcup_{n \in \N} C_n$, where each circuit $C \in \algo C$ is described by a sequence of unitary gates and measurements. Similarly, we also define (classical) probabilistic polynomial-time $(\PPT)$ algorithms. A quantum algorithm may, in general, receive (mixed) quantum states as inputs and produce (mixed) quantum states as outputs. Occasionally, we restrict $\QPT$ algorithms implicitly. For example, if we write $\Pr[\mathcal{A}(1^{\lambda}) = 1]$ for a $\QPT$ algorithm $\mathcal{A}$, it is implicit that $\mathcal{A}$ is a $\QPT$ algorithm that outputs a single classical bit.

We extend the notion of $\QPT$ algorithms to $\CPTP$ channels via the following definition.

\begin{definition}[Efficient $\CPTP$ maps]\label{def:efficient-CPTP} A family of $\CPTP$ maps $\{\Phi_\lambda: L(\algo H_{A_\lambda}) \rightarrow L(\algo H_{B_\lambda}) \}_{\lambda \in \N}$ is called efficient, if there exists a polynomial-time uniformly generated family of circuits $\{ C_\lambda\}_{\lambda \in \N}$ acting on the Hilbert space $\algo H_{A_\lambda} \otimes \algo H_{B_\lambda} \otimes \algo H_{C_\lambda}$ such that, for all $\lambda \in \N$ and for all $\rho \in \algo H_{A_\lambda}$,
$$
\Phi_\lambda(\rho_\lambda) = \Tr_{A_\lambda C_\lambda}[C_\lambda (\rho_\lambda \otimes \ketbra{0}{0}_{B_\lambda C_\lambda})].
$$

\end{definition}

\begin{definition}
[Indistinguishability of ensembles of random variables] Let $\lambda \in N$ be a parameter. We say that two ensembles of random variables $X = \{X_\lambda\}$ and $Y = \{Y_\lambda\}$ are computationally indistinguishable, denoted by $X \approx_c Y$, if for all $\QPT$ distinguishers $\algo D$ which output a single bit, it holds that
$$ \big| \Pr[\mathcal{D}(1^\lambda,X_\lambda)=1] - \Pr[\mathcal{D}(1^\lambda,Y_\lambda)=1] \big| \leq \negl(\lambda) \,.$$
\end{definition}

\begin{definition}[Indistinguishability of ensembles of quantum states, \cite{10.1145/1132516.1132560}]
\label{def: indistinguishability}
Let $p: \mathbb{N} \rightarrow \mathbb{N}$ be a polynomially bounded function,
and let $\rho_\lambda$ and $\sigma_\lambda$
be $p(\lambda)$-qubit quantum states. We say that $\{\rho_{\lambda}\}_{\lambda \in \mathbb{N}}$ and $\{\sigma_\lambda\}_{\lambda \in \mathbb{N}}$ are quantum computationally indistinguishable ensembles of quantum states, denoted by $\rho_{\lambda} \approx_c \sigma_{\lambda}\,,$
if, for any $\QPT$ distinguisher $\mathcal{D}$ with single-bit output, any polynomially bounded $q: \mathbb{N} \rightarrow \mathbb{N}$, any family of $q(\lambda)$-qubit auxiliary states $\{\nu_{\lambda}\}_{\lambda \in \mathbb{N}}$, and every $\lambda \in \mathbb{N}$,
$$ \big| \Pr[\mathcal{D}(1^\lambda,\rho_{\lambda} \otimes \nu_{\lambda})=1] - \Pr[\mathcal{D}(1^\lambda,\sigma_{\lambda} \otimes \nu_{\lambda})=1] \big| \leq \negl(\lambda) \,.$$
%We say that $\algo D$ is a $(T,\eps)$ distinguisher if it runs in time $T(\lambda)$ and succeeds with probability at most $\eps(\lambda)$.
\end{definition}

\begin{lemma}["Almost As Good As New" Lemma, \cite{aaronson2016complexity}]\label{lem:almost} Let $\rho \in \algo D(\algo H)$ be a density matrix over a Hilbert space $\algo H$. Let $U$ be an arbitrary unitary and let $(\boldsymbol{\Pi}_0,\boldsymbol{\Pi}_1 = \id - \boldsymbol{\Pi}_0)$ be projectors acting on $\algo H \otimes \algo H_\aux$. We interpret $(U,\boldsymbol{\Pi}_0,\boldsymbol{\Pi}_1)$ as a measurement performed by appending an ancillary system in the state $\ketbra{0}{0}_\aux$, applying the unitary $U$ and subsequently performing the two-outcome measurement $\{\boldsymbol{\Pi}_0,\boldsymbol{\Pi}_1\}$ on the larger system. Suppose that the outcome corresponding to $\boldsymbol{\Pi}_0$ occurs with probability $1-\eps$, for some $\eps \in [0,1]$. In other words, it holds that $\Tr[\boldsymbol{\Pi}_0(U \rho \otimes \ketbra{0}{0}_\aux U^\dag)] = 1- \eps$. Then,
$$
\| \widetilde{\rho} - \rho \|_\tr \leq \sqrt{\eps},
$$
where $\widetilde{\rho}$ is the state after performing the measurement and applying $U^\dag$, and after tracing out $\algo H_\aux$:
$$
\widetilde{\rho} = \Tr_\aux\left[U^\dag \left( 
\boldsymbol{\Pi}_0 U (\rho \otimes \ketbra{0}{0}_\aux)U^\dag \boldsymbol{\Pi}_0 + \boldsymbol{\Pi}_1 U (\rho \otimes  \ketbra{0}{0}_\aux)U^\dag \boldsymbol{\Pi}_1 
\right)U \right].
$$
\end{lemma}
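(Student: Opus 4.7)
The plan is a standard argument that reduces to the pure-state case by purification and then computes the trace distance directly via the off-block-diagonal part that measurement erases. Let $\ket{\phi}_{SR}$ be a purification of $\rho$ on $\algo H \otimes \algo H_R$ for a reference system $R$ of dimension equal to the rank of $\rho$; by construction $\Tr_R[\proj{\phi}_{SR}] = \rho$, and neither $U$, the projectors $\boldsymbol{\Pi}_0, \boldsymbol{\Pi}_1$, nor the ancilla act on $R$. Define the pure post-unitary state
\[
\ket{\psi}_{S\aux R} \;=\; (U \otimes \id_R)\bigl(\ket{\phi}_{SR} \otimes \ket{0}_\aux\bigr),
\]
so that $\Tr_{\aux R}[\proj{\psi}] = \Tr_\aux[U(\rho \otimes \proj{0}_\aux)U^\dag]$. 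Writing $\boldsymbol{\Pi}_b' := \boldsymbol{\Pi}_b \otimes \id_R$ for $b \in \{0,1\}$ (which commutes with tracing out $R$), the hypothesis becomes $\|\boldsymbol{\Pi}_0' \ket{\psi}\|^2 = 1-\eps$ and $\|\boldsymbol{\Pi}_1' \ket{\psi}\|^2 = \eps$.

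Next, I would compare the pre-measurement state $\tau := \proj{\psi}$ with its dephased counterpart
\[
\sigma \;:=\; \boldsymbol{\Pi}_0' \tau \boldsymbol{\Pi}_0' + \boldsymbol{\Pi}_1' \tau \boldsymbol{\Pi}_1'.
\]
Decomposing $\tau = \sum_{b,b' \in \{0,1\}} \boldsymbol{\Pi}_b' \tau \boldsymbol{\Pi}_{b'}'$, the difference $\tau - \sigma$ consists only of the two off-block-diagonal pieces $\boldsymbol{\Pi}_0'\tau\boldsymbol{\Pi}_1' + \boldsymbol{\Pi}_1'\tau\boldsymbol{\Pi}_0'$. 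Since $\tau$ is rank one, set $\ket{v_0} := \boldsymbol{\Pi}_0'\ket{\psi}/\sqrt{1-\eps}$ and $\ket{v_1} := \boldsymbol{\Pi}_1'\ket{\psi}/\sqrt{\eps}$ (unit vectors that are orthogonal because the $\boldsymbol{\Pi}_b'$ are orthogonal projectors). Then
\[
\tau - \sigma \;=\; \sqrt{\eps(1-\eps)}\bigl(\ketbra{v_0}{v_1} + \ketbra{v_1}{v_0}\bigr),
\]
which is a Hermitian operator on a two-dimensional subspace with eigenvalues $\pm\sqrt{\eps(1-\eps)}$. By the definition of trace distance in the preamble,
\[
\| \tau - \sigma \|_\tr \;=\; \tfrac{1}{2}\bigl(2\sqrt{\eps(1-\eps)}\bigr) \;=\; \sqrt{\eps(1-\eps)} \;\leq\; \sqrt{\eps}.
\]

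Finally, I would invoke monotonicity of the trace distance under $\CPTP$ maps. The channel $X \mapsto \Tr_{\aux R}[(U^\dag \otimes \id_R) X (U \otimes \id_R)]$ sends $\tau \mapsto \rho$ (since $U^\dag U = \id$ and tracing out $\aux$ returns the ancilla to its initial state while $R$ purifies $\rho$) and sends $\sigma \mapsto \widetilde{\rho}$ by construction, so
\[
\|\widetilde{\rho} - \rho\|_\tr \;\leq\; \|\sigma - \tau\|_\tr \;\leq\; \sqrt{\eps}.
\]

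There is no real obstacle; the only subtlety is being careful that the purifying system $R$ plays no role in either the unitary or the measurement so that monotonicity applies cleanly to both $\rho$ and $\widetilde{\rho}$ simultaneously, and that the definition of $\|\cdot\|_\tr$ used here already contains the factor $\tfrac{1}{2}$ which is what converts the raw bound $2\sqrt{\eps(1-\eps)}$ into the advertised $\sqrt{\eps}$.
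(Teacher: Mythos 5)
Your proof is correct. Note that the paper does not actually prove this lemma --- it is imported verbatim from Aaronson's lecture notes \cite{aaronson2016complexity} --- so there is no in-paper argument to compare against; your purification-plus-dephasing argument is essentially the standard proof of the gentle measurement lemma in that reference. Each step checks out: the purifying system $R$ is untouched by $U$, the ancilla, and the projectors, so $\|\boldsymbol{\Pi}_b'\ket{\psi}\|^2$ reproduces the stated probabilities; the rank-one structure of $\tau$ makes $\tau - \sigma = \sqrt{\eps(1-\eps)}\left(\ketbra{v_0}{v_1} + \ketbra{v_1}{v_0}\right)$ an operator with eigenvalues $\pm\sqrt{\eps(1-\eps)}$, and with the paper's normalization $\|\cdot\|_\tr = \frac{1}{2}\|\cdot\|_1$ this gives $\sqrt{\eps(1-\eps)}$, which is in fact slightly stronger than the advertised $\sqrt{\eps}$; and the map $X \mapsto \Tr_{\aux R}\left[(U^\dag \otimes \id_R)\, X \,(U \otimes \id_R)\right]$ is manifestly $\CPTP$, carries $\tau \mapsto \rho$ and $\sigma \mapsto \widetilde{\rho}$, and contractivity finishes the argument. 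The only cosmetic gap is the degenerate cases $\eps \in \{0,1\}$, where $\ket{v_1}$ (resp.\ $\ket{v_0}$) is undefined; there $\tau = \sigma$ holds outright (the corresponding block vanishes since $\boldsymbol{\Pi}_1'\ket{\psi} = 0$), so the bound is trivial, and a one-line remark would make the write-up airtight.
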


We also use the following lemma on the closeness to ideal states:

\begin{lemma}[\cite{Unruh2013}, Lemma 10]\label{lem:closeness-ideal}
Let $\boldsymbol{\Pi}$ be an arbitrary projector and let $\ket{\psi}$ be a normalized pure state such that $\|\boldsymbol{\Pi}\ket{\psi}\|^2 = 1 - \eps$, for some $\eps \geq 0$. Then, there exists a (pure) ideal state,
$$
\ket{\bar{\psi}} = \frac{\boldsymbol{\Pi}\ket{\psi}}{\|\boldsymbol{\Pi}\ket{\psi}\|},
$$
with the property that
$$
\| \proj{\psi} - \proj{\bar{\psi}}\|_\tr \leq \sqrt{\eps} \quad \,\, \text{ and } \quad \,\, \ket{\bar{\psi}} \in \mathrm{im}(\boldsymbol{\Pi}).
$$ 
In other words, the state $\ket{\bar{\psi}}$ is within trace distance $\eps>0$ of the state $\ket{\psi}$ and lies in the image of $\boldsymbol{\Pi}$.
\end{lemma}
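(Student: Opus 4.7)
The plan is to exploit the standard relationship between trace distance and fidelity for pure states, together with an orthogonal decomposition of $\ket{\psi}$ into its component in $\mathrm{im}(\boldsymbol{\Pi})$ and its orthogonal complement. First I would write
$$
\ket{\psi} = \boldsymbol{\Pi}\ket{\psi} + (\id - \boldsymbol{\Pi})\ket{\psi},
$$
where the two summands are orthogonal because $\boldsymbol{\Pi}$ is a projector. By definition of $\ket{\bar{\psi}}$, this gives $\boldsymbol{\Pi}\ket{\psi} = \|\boldsymbol{\Pi}\ket{\psi}\|\cdot\ket{\bar{\psi}}$, and hence
$$
\langle \bar{\psi} | \psi \rangle = \|\boldsymbol{\Pi}\ket{\psi}\| = \sqrt{1-\eps}.
$$
The fact that $\ket{\bar{\psi}} \in \mathrm{im}(\boldsymbol{\Pi})$ is immediate from the construction, since $\boldsymbol{\Pi}(\boldsymbol{\Pi}\ket{\psi}) = \boldsymbol{\Pi}\ket{\psi}$ by idempotence of the projector.

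Next I would use the well-known identity that for two normalized pure states $\ket{\phi_1},\ket{\phi_2}$, the trace distance satisfies
$$
\| \proj{\phi_1} - \proj{\phi_2}\|_\tr = \sqrt{1 - |\langle \phi_1 | \phi_2 \rangle|^2}.
$$
This is a standard computation: the operator $\proj{\phi_1} - \proj{\phi_2}$ is rank at most two, so one can diagonalize explicitly in the two-dimensional subspace spanned by $\ket{\phi_1}$ and $\ket{\phi_2}$, and the two nonzero eigenvalues are $\pm\sqrt{1 - |\langle \phi_1|\phi_2\rangle|^2}$. Applying this identity to $\ket{\psi}$ and $\ket{\bar{\psi}}$ with $|\langle \bar{\psi}|\psi\rangle|^2 = 1-\eps$ yields
$$
\| \proj{\psi} - \proj{\bar{\psi}}\|_\tr = \sqrt{1 - (1-\eps)} = \sqrt{\eps},
$$
as required.

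There is essentially no main obstacle: the proof is a direct two-line computation once the trace-distance/fidelity identity for pure states is invoked. The only minor subtlety worth noting is that $\ket{\bar{\psi}}$ is only well defined when $\|\boldsymbol{\Pi}\ket{\psi}\| > 0$, i.e.\ when $\eps < 1$; the degenerate case $\eps = 1$ can be handled by convention (any unit vector in $\mathrm{im}(\boldsymbol{\Pi})$ trivially satisfies $\| \proj{\psi} - \proj{\bar{\psi}}\|_\tr \leq 1 = \sqrt{\eps}$), so the stated bound holds in full generality.
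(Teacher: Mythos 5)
Your proof is correct, and it is essentially the standard argument behind this lemma: the paper itself gives no proof but cites Unruh (Lemma 10), whose reasoning likewise rests on the orthogonal decomposition $\ket{\psi} = \boldsymbol{\Pi}\ket{\psi} + (\id-\boldsymbol{\Pi})\ket{\psi}$, the overlap $\langle\bar{\psi}|\psi\rangle = \sqrt{1-\eps}$, and the pure-state identity $\|\proj{\phi_1}-\proj{\phi_2}\|_\tr = \sqrt{1-|\langle\phi_1|\phi_2\rangle|^2}$ (which matches this paper's convention of including the factor $\tfrac{1}{2}$ in the trace distance). Your handling of the degenerate case $\eps = 1$ is a reasonable bonus, with the tiny caveat that if $\boldsymbol{\Pi} = 0$ no unit vector in $\mathrm{im}(\boldsymbol{\Pi})$ exists at all, but in that regime the lemma's own definition of $\ket{\bar{\psi}}$ is equally vacuous.
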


We also use the following elementary lemma.

\begin{lemma}[\cite{coladangelo2020quantum}, Lemma 23]\label{lem:TD_inequalities}
Let $\rho,\sigma \in \algo D(\algo H)$ be two states with the property that $\|\rho - \sigma \|_{\mathrm{tr}} \leq \eps$, for some $\eps \geq 0$. Let $\boldsymbol{\Pi}$ be an arbitrary matrix acting on $\algo H$ such that $0 \leq \boldsymbol{\Pi} \leq \id$. Then,
$$
|\mathrm{Tr}[\boldsymbol{\Pi} \rho] - \mathrm{Tr}[\boldsymbol{\Pi} \sigma]|\leq \eps.
$$
\end{lemma}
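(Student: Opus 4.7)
The plan is to reduce the statement to a direct consequence of the variational (Holevo--Helstrom) characterization of trace distance: the quantity $\Tr[\boldsymbol{\Pi}(\rho-\sigma)]$ is bounded above, ranging over all effects $\boldsymbol{\Pi}$ with $0 \leq \boldsymbol{\Pi} \leq \id$, by exactly $\|\rho-\sigma\|_{\tr}$. The hypothesis then immediately gives the desired $\eps$-bound.

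To make the argument self-contained, I would first set $\Delta = \rho - \sigma$ and observe that $\Delta$ is Hermitian with $\Tr[\Delta] = 0$. Applying the Jordan decomposition yields $\Delta = \Delta_+ - \Delta_-$, where $\Delta_+, \Delta_- \geq 0$ are positive semidefinite operators with mutually orthogonal supports. Since $\Tr[\Delta] = 0$, the two positive parts have equal trace; combined with the definition used in the excerpt, $\|\Delta\|_{\tr} = \tfrac{1}{2}\Tr[\sqrt{\Delta^\dagger \Delta}] = \tfrac{1}{2}(\Tr[\Delta_+] + \Tr[\Delta_-])$, this yields $\Tr[\Delta_+] = \Tr[\Delta_-] = \|\Delta\|_{\tr} \leq \eps$.

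Next, I would write $\Tr[\boldsymbol{\Pi}\Delta] = \Tr[\boldsymbol{\Pi}\Delta_+] - \Tr[\boldsymbol{\Pi}\Delta_-]$ and split into two cases based on the sign of the left-hand side. If $\Tr[\boldsymbol{\Pi}\Delta] \geq 0$, it is bounded above by $\Tr[\boldsymbol{\Pi}\Delta_+]$; otherwise, its absolute value is bounded above by $\Tr[\boldsymbol{\Pi}\Delta_-]$. In either case, the assumption $0 \leq \boldsymbol{\Pi} \leq \id$ together with $\Delta_\pm \geq 0$ gives $\Tr[\boldsymbol{\Pi}\Delta_\pm] \leq \Tr[\Delta_\pm] \leq \eps$, and hence $|\Tr[\boldsymbol{\Pi}\rho] - \Tr[\boldsymbol{\Pi}\sigma]| = |\Tr[\boldsymbol{\Pi}\Delta]| \leq \eps$.

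There is no real obstacle here; this is a routine computation. The only subtlety worth flagging is that one should not invoke the naive H\"older bound $|\Tr[\boldsymbol{\Pi}\Delta]| \leq \|\boldsymbol{\Pi}\|_\infty \cdot \|\Delta\|_1$, which would yield only $2\eps$ because the paper normalizes the trace distance with a factor of $\tfrac{1}{2}$ relative to the Schatten $1$-norm. The trace-zero property of $\Delta$ is precisely what recovers the missing factor of two, and working with the Jordan decomposition (rather than Schatten duality) makes this use of $\Tr[\Delta]=0$ explicit.
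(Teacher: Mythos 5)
Your proof is correct. Note that the paper offers no proof of this lemma to compare against: it is imported verbatim from \cite{coladangelo2020quantum} (Lemma 23), and your argument is the standard one --- the Jordan decomposition $\Delta = \Delta_+ - \Delta_-$ together with $\Tr[\Delta]=0$ is exactly the proof of the Holevo--Helstrom variational characterization $\|\rho - \sigma\|_\tr = \max_{0 \leq \boldsymbol{\Pi} \leq \id} \Tr[\boldsymbol{\Pi}(\rho-\sigma)]$, of which the lemma is the inequality direction. Your closing remark is also well taken: because the paper normalizes the trace distance with a prefactor of $\tfrac{1}{2}$, the naive H\"older bound $|\Tr[\boldsymbol{\Pi}\Delta]| \leq \|\boldsymbol{\Pi}\|_\infty \|\Delta\|_1$ only yields $2\eps$, and it is precisely the tracelessness of $\Delta$ (giving $\Tr[\Delta_+] = \Tr[\Delta_-] = \|\Delta\|_\tr \leq \eps$) that recovers the tight constant.
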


\subsection{Classical and quantum entropies}\label{sec:entropies}

\paragraph{Classical entropies.}

Let $X$ be a random variable with an arbitrary distribution $P_{\algo X}$ over an alphabet $\algo X$. The \emph{min-entropy} of $X$, denoted by $\hmin(X)$, is defined by the following quantity
    $$
    \hmin(X) = - \log \left(\max_{x \in \algo X} \underset{X \sim P_{\algo X}}{\Pr}[X = x] \right).
    $$
The \emph{conditional min-entropy} of $X$ conditioned on a correlated random variable $Y$ is defined by
$$
\hmin(X|Y) = - \log\left( \underset{y \leftarrow Y}{\mathbb{E}} \Big[\max_{x \in \algo X} \underset{X \sim P_{\algo X}}{\Pr}[X = x | Y = y] \Big] \right).$$

\begin{lemma}[Leftover Hash Lemma, \cite{Hastad88pseudo-randomgeneration}]\label{lem:LHL}
Let $n,m \in \N$ and $q\geq 2$ a prime. Let $P$ be a distribution over $\Z_q^m$ and suppose that $\hmin(X) \geq n \log q + 2 \log(1/\eps) + O(1)$ for $\eps > 0$, where $X$ denotes a random variable with distribution $P$.
Then, the following two distributions are within total variance distance $\eps$:
$$
(\vec A, \vec A \cdot \vec x \Mod{q}) \quad \approx_\eps \quad (\vec A,\vec u): \quad\quad \vec A \rand \Z_q^{n \times m}, \,\,\vec u \rand \Z_q^n.
$$

\end{lemma}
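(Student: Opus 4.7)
The plan is to prove the Leftover Hash Lemma by instantiating the standard generic Leftover Hash Lemma for $2$-universal hash families with the matrix multiplication hash family, exploiting the fact that $q$ is prime so that $\Z_q$ is a field.

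First, I would define the hash family $\algo H = \{ h_{\vec A} : \Z_q^m \to \Z_q^n \}_{\vec A \in \Z_q^{n \times m}}$ where $h_{\vec A}(\vec x) = \vec A \cdot \vec x \Mod{q}$, and verify that it is $2$-universal. The key calculation is that for any two distinct vectors $\vec x \neq \vec x' \in \Z_q^m$, their difference $\vec z = \vec x - \vec x'$ has at least one nonzero coordinate $z_j$. Since $q$ is prime, $z_j$ is invertible in $\Z_q$. Writing out $\vec A \cdot \vec z = \vec 0 \Mod q$ row by row, each of the $n$ rows independently yields a linear constraint in the $m$ entries of that row of $\vec A$ that has exactly $q^{m-1}$ solutions out of $q^m$. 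Hence $\Pr_{\vec A}[h_{\vec A}(\vec x) = h_{\vec A}(\vec x')] = q^{-n}$, establishing $2$-universality.

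Next, I would bound the collision probability of the joint distribution $Q = (\vec A, h_{\vec A}(X))$ where $\vec A \rand \Z_q^{n \times m}$ and $X$ is drawn independently from $P$. A direct calculation gives
\[
\mathrm{CP}(Q) = \frac{1}{q^{nm}} \Pr_{X,X'}[h_{\vec A}(X) = h_{\vec A}(X')] \leq \frac{1}{q^{nm}}\left( \Pr[X = X'] + \frac{1}{q^n}\right) \leq \frac{1}{q^{nm}\cdot q^n}\left(1 + q^n \cdot 2^{-\hmin(X)}\right),
\]
using $2$-universality to bound the cross term and $\Pr[X = X'] \leq 2^{-\hmin(X)}$ for the diagonal term.

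Finally, I would convert this collision probability bound to a statistical distance bound via the standard inequality $\|Q - U\|_{TV} \leq \tfrac{1}{2}\sqrt{|\mathrm{supp}(U)|\cdot \mathrm{CP}(Q) - 1}$, where $U$ is the uniform distribution over $\Z_q^{n \times m} \times \Z_q^n$. Substituting yields
\[
\|Q - U\|_{TV} \leq \tfrac{1}{2}\sqrt{q^n \cdot 2^{-\hmin(X)}} \leq \eps,
\]
where the last step uses the hypothesis $\hmin(X) \geq n\log q + 2\log(1/\eps) + O(1)$ with the $O(1)$ constant chosen to absorb the factor of $\tfrac{1}{2}$. There is no real obstacle here; the only subtle point is ensuring the $q^m$ independence of the $n$ rows in the $2$-universality calculation, which is precisely where primality of $q$ (equivalently, $\Z_q$ being a field) is used.
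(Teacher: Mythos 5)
Your proof is correct and is essentially the argument the paper relies on: the paper states this lemma without proof, importing it from the cited reference, and your route---$2$-universality of $h_{\vec A}(\vec x) = \vec A \cdot \vec x \Mod{q}$ via invertibility of a nonzero coordinate of $\vec x - \vec x'$ (indeed the only place primality of $q$ enters), the collision-probability bound $\mathrm{CP}(Q) \leq q^{-nm-n}\left(1 + q^n \cdot 2^{-\hmin(X)}\right)$, and the standard conversion $\|Q - U\|_{TV} \leq \frac{1}{2}\sqrt{q^{nm+n} \cdot \mathrm{CP}(Q) - 1} \leq \frac{1}{2}\sqrt{q^n \cdot 2^{-\hmin(X)}}$---is precisely the classic Leftover Hash Lemma proof that the hypothesis $\hmin(X) \geq n \log q + 2\log(1/\eps) + O(1)$ is tailored to. All three steps check out, so there is nothing to fix.
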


\paragraph{Quantum entropies.}

\begin{definition}[Quantum min-entropy]\label{def:min-entropy}
Let $A$ and $B$ be two quantum systems and let $\rho_{AB} \in \algo S_{\leq}(\algo H_{AB})$ be any bipartite state. The min-entropy of $A$ conditioned on $B$ of the state $\rho_{AB}$ is defined as
$$
\hmin(A \, | \, B)_\rho = \underset{\sigma \in S_{\leq}(\algo H_{B}) }{\max} \sup \left\{ \lambda \in \mathbb{R} \ : \, \rho_{AB} \leq 2^{-\lambda} \id_A \otimes \sigma_B  \right\}.
$$
\end{definition}

\begin{definition}[Smooth quantum min-entropy]\label{def:smooth-entropies} Let $A$ and $B$ be quantum systems and let $\rho_{AB} \in \algo S_{\leq}(\algo H_{AB})$. Let $\eps \geq 0$. We define the $\eps$-smooth quantum min-entropy of $A$ conditioned on $B$ of
$\rho_{AB}$ as
\begin{align*}
  \hmin^\eps(A \, | \, B)_\rho \,\,&= \underset{\substack{\tilde{\rho}_{AB}\\
  P(\tilde{\rho}_{AB}, \rho_{AB}) \leq \eps}}{\sup}  \hmin(A \, | \, B)_{\tilde{\rho}}.
\end{align*}
\end{definition}

The conditional min-entropy of a CQ state $\rho_{XB}$ captures the difficulty of guessing the content of a classical register $X$ given quantum side information $B$. This motivates the following definition.

\begin{definition}[Guessing probability]\label{def:guessing}
Let $\rho_{XB} \in \algo D(\algo H_{X} \otimes \algo H_B)$ be a CQ state, where $X$ is a classical register over an alphabet $\algo X$ and $B$ is a quantum system. Then, the guessing probability of $X$ given $B$ is defined as
$$
p_{\guess}(X|B)_\rho = \underset{\vec M_B^x}{\sup} \sum_{x \in \algo X} \Pr[X=x]_\rho \cdot \Tr\left[\vec M_B^x \rho_{B} {\vec M_B^x}^\dag\right].
$$
\end{definition}

The following operational meaning of min-entropy is due to Koenig, Renner and Schaffner~\cite{Konig_2009}.

\begin{theorem}[\cite{Konig_2009}, Theorem 1]\label{thm:guessing}
Let $\rho_{XB} \in \algo D(\algo H_{X} \otimes \algo H_B)$ be a CQ state, where $X$ is a classical register over an alphabet $\algo X$ and $B$ is a quantum system. Then, it holds that
$$
\hmin(X \, | \, B)_\rho = - \log \left(p_{\guess}(X|B)_\rho \right).
$$
\end{theorem}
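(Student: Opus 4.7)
The plan is to prove this identity via semidefinite programming duality, which is the standard approach behind the original result of K\"onig, Renner, and Schaffner. First I would recast the min-entropy as a single SDP in a single variable. Substituting $\tau_B = 2^{-\lambda}\sigma_B$ in \expref{Definition}{def:min-entropy} converts the joint optimization over $\sigma_B \in \algo S_\leq(\algo H_B)$ and $\lambda \in \R$ into
$$
2^{-\hmin(X|B)_\rho} \;=\; \min \left\{ \Tr[\tau_B] \,:\, \tau_B \geq 0,\; \id_X \otimes \tau_B \geq \rho_{XB} \right\},
$$
since $\lambda \leq -\log\Tr[\tau_B]$ holds along the substitution (with equality when $\Tr[\sigma_B]=1$), and nothing is gained by allowing $\Tr[\sigma_B]<1$.

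Next I would form the Lagrangian dual. Introducing $Y \geq 0$ on $XB$ for the operator inequality $\id_X \otimes \tau_B \geq \rho_{XB}$ and $W \geq 0$ for $\tau_B \geq 0$, the Lagrangian reads
$$
L(\tau_B,Y,W) \;=\; \Tr\bigl[\tau_B(\id_B - \Tr_X[Y] - W)\bigr] + \Tr[Y \rho_{XB}],
$$
and boundedness in $\tau_B$ forces $\Tr_X[Y] \leq \id_B$. The dual SDP is therefore
$$
\max \left\{ \Tr[Y \rho_{XB}] \,:\, Y \geq 0,\; \Tr_X[Y] \leq \id_B \right\}.
$$

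The key step is then to match the dual optimum to a guessing strategy. For the CQ state $\rho_{XB} = \sum_x P_{\algo X}(x) \proj{x}_X \otimes \rho_B^x$, dephasing any feasible $Y$ in the $X$-basis preserves both feasibility (only $\Tr_X[Y]$ enters the constraint, and dephasing leaves partial traces invariant) and the objective, so we may restrict to block-diagonal $Y = \sum_x \proj{x}_X \otimes M_B^x$ with $M_B^x \geq 0$ and $\sum_x M_B^x \leq \id_B$. Padding with a slack outcome $M_B^{\bot} = \id_B - \sum_x M_B^x$ (which contributes nothing to the sum since $P_{\algo X}$ is supported on $\algo X$) promotes the sub-collection to a genuine $\POVM$, giving
$$
\Tr[Y \rho_{XB}] \;=\; \sum_{x \in \algo X} P_{\algo X}(x)\, \Tr[M_B^x \rho_B^x] \;=\; p_{\guess}(X|B)_\rho
$$
by \expref{Definition}{def:guessing}. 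Slater's condition is trivially satisfied (take $\tau_B = c\,\id_B$ for any $c > \|\rho_{XB}\|_\infty$ as a strictly feasible primal point), so strong duality holds and the two optima coincide; taking $-\log$ yields the claim. The main conceptual obstacle is the block-diagonal reduction of $Y$ and the sub-POVM-to-POVM extension — every other step is the standard SDP duality recipe — and these are precisely the steps that exploit the classical-quantum structure hypothesis in an essential way.
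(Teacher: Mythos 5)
The paper does not prove \expref{Theorem}{thm:guessing} at all --- it imports the statement verbatim from \cite{Konig_2009}, whose original argument is precisely the SDP-duality route you follow (recast $\hmin$ as the primal minimization of $\Tr[\tau_B]$, dualize, pinch $Y$ in the $X$-basis, identify the dual with the guessing probability, invoke Slater) --- so your proposal is correct and matches the source's approach. One cosmetic adjustment: since $p_{\guess}$ in \expref{Definition}{def:guessing} ranges over measurements indexed by $\algo X$ alone, it is cleaner to absorb the slack $\id_B - \sum_{x} M_B^x$ into a single element $M_B^{x_0}$ (which can only increase the objective, as all terms are nonnegative) rather than append a separate $\bot$ outcome.
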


\subsection{Fourier analysis} \label{sec:fourier}

Let $q \geq 2$ be an integer modulus and let $m \in \N$. The \emph{$q$-ary (discrete) Fourier transform} takes as input a function $f : \Z^m \rightarrow \mathbb{C}$ and produces a function $\hat{f} : \Z_q^m \rightarrow \mathbb{C}$ (the Fourier transform of $f$) defined by
$$
\hat{f}(\vec y) = \sum_{\vec x \in \Z^m} f(\vec x) \cdot e^{\frac{2\pi i}{q} \ip{\vec y,\vec x}}.
$$
For brevity, we oftentimes write $\omega_q = e^{ \frac{2 \pi i}{q}} \in \mathbb{C}$ to denote the primitive $q$-th root of unity.
The $m$-qudit \emph{$q$-ary quantum Fourier transform} over the ring $\Z_q^m$ is defined by the operation,
$$
\FT_q : \quad \ket{\vec x} \quad \mapsto \quad \sqrt{q^{-m}} \displaystyle\sum_{\vec y \in \Z_q^n} e^{\frac{2\pi i}{q} \ip{\vec y,\vec x}} \ket{\vec y}, \quad\quad \forall \vec x \in \Z_q^m.
$$
It is well known that the $q$-ary quantum Fourier transform can be efficiently performed on a quantum computer for any modulus $q \geq 2$~\cite{892139}. Note the quantum Fourier transform of a normalized quantum state
$$
\ket{\Psi} = \sum_{\vec x \in \Z^m} f(\vec x) \ket{\vec x} \quad \text{ with } \quad \sum_{\vec x \in \Z^m} |f(\vec x)|^2 =1,
$$
for a function $f : \Z^m \rightarrow \mathbb{C}$, results in the state (the Fourier transform of $\ket{\Psi}$) given by
\begin{align*}
\FT_q\ket{\Psi} &= 
\sqrt{q^{-m}} \displaystyle\sum_{\vec y \in \Z_q^n} \Big(\sum_{\vec x \in \Z^m} f(\vec x) \cdot
e^{\frac{2\pi i}{q} \ip{\vec y,\vec x}}\Big) \ket{\vec y}\\
&=
\sqrt{q^{-m}} \displaystyle\sum_{\vec y \in \Z_q^n} \hat{f}(\vec y) \ket{\vec y}.
\end{align*}
Notice that the Fourier transform of $\ket{\Psi}$ is \emph{unitary} if $\mathrm{supp}(f) \subseteq \Z^m \cap (-\frac{q}{2},\frac{q}{2}]^m$. 
We frequently make use of the following standard identity for Fourier characters.

\begin{lemma}[Orthogonality of Fourier characters]\label{lem:orth}
Let $q \geq 2$ be any integer modulus and let $\omega_q = e^{ \frac{2 \pi i}{q}} \in \mathbb{C}$ denote the primitive $q$-th root of unity. Then, for arbitrary $x,y \in \Z_q$:
$$
\sum_{v \in \Z_q} \omega_q^{v \cdot x} \omega_q^{-v \cdot y} = q \,\, \delta_{ x, y}.
$$
\end{lemma}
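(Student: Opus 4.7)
The plan is to prove this identity by a direct evaluation of the sum, splitting into the cases $x=y$ and $x\neq y$ in $\Z_q$. First I would combine the two Fourier characters into a single one by writing $\omega_q^{v \cdot x}\omega_q^{-v\cdot y} = \omega_q^{v(x-y)}$, and set $z = x - y \in \Z_q$. The claim then reduces to showing that $\sum_{v \in \Z_q} \omega_q^{v z} = q\,\delta_{z,0}$.

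For the case $z = 0 \pmod q$, every summand equals $\omega_q^0 = 1$, so the sum is simply $|\Z_q| = q$. This is the $x = y$ branch of the Kronecker delta.

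For the case $z \neq 0 \pmod q$, the base $\omega_q^z$ is a primitive $q/\gcd(q,z)$-th root of unity, in particular $\omega_q^z \neq 1$, so I can apply the geometric series identity
\begin{equation*}
\sum_{v=0}^{q-1} (\omega_q^z)^v \,=\, \frac{(\omega_q^z)^q - 1}{\omega_q^z - 1}.
\end{equation*}
The numerator vanishes since $(\omega_q^z)^q = e^{2\pi i z} = 1$ for $z \in \Z$, while the denominator is nonzero by the observation above. Hence the sum equals $0$, matching $q\,\delta_{z,0}$. Combining both cases yields the lemma.

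There is no real obstacle here; the only subtlety is the implicit identification of $\Z_q$ with a complete set of residues, which guarantees that summing $v$ over $\Z_q$ is the same as summing $v$ from $0$ to $q-1$ and makes the geometric series argument valid. Given how elementary the argument is, I would present it in at most a few lines.
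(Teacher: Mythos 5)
Your proof is correct: reducing to $z = x - y$ and evaluating $\sum_{v \in \Z_q} \omega_q^{vz}$ via the geometric series, with the numerator $(\omega_q^z)^q - 1$ vanishing and the denominator $\omega_q^z - 1$ nonzero for $z \not\equiv 0 \Mod{q}$, is the canonical argument. The paper states this lemma without proof as a standard identity, so there is nothing to compare against; your write-up is exactly the standard derivation the paper implicitly relies on, and it is complete as stated.
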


\subsection{Generalized Pauli operators}

\begin{definition}[Generalized Pauli operators]
Let $q\geq 2$ be an integer modulus and $\omega_q = e^{2 \pi i/q}$ be the primitive $q$-th root of unity. The generalized $q$-ary Pauli operators $\{\vec{X}_q^b\}_{b \in \Z_q}$ and $\{\vec{Z}_q^b\}_{b \in \Z_q}$ are given by
\begin{align*}
\vec{X}_q^b &= \sum_{a \in \Z_q} \ket{a+b \Mod{q}}\bra{a}, \quad\text{ and }\\
\vec{Z}_q^b &= \sum_{a \in \Z_q} \omega_q^{a \cdot b} \ket{a}\bra{a}.
\end{align*}
For $\vec b = ( b_1,\dots,b_m) \in \Z_q^m$, we use the notation $\vec{X}_q^{\vec b} = \vec{X}_q^{b_1} \otimes  \dots \otimes \vec{X}_q^{b_m}$ and $\vec{Z}_q^{\vec b} = \vec{Z}_q^{b_1} \otimes  \dots \otimes \vec{Z}_q^{b_m}$.
\end{definition}

\begin{lemma}\label{lem:XZ-conjugation} Let $q\geq 2$ be an integer modulus. Then, for all $b \in \Z_q$, it holds that
\begin{align*}
\vec{Z}_q^{b} &= \FT_q \, \vec{X}_q^{b} \,\FT_q^\dag \\
\vec{X}_q^{b} &= \FT_q^\dag \, \vec{Z}_q^{b} \,\FT_q.
\end{align*}
\end{lemma}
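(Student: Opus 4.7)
The plan is to prove the first identity directly from the definitions, then obtain the second identity for free by conjugating by $\FT_q^\dag$ on both sides (since $\FT_q$ is unitary). The computation is routine Fourier analysis, so the proof will be essentially a bookkeeping exercise using the orthogonality relation for Fourier characters (Lemma~\ref{lem:orth}).

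Concretely, I would start by expanding
\[
\FT_q \, \vec{X}_q^{b} \, \FT_q^\dag \;=\; \sum_{a \in \Z_q} \bigl( \FT_q \ket{a+b \Mod q} \bigr)\bigl( \bra{a} \FT_q^\dag\bigr)
\]
using the definition $\vec{X}_q^b = \sum_{a \in \Z_q} \ket{a+b \Mod q}\bra{a}$. Applying the formula for the $q$-ary quantum Fourier transform, this becomes
\[
\frac{1}{q} \sum_{a,y,y' \in \Z_q} \omega_q^{y(a+b)} \, \omega_q^{-y' a} \,\ket{y}\bra{y'} \;=\; \frac{1}{q} \sum_{y,y' \in \Z_q} \omega_q^{y b}\Bigl(\sum_{a \in \Z_q} \omega_q^{a(y - y')}\Bigr)\ket{y}\bra{y'}.
\]
The inner sum over $a$ collapses to $q\, \delta_{y,y'}$ by Lemma~\ref{lem:orth}, and what remains is exactly $\sum_{y \in \Z_q} \omega_q^{y b} \ket{y}\bra{y} = \vec{Z}_q^{b}$, proving the first identity.

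For the second identity, I would simply conjugate both sides of $\vec{Z}_q^b = \FT_q \vec{X}_q^b \FT_q^\dag$ by $\FT_q^\dag$ on the left and $\FT_q$ on the right, using unitarity $\FT_q \FT_q^\dag = \FT_q^\dag \FT_q = \id$, which immediately yields $\FT_q^\dag \vec{Z}_q^b \FT_q = \vec{X}_q^b$. There is no real obstacle here; the only minor care needed is tracking the sign of the phase in $\FT_q$ versus $\FT_q^\dag$ and the fact that $\FT_q$ is unitary on the relevant domain $\Z_q^m$, both of which are already established earlier in Section~\ref{sec:fourier}.
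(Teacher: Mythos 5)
Your proposal is correct and follows essentially the same route as the paper's proof: a direct computation of the conjugation using the orthogonality of Fourier characters (\expref{Lemma}{lem:orth}) to collapse the double sum, with the second identity obtained from the first by conjugating with $\FT_q$. The only cosmetic difference is direction — you expand $\FT_q \, \vec{X}_q^{b} \, \FT_q^\dag$ and simplify to $\vec{Z}_q^{b}$, whereas the paper starts from $\vec{Z}_q^{b}$ and factors it into $\FT_q \, \vec{X}_q^{b} \, \FT_q^\dag$ — which is immaterial.
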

\begin{proof} It suffices to show the first identity only as the second identity follows by conjugation with $\FT_q$. Using the orthogonality of Fourier characters over $\Z_q$ (\expref{Lemma}{lem:orth}), we find that
\begin{align*}
\vec{Z}_q^{b} &=
 \sum_{x \in \Z_q} \omega_q^{x \cdot b} \ketbra{x}{x}\\
&= \sum_{x,y' \in \Z_q} \omega_q^{x \cdot b} \left(\frac{1}{q}\sum_{a \in \Z_q}  \omega_q^{x \cdot a} \omega_q^{-a \cdot y'} \right) \ketbra{x}{y'}\\ 
&=\frac{1}{q}
\sum_{x,y \in \Z_q} \sum_{x',y' \in \Z_q} \sum_{a \in \Z_q}  \omega_q^{x \cdot y} \omega_q^{- x' \cdot y'} \braket{y | a + b \Mod{q}} \cdot\braket{a|x'} \, \ketbra{x}{y'} \\
&= \frac{1}{q}\left(\sum_{x, y \in \Z_q} \omega_q^{x \cdot y} \ketbra{x}{y} \right) \sum_{a \in \Z_q} \ket{a+b \Mod{q}}\bra{a} \left(\sum_{x',y' \in \Z_q} \omega_q^{- x' \cdot y'} \ketbra{\vec x'}{\vec y'} \right)\\ 
&=\FT_q \, \vec{X}_q^{b} \,\FT_q^\dag.
\end{align*}

\end{proof}

\begin{definition}[Pauli-$\vec Z$ dephasing channel] Let $q\geq 2$ be an integer modulus and let $m \in \N$. Let $\vec p$ be a probability distribution over $\Z_q^m$. Then, the Pauli-$\vec Z$ dephasing channel with respect to $\vec p$ is defined as
$$
\algo Z_{\vec p}(\rho) = \sum_{\vec z \in \Z_q^m} p_{\vec z} \, \vec Z_q^{\vec z} \rho \vec Z_q^{-\vec z}, \quad \,\, \forall \rho \in L((\mathbb{C}^q)^{\otimes m}).
$$
We use $\algo Z$ to denote the uniform Pauli-$\vec Z$ channel for which $\vec p$ is the uniform distribution over $\Z_q^m$.
\end{definition}

The following lemma shows that the uniform Pauli-$\vec Z$ channel on input $\rho$ returns a diagonal state which consists of diagonal elements
of $\rho$ encoded in the standard basis.
\begin{lemma}\label{lem:random-Z}
Let $q\geq 2$ be a modulus and $m \in \N$. Then, the uniform Pauli-$\vec Z$ dephasing channel satsifies,
$$
\algo Z(\rho) = q^{-m} \sum_{\vec z \in \Z_q^m} \, \vec Z_q^{\vec z} \rho \vec Z_q^{-\vec z} = \sum_{\vec x \in \Z_q^m} \Tr[\ketbra{\vec x}{\vec x} \rho] \,\ketbra{\vec x}{\vec x}, \quad \,\, \forall \rho \in L((\mathbb{C}^q)^{\otimes m}).
$$

\end{lemma}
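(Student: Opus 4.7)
The plan is to expand $\rho$ in the computational basis and then apply the generalized $\vec Z$ operators termwise, exploiting that $\ket{\vec x}$ is an eigenvector of $\vec Z_q^{\vec z}$.

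First I would write $\rho = \sum_{\vec x, \vec y \in \Z_q^m} \rho_{\vec x, \vec y} \, \ketbra{\vec x}{\vec y}$, where $\rho_{\vec x, \vec y} = \bra{\vec x} \rho \ket{\vec y}$. Since $\vec Z_q^{\vec z} \ket{\vec x} = \omega_q^{\langle \vec z, \vec x \rangle} \ket{\vec x}$ by definition (applied tensor-factor-wise), I would compute
\[
\vec Z_q^{\vec z} \, \ketbra{\vec x}{\vec y} \, \vec Z_q^{-\vec z} \;=\; \omega_q^{\langle \vec z, \vec x\rangle} \omega_q^{-\langle \vec z, \vec y\rangle} \, \ketbra{\vec x}{\vec y} \;=\; \omega_q^{\langle \vec z, \vec x - \vec y\rangle} \, \ketbra{\vec x}{\vec y}.
\]

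Next I would substitute this into the definition of $\algo Z(\rho)$ and exchange the two sums, giving
\[
\algo Z(\rho) \;=\; q^{-m} \sum_{\vec x, \vec y \in \Z_q^m} \rho_{\vec x, \vec y} \Big(\sum_{\vec z \in \Z_q^m} \omega_q^{\langle \vec z, \vec x - \vec y\rangle} \Big) \ketbra{\vec x}{\vec y}.
\]
The inner sum factors across the $m$ coordinates, and by the orthogonality identity of Fourier characters (Lemma on orthogonality, applied coordinate by coordinate with one character fixed to the trivial one), each coordinate sum equals $q \cdot \delta_{x_i, y_i}$, so the total inner sum equals $q^m \, \delta_{\vec x, \vec y}$. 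Thus the $q^{-m}$ prefactor cancels and only the diagonal terms $\vec x = \vec y$ survive, yielding $\algo Z(\rho) = \sum_{\vec x \in \Z_q^m} \rho_{\vec x, \vec x} \, \ketbra{\vec x}{\vec x}$. Finally, I would identify $\rho_{\vec x, \vec x} = \bra{\vec x} \rho \ket{\vec x} = \Tr[\ketbra{\vec x}{\vec x} \rho]$ to obtain the claimed form.

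There is no real obstacle here; the whole argument is a one-line character-orthogonality computation, and the only thing to watch is that the dot product $\langle \vec z, \vec x - \vec y\rangle$ factors as a sum over coordinates so the orthogonality lemma applies coordinatewise.
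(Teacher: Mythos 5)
Your proposal is correct and follows essentially the same route as the paper's proof: expand $\rho$ in the computational basis, use that $\vec Z_q^{\vec z}\ket{\vec x} = \omega_q^{\ip{\vec z,\vec x}}\ket{\vec x}$ to pull out the phase $\omega_q^{\ip{\vec z,\vec x - \vec y}}$, and apply the orthogonality of Fourier characters (\expref{Lemma}{lem:orth}) to kill the off-diagonal terms. Your coordinatewise factoring of the inner sum is just a slightly more explicit version of the paper's direct application of the same lemma, so there is nothing to add.
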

\begin{proof}
Suppose that the state $\rho$ has the following form in the standard basis,
$$
\rho =\sum_{\vec x,\vec y \in \Z_q^m} \alpha_{\vec x,\vec y} \ketbra{\vec x}{\vec y} \,\,\in L((\mathbb{C}^q)^{\otimes m}).
$$
Using the orthogonality of Fourier characters over $\Z_q$ (\expref{Lemma}{lem:orth}), we obtain
\begin{align*}
\algo Z(\rho) &= q^{-m} \sum_{\vec z \in \Z_q^m} \, \vec Z_q^{\vec z} \rho \vec Z_q^{-\vec z}\\
&= q^{-m} \sum_{\vec z \in \Z_q^m}\sum_{\vec x,\vec y \in \Z_q^m} \alpha_{\vec x,\vec y} \, \vec Z_q^{\vec z} \ketbra{\vec x}{\vec y}\vec Z_q^{-\vec z}\\
&=  \sum_{\vec x,\vec y \in \Z_q^m}  \alpha_{\vec x,\vec y} \, 
\left( q^{-m}
\sum_{\vec z \in \Z_q^m}
\omega_q^{\ip{\vec x,\vec z}} \omega_q^{-\ip{\vec y,\vec z}}
\right)
 \ketbra{\vec x}{\vec y}\\
 &=\sum_{\vec x \in \Z_q^m} \alpha_{\vec x,\vec x} \ketbra{\vec x}{\vec x}\\
 &=\sum_{\vec x \in \Z_q^m} \Tr[\ketbra{\vec x}{\vec x} \rho] \ketbra{\vec x}{\vec x}.
\end{align*}
\end{proof}

\subsection{Lattices and the Gaussian mass}\label{sec:lattices}

A $\emph{lattice}$ $\Lambda \subset \mathbb{R}^m$ is a discrete subgroup of $\mathbb{R}^m$. To avoid handling matters of precision, we will only consider integer lattices $\Lambda \subseteq \Z^m$ throughout this work.
The \emph{dual} of a lattice $\Lambda \subset \mathbb{R}^m$, denoted by $\Lambda^*$, is the lattice of all vectors $y \in \mathbb{R}^m$ that satisfy $\ip{\vec y,\vec x} \in \Z$, for all vectors $\vec x \in \Lambda$. In other words, we define
$$
\Lambda^* = \left\{ \vec y \in \mathbb{R}^m \, : \, \ip{\vec y,\vec x} \in \Z, \text{ for all }  \vec x \in \Lambda\right\}.
$$
Given a lattice $\Lambda \subset \mathbb{R}^m$ and a vector $\vec t \in \mathbb{R}^m$, we define the coset with respect to $\vec t $ as the lattice shift $\Lambda - \vec t = \{\vec x \in \mathbb{R}^m :\, \vec x + \vec t \in \Lambda\}$. Note that many different shifts $\vec t$ can define the same coset.

The \emph{Gaussian measure} $\rho_\sigma$ with parameter $\sigma > 0$ is defined as the function
\begin{align*}
\rho_\sigma(\vec x) = \exp(-\pi \|\vec x \|^2/ \sigma^2), \quad \,\, \forall \vec x \in \mathbb{R}^m.    
\end{align*}
Let $\Lambda \subset \mathbb{R}^m$ be a lattice and let $\vec t \in \mathbb{R}^m$ be a shift. We define the \emph{Gaussian mass} of $\Lambda - \vec t$ as the quantity
\begin{align*}
\rho_\sigma(\Lambda - \vec t) = \sum_{\vec y \in \Lambda}\rho_\sigma(\vec y- \vec t).
\end{align*}

The \emph{discrete Gaussian distribution} $D_{\Lambda - \vec t,\sigma}$ is the distribution over the coset $\Lambda - \vec t$ that assigns probability proportional to $e^{-\pi \|\vec x - \vec t \|^2/ \sigma^2}$ for lattice points $\vec x \in \Lambda$. In other words, we have
$$
D_{\Lambda - \vec t,\sigma} (\vec x)= \frac{\rho_\sigma(\vec x - \vec t)}{\rho_\sigma(\Lambda - \vec t)}, \quad \,\, \forall \vec x \in \Lambda.
$$

We make use of the following tail bound for the Gaussian mass of a lattice \cite[Lemma 1.5 (ii)]{Banaszczyk1993}.

\begin{lemma}\label{lem:gaussian-tails}
For any $m$-dimensional lattice $\Lambda$ and shift $\vec t \in \mathbb{R}^m$ and for all $\sigma>0$, $c \geq (2 \pi)^{-\frac{1}{2}}$ it holds that
$$
\rho_\sigma \left( (\Lambda - \vec t)\setminus \algo B^m(\vec 0, c \sqrt{m} \sigma) \right) \leq (2 \pi e c^2)^{\frac{m}{2}} e^{- \pi c^2m} \rho_\sigma(\Lambda),
$$
where $B^m(\vec 0, s) = \{ \vec x \in \mathbb{R}^m \, : \, \|\vec x\|_2 \leq s\}$ denotes the $m$-dimensional ball of radius $s > 0$.
\end{lemma}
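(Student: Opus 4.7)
The plan is to follow the classical Banaszczyk-style argument, which combines a Markov-style exponential moment bound with the Poisson summation formula on the dual lattice.

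Fix a parameter $s \in (0,1)$ to be optimized later, and write $R = c\sqrt{m}\sigma$. Since $e^{\pi s(\|\vec y\|^2 - R^2)/\sigma^2} \geq 1$ whenever $\|\vec y\| \geq R$, I would bound pointwise and then sum over the truncated coset to obtain
$$
\rho_\sigma\!\left( (\Lambda - \vec t)\setminus \algo B^m(\vec 0, R) \right) \;\leq\; e^{-\pi s c^2 m} \sum_{\vec y \in \Lambda - \vec t} \rho_\sigma(\vec y)\, e^{\pi s \|\vec y\|^2/\sigma^2} \;=\; e^{-\pi s c^2 m}\, \rho_{\sigma/\sqrt{1-s}}(\Lambda - \vec t),
$$
which reduces the problem to controlling the Gaussian mass of the coset at the larger width $\sigma/\sqrt{1-s}$.

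The next step is to compare this scaled Gaussian mass back to $\rho_\sigma(\Lambda)$. For this I would invoke two consequences of the Poisson summation formula $\rho_\tau(\Lambda - \vec t) = \tfrac{\tau^m}{\det(\Lambda)}\sum_{\vec y^* \in \Lambda^*} e^{-2\pi i \ip{\vec t, \vec y^*}} \rho_{1/\tau}(\vec y^*)$. First, pairing $\pm \vec y^*$ terms and using $\cos(2\pi\ip{\vec t, \vec y^*}) \leq 1$ yields the shift inequality $\rho_\tau(\Lambda - \vec t) \leq \rho_\tau(\Lambda)$ for every $\vec t \in \mathbb{R}^m$. Second, applying Poisson to both $\rho_{\sigma/\sqrt{1-s}}(\Lambda)$ and $\rho_\sigma(\Lambda)$ and using the pointwise bound $\rho_{\sqrt{1-s}/\sigma}(\vec y^*) \leq \rho_{1/\sigma}(\vec y^*)$ gives $\rho_{\sigma/\sqrt{1-s}}(\Lambda) \leq (1-s)^{-m/2}\rho_\sigma(\Lambda)$. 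Chaining these inequalities produces
$$
\rho_\sigma\!\left( (\Lambda - \vec t)\setminus \algo B^m(\vec 0, R) \right) \;\leq\; e^{-\pi s c^2 m}\, (1-s)^{-m/2}\, \rho_\sigma(\Lambda).
$$

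It remains to optimize over $s$. The hypothesis $c \geq (2\pi)^{-1/2}$ ensures $2\pi c^2 \geq 1$, so the choice $s := 1 - \tfrac{1}{2\pi c^2} \in [0,1)$ is admissible. Substituting yields $(1-s)^{-m/2} = (2\pi c^2)^{m/2}$ and $e^{-\pi s c^2 m} = e^{m/2}\,e^{-\pi c^2 m}$, whose product is exactly $(2\pi e c^2)^{m/2}\, e^{-\pi c^2 m}$, matching the claimed bound.

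The main conceptual obstacle is the shift inequality $\rho_\tau(\Lambda - \vec t) \leq \rho_\tau(\Lambda)$: this is the genuinely non-trivial Banaszczyk ingredient and requires passing to the dual lattice via Poisson summation. Everything else is a direct Markov-style tail estimate together with explicit optimization of the free parameter $s$, which is why the stated range $c \geq (2\pi)^{-1/2}$ arises naturally.
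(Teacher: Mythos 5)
Your proof is correct: the Markov-style exponential moment bound, the shift inequality $\rho_\tau(\Lambda - \vec t) \leq \rho_\tau(\Lambda)$ via Poisson summation on the dual, the scaling comparison $\rho_{\sigma/\sqrt{1-s}}(\Lambda) \leq (1-s)^{-m/2}\rho_\sigma(\Lambda)$, and the optimization $s = 1 - \tfrac{1}{2\pi c^2}$ all check out and reproduce the stated constant exactly (including the boundary case $c = (2\pi)^{-1/2}$, where $s=0$ and the bound degenerates to the trivial $\rho_\sigma(\Lambda)$). The paper itself gives no proof --- it imports the lemma directly from Banaszczyk (Lemma 1.5(ii)) --- and your argument is precisely the standard Banaszczyk argument underlying that citation.
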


\paragraph{$q$-ary lattices.} In this work, we mainly consider \emph{$q$-ary lattices} $\Lambda$ that that satisfy $q\Z^m \subseteq \Lambda \subseteq \Z^m$, for some integer modulus $q \geq 2$. Specifically, we consider lattices generated by a matrix $\vec A \in \Z_q^{n \times m}$ for some $n,m \in \N$. The first lattice consists of all vectors which are perpendicular to the rows of $\vec A$, namely
$$
\Lambda_q^\bot(\vec A) = \{\vec x \in \Z^m: \, \vec A \cdot \vec x = \vec 0 \Mod{q}\}.
$$
Note that $\Lambda_q^\bot(\vec A)$ contains $q\Z^m$; in particular, it contains the identity $\vec 0 \in \Z^m$. For any \emph{syndrome} $\vec y \in \Z_q^n$ in the column span of $\vec A$, we also consider the lattice coset $\Lambda_q^{\vec y}(\vec A)$ given by
$$
\Lambda_q^{\vec y}(\vec A) = \{\vec x \in \Z^m: \, \vec A \cdot \vec x = \vec y \Mod{q}\} = \Lambda_q^\bot(\vec A) + \vec u,
$$
where $\vec u \in \Z^m$ is an arbitrary integer solution to the equation $\vec A \vec u = \vec y \Mod{q}$.

The second lattice is the lattice generated by $\vec A^T$ and is defined by
$$
\Lambda_q(\vec A) = \{\vec y \in \Z^m: \, \vec y = \vec A^T \cdot \vec s \Mod{q}, \text{ for some } \vec s \in \Z^n\}.
$$
The $q$-ary lattices  $\Lambda_q(\vec A)$ and $\Lambda_q^{\bot}(\vec A)$ are dual to each other (up to scaling). Specifically, we have
$$
q \cdot \Lambda_q^\bot(\vec A)^* =\Lambda_q(\vec A) \quad \text{ and } \quad q \cdot \Lambda_q(\vec A)^* = \Lambda_q^{\bot}(\vec A).
$$
Whenever $\vec A \in \Z_q^{n \times m}$ is full-rank, i.e. the subset-sums of the columns of $\vec A$ generate $\Z_q^n$, then
$\det(\Lambda_q^\bot(\vec A)) = q^n$.
We use the following facts due to Gentry, Peikert and Vaikuntanathan~\cite{cryptoeprint:2007:432}.

\begin{lemma}[\cite{cryptoeprint:2007:432}, Lemma 5.1]\label{lem:full-rank}
Let $n\in \N$ and let $q \geq 2$ be a prime modulus with $m \geq 2 n \log q$. Then, for all but a $q^{-n}$ fraction of $\vec A \in \Z_q^{n \times m}$, the subset-sums of the columns of $\vec A$ generate $\Z_q^n$. In other words, a uniformly random matrix $\vec A \rand \Z_q^{n \times m}$ is full-rank with overwhelming probability.
\end{lemma}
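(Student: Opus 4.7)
My plan is to use a union bound over nonzero dual vectors, exploiting that $\Z_q$ is a field whenever $q$ is prime, so that ``subset-sums of the columns of $\vec A$ generate $\Z_q^n$'' is equivalent to the columns of $\vec A$ spanning $\Z_q^n$ as a $\Z_q$-vector space, i.e., to $\vec A$ having full row-rank $n$.

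The first step will be to recast the failure event. The matrix $\vec A \in \Z_q^{n \times m}$ fails to be full-rank iff there exists a nonzero row vector $\vec v \in \Z_q^n$ lying in the left kernel, i.e., with $\vec v^T \vec A = \vec 0 \Mod q$. I would then compute, for a \emph{fixed} nonzero $\vec v \in \Z_q^n$, the probability of this event over uniformly random $\vec A \rand \Z_q^{n \times m}$. Since the columns $\vec a_1, \dots, \vec a_m \in \Z_q^n$ of $\vec A$ are independent and uniform, and since $q$ is prime so that the linear functional $\vec a \mapsto \vec v^T \vec a$ is a surjection $\Z_q^n \to \Z_q$ (because $\vec v \neq \vec 0$), each coordinate $\vec v^T \vec a_i$ is uniform in $\Z_q$. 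Hence $\Pr_{\vec A}[\vec v^T \vec A = \vec 0] = q^{-m}$.

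A union bound over the $q^n - 1$ nonzero choices of $\vec v$ then gives
$$
\Pr_{\vec A \rand \Z_q^{n \times m}}\bigl[\vec A \text{ is not full-rank}\bigr] \;\leq\; (q^n-1) \cdot q^{-m} \;<\; q^{n-m}.
$$
The last step is to verify that the hypothesis $m \geq 2n \log q$ implies $q^{n-m} \leq q^{-n}$, which holds as soon as $m \geq 2n$; since $\log q \geq 1$ for $q \geq 2$ in any standard base convention (and is strictly larger in base $e$, which is the convention consistent with the rest of the paper), the stated hypothesis is comfortably sufficient to conclude that the fraction of non-full-rank $\vec A$ is at most $q^{-n}$.

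I do not anticipate any real obstacle: the argument is a textbook union bound, and the only subtlety is confirming that the equivalence ``subset-sums generate $\Z_q^n$'' $\Leftrightarrow$ ``columns span $\Z_q^n$'' requires primality of $q$ (so that $\Z_q$ is a field and there are no nontrivial ideals obstructing scalar multiplication from being realized as repeated addition). Once that is noted, the rest is purely a counting argument, and no lattice machinery from \expref{Section}{sec:lattices} is invoked.
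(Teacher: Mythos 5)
Your proof is correct and is essentially the standard argument behind the cited result: the paper imports this lemma from Gentry--Peikert--Vaikuntanathan without reproving it, and the underlying proof is exactly your reduction of ``subset-sums generate $\Z_q^n$'' to full row-rank (valid because $q$ is prime, so the additive subgroup generated by the columns is automatically an $\mathbb{F}_q$-subspace), followed by a union bound over nonzero dual vectors giving $(q^n-1)\,q^{-m} < q^{n-m} \le q^{-n}$ once $m \ge 2n$. One small correction to your final step: the parenthetical about logarithm bases is backwards --- in base $e$ one has $\ln 2 \approx 0.69 < 1$, so $m \ge 2n \ln q$ would \emph{not} imply $m \ge 2n$ when $q = 2$; the convention consistent with the rest of the paper (e.g., $N = (n+1)\lceil \log q \rceil$ for the gadget matrix) is base $2$, under which $\log q \ge 1$ holds for all $q \ge 2$ and your conclusion is sound.
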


\begin{lemma}[\cite{cryptoeprint:2007:432}, Corollary 5.4]\label{lem:Gaussian-LHL}
Let $n\in \N$ and $q \geq 2$ be a prime with $m \geq 2 n \log q$. Then, for all but a $2q^{-n}$ fraction of $\vec A \in \Z_q^{n \times m}$ and $\sigma = \omega(\sqrt{\log m})$, the distribution of the syndrome $\vec A \cdot \vec e = \vec u \Mod{q}$ is within negligible total variation distance of the uniform distribution over $\Z_q^n$, where $\vec e \sim D_{\Z^m,\sigma}$.
\end{lemma}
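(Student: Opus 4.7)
The statement is a quantitative form of the ``Gaussian leftover hash lemma,'' and my plan is to prove it via the \emph{smoothing parameter} machinery of Micciancio--Regev together with a probabilistic argument over $\vec A$. Recall that the smoothing parameter $\eta_\eps(\Lambda)$ of a lattice $\Lambda$ is the smallest $s > 0$ such that $\rho_{1/s}(\Lambda^* \setminus \{\vec 0\}) \leq \eps$, and that Poisson summation (applied to the Gaussian $\rho_\sigma$ over the coset $\Lambda_q^\bot(\vec A) + \vec u$ for fixed $\vec u$) yields the following standard smoothing fact: if $\vec A \in \Z_q^{n\times m}$ is full-rank and $\sigma \geq \eta_\eps(\Lambda_q^\bot(\vec A))$, then the syndrome distribution of $\vec A \cdot \vec e \pmod q$ with $\vec e \sim D_{\Z^m,\sigma}$ is within statistical distance $\eps/2$ of the uniform distribution on $\Z_q^n$. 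The first step is therefore to invoke \expref{Lemma}{lem:full-rank} to discard a $q^{-n}$ fraction of ``bad'' non-full-rank matrices, leaving the task of showing that for all but at most another $q^{-n}$ fraction of $\vec A \in \Z_q^{n \times m}$, we have $\eta_\eps(\Lambda_q^\bot(\vec A)) \leq \sigma$ with $\eps = \negl(m)$.

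The core of the proof is then to control the Gaussian mass of the dual lattice. Since $\Lambda_q^\bot(\vec A)^* = \tfrac{1}{q}\Lambda_q(\vec A)$, and $\Lambda_q(\vec A) = \{\vec A^T \vec s \bmod q : \vec s \in \Z^n\} + q\Z^m$, I would bound
\[
\rho_{1/\sigma}\bigl(\Lambda_q^\bot(\vec A)^* \setminus \{\vec 0\}\bigr) \;=\; \sum_{\vec s \in \Z_q^n \setminus \{\vec 0\}} \sum_{\vec z \in \Z^m} \rho_{q/\sigma}\bigl(\vec A^T \vec s + q\vec z\bigr),
\]
and then take the expectation over uniformly random $\vec A \rand \Z_q^{n\times m}$. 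The key observation is that for any fixed nonzero $\vec s \in \Z_q^n$, the vector $\vec A^T \vec s \bmod q$ is uniformly distributed over $\Z_q^m$, so the expectation of each inner summand over $\vec A$ is $q^{-m} \rho_{q/\sigma}(\Z^m)$. Using the identity $\rho_{q/\sigma}(\Z^m) = (q/\sigma)^m \rho_{\sigma/q}(q\Z^m)^{\ast}$ type considerations via Poisson summation and the Banaszczyk tail bound (\expref{Lemma}{lem:gaussian-tails}), I would show that for $\sigma = \omega(\sqrt{\log m})$ the expected value above is bounded by $q^n \cdot 2^{-\Omega(m)} = \negl(m)$, since $m \geq 2n\log q$.

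Finally, a Markov-inequality argument converts this expectation bound into a statement holding pointwise for all but a $q^{-n}$ fraction of $\vec A$: for those good $\vec A$, we have $\rho_{1/\sigma}(\Lambda_q^\bot(\vec A)^*\setminus\{\vec 0\}) \leq \eps$ with $\eps = \negl(m)$, hence $\eta_\eps(\Lambda_q^\bot(\vec A)) \leq \sigma$, and the smoothing fact above gives the desired negligible statistical distance. Combining with the full-rank exceptional set gives the total exceptional fraction $2q^{-n}$ as claimed.

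The main obstacle I expect is the precise counting step controlling $\mathbb{E}_{\vec A}[\rho_{q/\sigma}(\Lambda_q(\vec A)\setminus\{\vec 0\})]$: one must carefully separate the short nonzero representatives $\vec A^T\vec s \bmod q$ (which concentrate in a ball of radius $O(\sigma\sqrt m)$ by Banaszczyk's tail bound) from the periodic copies in $q\Z^m$, and verify that the choice $\sigma = \omega(\sqrt{\log m})$ is exactly what makes the Gaussian tail on the dual side suppress the $q^n$ union bound over nonzero $\vec s$. Once this counting is done, the remainder of the proof is routine application of Poisson summation and Markov's inequality.
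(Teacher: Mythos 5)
The paper does not actually prove this statement: it is imported verbatim from Gentry--Peikert--Vaikuntanathan \cite{cryptoeprint:2007:432} as their Corollary 5.4, so the right benchmark is GPV's own argument. Your plan is correct, and it takes a genuinely different route. GPV first show (their Lemma 5.3) by a counting/union bound that for all but a $2q^{-n}$ fraction of $\vec A$ (this fraction already absorbing the non-full-rank matrices of \expref{Lemma}{lem:full-rank}) one has $\lambda_1^{\infty}(\Lambda_q(\vec A)) \geq q/4$: for each fixed nonzero $\vec s$, the point $\vec A^T \vec s \bmod q$ is uniform over $\Z_q^m$ (using $q$ prime, exactly as you observe), so it lands in the $\ell^\infty$-box of radius $q/4$ with probability at most $2^{-m}$, and the union bound over $q^n \leq 2^{m/2}$ choices of $\vec s$ costs only $q^{-n}$ since $m \geq 2n\log q$. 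They then invoke the transference-type bound $\eta_\eps(\Lambda) \leq \sqrt{\ln(2m(1+1/\eps))/\pi}\,/\,\lambda_1^{\infty}(\Lambda^*)$ to get $\eta_\eps(\Lambda_q^\bot(\vec A)) \leq \omega(\sqrt{\log m})$ for negligible $\eps$, and finish with the same smoothing fact you state. You replace the min-distance-plus-transference step by a first-moment bound on the dual Gaussian mass followed by Markov. This is a legitimate alternative, and it is in one respect \emph{simpler} than you anticipate: the expectation computes exactly, $\E_{\vec A}\bigl[\sum_{\vec z} \rho_{q/\sigma}(\vec A^T\vec s + q\vec z)\bigr] = q^{-m}\rho_{q/\sigma}(\Z^m)$, so your flagged ``main obstacle'' of separating short representatives from the periodic copies via Banaszczyk's tail bound never arises; the crude estimate $\rho_{q/\sigma}(\Z^m) \leq (1+q/\sigma)^m$ already gives a total expectation at most $q^n(2/\sigma)^m \leq (2\sqrt{2}/\sigma)^m$, which survives the Markov threshold since $\sigma = \omega(\sqrt{\log m}) \to \infty$. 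What GPV's route buys instead is a ``good'' set of matrices defined by a $\sigma$-free property ($\lambda_1^{\infty}$ large), whereas your Markov argument yields a good set that a priori depends on the function $\sigma(m)$; this matches the intended reading of the lemma (fix $\sigma$ first), and monotonicity of $\rho_{1/\sigma}$ in $\sigma$ lets one good set serve all larger widths, so it is not a defect, just worth stating.

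One genuine (though easily repaired) inaccuracy: your displayed identity for $\rho_{1/\sigma}\bigl(\Lambda_q^\bot(\vec A)^* \setminus \{\vec 0\}\bigr)$ omits the dual points with $\vec s = \vec 0$, i.e.\ the points of $\tfrac{1}{q}\cdot q\Z^m \setminus \{\vec 0\} = \Z^m \setminus \{\vec 0\}$, which contribute $\rho_{1/\sigma}(\Z^m \setminus \{\vec 0\}) = O(m\,e^{-\pi\sigma^2})$ -- negligible for $\sigma = \omega(\sqrt{\log m})$ and independent of $\vec A$, but they must appear before the equality is valid. Relatedly, the parametrization of the nonzero dual points by pairs $(\vec s, \vec z)$ with $\vec s \neq \vec 0$ is non-overlapping only when $\vec s \mapsto \vec A^T\vec s \bmod q$ is injective, which holds on the full-rank event you have already conditioned on; in any case overcounting would only increase your upper bound, so the argument goes through.
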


The following lemma is a consequence of \cite[Lemma 4.4]{1366257} and \cite[Lemma 5.3]{cryptoeprint:2007:432}.

\begin{lemma}\label{lem:tailboundII}
Let $n\in \N$ and let $q \geq 2$ be a prime modulus with $m \geq 2 n \log q$. Let $\vec A \in \Z_q^{n \times m}$ be a matrix whose columns generate $\Z_q^n$. Then, for any $\sigma = \omega(\sqrt{\log m})$ and for any syndrome $\vec y \in \Z_q^n$:
$$
\Pr_{\vec x \sim D_{\Lambda_q^{\vec y}(\vec A),\sigma}} \Big[ \|\vec x\| \geq \sqrt{m}\sigma \Big] \leq \negl(n).
$$
\end{lemma}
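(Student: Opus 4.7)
The plan is to bound the desired probability by writing it as a ratio of Gaussian masses and then controlling the numerator and denominator separately using the two cited references.

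First I would express
\[
\Pr_{\vec x \sim D_{\Lambda_q^{\vec y}(\vec A),\sigma}}\!\Big[\|\vec x\| \geq \sqrt{m}\sigma\Big] \;=\; \frac{\rho_\sigma\!\left(\Lambda_q^{\vec y}(\vec A) \setminus B^m(\vec 0, \sqrt{m}\sigma)\right)}{\rho_\sigma\!\left(\Lambda_q^{\vec y}(\vec A)\right)}.
\]
For the numerator, I would apply Banaszczyk's tail bound (\expref{Lemma}{lem:gaussian-tails}) to the lattice $\Lambda = \Lambda_q^\bot(\vec A)$ with shift $\vec t = -\vec u$, where $\vec u \in \Z^m$ is any fixed integer preimage satisfying $\vec A \vec u = \vec y \Mod{q}$, so that $\Lambda_q^{\vec y}(\vec A) = \Lambda_q^\bot(\vec A) - \vec t$. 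Choosing the tail parameter $c = 1 \geq (2\pi)^{-1/2}$ yields
\[
\rho_\sigma\!\left(\Lambda_q^{\vec y}(\vec A) \setminus B^m(\vec 0, \sqrt{m}\sigma)\right) \;\leq\; (2\pi e)^{m/2}\, e^{-\pi m}\, \rho_\sigma\!\left(\Lambda_q^\bot(\vec A)\right).
\]

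For the denominator, I would invoke the smoothing-parameter argument implicit in Lemma 5.3 of Gentry, Peikert and Vaikuntanathan~\cite{cryptoeprint:2007:432}: since $\vec A$ is full-rank (by hypothesis) and $\sigma = \omega(\sqrt{\log m})$ exceeds the smoothing parameter $\eta_\varepsilon(\Lambda_q^\bot(\vec A))$ for some negligible $\varepsilon = \varepsilon(m)$, the Gaussian mass of every coset is essentially the same as that of the lattice, i.e.
\[
\rho_\sigma\!\left(\Lambda_q^{\vec y}(\vec A)\right) \;\geq\; \frac{1-\varepsilon}{1+\varepsilon}\, \rho_\sigma\!\left(\Lambda_q^\bot(\vec A)\right).
\]
Combining the two displays, the probability is at most
\[
\frac{1+\varepsilon}{1-\varepsilon}\,(2\pi e)^{m/2}\, e^{-\pi m} \;=\; \frac{1+\varepsilon}{1-\varepsilon}\,\bigl(2\pi e \cdot e^{-2\pi}\bigr)^{m/2}.
\]
Since $2\pi e \approx 17.08 < e^{2\pi} \approx 535.49$, the base $2\pi e \cdot e^{-2\pi}$ is a constant strictly less than $1$, so this expression decays as $2^{-\Omega(m)}$. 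Using the assumption $m \geq 2n\log q \geq 2n$, this is $2^{-\Omega(n)} = \negl(n)$, as desired.

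The only mildly subtle point — the main potential obstacle — is justifying that $\sigma = \omega(\sqrt{\log m})$ does indeed lie above the smoothing parameter $\eta_\varepsilon(\Lambda_q^\bot(\vec A))$ for some negligible $\varepsilon$. This is exactly the content of the standard smoothing-parameter bounds for $q$-ary lattices (\eg~\cite{1366257}): whenever $\vec A \in \Z_q^{n \times m}$ has columns that generate $\Z_q^n$ and $m \geq 2n\log q$, one has $\eta_\varepsilon(\Lambda_q^\bot(\vec A)) = O(\sqrt{\log m})$ for $\varepsilon = 2^{-\Omega(m)}$. Given this off-the-shelf fact, the remainder of the argument is just the ratio computation above.
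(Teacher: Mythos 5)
Your proposal is correct and follows essentially the same route as the paper: the paper's entire proof of this lemma is the one-line citation of Lemma~4.4 of Micciancio--Regev~\cite{1366257} and Lemma~5.3 of GPV~\cite{cryptoeprint:2007:432}, and your ratio argument --- Banaszczyk's bound (\expref{Lemma}{lem:gaussian-tails}) with $c=1$ for the numerator, and the smoothing-parameter guarantee $\rho_\sigma(\Lambda_q^{\vec y}(\vec A)) \geq \frac{1-\varepsilon}{1+\varepsilon}\,\rho_\sigma(\Lambda_q^\bot(\vec A))$ for the denominator --- is precisely the internal proof of the former combined with the latter's bound on $\eta_\varepsilon(\Lambda_q^\bot(\vec A))$.

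Two calibration remarks on the step you yourself flag as the subtle one. First, with $\eta_\varepsilon(\Lambda_q^\bot(\vec A)) = O(\sqrt{\log m})$ one only obtains $\varepsilon$ negligible in $m$, not $\varepsilon = 2^{-\Omega(m)}$; exponentially small $\varepsilon$ would force a smoothing parameter of order $\sqrt{m}/\lambda_1(\Lambda^*)$. This is immaterial here, since any fixed $\varepsilon \leq 1/2$ makes your prefactor $\frac{1+\varepsilon}{1-\varepsilon}$ a constant. Second, and more substantively, the ``off-the-shelf fact'' is \emph{not} true for every full-rank $\vec A$: GPV's Lemma~5.3 holds for all but a $q^{-n}$ fraction of matrices, the operative condition in its proof being $\lambda_1^\infty(\Lambda_q(\vec A)) \geq q/4$, which is strictly stronger than the columns of $\vec A$ generating $\Z_q^n$. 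For instance, for $\vec A = [\vec I_n \,|\, \vec 0]$ (full-rank) and syndrome $\vec y = (\frac{q-1}{2},0,\dots,0)$, every point of $\Lambda_q^{\vec y}(\vec A)$ has norm at least $\frac{q-1}{2} \gg \sqrt{m}\,\sigma$, so the tail probability is $1$. This defect is inherited from the paper's own statement of the lemma (which is only ever invoked for matrices statistically close to uniform, where GPV's lemma applies), so your proof is as sound as the paper's; just cite GPV Lemma~5.3 with its genericity condition rather than asserting the smoothing bound from full-rankness alone.
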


\begin{definition}[Periodic Gaussian]
Let $m \in \N$, let $q \geq 2$ be a modulus and let $\sigma >0$. The $q$-periodic Gaussian $\rho_{\sigma,q}$ function is the periodic continuation of the Gaussian measure $\rho_\sigma$, where 
$$
\rho_{\sigma,q}(\vec x) = \rho_\sigma (\vec x + q \Z^m), \quad \forall \vec x \in \mathbb{R}^m.
$$
\end{definition}

For any function $f: \Z^m \rightarrow \mathbb{C}$ and lattice $\Lambda \subseteq \Z^m$, the well-known \emph{Poisson summation formula} states that $f(\Lambda) = \det(\Lambda^*) \hat{f}(\Lambda^*)$.
We use the following variant of the formula which applies to $q$-ary lattices.

\begin{lemma}[Poisson summation formula for $q$-ary lattices]\label{lem:poisson}
Let $q$ be a prime modulus and let $\vec A \in \Z_q^{n \times m}$ be any matrix whose columns generate $\Z_q^n$. Let $\vec v,\vec w \in \Z_q^m$ and $\sigma >0$ be arbitrary. Then, it holds that
$$
\sum_{\vec x \in \Lambda_q^{\vec v}(\vec A)} \rho_{\sigma}(\vec x) \cdot e^{-\frac{2\pi i}{q}\ip{\vec w,\vec x}} =
\frac{\sigma^m}{q^n} \cdot \sum_{\vec y \in \Z_q^n} \rho_{q/\sigma,q}(\vec w + \vec y \vec A) \cdot e^{\frac{2\pi i}{q}\ip{\vec y,\vec v}}.
$$
\end{lemma}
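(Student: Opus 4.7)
The plan is to reduce the identity to the classical Poisson summation formula on $\mathbb{R}^m$ applied to the lattice $\Lambda_q^\bot(\vec A)$, combined with a change of variables onto the coset. First, fix any integer vector $\vec u \in \Z^m$ such that $\vec A\vec u = \vec v \Mod{q}$, so that $\Lambda_q^{\vec v}(\vec A) = \Lambda_q^\bot(\vec A) + \vec u$. Shifting the summation index via $\vec z = \vec x - \vec u$ rewrites the left-hand side as $\sum_{\vec z \in \Lambda_q^\bot(\vec A)} g(\vec z)$ for
$$
g(\vec z) \;=\; \rho_\sigma(\vec z + \vec u)\, e^{-\frac{2\pi i}{q}\ip{\vec w,\, \vec z + \vec u}}.
$$

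Next, I apply the classical Poisson summation formula $\sum_{\vec z \in \Lambda} g(\vec z) = \det(\Lambda)^{-1}\sum_{\vec \xi \in \Lambda^*} \hat g(\vec \xi)$ to the lattice $\Lambda = \Lambda_q^\bot(\vec A)$, invoking the two standard lattice facts $\det(\Lambda_q^\bot(\vec A)) = q^n$ (valid since $\vec A$ is full-rank mod $q$ by hypothesis) and $(\Lambda_q^\bot(\vec A))^\ast = \tfrac{1}{q}\Lambda_q(\vec A)$, both recalled in the excerpt. The Fourier transform $\hat g$ is obtained from the identity $\widehat{\rho_\sigma} = \sigma^m \rho_{1/\sigma}$ together with the translation--modulation duality, yielding
$$
\hat g(\vec \xi) \;=\; \sigma^m\, e^{2\pi i \ip{\vec u,\vec \xi}}\, \rho_{1/\sigma}\!\bigl(\tfrac{\vec w}{q} + \vec \xi\bigr).
$$

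The technical heart of the argument is a careful parameterization of the dual lattice: since $\vec A$ has full rank mod $q$, every $\vec \xi \in \tfrac{1}{q}\Lambda_q(\vec A)$ admits a unique decomposition $\vec\xi = \tfrac{1}{q}\vec A^T \vec y + \vec k$ with $\vec y \in \Z_q^n$ and $\vec k \in \Z^m$. Substituting this in, the modulation phase collapses to $e^{\frac{2\pi i}{q}\ip{\vec y,\vec v}}$: indeed $\ip{\vec u, \vec A^T \vec y /q} = \ip{\vec A \vec u, \vec y}/q$ equals $\ip{\vec v, \vec y}/q$ modulo $\Z$ because $\vec A \vec u - \vec v \in q\Z^n$, and $\ip{\vec u, \vec k} \in \Z$. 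Meanwhile, the inner sum over $\vec k \in \Z^m$ of the Gaussian factor $\rho_{1/\sigma}(\tfrac{\vec w}{q} + \tfrac{1}{q}\vec A^T \vec y + \vec k)$ rescales (pulling $1/q$ out of the norm converts width $1/\sigma$ into $q/\sigma$) precisely to the $q$-periodic Gaussian $\rho_{q/\sigma,q}(\vec w + \vec y\vec A)$, upon identifying the column vector $\vec A^T \vec y$ with the row vector $\vec y\vec A$. Collecting the overall factor $\sigma^m/q^n$ then produces the claimed right-hand side.

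The main obstacle I expect is purely bookkeeping: tracking the rescaling of the Gaussian width when dividing inside the norm by $q$, and verifying that the residual phase is independent of both the choice of coset representative $\vec u$ and the integer lift $\vec k$ inside the dual lattice. None of these sub-steps is individually deep, but they must be executed carefully to align with the row/column conventions used in the statement.
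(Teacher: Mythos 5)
Your proof is correct and takes essentially the same route as the paper's: shift to the coset $\Lambda_q^{\bot}(\vec A) + \vec u$, apply classical Poisson summation with $\det(\Lambda_q^\bot(\vec A)) = q^n$ and $\Lambda_q^\bot(\vec A)^* = \frac{1}{q}\Lambda_q(\vec A)$, then re-parameterize the dual sum as $\vec y \vec A + q\Z^m$ over $\vec y \in \Z_q^n$ to collect the periodic Gaussian $\rho_{q/\sigma,q}(\vec w + \vec y \vec A)$ and the phase $e^{\frac{2\pi i}{q}\ip{\vec y,\vec v}}$. The only difference is expository: you spell out the uniqueness of the decomposition $\vec \xi = \frac{1}{q}\vec A^T \vec y + \vec k$ (which does require the full-rank hypothesis, as you note) and the width rescaling, both of which the paper leaves implicit.
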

\begin{proof}
Because $\vec A \in \Z_q^{n \times m}$ is full-rank, it holds that
$\det(\Lambda_q^\bot(\vec A)) = q^n$. Let $\Lambda_q^{\vec v}(\vec A)$ be the lattice coset given by $\Lambda_q^\bot(\vec A) + \vec u$,
for some $\vec u \in \Z^m$ with $\vec A \cdot\vec u = \vec v\Mod{q}$. By the Poisson summation formula,
\begin{align*}
\sum_{\vec x \in \Lambda_q^{\vec v}(\vec A)} \rho_{\sigma}(\vec x) \cdot e^{-\frac{2\pi i}{q}\ip{\vec w,\vec x}} &= \sum_{\vec x \in \Lambda_q^\bot(\vec A) + \vec u} \rho_{\sigma}(\vec x) \cdot e^{-\frac{2\pi i}{q}\ip{\vec w,\vec x}}\\
&= \frac{\sigma^m}{\det(\Lambda_q^\bot(\vec A))} \sum_{\vec y  \in \frac{1}{q} \Lambda_q(\vec A)} \rho_{1/\sigma} (\vec w + q \cdot \vec y) \cdot e^{2 \pi i \ip{\vec y,\vec u}} \\
&= \frac{\sigma^m}{q^n} \sum_{\vec y  \in \Lambda_q(\vec A)} \rho_{q/\sigma} (\vec w +  \vec y) \cdot e^{\frac{2 \pi i}{q} \ip{\vec y,\vec u}} \\
&= \frac{\sigma^m}{q^n} \sum_{\vec y  \in \Z_q^n} \rho_{q/\sigma} (\vec w + \vec y \vec A + q \cdot \Z^m) \cdot e^{\frac{2 \pi i}{q} \ip{\vec A^\intercal \vec y,\vec u}} \\
&= \frac{\sigma^m}{q^n} \cdot \sum_{\vec y \in \Z_q^n} \rho_{q/\sigma,q}(\vec w + \vec y \vec A) \cdot e^{\frac{2\pi i}{q}\ip{\vec y,\vec v}}.
\end{align*}
\end{proof}

For $\vec x \in \Z^m$, let $[\vec x]_q$ denote the unique representative $\bar{\vec x} \in \Z^m \cap (-\frac{q}{2},\frac{q}{2}]^m$ such that $\vec x \equiv \bar{\vec x} \Mod{q}$. The following lemma due to Brakerski~\cite{Brakerski18} says that, whenever $\sigma$ is much smaller than the modulus $q$, the periodic Gaussian $\rho_{\sigma,q}$ is close to the non-periodic (but truncated) Gaussian. 
\begin{lemma}[\cite{Brakerski18}, Lemma 2.6]\label{lem:periodic-vs-truncated-Gaussian}
Let $q\geq 2$, $\vec x \in \Z^m$ such that $\|[\vec x]_q \| < q/4$ and
$\sigma > 0$. Then,
$$
1 \,\leq \, \frac{\rho_{\sigma,q}(\vec x)}{\rho_\sigma([\vec x]_q)} \,\leq \, 1 + 2^{-(\frac{1}{2}(q/\sigma)^2 -m)}.
$$
\end{lemma}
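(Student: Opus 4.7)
The approach is to unfold the definition of the periodic Gaussian as a sum over translates of $[\vec x]_q$ by the lattice $q\Z^m$, and then bound the resulting tail using the hypothesis $\|[\vec x]_q\| < q/4$. By definition,
$$
\rho_{\sigma,q}(\vec x) \,=\, \rho_\sigma(\vec x + q\Z^m) \,=\, \sum_{\vec k \in \Z^m} \rho_\sigma([\vec x]_q + q \vec k),
$$
where we used that $\vec x \equiv [\vec x]_q \Mod{q}$. The lower bound is immediate: every summand is non-negative, and the $\vec k = \vec 0$ term is exactly $\rho_\sigma([\vec x]_q)$, so the ratio is at least $1$.

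For the upper bound, after dividing through by $\rho_\sigma([\vec x]_q)$, it suffices to show
$$
\sum_{\vec k \in \Z^m \setminus \{\vec 0\}} e^{-\pi(\|[\vec x]_q + q\vec k\|^2 - \|[\vec x]_q\|^2)/\sigma^2} \,\leq\, 2^{-(\frac{1}{2}(q/\sigma)^2 - m)}.
$$
The main geometric estimate I would prove is that for any $\vec y \in \mathbb{R}^m$ with $\|\vec y\| < q/4$ and any nonzero $\vec k \in \Z^m$,
$$
\|\vec y + q\vec k\|^2 - \|\vec y\|^2 \,\geq\, \frac{q^2 \|\vec k\|^2}{2}.
$$
This is a short calculation: expand the left-hand side as $2q\ip{\vec y,\vec k} + q^2 \|\vec k\|^2$, bound the cross term by Cauchy--Schwarz as $|2q\ip{\vec y,\vec k}| \leq 2q\|\vec y\|\|\vec k\| < (q^2/2)\|\vec k\|$, and then absorb $\|\vec k\|$ into $\|\vec k\|^2$ using $\|\vec k\|^2 \geq \|\vec k\|$, which holds because $\vec k$ is a nonzero integer vector.

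Substituting this estimate back into the tail bounds the sum by $\sum_{\vec k \neq \vec 0} e^{-\pi (q/\sigma)^2 \|\vec k\|^2/2}$, i.e., the Gaussian mass of $\Z^m \setminus \{\vec 0\}$ with parameter $s = \sqrt{2}\sigma/q$. I would then factor this over coordinates as $\theta(s)^m - 1$, where $\theta(s) = \sum_{k\in\Z} e^{-\pi k^2/s^2}$, and use the elementary 1D bound $\theta(s) - 1 \leq 3e^{-\pi/s^2}$ together with the binomial expansion $\theta(s)^m - 1 = \sum_{j\geq 1}\binom{m}{j}(\theta(s)-1)^j$ to control the sum. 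The base-$2$ form in the lemma is exactly the slack afforded by the inequality $e^{-\pi t} \leq 2^{-t}$, since $\pi > \ln 2$. The main obstacle I expect is not any single step but rather the bookkeeping in this final tail estimate: the polynomial-in-$m$ overhead from counting short integer vectors must collapse cleanly into the $2^m$ in the exponent, which is comfortable in the nontrivial regime $(q/\sigma)^2 > 2m$ but requires some care at the boundary.
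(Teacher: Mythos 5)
The paper itself gives no proof of this lemma---it is imported verbatim from Brakerski---so the only benchmark is the standard argument, and that is exactly what you reconstruct. Your three core steps are all correct: the unfolding $\rho_{\sigma,q}(\vec x) = \sum_{\vec k \in \Z^m}\rho_\sigma([\vec x]_q + q\vec k)$, the lower bound from the $\vec k = \vec 0$ term, and the geometric estimate $\|\vec y + q\vec k\|^2 - \|\vec y\|^2 \geq \tfrac{1}{2}q^2\|\vec k\|^2$ for $\|\vec y\| < q/4$, which is the heart of the matter (your Cauchy--Schwarz step and the absorption $\|\vec k\| \leq \|\vec k\|^2$ for nonzero integer vectors both check out). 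Your tail bookkeeping also closes in the regime where it should: writing $t = (q/\sigma)^2$, your route gives
$$
\sum_{\vec k \in \Z^m\setminus\{\vec 0\}} e^{-\pi t \|\vec k\|^2/2} \,\leq\, (2^m - 1)\cdot 3e^{-\pi t/2} \,\leq\, 2^{m - t/2},
$$
where the last inequality holds because $3 \cdot 2^{-\pi t/(2\ln 2)} \leq 2^{-t/2}$ as soon as $t \geq 1$ (indeed $\pi/(2\ln 2) - \tfrac12 \approx 1.77$), and your one-dimensional bound $\theta - 1 \leq 3e^{-\pi t/2}$ is likewise valid there.

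However, the ``care at the boundary'' you anticipate is not a bookkeeping issue: the lemma as stated, with no constraint tying $\sigma$ to $q$, is \emph{false} for large $\sigma$, so no refinement of the tail estimate can establish it for all $\sigma > 0$. Take $m=1$, $q=2$, $\vec x = \vec 0$, $\sigma = 20$: then $\rho_\sigma([\vec x]_q) = 1$, while by Poisson summation
$$
\rho_{\sigma,q}(\vec 0) \,=\, \sum_{k \in \Z} e^{-\pi q^2 k^2/\sigma^2} \,\approx\, \sigma/q \,=\, 10,
$$
yet the claimed upper bound is $1 + 2^{1 - \frac{1}{2}(1/10)^2} < 3$. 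More generally the ratio grows like $(\sigma/q)^m$ as $\sigma \to \infty$, while the right-hand side saturates at $1 + 2^m$. The statement therefore carries an implicit hypothesis that $q/\sigma$ is bounded below, and under any such hypothesis---in particular in every invocation in this paper, where $\sigma < q/\sqrt{8m}$ and hence $t \geq 8m$---your chain of estimates closes with large margin. In short: your argument is the right one and is complete in the regime where the lemma is true and used; the residual gap lies in the statement as transcribed, not in your proof.
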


A simple consequence of the tail bound in \expref{Lemma}{lem:gaussian-tails} is that the discrete Gaussian $D_{\Z^m,\sigma}$ distribution is essentially only supported on the finite set $\{\vec x \in \Z^m : \|\vec x\| \leq \sigma \sqrt{m}\}$, which suggests the use of \emph{truncation}.
Given a modulus $q \geq 2$ and $\sigma >0$, we define the \emph{truncated} discrete Gaussian distribution $D_{\Z_q^m,\sigma}$ over the finite set $\Z^m \cap (-\frac{q}{2},\frac{q}{2}]^m$ with support $\{\vec x \in \Z_q^m : \|\vec x\| \leq \sigma \sqrt{m}\}$ as the density
$$
D_{\Z_q^m,\sigma}(\vec x) = \frac{\rho_\sigma(\vec x)}{\displaystyle\sum_{\vec y \in \Z_q^m,\|\vec y\| \leq \sigma\sqrt{m} } \rho_\sigma(\vec y)}
$$
We define the analogous \emph{periodic} discrete Gaussian distribution $D_{\Z_q^m,\sigma,q}$ as
$$
D_{\Z_q^m,\sigma,q}(\vec x) = \frac{\rho_{\sigma,q}(\vec x)}{\displaystyle\sum_{\vec y \in \Z_q^m,\|\vec y\| \leq \sigma\sqrt{m} } \rho_{\sigma,q}(\vec y)}
$$

\begin{lemma}\label{lem:TD-periodic-truncated}
Let $m \in \N$, $q \geq 2$ a modulus and let $\sigma \in (0,q/\sqrt{8m})$. Consider the quantum states,
$$
\ket{\psi} = \sum_{\vec x \in \Z_q^m} \sqrt{D_{\Z_q^m,\sigma}(\vec x)} \ket{\vec x} \quad\quad \text{and} \quad\quad \ket{\phi} = \sum_{\vec x \in \Z_q^m} \sqrt{D_{\Z_q^m,\sigma,q}(\vec x)} \ket{\vec x}.
$$
Then, it holds that
$$
\left\| \ketbra{\psi}{\psi} - \ketbra{\phi}{\phi} \right\|_\tr \,\leq \, \sqrt{1 - \left(1 + 2^{-(\frac{1}{2}(q/\sigma)^2 -m)}\right)^{-1}}.
$$
\end{lemma}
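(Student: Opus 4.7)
The plan is to exploit the fact that both $\ket{\psi}$ and $\ket{\phi}$ are \emph{pure}, so the trace distance has the closed form
$$
\left\| \ketbra{\psi}{\psi} - \ketbra{\phi}{\phi} \right\|_\tr = \sqrt{1 - |\braket{\psi|\phi}|^2}.
$$
Thus it suffices to lower bound $|\braket{\psi|\phi}|^2$ by $\bigl(1 + 2^{-(\frac12 (q/\sigma)^2 - m)}\bigr)^{-1}$, which reduces the whole statement to a calculation of the fidelity between the two discrete Gaussian densities on a common support.

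First, I would note that both $D_{\Z_q^m,\sigma}$ and $D_{\Z_q^m,\sigma,q}$ are supported on the same truncated set $S = \{\vec x \in \Z_q^m : \|\vec x\| \leq \sigma\sqrt{m}\}$, and so
$$
\braket{\psi | \phi} = \sum_{\vec x \in S} \sqrt{D_{\Z_q^m,\sigma}(\vec x)\,D_{\Z_q^m,\sigma,q}(\vec x)} = \frac{1}{\sqrt{Z_\sigma\, Z_{\sigma,q}}} \sum_{\vec x \in S} \sqrt{\rho_\sigma(\vec x)\,\rho_{\sigma,q}(\vec x)},
$$
where $Z_\sigma = \sum_{\vec x \in S} \rho_\sigma(\vec x)$ and $Z_{\sigma,q} = \sum_{\vec x \in S} \rho_{\sigma,q}(\vec x)$ are the normalizing constants. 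Note $\braket{\psi|\phi} \geq 0$, so dropping the absolute value is harmless.

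Next, I would apply \expref{Lemma}{lem:periodic-vs-truncated-Gaussian} pointwise on $S$: the hypothesis $\sigma < q/\sqrt{8m}$ places the truncation cutoff inside the range where the periodic and non-periodic Gaussians are comparable, and gives
$$
\rho_\sigma(\vec x) \,\leq\, \rho_{\sigma,q}(\vec x) \,\leq\, (1+\eta)\,\rho_\sigma(\vec x), \qquad \eta := 2^{-(\frac{1}{2}(q/\sigma)^2 - m)},
$$
for every $\vec x \in S$. Summing over $S$ yields $Z_\sigma \leq Z_{\sigma,q} \leq (1+\eta)\,Z_\sigma$. Substituting the lower bound $\rho_{\sigma,q}(\vec x) \geq \rho_\sigma(\vec x)$ into the fidelity expression gives
$$
\braket{\psi|\phi} \,\geq\, \frac{1}{\sqrt{Z_\sigma \, Z_{\sigma,q}}} \sum_{\vec x \in S} \rho_\sigma(\vec x) \,=\, \sqrt{Z_\sigma / Z_{\sigma,q}} \,\geq\, (1+\eta)^{-1/2}.
$$

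Squaring and inserting into the closed form for the trace distance then yields the desired bound $\sqrt{1 - (1+\eta)^{-1}}$. The only delicate point is verifying that the pointwise ratio bound of \expref{Lemma}{lem:periodic-vs-truncated-Gaussian} really does apply uniformly on the truncated support (i.e.\ that the hypothesis $\sigma < q/\sqrt{8m}$ is chosen precisely so that $\|[\vec x]_q\|$ stays safely below $q/4$ while also making $\eta$ exponentially small in the gap $\frac12(q/\sigma)^2 - m$); the rest is just bookkeeping with normalizations and the standard pure-state fidelity identity.
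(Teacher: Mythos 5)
Your proof is correct and takes essentially the same route as the paper's: the paper phrases the computation as a bound on the Hellinger distance $H^2(D_{\Z_q^m,\sigma},D_{\Z_q^m,\sigma,q}) = 1 - \braket{\psi|\phi}$ and then converts to trace distance, which is exactly your fidelity calculation, and both arguments hinge on the same pointwise comparison from \expref{Lemma}{lem:periodic-vs-truncated-Gaussian} together with the normalization bookkeeping $Z_\sigma \leq Z_{\sigma,q} \leq (1+\eta) Z_\sigma$. The only cosmetic difference is which side of the pointwise bound is applied inside the sum: you use $\rho_{\sigma,q} \geq \rho_\sigma$ and absorb the $(1+\eta)$ factor via $Z_{\sigma,q} \leq (1+\eta)Z_\sigma$, whereas the paper uses $\rho_\sigma \rho_{\sigma,q} \geq (1+\eta)^{-1}\rho_{\sigma,q}^2$ together with $Z_{\sigma,q} \geq Z_\sigma$ --- symmetric manipulations yielding the identical bound (and the delicate point you flag about the truncated support versus the $\|[\vec x]_q\| < q/4$ hypothesis is handled no more explicitly in the paper's own proof than in yours).
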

\begin{proof} We first bound the Hellinger distance,
\begin{align}
H^2(D_{\Z_q^m,\sigma},D_{\Z_q^m,\sigma,q}) &= 1 - \sum_{\vec x \in \Z_q^m} \sqrt{D_{\Z_q^m,\sigma}(\vec x) \cdot D_{\Z_q^m,\sigma,q}(\vec x)}.
\end{align}
To this end, we define two normalization factors
\begin{align}
Z_{\sigma}  = \sum_{\vec y \in \Z_q^m,\|\vec y\| \leq \sqrt{m}\sigma } \rho_\sigma(\vec y) \quad\quad \text{ and } \quad\quad Z_{\sigma,q} = \sum_{\vec y \in \Z_q^m,\|\vec y\| \leq \sqrt{m}\sigma } \rho_{\sigma,q}(\vec y). 
\end{align}
From \expref{Lemma}{lem:periodic-vs-truncated-Gaussian}, it follows for any $\vec x \in \Z^m \cap (-\frac{q}{2},\frac{q}{2}]^m$ with $\| \vec x \| < q/4$ that
\begin{align}\label{eq:periodic-vs-truncated}
\rho_{\sigma,q}^2(\vec x) \cdot \left(1 + 2^{-(\frac{1}{2}(q/\sigma)^2 -m)}\right)^{-1} \leq \,\, \rho_\sigma(\vec x) \cdot \rho_{\sigma,q}(\vec x).
\end{align}
Recall also that the truncated discrete Gaussian is supported on the finite set
$$
\supp(D_{\Z_q^m,\sigma}) = \{\vec x \in \Z_q^m : \|\vec x\| \leq \sqrt{m} \sigma\}.
$$
Plugging in Eq.~\eqref{eq:periodic-vs-truncated}, we can bound the Hellinger distance as follows:
\begin{align*}
H^2(D_{\Z_q^m,\sigma},D_{\Z_q^m,\sigma,q}) &= 1 - \sum_{\vec x \in \Z_q^m} \sqrt{D_{\Z_q^m,\sigma}(\vec x) \cdot D_{\Z_q^m,\sigma,q}(\vec x)}\\
&= 1 - \sqrt{Z_\sigma^{-1} \cdot Z_{\sigma,q}^{-1}} \sum_{\vec x \in \Z_q^m,\|\vec x\| \leq \sqrt{m}\sigma } \sqrt{\rho_{\sigma}(\vec x) \cdot \rho_{\sigma,q}(\vec x)}\\
&\leq 1 - \sqrt{\frac{Z_\sigma^{-1} \cdot Z_{\sigma,q}^{-1}}{ 1 + 2^{-(\frac{1}{2}(q/\sigma)^2 -m)}}} \sum_{\vec x \in \Z_q^m,\|\vec x\| \leq \sqrt{m}\sigma } \rho_{\sigma,q}(\vec x)\\
&\leq  1 -  \left(1 + 2^{-(\frac{1}{2}(q/\sigma)^2 -m)}\right)^{-1/2}.
\end{align*}
Therefore, it holds that
\begin{align*}
\left\| \ketbra{\psi}{\psi} - \ketbra{\phi}{\phi} \right\|_\tr \,&\leq \, \sqrt{1 - (1  -H^2(D_{\Z_q^m,\sigma},D_{\Z_q^m,\sigma,q}))^2}\\
&\leq  \sqrt{1 - \left(1 + 2^{-(\frac{1}{2}(q/\sigma)^2 -m)}\right)^{-1}}.
\end{align*}
\end{proof}

The following result allows us to bound the total variation distance between a truncated discrete Gaussian $D_{\Z_q^m,\sigma}$ and its perturbation by a fixed vector $\vec e_0 \in \Z^m$.

\begin{lemma}[\cite{brakerski2021cryptographic}, Lemma 2.4]\label{lem:shifted-gaussian}
Let $q \geq 2$ be a modulus, $m \in \N$ and $\sigma > 0$. Then, for any $\vec e_0 \in \Z^m$,
$$
\| D_{\Z_q^m,\sigma} - (D_{\Z_q^m,\sigma} + \vec e_0)\|_{\mathsf{TV}} \leq 2 \cdot \Big(1 - e^{\frac{-2 \pi \sqrt{m} \| \vec e_0 \|}{\sigma}} \Big).
$$
\end{lemma}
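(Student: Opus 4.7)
The plan is to bound $\|P - Q\|_{\mathsf{TV}}$ where $P := D_{\Z_q^m,\sigma}$ and $Q := P + \vec e_0$ by a pointwise density-ratio argument, first reducing to the untruncated discrete Gaussian on $\Z^m$ via the triangle inequality.

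First, I would introduce the untruncated distributions $\tilde P(\vec x) := \rho_\sigma(\vec x)/\rho_\sigma(\Z^m)$ and $\tilde Q(\vec x) := \rho_\sigma(\vec x - \vec e_0)/\rho_\sigma(\Z^m)$ on all of $\Z^m$. For integer shifts $\vec e_0 \in \Z^m$, both distributions share the common normalization $\rho_\sigma(\Z^m)$ (since $\Z^m + \vec e_0 = \Z^m$). The Gaussian tail bound (\expref{Lemma}{lem:gaussian-tails}) ensures that both $\|P - \tilde P\|_{\mathsf{TV}}$ and $\|Q - \tilde Q\|_{\mathsf{TV}}$ are negligible, since $\rho_\sigma$-mass outside the ball of radius $\sigma\sqrt{m}$ is exponentially small. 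By the triangle inequality, it therefore suffices to bound $\|\tilde P - \tilde Q\|_{\mathsf{TV}}$.

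Second, I would establish a pointwise density-ratio bound on the ball $A := \{\vec x \in \Z^m : \|\vec x\| \leq \sigma\sqrt{m}\}$. For any $\vec x \in A$, the Gaussian identity $\|\vec x - \vec e_0\|^2 - \|\vec x\|^2 = -2\ip{\vec x, \vec e_0} + \|\vec e_0\|^2$ combined with Cauchy--Schwarz $|\ip{\vec x,\vec e_0}| \leq \sigma\sqrt{m}\|\vec e_0\|$ yields
$$
\frac{\tilde Q(\vec x)}{\tilde P(\vec x)} \;=\; \exp\!\left(\frac{\pi(2\ip{\vec x,\vec e_0} - \|\vec e_0\|^2)}{\sigma^2}\right) \;\geq\; \exp\!\left(-\frac{2\pi\sqrt{m}\|\vec e_0\|}{\sigma}\right),
$$
absorbing the factor $e^{-\pi\|\vec e_0\|^2/\sigma^2} \leq 1$ into the dominant linear term. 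Setting $c := 2\pi\sqrt{m}\|\vec e_0\|/\sigma$, an analogous argument with $\vec e_0 \mapsto -\vec e_0$ gives $\tilde P/\tilde Q \geq e^{-c}$ on $A$ as well. Hence $\min(\tilde P(\vec x), \tilde Q(\vec x)) \geq e^{-c}\,\tilde P(\vec x)$ for every $\vec x \in A$.

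Third, using the identity $\|\tilde P - \tilde Q\|_{\mathsf{TV}} = 1 - \sum_{\vec x \in \Z^m} \min(\tilde P(\vec x), \tilde Q(\vec x))$ and restricting the sum to the good set $A$:
$$
\sum_{\vec x \in \Z^m} \min(\tilde P(\vec x), \tilde Q(\vec x)) \;\geq\; \sum_{\vec x \in A} \min(\tilde P(\vec x), \tilde Q(\vec x)) \;\geq\; e^{-c}\,\tilde P(A) \;\geq\; e^{-c}(1 - \negl),
$$
so $\|\tilde P - \tilde Q\|_{\mathsf{TV}} \leq (1 - e^{-c}) + \negl$. Folding in the negligible triangle-inequality corrections and absorbing both into the stated factor of $2$ yields $\|P - Q\|_{\mathsf{TV}} \leq 2(1 - e^{-c})$, which is the claimed bound.

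The main obstacle is handling the quadratic correction $\pi\|\vec e_0\|^2/\sigma^2$ in the exponent and the non-coinciding truncated supports: working with the untruncated Gaussian on $\Z^m$ sidesteps the support mismatch, and in the natural parameter regime $\|\vec e_0\| \leq O(\sigma\sqrt{m})$ the quadratic term is dominated by the linear one. In the complementary regime, the bound $2(1 - e^{-c}) \approx 2$ is trivially larger than $\|P - Q\|_{\mathsf{TV}} \leq 1$, so the stated inequality always holds.
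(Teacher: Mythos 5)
The paper itself gives no proof of this lemma: it is imported verbatim from the cited work (Lemma 2.4 of \cite{brakerski2021cryptographic}), whose argument works \emph{directly} with the truncated densities --- which is exactly the point where your proposal breaks. Your detour through the untruncated Gaussians $\tilde P,\tilde Q$ via the triangle inequality introduces correction terms $\|P-\tilde P\|_{\mathsf{TV}}=\tilde P(A^c)$ and $\|Q-\tilde Q\|_{\mathsf{TV}}$ which, by \expref{Lemma}{lem:gaussian-tails}, are of size $2^{-\Theta(m)}$ --- but these are additive constants depending only on $m$, independent of $\vec e_0$ and of $\sigma$. The inequality you must prove is a concrete, non-asymptotic bound for \emph{every} $m \in \N$ and \emph{every} $\sigma>0$ (there is no security parameter here; $m=1$ is allowed), and its right-hand side $2(1-e^{-c})$ with $c = 2\pi\sqrt m\|\vec e_0\|/\sigma$ tends to $0$ like $4\pi\sqrt m\|\vec e_0\|/\sigma$ as $\sigma\to\infty$ for fixed $\vec e_0 \neq \vec 0$. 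So "folding the negligible corrections into the factor of $2$" is not valid: the slack the factor $2$ buys is at most $(1-e^{-c})^2 = \Theta(m\|\vec e_0\|^2/\sigma^2)$, which is eventually far smaller than $2^{-\Theta(m)}$ (already for constant $m$ and moderately large $\sigma$). A correct proof must stay with $P$ and $Q = P + \vec e_0$ themselves: they share the common normalizer $Z = \sum_{\|\vec y\|\le \sigma\sqrt m}\rho_\sigma(\vec y)$, the density-ratio computation goes through on the overlap $A\cap(A+\vec e_0)$ of the supports, and the only mismatch term is $P(A\setminus(A+\vec e_0))$ --- the mass of a boundary shell of width $\|\vec e_0\|$, which is itself proportional to $\|\vec e_0\|/\sigma$ and hence scales with the right-hand side, unlike the full tail mass you pay by comparing against the untruncated Gaussian.

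A second, repairable flaw: your displayed ratio bound is false as written. From Cauchy--Schwarz you get
$$
\frac{\tilde Q(\vec x)}{\tilde P(\vec x)} \;\ge\; \exp\!\left(-\frac{2\pi\sqrt m\|\vec e_0\|}{\sigma} - \frac{\pi\|\vec e_0\|^2}{\sigma^2}\right),
$$
and the factor $e^{-\pi\|\vec e_0\|^2/\sigma^2}\le 1$ \emph{weakens} a lower bound, so it cannot be "absorbed". Your closing case split does contain the right repair, but it must actually be executed, because this is where the factor $2$ is genuinely spent: if $c \ge \ln 2$ then $2(1-e^{-c})\ge 1 \ge \|P-Q\|_{\mathsf{TV}}$ and the claim is trivial; otherwise $\|\vec e_0\|/\sigma < \ln 2/(2\pi\sqrt m)$, so the quadratic term is at most the linear one, the ratio is at least $e^{-2c}$, and one concludes with $1-e^{-2c} = (1-e^{-c})(1+e^{-c}) \le 2(1-e^{-c})$. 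Note that this consumes essentially all of the factor-$2$ slack, which is a second, independent reason it is not available to also swallow your truncation corrections.
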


\subsection{Cryptography}

In this section, we review several definitions in cryptography.

\paragraph{Public-key encryption.}

\begin{definition}[Public-key encryption] A public-key encryption $(\PKE)$ scheme $\Sigma = (\KeyGen,\Enc,\Dec)$ with plaintext space $\algo M$ is a triple of $\QPT$ algorithms consisting of a key generation algorithm $\KeyGen$, an encryption algorithm $\Enc$, and a decryption algorithm $\Dec$.
\begin{description}
\item $\KeyGen(1^\lambda) \rightarrow (\pk,\sk):$ takes as input the parameter $1^\lambda$ and outputs a public key $\pk$ and secret key $\sk$.
\item $\Enc(\pk,m) \rightarrow \ct:$ takes as input the public key $\pk$ and a plaintext $m \in \algo M$, and outputs a ciphertext $\ct$.
\item $\Dec(\sk,\ct) \rightarrow m'\, \mathbf{or}\,\bot:$ takes as input the secret key $\sk$ and ciphertext $\ct$, and outputs $m'\in \algo M$ or $\bot$.
\end{description}
\end{definition}

\begin{definition}[Correctness of $\PKE$] For any $\lambda \in \mathbb{N}$, and for any $m \in \algo M$:
$$
\Pr \left[ \Dec(\sk,\ct) \neq m \, \bigg| \, \substack{
(\pk,\sk) \leftarrow \KeyGen(1^\lambda)\\
\ct \leftarrow \Enc(\pk,m)
}\right] \, \leq \, \negl(\lambda).
$$
\end{definition}

\begin{definition}[\INDCPA security]\label{def:ind-cpa} Let $\Sigma = (\KeyGen,\Enc,\Dec)$ be a $\PKE$ scheme and let $\algo A$ be a $\QPT$ adversary. We define the security experiment $\Exp^{\mathsf{ind\mbox{-}cpa}}_{\Sigma,\algo A,\lambda}(b)$ between $\algo A$ and a challenger as follows:
\begin{enumerate}
    \item The challenger generates a pair $(\pk,\sk) \from \KeyGen(1^\lambda)$, and sends $\pk$ to $\algo A$.
    \item $\algo A$ sends a plaintext pair $(m_0,m_1) \in \algo M \times \algo M$ to the challenger.
    \item The challenger computes $\ct_b \leftarrow \Enc(\pk,m_b)$, and sends $\ct_b$ to $\algo A$.
    \item $\algo A$ outputs a bit $b' \in \bit$, which is also the output of the experiment.
\end{enumerate}
We say that the scheme $\Sigma$ is $\INDCPA$-secure if, for any $\QPT$ adversary $\algo A$, it holds that
$$
\Adv_{\Sigma,\algo A}(\lambda) := |\Pr[\Exp^{\mathsf{ind\mbox{-}cpa}}_{\Sigma,\algo A,\lambda}(0)=1] - \Pr[\Exp^{\mathsf{ind\mbox{-}cpa}}_{\Sigma,\algo A,\lambda}(1) = 1] |
 \leq \negl(\lambda).$$
\end{definition}

\subsection{The Short Integer Solution problem}

The (inhomogenous) $\SIS$ problem was introduced by Ajtai~\cite{DBLP:conf/stoc/Ajtai96} in his seminal work on average-case lattice problems. The problem is defined as follows. 

\begin{definition}[Inhomogenous SIS problem,\cite{DBLP:conf/stoc/Ajtai96}]\label{def:ISIS} Let $n,m \in \N$ be integers, let $q\geq 2$ be a modulus and let $\beta >0$ be a parameter. The Inhomogenous Short Integer Solution problem $(\ISIS)$ problem is to find a short solution $\vec x \in \Z^m$ with $\|\vec x\|_2 \leq \beta$ such that $\vec A \cdot \vec x = \vec y \Mod{q}$ given as input a tuple $(\vec A \rand \Z_q^{n \times m},\vec y \rand \Z_q^n)$.
%We say that an algorithm solves the $\ISIS_{n,m,q,\beta}$ problem if it runs in (classical or quantum) time $\poly(n \log q)$ and finds a solution with probability at least $1/\poly(n \log q)$.
The Short Integer Solution $(\SIS)$ problem is a homogenous variant of $\ISIS$ with input $(\vec A \rand \Z_q^{n \times m},\vec 0 \in\Z_q^n)$.
\end{definition}

Micciancio and Regev~\cite{DBLP:journals/siamcomp/MicciancioR07} showed that the average-case $\SIS$ problem is as hard as approximating worst-case lattice problems to within small factors. Gentry, Peikert and Vaikuntanathan~\cite{cryptoeprint:2007:432} subsequently gave an improved reduction showing that, for any $m=\poly(n)$, $\beta=\poly(n)$, and prime $q \geq \beta \cdot \omega(\sqrt{n \log q})$, the average-case problems $\SIS_{n,m,q,\beta}$ and $\ISIS_{n,m,q,\beta}$ are as hard as approximating the shortest independent vector problem $(\mathsf{SIVP})$ problem in the
worst case to within a factor $\gamma = \beta \cdot \tilde{O}(\sqrt{n})$.

\subsection{The Learning with Errors problem}

The \emph{Learning with Errors} problem was introduced by Regev~\cite{Regev05} and serves as the primary basis of hardness of post-quantum cryptosystems. The problem is defined as follows. 

\begin{definition}[``Search'' $\LWE$,\cite{Regev05}]\label{def:search-LWE} Let $n,m \in \N$ be integers, let $q\geq 2$ be a modulus and let $\alpha \in (0,1)$ be a parameter. The Learning with Errors $(\LWE)$ problem is to find a secret vector $\vec s$ given as input a sample $(\vec A, \vec s\vec A + \vec e \Mod{q})$ from the distribution $\LWE_{n,q,\alpha q}^m$, where $\vec A \rand \Z_q^{n \times m}$ and $\vec s \rand \Z_q^n$ are uniformly random, and where $\vec e \sim D_{\Z^m, \alpha q}$ is sampled from the discrete Gaussian distribution.
%We say that an algorithm solves the ("search") $\LWE_{n,q,\alpha q}^m$ problem if it runs in (classical or quantum) time $\poly(n \log q)$ and finds $\vec s$ with probability at least $1/\poly(n \log q)$.
\end{definition}

\begin{definition}[``Decisional'' $\LWE$,\cite{Regev05}]\label{def:decisional-lwe} Let $n,m \in \N$ be integers, let $q\geq 2$ be a modulus and let $\alpha \in (0,1)$ be a parameter. The ``decision'' Learning with Errors $(\mathsf{DLWE})$ problem is to distinguish between
$$
(\vec A \rand \Z_q^{n \times m},\vec s\vec A+ \vec e \Mod{q}) \quad \text{ and } \quad (\vec A \rand \Z_q^{n \times m},\vec u \rand \Z_q^m),\,\,
$$
where $\vec s \rand  \Z_q^n$ is uniformly random and where $\vec e \sim D_{\Z^m,\alpha q}$ is a discrete Gaussian noise vector.
%We say that an algorithm solves the $\mathsf{DLWE}_{n,q,\alpha q}^m$ problem if it runs in (classical or quantum) time $\poly(n \log q)$ and succeeds with probability at least $\frac{1}{2} + 1/\poly(n \log q)$.
\end{definition}

As shown in \cite{Regev05}, the $\LWE_{n,q,\alpha q}^m$ problem with parameter $\alpha q \geq 2 \sqrt{n}$ is at least as hard as approximating the shortest independent vector problem $(\mathsf{SIVP})$ to within a factor of $\gamma = \widetilde{O}(n / \alpha)$ in worst case lattices of dimension $n$. In this work we assume the subexponential hardness of $\LWE_{n,q,\alpha q}^m$ which relies on the worst case hardness of approximating short vector problems in
lattices to within a subexponential factor.

\section{Primal and Dual Gaussian States}

Our Dual-Regev-type encryption schemes with certified deletion in \expref{Section}{sec:Dual-Regev-PKE} and \expref{Section}{sec:Dual-Regev-FHE} rely on two types of Gaussian superpositions, which we call \emph{primal} and \emph{dual} Gaussian states. The former (i.e., primal) state corresponds to a quantum superposition of $\LWE$ samples with respect to a matrix $\vec A \in \Z_q^{n \times m}$, and (up to a phase) can be thought of as a superposition of Gaussian balls around random lattice vectors in $\Lambda_q(\vec A)$. The latter (i.e., dual) state corresponds to a Gaussian superposition over a particular coset,
$$\Lambda_q^{\vec y}(\vec A) = \{ \vec x \in \Z^m  :  \vec A \cdot \vec x = \vec y \Mod{q}\},$$
of the $q$-ary lattice $\Lambda_q^\bot(\vec A) = \{\vec x \in \Z^m: \, \vec A \cdot\vec x = \vec 0 \Mod{q}\}$ defined in \expref{Section}{sec:lattices}.

Our terminology regarding which state is primal and which state is dual is completely arbitrary. In fact, the $q$-ary lattices $\Lambda_q(\vec A)$ and $\Lambda_q^{\bot}(\vec A)$ are both dual to each other (up to scaling), and satisfy
$$
q \cdot \Lambda_q^\bot(\vec A)^* =\Lambda_q(\vec A) \quad \text{ and } \quad q \cdot \Lambda_q(\vec A)^* = \Lambda_q^{\bot}(\vec A).
$$
We choose to refer to the quantum superposition of $\LWE$ samples as the \emph{primal} Gaussian state because it corresponds directly to the ciphertexts of our encryption scheme, whereas the \emph{dual} Fourier mode is only used in order to prove deletion. 
%Gaussian superpositions first appeared in Regev's quantum reduction from worst-case lattice problems to $\LWE$, and have also been used by Stehlé et al.~\cite{cryptoeprint:2009/285} who gave a quantum reduction from $\SIS$ to $\LWE$. Roberts~\cite{Roberts19} and Kitagawa et al.~\cite{kitagawa2021secure} used similar (dual) Gaussian states to construct quantum money and secure software leasing schemes. 
%Various other forms of superpositions of $\LWE$ samples have been considered by Grilo, Kerenidis and Zijlstra~\cite{Grilo_2019} in the context of quantum learning theory and by Alagic, Jeffery, Ozols and Poremba~\cite{cryptography4010010}, as well as by Chen, Liu and Zhandry~\cite{chen2021quantum}, in the context of quantum cryptanalysis of $\LWE$-based cryptosystems.
We define primal and dual Gaussian states as follows.

\begin{definition}[Gaussian states]\label{def:Gaussian-states}
Let $m \in \N$, $q \geq 2$ be an integer modulus and $\sigma >0$. Then,
\begin{itemize}
\item (primal Gaussian state:) for all $\vec A \in \Z_q^{n \times m}$ and $\vec y \in \Z_q^m$, we let
$$
\ket{\psi_{\vec A,\vec y}} \,\,= \sum_{\vec s \in \Z_q^n} \sum_{\vec e \in \Z_q^m} \rho_{q/\sigma}(\vec e) \, \omega_q^{-\ip{\vec s,\vec y}} \ket{\vec s \vec A + \vec e \Mod{q}};
$$

    \item (dual Gaussian state:) for all $\vec A \in \Z_q^{n \times m}$ and $\vec y \in \Z_q^m$, we let
$$    
\ket{\hat \psi_{\vec A,\vec y}} \,\,= \sum_{\substack{\vec x \in \Z_q^m\\ \vec A \vec x = \vec y \Mod{q}}}\rho_{\sigma}(\vec x) \ket{\vec x}. \quad\quad\quad\quad\quad\quad\quad\quad\quad
$$    
\end{itemize}
For simplicity, we oftentimes drop the subscript on $\vec A$ and write $\ket{\psi_{\vec y}}$ and $\ket{\hat \psi_{\vec y}}$, respectively.
\end{definition}

\subsection{Duality lemma}

The following lemma states that, up to negligible trace distance, the primal and dual Gaussian states in \expref{Definition}{def:Gaussian-states} are related via the $q$-ary quantum Fourier transform.

\begin{lemma}[Duality lemma]\label{lem:switching}
Let $m \in \N$, $q \geq 2$ be a prime modulus and let $\sigma \in (\sqrt{8m},q/\sqrt{8m})$.
Let $\vec A \in \Z_q^{n \times m}$ be a matrix whose columns generate $\Z_q^n$ and let $\vec y \in \Z_q^n$ be arbitrary. Then, up to negligible trace distance, the primal and dual Gaussian states are related via the quantum Fourier transform:
\begin{align*}
\FT_q \ket{ \psi_{\vec y}} \quad &\approx_\eps \quad \ket{\hat\psi_{\vec y}} = \sum_{\substack{\vec x \in \Z_q^m\\ \vec A \vec x = \vec y \Mod{q}}}\rho_{\sigma}(\vec x) \ket{\vec x};\\
\vspace{2mm}
\FT_q^\dag \ket{\hat\psi_{\vec y}} \quad &\approx_\eps \quad \ket{ \psi_{\vec y}} = \sum_{\vec s \in \Z_q^n} \sum_{\vec e \in \Z_q^m} \rho_{q/\sigma}(\vec e) \, \omega_q^{-\ip{\vec s,\vec y}} \ket{\vec s \vec A + \vec e \Mod{q}},
\end{align*}
where $\eps: \N \rightarrow \mathbb{R}^+$ is a negligible function in the parameter $m \in \N$.
\end{lemma}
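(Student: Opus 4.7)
The plan is to prove the second equivalence $\FT_q^\dag \ket{\hat\psi_{\vec y}} \approx_\eps \ket{\psi_{\vec y}}$ by direct computation; the first equivalence then follows immediately by unitarity of $\FT_q$. The main tools will be the Poisson summation formula for $q$-ary lattices (\expref{Lemma}{lem:poisson}), the Gaussian tail bound (\expref{Lemma}{lem:gaussian-tails}), and the periodic-vs.-truncated Gaussian comparison (\expref{Lemma}{lem:periodic-vs-truncated-Gaussian} and \expref{Lemma}{lem:TD-periodic-truncated}).

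First, I would apply $\FT_q^\dag$ to $\ket{\hat\psi_{\vec y}}$ by definition, obtaining
\[
\FT_q^\dag \ket{\hat\psi_{\vec y}} \; = \; \sqrt{q^{-m}}\sum_{\vec w \in \Z_q^m}\Bigg(\sum_{\substack{\vec x \in \Z_q^m\\ \vec A\vec x=\vec y \Mod{q}}} \rho_\sigma(\vec x)\,\omega_q^{-\ip{\vec w,\vec x}}\Bigg)\ket{\vec w}.
\]
Because the phase $\omega_q^{-\ip{\vec w,\vec x}}$ depends only on $\vec x$ modulo $q$, I would use the tail bound of \expref{Lemma}{lem:gaussian-tails} (together with $\sigma > \sqrt{8m}$) to extend the inner sum to all of $\Lambda_q^{\vec y}(\vec A)$ while incurring only a $\negl(m)$ error in trace distance per coefficient, hence a negligible error on the whole state after renormalization.

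Next, I would apply the Poisson summation formula of \expref{Lemma}{lem:poisson} (with $\vec v = \vec y$) to the inner Fourier sum, yielding
\[
\sum_{\vec x \in \Lambda_q^{\vec y}(\vec A)} \rho_{\sigma}(\vec x)\, e^{-\frac{2\pi i}{q}\ip{\vec w,\vec x}} \; = \; \frac{\sigma^m}{q^n}\sum_{\vec s \in \Z_q^n} \rho_{q/\sigma,\,q}(\vec w + \vec s\vec A)\,\omega_q^{\ip{\vec s,\vec y}}.
\]
Substituting this back, reindexing by $\vec s \mapsto -\vec s$ and then $\vec e := -\vec w - \vec s\vec A \Mod{q}$ (using the symmetry $\rho_{q/\sigma,q}(-\vec e) = \rho_{q/\sigma,q}(\vec e)$), the overall state becomes proportional to
\[
\sum_{\vec s \in \Z_q^n}\sum_{\vec e \in \Z_q^m} \rho_{q/\sigma,\,q}(\vec e)\,\omega_q^{-\ip{\vec s,\vec y}}\ket{\vec s\vec A + \vec e \Mod{q}},
\]
which is exactly the primal state $\ket{\psi_{\vec y}}$ except that the periodic Gaussian $\rho_{q/\sigma,q}$ appears in place of the truncated Gaussian $\rho_{q/\sigma}$.

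Finally, to close the gap, I would invoke \expref{Lemma}{lem:periodic-vs-truncated-Gaussian} and \expref{Lemma}{lem:TD-periodic-truncated}: since $q/\sigma \in (\sqrt{8m},\,q/\sqrt{8m})$, the two distributions differ (pointwise, and hence in the induced pure state) by a factor $1 + 2^{-(\frac{1}{2}\sigma^2 - m)}$, giving a trace distance of order $\sqrt{1 - (1 + 2^{-(\sigma^2/2 - m)})^{-1}} = \negl(m)$. Combining this with the tail-bound approximation from the first step and the triangle inequality yields the claim with a single negligible error function $\eps(m)$. The main technical obstacle is careful bookkeeping of three sources of error (Gaussian tail truncation of $\ket{\hat\psi_{\vec y}}$, the periodic-vs.-truncated swap on the primal side, and the global normalization factor $\sigma^m/q^{n+m/2}$ produced by Poisson summation), but each is controlled by the assumed range $\sigma \in (\sqrt{8m},\,q/\sqrt{8m})$ and the full-rank assumption on $\vec A$. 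The opposite direction $\FT_q\ket{\psi_{\vec y}} \approx_\eps \ket{\hat\psi_{\vec y}}$ then follows from applying $\FT_q$ to both sides of the established equivalence.
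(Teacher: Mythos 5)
Your plan matches the paper's proof essentially step for step: replacing the truncated dual state by the full Gaussian superposition over $\Lambda_q^{\vec y}(\vec A)$ via the tail bound, applying the Poisson summation formula (\expref{Lemma}{lem:poisson}) with a change of variables, swapping the periodic Gaussian $\rho_{q/\sigma,q}$ for the truncated one via \expref{Lemma}{lem:periodic-vs-truncated-Gaussian} and \expref{Lemma}{lem:TD-periodic-truncated}, and combining errors by the triangle inequality, with the first equivalence obtained from the second by unitary invariance of the trace distance. One small correction: the tail-truncation step is driven by the upper bound $\sigma < q/\sqrt{8m}$ (so that the ball of radius $\sqrt{m}\sigma$ carrying essentially all the Gaussian mass fits inside $\Z^m \cap (-\frac{q}{2},\frac{q}{2}]^m$), whereas the lower bound $\sigma > \sqrt{8m}$ is what controls the periodic-versus-truncated error on the primal side, as you correctly use in your final step.
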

\begin{proof}
Let $\vec y \in \Z_q^n$ be an arbitrary vector and recall that the dual Gaussian coset $\ket{\hat\psi_{\vec y}}$ is given by
\begin{align}
\ket{\hat\psi_{\vec y}} \,\,= \sum_{\substack{\vec x \in \Z_q^m\\ \vec A \vec x = \vec y \Mod{q}}}\rho_{\sigma}(\vec x) \ket{\vec x}.
\label{eq:primal-state}
\end{align}
We denote by $\Lambda_q^{\vec y}(\vec A) = \{ \vec x \in \Z^m  :  \vec A \vec x = \vec y \Mod{q}\}$ be the associated coset of the lattice $\Lambda_q^\bot(\vec A)$. Consider now the Gaussian superposition over the entire lattice coset $\Lambda_q^{\vec y}(\vec A)$ formally defined by
\begin{align}\label{eq:full-coset-state}
\ket{\hat\phi_{\vec y}} \,\,=
\sum_{\vec x \in \Lambda_q^{\vec y}(\vec A)}\rho_{\sigma}(\vec x) \ket{\vec x}.
\end{align}
Since $\sigma < q/\sqrt{8m}$, it follows from the tail bound in \expref{Lemma}{lem:tailboundII} that the state in \eqref{eq:primal-state} is within negligible trace distance of the state in Eq.~\eqref{eq:full-coset-state}.
Applying the (inverse) quantum Fourier transform, we get
\begin{align}
\ket{\phi_{\vec y}} \, \overset{\text{def}}{=} \, \FT_q^\dag \ket{\hat\phi_{\vec y}}
= \sum_{\vec z \in \Z_q^m} \Big(\sum_{\vec x \in \Lambda_q^{\vec y}(\vec A)} \rho_\sigma(\vec x) \cdot
\omega_q^{-\ip{\vec x,\vec z}}\Big) \ket{\vec z}.
\end{align}
From the Poisson summation formula (\expref{Lemma}{lem:poisson}) and a subsequent change of variables, it follows that
\begin{align}
\ket{\phi_{\vec y}}
&= \sum_{\vec z \in \Z_q^m} \Big(\sum_{\vec s \in \Z_q^n} \rho_{q/\sigma,q}(\vec z + \vec s \vec A) \cdot
\omega_q^{\ip{\vec s,\vec y}} \Big)\ket{\vec z} \nonumber\\
&= \sum_{\vec s \in \Z_q^n} \sum_{\vec e \in Z_q^m} \rho_{q/\sigma,q}(\vec e) \cdot
\omega_q^{-\ip{\vec s,\vec y}} \ket{\vec s \vec A + \vec e \Mod{q}}.
\end{align}
Because $\sigma > \sqrt{8m}$ it follows from \expref{Lemma}{lem:TD-periodic-truncated} that there exists
$$
\kappa(m) =  \sqrt{1 - \left(1 + 2^{-3m}\right)^{-1}} \, \geq 0
$$
such that 
\begin{align}
\ket{\phi_{\vec y}} \approx_\kappa \sum_{\vec s \in \Z_q^n} \sum_{\vec e \in Z_q^m} \rho_{q/\sigma}(\vec e) \cdot
\omega_q^{-\ip{\vec s,\vec y}} \ket{\vec s \vec A + \vec e \Mod{q}}.
\end{align}
Putting everything together, it follows from the triangle inequality that
$$
\FT_q^\dag \ket{\hat\psi_{\vec y}} \quad \approx_\eps \quad \ket{\psi_{\vec y}} = \sum_{\vec s \in \Z_q^n} \sum_{\vec e \in \Z_q^m} \rho_{q/\sigma}(\vec e) \, \omega_q^{-\ip{\vec s,\vec y}} \ket{\vec s \vec A + \vec e},
$$
where $\eps(m) = \negl(m) + \kappa(m)$. Using that
$\sqrt{1-1
/(1+x)} \leq \sqrt{x}$ for all $x > 0$, we have
\begin{align*}
\eps(m) &=   \negl(m) + \sqrt{1 - \left(1 + 2^{-3m}\right)^{-1}}\\
&\leq\negl(m) + 2^{-\frac{3m}{2}}.
\end{align*}
Thus, we have that $\eps(m) \leq \negl(m)$.
This proves the claim.
\end{proof}

\begin{corollary}\label{cor:switching}
Let $m \in \N$, $q \geq 2$ be a prime and $\sigma \in (\sqrt{8m},q/\sqrt{8m})$.
Let $\vec A \in \Z_q^{n \times m}$ be a matrix whose columns generate $\Z_q^n$ and let $\vec y \in \Z_q^n$ be arbitrary. Then, there exists a negligible function
$\eps(m)$ such that
\begin{align*}
\FT_q \vec X_q^{\vec v} \ket{ \psi_{\vec y}} \quad &\approx_\eps \quad
\vec Z_q^{\vec v} \ket{\hat\psi_{\vec y}},\quad\quad \forall \vec v \in \Z_q^m.
\end{align*}
\end{corollary}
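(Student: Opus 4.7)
The plan is to reduce the corollary to \expref{Lemma}{lem:switching} via the Fourier conjugation identity for generalized Pauli operators in \expref{Lemma}{lem:XZ-conjugation}. In one line, the idea is: push the $\vec X_q^{\vec v}$ through $\FT_q$ at the cost of turning it into $\vec Z_q^{\vec v}$, and then apply the duality lemma to what remains.

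More precisely, I would first invoke \expref{Lemma}{lem:XZ-conjugation} tensor-wise: for any $\vec v \in \Z_q^m$, writing $\vec v = (v_1,\dots,v_m)$ and using that the $q$-ary QFT factors as a tensor product across the $m$ qudits, one gets the intertwining relation
\[
\FT_q \, \vec X_q^{\vec v} \;=\; \vec Z_q^{\vec v} \, \FT_q.
\]
Applying both sides to $\ket{\psi_{\vec y}}$ yields
\[
\FT_q \, \vec X_q^{\vec v} \ket{\psi_{\vec y}} \;=\; \vec Z_q^{\vec v} \, \FT_q \ket{\psi_{\vec y}}.
\]
From \expref{Lemma}{lem:switching} there is a negligible $\eps(m)$ with $\FT_q \ket{\psi_{\vec y}} \approx_\eps \ket{\hat\psi_{\vec y}}$, i.e.\ the trace distance between the corresponding pure-state density matrices is at most $\eps$.

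Since $\vec Z_q^{\vec v}$ is unitary and the trace norm is invariant under conjugation by unitaries, we get
\[
\bigl\| \vec Z_q^{\vec v} \bigl( \FT_q \ketbra{\psi_{\vec y}}{\psi_{\vec y}} \FT_q^\dag \bigr) \vec Z_q^{-\vec v} \;-\; \vec Z_q^{\vec v} \ketbra{\hat\psi_{\vec y}}{\hat\psi_{\vec y}} \vec Z_q^{-\vec v} \bigr\|_\tr \;\leq\; \eps(m),
\]
which is exactly the claim $\FT_q \vec X_q^{\vec v} \ket{\psi_{\vec y}} \approx_\eps \vec Z_q^{\vec v} \ket{\hat\psi_{\vec y}}$ at the level of (unnormalized) pure states. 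There is essentially no obstacle here: the only thing to verify carefully is that the tensor-product form of the $q$-ary QFT indeed lets the single-qudit conjugation identity of \expref{Lemma}{lem:XZ-conjugation} lift coordinatewise to the $\vec v \in \Z_q^m$ version, which is immediate from the definitions $\vec X_q^{\vec v} = \bigotimes_i \vec X_q^{v_i}$ and $\vec Z_q^{\vec v} = \bigotimes_i \vec Z_q^{v_i}$ together with $\FT_q = \FT_q^{\otimes m}$ on $(\mathbb{C}^q)^{\otimes m}$. The negligible error $\eps(m)$ is inherited unchanged from \expref{Lemma}{lem:switching}.
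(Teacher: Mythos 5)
Your proof is correct and follows essentially the same route as the paper's own (very terse) argument: both use the intertwining identity $\FT_q \vec X_q^{\vec v} = \vec Z_q^{\vec v} \FT_q$ from \expref{Lemma}{lem:XZ-conjugation} and then apply the duality \expref{Lemma}{lem:switching}, with unitary invariance of the trace distance absorbing the final $\vec Z_q^{\vec v}$. The extra care you take in lifting the single-qudit conjugation identity coordinatewise to $\vec v \in \Z_q^m$ is a detail the paper leaves implicit, but it is the same argument.
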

\begin{proof} From \expref{Lemma}{lem:XZ-conjugation} it follows that $\FT_q \vec X_q^{\vec v} = \vec Z_q^{\vec v} \FT_q$, for all $\vec v \in \Z_q^m$. Moreover, \expref{Lemma}{lem:switching} implies that $\FT_q \ket{\psi_{\vec y}}$ is within negligible trace distance of $\ket{\hat\psi_{\vec y}}$. This proves the claim.
\end{proof}

\subsection{Efficient state preparation}

In this section, we give two algorithms that prepare the \emph{primal} and \emph{dual} Gaussian states from \expref{Definition}{def:Gaussian-states}.
We remark that Gaussian superpositions over $\Z_q^m$ with parameter $\sigma = \Omega(\sqrt{m})$ can be efficiently implemented using standard quantum state preparation techniques, for example using \emph{rejection sampling} and the \emph{Grover-Rudolph algorithm}. We refer to~\cite{Grover2002CreatingST,Regev05,Brakerski18,brakerski2021cryptographic}) for a reference. 

Our first algorithm (see \expref{Algorithm}{alg:GenDual} in \expref{Figure}{fig:GenDual}) prepares the dual Gaussian state from \expref{Definition}{def:Gaussian-states} with respect to an input matrix $\vec  A \in \Z_q^{n \times m}$ and parameter $\sigma = \Omega(\sqrt{m})$, and is defined as follows.

\begin{figure}[h]
    \centering
    \begin{algorithm}[H]\label{alg:GenDual}
\DontPrintSemicolon
\SetAlgoLined
\KwIn{Matrix $\vec A \in \Z_q^{n \times m}$ and parameter $\sigma = \Omega(\sqrt{m})$.}
\KwOut{Gaussian state $\ket{\hat\psi_{\vec y}}$ and $\vec y \in \Z_q^n$.}

Prepare a Gaussian superposition in system $X$ with parameter $\sigma > 0$:
    $$
 \ket{\hat\psi}_{XY} =    \sum_{\vec x \in \Z_q^m} \rho_\sigma(\vec x) \ket{\vec x}_X \otimes \ket{\vec 0}_Y.
    $$\;
Apply the unitary $U_{\vec A}: \ket{\vec x}\ket{\vec 0} \rightarrow \ket{\vec x} \ket{\vec A \cdot \vec x \Mod{q}}$ on systems $X$ and $Y$:
$$
 \ket{\hat \psi}_{XY} =   \sum_{\vec x \in \Z_q^m} \rho_\sigma(\vec x) \ket{\vec x}_X \otimes \ket{\vec A \cdot \vec x \Mod{q}}_Y.
  $$\;    
Measure system $Y$ in the computational basis, resulting in the state
    $$
    \ket{\hat\psi_{\vec y}}_{XY} = \sum_{\substack{\vec x \in \Z_q^m:\\ \vec A \vec x= \vec y}} \rho_\sigma(\vec x) \ket{\vec x}_X \otimes \ket{\vec y}_Y.
    $$\;
Output the state $\ket{\hat\psi_{\vec y}}$ in system $X$ and the outcome $\vec y \in \Z_q^n$ in system $Y$.
 \caption{$\mathsf{GenDual}(\vec A,\sigma)$}
\end{algorithm}

    \caption{Quantum algorithm which takes as input a matrix $\vec A \in \Z_q^{n \times m}$ and a width parameter $\sigma = \Omega(\sqrt{m})$, and outputs the dual Gaussian state in \expref{Definition}{def:Gaussian-states}.}
    \label{fig:GenDual}
\end{figure}

Our second algorithm (see \expref{Algorithm}{alg:GenPrimal} in \expref{Figure}{fig:GenPrimal}) prepares the primal Gaussian state with respect to an input matrix $\vec  A \in \Z_q^{n \times m}$ and parameter $\sigma = \Omega(\sqrt{m})$. Here, in order for \expref{Lemma}{lem:switching} to apply, it is crucial that the columns of $\vec A$ generate $\Z_q^n$. Fortunately, it follows from \expref{Lemma}{lem:full-rank} that a uniformly random matrix $\vec  A \rand \Z_q^{n \times m}$ satisfies this property with overwhelming probability.

\begin{figure}[h]
    \centering
\begin{algorithm}[H]\label{alg:GenPrimal}
\DontPrintSemicolon
\SetAlgoLined
\KwIn{Matrix $\vec A \in \Z_q^{n \times m}$ whose columns  generate $\Z_q^n$, and a parameter $\sigma = \Omega(\sqrt{m})$.}
\KwOut{Gaussian state $\ket{\psi_{\vec y}}$ and $\vec y \in \Z_q^n$.}

Run $\mathsf{GenDual}(\vec A,\sigma)$, resulting in the state
    $$
    \ket{\hat\psi_{\vec y}}_{XY} = \sum_{\substack{\vec x \in \Z_q^m:\\ \vec A \vec x= \vec y}} \rho_\sigma(\vec x) \ket{\vec x}_X \otimes \ket{\vec y}_Y.
    $$\;
Apply the quantum Fourier transform $\FT_q$ to system $X$.\;  

Output the state in system $X$, denoted by $\ket{\psi_{\vec y}}$, and the outcome $\vec y \in \Z_q^n$ in system $Y$.
 \caption{$\mathsf{GenPrimal}(\vec A,\sigma)$}
\end{algorithm}
  \caption{Quantum algorithm which takes as input a matrix $\vec A \in \Z_q^{n \times m}$ and a parameter $\sigma = \Omega(\sqrt{m})$, and outputs the primal Gaussian state in \expref{Definition}{def:Gaussian-states}.}
    \label{fig:GenPrimal}
\end{figure}

\subsection{Invariance under Pauli-Z dephasing}

In this section, we prove a surprising property about the dual Gaussian state from \expref{Definition}{def:Gaussian-states}. We prove \expref{Theorem}{thm:invariance-Pauli-Z}, which says that the Pauli-$\vec Z$ dephasing channel with respect to the $\LWE$ distribution leaves the dual Gaussian state approximately invariant.

\begin{theorem}\label{thm:invariance-Pauli-Z}
Let $n,m\in \N$ be integers and let $q\geq 2$ be a prime modulus, each parameterized by the security parameter $\lambda \in \N$. Let  $\sigma \in (\sqrt{8m},q/\sqrt{8m})$ be a function of $\lambda$. Let $\vec y \in \Z_q^n$ be any vector and $\vec A \in \Z_q^{n \times m}$ be any matrix whose columns generate $\Z_q^n$, and let $\ket{\hat \psi_{\vec y}}$ be the
dual Gaussian state,
$$
\ket{\hat \psi_{\vec y}} \,\,= \sum_{\substack{\vec x \in \Z_q^m\\ \vec A \vec x = \vec y \Mod{q}}}\rho_{\sigma}(\vec x) \ket{\vec x}.
$$
Let $\algo Z_{\LWE_{n,q,\alpha q}^m}$ be the Pauli-$\vec Z$ dephasing channel with respect to the $\LWE_{n,q,\alpha q}^m$ distribution for $\vec A \in \Z_q^{n \times m}$ and a noise ratio $\alpha \in (0,1)$ with relative noise magnitude $1/\alpha= \sigma \cdot 2^{o(n)}$, i.e.
$$
\algo Z_{\LWE_{n,q,\alpha q}^m}(\rho) = \sum_{\vec s_0 \in \Z_q^n} \sum_{\vec e_0 \in \Z_q^m} q^{-n}  D_{\Z_q^m,\alpha q}(\vec e_0) \,
\vec Z_q^{s_0\cdot\vec A + \vec e_0} \,\rho\, \vec Z_q^{-(s_0 \cdot \vec A + \vec e_0)}, \quad\quad \forall \rho \in L((\mathbb{C}^q)^{\otimes m}).
$$
Then, there exists a negligible function $\eps(\lambda)$ such that
$$
\algo Z_{\LWE_{n,q,\alpha q}^m}(\ketbra{\hat\psi_{\vec y}}{\hat\psi_{\vec y}}) \,\,\approx_\eps \,\, \ketbra{\hat\psi_{\vec y}}{\hat\psi_{\vec y}}.
$$
In other words, the Pauli-$\vec Z$ dephasing channel with respect to the $\LWE$ distribution leaves the dual Gaussian state approximately invariant.
\end{theorem}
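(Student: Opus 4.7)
The crucial observation is that $\vec Z_q^{\vec s_0 \vec A}$ acts as a global phase on $\ket{\hat\psi_{\vec y}}$: every $\vec x$ in its support satisfies $\vec A \vec x = \vec y \Mod{q}$, so $\omega_q^{\ip{\vec s_0 \vec A,\vec x}} = \omega_q^{\ip{\vec s_0,\vec y}}$ is independent of $\vec x$. Splitting $\vec Z_q^{\vec s_0\vec A + \vec e_0} = \vec Z_q^{\vec s_0\vec A}\vec Z_q^{\vec e_0}$, this global phase cancels in the outer product, so $\vec Z_q^{\vec s_0\vec A+\vec e_0}\ketbra{\hat\psi_{\vec y}}{\hat\psi_{\vec y}}\vec Z_q^{-(\vec s_0\vec A+\vec e_0)} = \vec Z_q^{\vec e_0}\ketbra{\hat\psi_{\vec y}}{\hat\psi_{\vec y}}\vec Z_q^{-\vec e_0}$. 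The sum over $\vec s_0$ with weight $q^{-n}$ then trivializes, reducing the theorem to
\begin{align*}
\sum_{\vec e_0\in\Z_q^m} D_{\Z_q^m,\alpha q}(\vec e_0)\,\vec Z_q^{\vec e_0}\ketbra{\hat\psi_{\vec y}}{\hat\psi_{\vec y}}\vec Z_q^{-\vec e_0} \;\approx_{\negl}\; \ketbra{\hat\psi_{\vec y}}{\hat\psi_{\vec y}}.
\end{align*}

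\textbf{Step 2 (passing to the primal side via Fourier duality).} To analyze the residual $\vec Z_q^{\vec e_0}$, I would conjugate the displayed expression by $\FT_q^\dag$. By \expref{Lemma}{lem:XZ-conjugation}, $\FT_q^\dag\vec Z_q^{\vec e_0}\FT_q = \vec X_q^{\vec e_0}$; by the duality lemma (\expref{Lemma}{lem:switching}), $\FT_q^\dag\ket{\hat\psi_{\vec y}}$ is negligibly close to the primal Gaussian state $\ket{\psi_{\vec y}}$; and unitary conjugation preserves trace distance. Two applications of the triangle inequality reduce the task to proving
\begin{align*}
\sum_{\vec e_0\in\Z_q^m} D_{\Z_q^m,\alpha q}(\vec e_0)\,\vec X_q^{\vec e_0}\ketbra{\psi_{\vec y}}{\psi_{\vec y}}\vec X_q^{-\vec e_0}\;\approx_{\negl}\;\ketbra{\psi_{\vec y}}{\psi_{\vec y}}.
\end{align*}
Here $\vec X_q^{\vec e_0}$ simply relabels the computational basis outcome of $\ket{\psi_{\vec y}}$; after a change of variable the resulting state has amplitude $\rho_{q/\sigma}(\vec f - \vec e_0)\omega_q^{-\ip{\vec s,\vec y}}$ at $\ket{\vec s\vec A + \vec f}$.

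\textbf{Step 3 (smudging).} By the Gaussian tail bound (\expref{Lemma}{lem:gaussian-tails}), a sample $\vec e_0\sim D_{\Z_q^m,\alpha q}$ is short, $\|\vec e_0\|\leq \alpha q\sqrt{m}$, with overwhelming probability, so the tail contribution to the average is negligible. For any such short $\vec e_0$, the shifted Gaussian lemma (\expref{Lemma}{lem:shifted-gaussian}) applied at width $q/\sigma$ gives
\begin{align*}
\|D_{\Z_q^m,q/\sigma}-(D_{\Z_q^m,q/\sigma}+\vec e_0)\|_{\mathsf{TV}}\;\leq\; 2\bigl(1-e^{-2\pi\sqrt{m}\|\vec e_0\|\sigma/q}\bigr)\;=\;O(m\alpha\sigma),
\end{align*}
and the hypothesis $1/\alpha = \sigma\cdot 2^{o(n)}$ yields $m\alpha\sigma = m/2^{o(n)} = \negl(\lambda)$ for $m = \poly(\lambda)$. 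To lift this classical bound to a trace-distance bound on the pure quantum states, I would decompose $\ket{\psi_{\vec y}}\propto\sum_{\vec s}\omega_q^{-\ip{\vec s,\vec y}}\ket{\tilde\psi_{\vec s}}$ with $\ket{\tilde\psi_{\vec s}}=\sum_{\vec f}\rho_{q/\sigma}(\vec f)\ket{\vec s\vec A+\vec f}$: the $\ket{\tilde\psi_{\vec s}}$ have non-negative amplitudes and are nearly orthogonal by the unique decomposition of $\vec s\vec A+\vec f$ on the effective short-$\vec f$ support, and $\vec X_q^{\vec e_0}$ acts identically on each summand. The overlap $\ip{\psi_{\vec y}|\vec X_q^{\vec e_0}|\psi_{\vec y}}$ therefore reduces (up to negligible cross-terms) to a single Bhattacharyya coefficient of the shifted-versus-unshifted truncated Gaussian, which together with the Hellinger-vs-TV inequality yields $\|\ketbra{\psi_{\vec y}}{\psi_{\vec y}} - \vec X_q^{\vec e_0}\ketbra{\psi_{\vec y}}{\psi_{\vec y}}\vec X_q^{-\vec e_0}\|_\tr \leq \negl(\lambda)$. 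Averaging over $\vec e_0$ and conjugating back by $\FT_q$ finishes the proof. The main obstacle is this last conversion, where one must carefully control the near-orthogonality of different LWE cosets so that the oscillating phases $\omega_q^{-\ip{\vec s,\vec y}}$ do not spoil the reduction to a single Bhattacharyya coefficient.
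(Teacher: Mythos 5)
Your proposal follows the same skeleton as the paper's proof: conjugate by the Fourier transform using \expref{Lemma}{lem:XZ-conjugation}, pass between $\ket{\hat\psi_{\vec y}}$ and $\ket{\psi_{\vec y}}$ via the duality lemma (\expref{Lemma}{lem:switching}), and dispose of the Gaussian shift $\vec e_0$ via \expref{Lemma}{lem:shifted-gaussian}. Your Step 1, however, is a genuine simplification over the paper: you observe that $\vec Z_q^{\vec s_0\vec A}$ acts as an \emph{exact} global phase on the dual state, since $\ip{\vec s_0\vec A,\vec x}=\ip{\vec s_0,\vec A\vec x}=\ip{\vec s_0,\vec y} \Mod{q}$ for every $\vec x$ in its support, so the entire $\vec s_0$-average cancels before any approximation is made (and without even using that $\vec A$ is full-rank). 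The paper instead carries $\vec s_0\vec A+\vec e_0$ through the Fourier conjugation and extracts the phase $\omega_q^{\ip{\vec s_0,\vec y}}$ by a change of variables on the primal side, discarding it as a global phase at the end; your route isolates the only genuinely approximate ingredient, the $\vec e_0$-dephasing, more cleanly.

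The genuine gap is in Step 3, and you correctly identified its location yourself. Your plan to control $\ip{\psi_{\vec y}|\vec X_q^{\vec e_0}|\psi_{\vec y}}$ rests on near-orthogonality of the blocks $\ket{\tilde\psi_{\vec s}}$ "by the unique decomposition of $\vec s\vec A+\vec f$ on the effective short-$\vec f$ support." This uniqueness requires the minimum distance of $\Lambda_q(\vec A)$ to exceed roughly $2\sqrt{m}\,q/\sigma$, which is \emph{not} guaranteed for an arbitrary matrix whose columns generate $\Z_q^n$ (the theorem's hypothesis): a row of $\vec A$ may itself be a short vector, in which case distinct blocks overlap substantially and the cross terms are not negligible term by term. (The paper's own one-line invocation of \expref{Lemma}{lem:shifted-gaussian} at the analogous step quietly treats the $(\vec s,\vec e)$ labels as orthogonal and so glosses the same point.) The clean repair is available to you precisely because of your Step 1: do not return to the primal side at all. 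Compute the overlap directly on the dual side,
\begin{align*}
\braket{\hat\psi_{\vec y}|\vec Z_q^{\vec e_0}|\hat\psi_{\vec y}} \;=\; \underset{\vec x \sim D}{\E}\left[\omega_q^{\ip{\vec e_0,\vec x}}\right],
\end{align*}
where $D$ is the normalized distribution proportional to $\rho_{\sigma/\sqrt{2}}$ on the coset $\Lambda_q^{\vec y}(\vec A) \cap \Z_q^m$. By Cauchy--Schwarz and the tail bounds (\expref{Lemma}{lem:gaussian-tails}, \expref{Lemma}{lem:tailboundII}), $|\ip{\vec e_0,\vec x}| \leq \|\vec e_0\|\,\|\vec x\| \leq m\alpha\sigma q$ with overwhelming probability over good $\vec e_0$ and over the mass of $D$, so the phase $2\pi\ip{\vec e_0,\vec x}/q$ is uniformly tiny and the overlap is $1-O(m\alpha\sigma)-\negl(\lambda)$; hence each fixed good $\vec e_0$ moves the pure state by negligible trace distance, and averaging finishes. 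This is elementary, avoids the two Fourier round-trip errors entirely, and holds for any $\vec A$ and $\vec y$, which is the generality the theorem actually claims.
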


\begin{proof}
Let $\vec y \in \Z_q^n$ be an arbitrary vector and recall that the dual Gaussian state $\ket{\hat\psi_{\vec y}}$ is given by
\begin{align}\label{eq:Gaussian-coset-1}
\ket{\hat \psi_{\vec y}} \,\,= \sum_{\substack{\vec x \in \Z_q^m\\ \vec A \vec x = \vec y \Mod{q}}}\rho_{\sigma}(\vec x) \ket{\vec x}.
\end{align}
Consider a sample $\vec b=\vec s_0 \cdot\vec A + \vec e_0 \Mod{q}) \sim\LWE_{n,q,\alpha q}^m$ with $\vec s \rand  \Z_q^n$ and $\vec e_0 \sim D_{\Z_q^m,\alpha q}$. Because $\sigma \in (\sqrt{8m},q/\sqrt{8m})$ and $1/\alpha= \sigma \cdot 2^{o(n)}$, there exist negligible $\eta(\lambda)$ and $\kappa(\lambda)$ such that
\begin{align*}
\vec Z_q^{s_0\cdot\vec A  + \vec e_0} \ket{\hat \psi_{\vec y}} 
\,\,\,&= \,\,\,\, \FT_q \,
\vec X_q^{s_0\cdot\vec A  + \vec e_0} \,\FT_q^\dag \ket{\hat \psi_{\vec y}} & (\text{\expref{Lemma}{lem:XZ-conjugation}})\\ 
&\approx_{\eta} \,\, \FT_q \,
\vec X_q^{s_0\cdot\vec A  + \vec e_0} \, \ket{ \psi_{\vec y}} & (\text{\expref{Lemma}{lem:switching}})\\ 
&\approx_{\kappa} \,\,\omega_{q}^{\ip{\vec s_0,\vec y}} \FT_q  \ket{ \psi_{\vec y}}& (\text{\expref{Lemma}{lem:shifted-gaussian}})\\ 
&\approx_{\eta} \,\, \omega_{q}^{\ip{\vec s_0,\vec y}}\ket{\hat \psi_{\vec y}}. & (\text{\expref{Lemma}{lem:switching}})
\end{align*}
Here, $\ket{ \psi_{\vec y}}$ is the primal Gaussian state given by
$$
\ket{ \psi_{\vec y}} = \sum_{\vec s \in \Z_q^n} \sum_{\vec e \in \Z_q^m} \rho_{q/\sigma}(\vec e) \, \omega_q^{-\ip{\vec s,\vec y}} \ket{\vec s \vec A + \vec e \Mod{q}}.
$$
In other words, $\ket{\hat \psi_{\vec y}}$ in Eq.~\eqref{eq:Gaussian-coset-1} is an approximate eigenvector of the generalized Pauli operator $\vec Z_q^{s_0\cdot\vec A  + \vec e_0}$ with respect to the same matrix $\vec A \in \Z_q^{n \times m}$. Note that we can simply discard
$\omega_{q}^{\ip{\vec s_0,\vec y}} \in \mathbb{C}$ because it serves as a global phase. Hence, there exists a negligible function $\eps(\lambda)$ such that
\begin{align*}
\algo Z_{\LWE_{n,q,\alpha q}^m}(\ketbra{\hat\psi_{\vec y}}{\hat\psi_{\vec y}}) &= \sum_{\vec s_0 \in \Z_q^n} \sum_{\vec e_0 \in \Z_q^m} q^{-n}  D_{\Z_q^m,\alpha q}(\vec e_0) \,
\vec Z_q^{s_0\cdot\vec A  + \vec e_0}  \ketbra{\hat\psi_{\vec y}}{\hat\psi_{\vec y}} \vec Z_q^{-(s_0\cdot\vec A  + \vec e_0)}   \\
&\approx_\eps \Bigg(\sum_{\vec s_0 \in \Z_q^n} q^{-n} \Bigg) \cdot \left(\sum_{\vec e_0 \in \Z_q^m} D_{\Z_q^m,\alpha q}(\vec e_0)\right) \, \ketbra{\hat\psi_{\vec y}}{\hat\psi_{\vec y}}\\
&=  \ketbra{\hat\psi_{\vec y}}{\hat\psi_{\vec y}}.
\end{align*}
\end{proof}

\section{Uncertainty Relation for Fourier Basis Projections}\label{sec:uncertainty}

In this section, we prove an entropic uncertainty relation with respect to so-called Fourier basis projections. Informally, we say that a projector $\widehat{\boldsymbol{\Pi}}$ is a \emph{Fourier basis projection}, if $\widehat{\boldsymbol{\Pi}}$ corresponds to a projector (onto a subset of $\Z_q^m$) which is conjugated by the $q$-ary Fourier transform $\FT_q$.
Notice that the deletion procedures of our encryption schemes with certified deletion in \expref{Section}{sec:Dual-Regev-PKE} and \expref{Section}{sec:Dual-Regev-FHE} require a Fourier basis projection onto a small set of solutions to the (inhomogenous) short integer solution $(\ISIS)$ problem. Another example can be found in the work of Aaronson and Christiano~\cite{https://doi.org/10.48550/arxiv.1203.4740} who used Hadamard basis projections (a special case of the $q$-ary Fourier transform) onto small hidden subspaces to verify quantum money states.

Our uncertainty relation captures the following intuitive property: any system which passes a Fourier basis projection onto a small subset of $\Z_q^m$ (say, with high probability) must necessarily be \emph{unentangled} with any auxiliary system. We formalize this statement using the (smooth) quantum min-entropy (\expref{Definition}{def:smooth-entropies}).

\subsection{Fourier basis projections}

\begin{definition}[Fourier basis projection]\label{def:fourier-basis-projection} Let $m \in \N$ and let $q \geq 2$ be an integer modulus. Let $\algo S \subseteq \Z_q^m$ be an arbitrary set and let $\boldsymbol{\Pi}_{\algo S}$ be the associated projector onto $\algo S$, where
$$
\boldsymbol{\Pi}_{\algo S}= \sum_{\vec x \in \algo S} \proj{\vec x}.
$$
Then, we define the associated Fourier basis projection onto $\algo S$ as the projector
$$ \widehat{\boldsymbol{\Pi}}_{\algo S}=\FT_q^\dag \boldsymbol{\Pi}_{\algo S} \FT_q.
$$
\end{definition}

\subsection{Uncertainty relation}

In this section, our main result is the following.

\begin{theorem}[Uncertainty relation for Fourier basis projections]\label{thm:uncertainty} Let $m \in \N$, $q \geq 2$ be a modulus, $\{\ket{\psi^{\vec x}}\}_{\vec x \in \Z_q^m}$ be any family of normalized auxiliary states, and let $\ket{\psi}_{AB}$ be any state of the form
$$
\ket{\psi}_{AB} = \sum_{\vec x \in \Z_q^m} \alpha_{\vec x} \ket{\vec x}_A \otimes \ket{\psi^{\vec x}}_B  \quad\,\, \text{ s.t. } \quad \,\, \sum_{\vec x \in \Z_q^m} |\alpha_{\vec x}|^2 = 1.
$$
Let $\algo S \subseteq \Z_q^m$ be an arbitrary set and define the following projectors onto system $A$,
$$
\boldsymbol{\Pi}_{\algo S} = \sum_{\vec x \in \algo S} \proj{\vec x} \quad\,\, \text{ and } \quad \,\, \widehat{\boldsymbol{\Pi}}_{\algo S}=\FT_q^\dag \boldsymbol{\Pi}_{\algo S} \FT_q.
$$
Suppose that $\|(\widehat{\boldsymbol{\Pi}}_{\algo S} \otimes \id_B)\ket{\psi}_{AB}\|^2 = 1-\eps$, for some $\eps \geq 0$. Then, it holds that
$$
\hmin^{\sqrt{\eps}}(X|B)_\rho \,\geq\, m \cdot \log q - 2\cdot \log|\algo S|.
$$
Here, $\rho_{XB}$ results from a computational basis measurement of system $A$ of the state $\proj{\psi}_{AB}$, i.e.
$$
\rho_{XB} = \sum_{\vec x \in \Z_q^m} \proj{\vec x}_X \otimes \tr_A \left[(\proj{\vec x}_A \otimes \id_B) \proj{\psi}_{AB} \right].
$$
\begin{proof}
Suppose that $\ket{\psi}_{AB}$ satisfies $\|(\widehat{\boldsymbol{\Pi}}_{\algo S} \otimes \id_B)\ket{\psi}_{AB}\|^2 = 1-\eps$, for some $\eps \geq 0$.
From \expref{Lemma}{lem:closeness-ideal}, it follows that there exists an ideal pure state,
$$
\ket{\bar{\psi}}_{AB} = \frac{(\widehat{\boldsymbol{\Pi}}_{\algo S} \otimes \id_B)\ket{\psi}_{AB}}{\|(\widehat{\boldsymbol{\Pi}}_{\algo S} \otimes \id_B)\ket{\psi}_{AB}\|} = \sum_{\vec x \in \Z_q^m}  \bar{\alpha}_{\vec x} \ket{\vec x}_A \otimes \ket{\psi^{\vec x}}_B  \quad\,\, \text{ s.t. } \quad \,\, \sum_{\vec x \in \Z_q^m} |\bar{\alpha}_{\vec x}|^2 = 1,
$$
with the property that
$$
\| \proj{\psi} - \proj{\bar{\psi}}\|_\tr \leq \sqrt{\eps} \quad \,\, \text{ and } \quad \,\, \ket{\bar{\psi}} \in \mathrm{im}(\widehat{\boldsymbol{\Pi}}_{\algo S} \otimes \id_B).
$$ 
Because $\ket{\bar{\psi}}_{AB}$ lies in the image of the projector $\widehat{\boldsymbol{\Pi}}_{\algo S} \otimes \id_B$, we have
$$
\ket{\bar{\psi}}_{AB} = (\widehat{\boldsymbol{\Pi}}_{\algo S} \otimes \id_B) \ket{\bar{\psi}}_{AB} = q^{-m}\sum_{\vec x,\vec x' \in \Z_q^m} \sum_{\vec s \in \algo S} \bar{\alpha}_{\vec x'} \cdot  \omega_q^{\ip{\vec x,\vec s}}\omega_q^{-\ip{\vec x',\vec s}} \ket{\vec x}_A \otimes \ket{\psi^{\vec x'}}_B.
$$
Let us now analyze the ideal state $\bar{\rho}_{XB}$ which results from a computational basis measurement of system $A$ of the state $\proj{\bar\psi}_{AB}$. In other words, we consider the CQ state given by
$$
\bar{\rho}_{XB} = \sum_{\vec x \in \Z_q^m} \proj{\vec x}_X \otimes \tr_A \left[(\proj{\vec x}_A \otimes \id_B) \proj{\bar{\psi}}_{AB} \right].
$$
By the definition of the guessing probability in \expref{Definition}{def:guessing}, we have
\begin{align*}
p_{\guess}(X|B)_{\bar{\rho}} &=\underset{\boldsymbol{M}_B^x}{\sup} \sum_{x \in \Z_q^m} \Big\|(\proj{\vec x}_A \otimes \boldsymbol{M}_B^{\vec x}) \ket{\bar{\psi}}_{AB}\Big\|^2\\
&= \underset{\boldsymbol{M}_B^x}{\sup} \sum_{x \in \Z_q^m} q^{-2m}
\left\|
\sum_{\vec x' \in \Z_q^m} \sum_{\vec s \in \algo S} \bar{\alpha}_{\vec x'} \cdot  \omega_q^{\ip{\vec x,\vec s}}\omega_q^{-\ip{\vec x',\vec s}} \ket{\vec x}_A \otimes \boldsymbol{M}_B^{\vec x}\ket{\psi^{\vec x'}}_B
\right\|^2\\
&= \underset{\boldsymbol{M}_B^x}{\sup} \sum_{x \in \Z_q^m} 
q^{-2m}\left\|
\sum_{\vec x' \in \Z_q^m} \bar{\alpha}_{\vec x'} \cdot \Big( \sum_{\vec s \in \algo S}\omega_q^{\ip{\vec x,\vec s}}\omega_q^{-\ip{\vec x',\vec s}}\Big) \,\ket{\vec x}_A \otimes \boldsymbol{M}_B^{\vec x}\ket{\psi^{\vec x'}}_B
\right\|^2.
\end{align*}
Using the Cauchy-Schwarz-inequality, we find that for any $\vec x \in \Z_q^m$:
\begin{align}
&\left\|\sum_{\vec x' \in \Z_q^m} \bar{\alpha}_{\vec x'} \cdot \Big( \sum_{\vec s \in \algo S}\omega_q^{\ip{\vec x,\vec s}}\omega_q^{-\ip{\vec x',\vec s}}\Big)  \ket{\vec x}_A \otimes \boldsymbol{M}_B^{\vec x}\ket{\psi^{\vec x'}}_B
\right\| \nonumber\\
&\leq \sqrt{\sum_{\vec x' \in \Z_q^m} \Big|\bar{\alpha}_{\vec x'} \cdot \Big( \sum_{\vec s \in \algo S}\omega_q^{\ip{\vec x,\vec s}}\omega_q^{-\ip{\vec x',\vec s}}\Big)\Big|^2} \cdot \sqrt{ \sum_{\vec x' \in \Z_q^m}\Big\|\ket{\vec x}_A \otimes \boldsymbol{M}_B^{\vec x}\ket{\psi^{\vec x'}}_B
\Big\|^2}\nonumber\\
&\leq \sqrt{ |\algo S|^2\sum_{\vec x' \in \Z_q^m}\big|\bar{\alpha}_{\vec x'}\big|^2} \cdot \sqrt{ \sum_{\vec x' \in \Z_q^m}\Big\|\ket{\vec x}_A \otimes \boldsymbol{M}_B^{\vec x}\ket{\psi^{\vec x'}}_B
\Big\|^2}\nonumber\\
&= |\algo S|\cdot \sqrt{ \sum_{\vec x' \in \Z_q^m}\Big\|\boldsymbol{M}_B^{\vec x}\ket{\psi^{\vec x'}}_B
\Big\|^2}.\label{eq:CS-step}
\end{align}
Using the inequality in~\eqref{eq:CS-step}, we can now bound the guessing probability as follows:
\begin{align*}
p_{\guess}(X|B)_{\bar{\rho}} \,\,&\leq \,\,  \frac{|\algo S|^2}{q^{2m}} \cdot  \underset{\boldsymbol{M}_B^{\vec x}}{\sup} \sum_{\vec x \in \Z_q^m} 
\sum_{\vec x' \in \Z_q^m}\Big\|\boldsymbol{M}_B^{\vec x}\ket{\psi^{\vec x'}}_B
\Big\|^2\\
&=  \frac{|\algo S|^2}{q^{2m}} \cdot \sum_{\vec x' \in \Z_q^m} \underset{\boldsymbol{M}_B^{\vec x}}{\sup} \sum_{\vec x \in \Z_q^m} \Big\|\boldsymbol{M}_B^{\vec x}\ket{\psi^{\vec x'}}_B
\Big\|^2\\
&= \frac{|\algo S|^2}{q^{m}}. & (\text{since } \sum_{\vec x} \boldsymbol{M}_B^{\vec x} = \id)
\end{align*}
Because the \emph{purified distance} is bounded above by the \emph{trace distance}, it follows that
$$
P(\rho_{XB}, \bar{\rho}_{XB}) \, \leq \, \|\rho_{XB} - \bar{\rho}_{XB}\|_\tr \,\leq\, \| \proj{\psi} - \proj{\bar{\psi}}\|_\tr \,\leq\, \sqrt{\eps}.
$$
Therefore, by the definition of (smooth) min-entropy (see \expref{Definition}{def:smooth-entropies}), we have
\begin{align}
\hmin(X \, | \, B)_{\bar{\rho}} \,\,\, \leq \underset{\substack{\sigma_{XB}\\
  P(\sigma_{XB}, \rho_{XB}) \leq \sqrt{\eps}}}{\sup}  \hmin(X \, | \, B)_{\sigma} \,\,\,= \,\, \hmin^{\sqrt{\eps}}(X \, | \, B)_\rho. \label{eq:min-entropy-ineq}
\end{align}
Putting everything together, it follows from \eqref{eq:min-entropy-ineq} and \expref{Theorem}{thm:guessing} that
\begin{align*}
\hmin^{\sqrt{\eps}}(X \, | \, B)_\rho \, &\geq \,\hmin(X \, | \, B)_{\bar{\rho}}\\
&=- \log \big(p_{\guess}(X|B)_{\bar\rho}\big)\\
\,&\geq\, m \cdot \log q - 2\cdot \log|\algo S|.\quad\quad
\end{align*}
This proves the claim.
\end{proof}
\end{theorem}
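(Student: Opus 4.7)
My plan is to reduce the statement to a bound on the guessing probability of an ideal nearby state, and then translate this bound into a smooth min-entropy statement via the operational characterization of Koenig--Renner--Schaffner.

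First, I would invoke the closeness-to-ideal-states lemma (Lemma from Unruh), applied to the projector $\widehat{\boldsymbol{\Pi}}_{\algo S} \otimes \id_B$ and the state $\ket{\psi}_{AB}$. Since $\| (\widehat{\boldsymbol{\Pi}}_{\algo S} \otimes \id_B)\ket{\psi}\|^2 = 1 - \eps$, this yields an ideal pure state $\ket{\bar\psi}_{AB}$ lying in the image of $\widehat{\boldsymbol{\Pi}}_{\algo S} \otimes \id_B$ and satisfying $\| \proj{\psi} - \proj{\bar\psi}\|_\tr \leq \sqrt{\eps}$. Because $\ket{\bar\psi}$ is fixed by this projector, expanding $\widehat{\boldsymbol{\Pi}}_{\algo S} = \FT_q^\dag \boldsymbol{\Pi}_{\algo S} \FT_q$ gives a concrete Fourier-structured form of $\ket{\bar\psi}_{AB}$ in which every amplitude is a character sum over $\algo S$.

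Next, I would bound the guessing probability of the ideal CQ state $\bar\rho_{XB}$, obtained by measuring the $A$ register of $\proj{\bar\psi}$ in the computational basis. Writing $p_{\guess}(X|B)_{\bar\rho}$ as a supremum over POVMs $\{\vec M_B^{\vec x}\}$ of $\sum_{\vec x}\|(\proj{\vec x}_A \otimes \vec M_B^{\vec x})\ket{\bar\psi}\|^2$, and plugging in the Fourier expansion, the amplitude attached to $\ket{\vec x}_A$ becomes a superposition of terms weighted by $\sum_{\vec s \in \algo S} \omega_q^{\langle \vec x - \vec x', \vec s\rangle}$, whose modulus is at most $|\algo S|$. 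A single application of Cauchy--Schwarz, splitting the sum over $\vec x'$ into the amplitudes $\bar\alpha_{\vec x'}$ (normalized) and the vectors $\vec M_B^{\vec x}\ket{\psi^{\vec x'}}$, yields
\begin{equation*}
p_{\guess}(X|B)_{\bar\rho} \,\leq\, \frac{|\algo S|^2}{q^{2m}} \sum_{\vec x' \in \Z_q^m} \sup_{\vec M_B^{\vec x}} \sum_{\vec x \in \Z_q^m} \| \vec M_B^{\vec x}\ket{\psi^{\vec x'}}\|^2 \,=\, \frac{|\algo S|^2}{q^m},
\end{equation*}
using the POVM completeness $\sum_{\vec x} (\vec M_B^{\vec x})^\dag \vec M_B^{\vec x} = \id$ and the normalization of each $\ket{\psi^{\vec x'}}$.

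Finally, by the Koenig--Renner--Schaffner theorem, this guessing-probability bound gives $\hmin(X|B)_{\bar\rho} \geq m \log q - 2 \log |\algo S|$. To lift this to the smooth min-entropy of the original state, I would use that the purified distance is upper bounded by the trace distance, so $P(\rho_{XB},\bar\rho_{XB}) \leq \sqrt{\eps}$ (a measurement cannot increase trace distance), and $\bar\rho_{XB}$ is therefore an admissible $\sqrt{\eps}$-smoothing of $\rho_{XB}$; the definition of $\hmin^{\sqrt\eps}$ then gives the claimed bound. I expect the main subtlety to be the Cauchy--Schwarz step: one has to split the sum carefully so that the $|\algo S|^2$ factor appears once and is not double-counted when summing over $\vec x$, and so that the residual POVM sum telescopes to $|\algo S|^2/q^m$ rather than something weaker like $|\algo S|^2 /q^m \cdot q^m$.
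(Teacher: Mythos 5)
Your proposal is correct and follows essentially the same route as the paper's own proof: the ideal state via Unruh's closeness lemma, the Fourier expansion of $\ket{\bar{\psi}}$ with character sums over $\algo S$ bounded in modulus by $|\algo S|$, a single Cauchy--Schwarz split over $\vec x'$ yielding $p_{\guess}(X|B)_{\bar{\rho}} \leq |\algo S|^2/q^m$ via POVM completeness, and the lift to $\hmin^{\sqrt{\eps}}$ through the Koenig--Renner--Schaffner theorem and the bound of purified distance by trace distance. Your flagged subtlety about not double-counting the $|\algo S|^2$ factor is exactly the point the paper's displayed Cauchy--Schwarz computation handles, so nothing is missing.
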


\section{Gaussian-Collapsing Hash Functions}

Unruh~\cite{cryptoeprint:2015/361} introduced the notion of collapsing hash functions in his seminal work on computationally binding quantum commitments. This property is captured by the following definition.

\begin{definition}[Collapsing hash function, \cite{cryptoeprint:2015/361}] Let $\lambda \in \N$ be the security parameter. A hash function family $\algo H = \{H_\lambda\}_{\lambda \in \N}$ is called collapsing if, for every $\QPT$ adversary $\algo A$,
$$
|
\Pr[ \mathsf{CollapseExp}_{\algo H,\algo A,\lambda}(0)=1] - \Pr[ \mathsf{CollapseExp}_{\algo H,\algo A,\lambda}(1)=1]
| \leq \negl(\lambda).
$$
Here, the experiment $\mathsf{CollapseExp}_{\algo H,\algo A,\lambda}(b)$ is defined as follows:
\begin{enumerate}
    \item The challenger samples a random hash function $h \rand H_\lambda$, and sends a description of $h$ to $\algo A$.
    \item $\algo A$ responds with a (classical) string $y \in \bit^{n(\lambda)}$ and an $m(\lambda)$-qubit quantum state in system $X$.
    \item The challenger coherently computes $h$ (into an auxiliary system $Y$) given the state in system $X$, and then performs a two-outcome measurement on $Y$ indicating whether the output of $h$ equals $y$. If $h$ does not equal $y$ the challenger aborts and
outputs $\bot$.

\item If $b=0$, the challenger does nothing. Else, if $b=1$, the challenge measures the $m(\lambda)$-qubit system $X$ in the computational basis. Finally, the challenger returns the state in system $X$ to $\algo A$.

\item $\algo A$ returns a bit $b'$, which we define as the output of the experiment.
\end{enumerate}
\end{definition}

Motivated by the properties of the dual Gaussian state from \expref{Definition}{def:Gaussian-states}, we consider a special class of hash functions which are \emph{collapsing} with respect to Gaussian superpositions. Informally, we say that a hash function $h$ is \emph{Gaussian-collapsing} if it is computationally difficult to distinguish between a Gaussian superposition of pre-images and a single (measured) Gaussian pre-image (of $h$). We formalize this below.

\begin{definition}[Gaussian-collapsing hash function]\label{def:gaussian-collapsing} Let $\lambda \in \N$ be the security parameter, $m(\lambda),n(\lambda) \in \N$ and let $q(\lambda) \geq 2$ be a modulus. Let $\sigma > 0$. A hash function family $\algo H = \{H_\lambda\}_{\lambda \in \N}$ with domain $\algo X = \Z_q^{m}$ and range $\algo Y=\Z_q^{n}$ is called $\sigma$-Gaussian-collapsing if, for every $\QPT$ adversary $\algo A$,
$$
|
\Pr[ \mathsf{GaussCollapseExp}_{\algo H,\algo A,\lambda}(0)=1] - \Pr[ \mathsf{GaussCollapseExp}_{\algo H,\algo A,\lambda}(1)=1]
| \leq \negl(\lambda).
$$
Here, the experiment $\mathsf{GaussCollapseExp}_{\algo H,\algo A,\lambda}(b)$ is defined as follows:
\begin{enumerate}
    \item The challenger samples a random hash function $h \rand H_\lambda$
    and prepares the quantum state
    $$
    \ket{\hat\psi}_{XY} = \sum_{\vec x \in \Z_q^m} \rho_\sigma(\vec x) \ket{\vec x}_X \otimes \ket{h(\vec x)}_Y.
    $$

    \item The challenger measures system $Y$ in the computational basis, resulting in the state
    $$
    \ket{\hat\psi_{\vec y}}_{XY} = \sum_{\substack{\vec x \in \Z_q^m:\\ h(\vec x)= \vec y}} \rho_\sigma(\vec x) \ket{\vec x}_X \otimes \ket{\vec y}_Y.
    $$
\item If $b=0$, the challenger does nothing. Else, if $b=1$, the challenger measures system $X$ of the quantum state $\ket{\hat\psi_{\vec y}}$ in the computational basis. Finally, the challenger sends the outcome state in systems $X$ to $\algo A$, together with the string $\vec y \in \Z_q^n$ and a classical description of the hash function $h$.

\item $\algo A$ returns a bit $b'$, which we define as the output of the experiment.
\end{enumerate}
\end{definition}

The following follows immediately from the definition of Gaussian-collapsing hash functions, and the fact that the dual Gaussian state can be efficiently prepared using \expref{Algorithm}{alg:GenDual}.

\begin{claim}\label{claim:collapsing}
Let $\algo H = \{H_\lambda\}_{\lambda \in \N}$ be a hash function family with domain $\algo X = \Z_q^{m}$ and range $\algo Y=\Z_q^{n}$, where $m(\lambda),n(\lambda) \in \N$. If $\algo H$ is collapsing, then $\algo H$ is also $\sigma$-Gaussian-collapsing, for any $\sigma =\Omega(\sqrt{m})$.
\end{claim}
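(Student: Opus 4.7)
The plan is to prove the claim by a direct reduction: any $\QPT$ adversary $\algo A$ that distinguishes the two cases of $\mathsf{GaussCollapseExp}$ can be converted into a $\QPT$ adversary $\algo B$ that distinguishes the two cases of $\mathsf{CollapseExp}$. The key observation is that a Gaussian-weighted superposition of pre-images is simply a particular choice of state that an adversary in the collapsing game is allowed to submit, so the Gaussian-collapsing experiment factors through the collapsing experiment once the adversary itself prepares the Gaussian input.

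Concretely, I would construct $\algo B$ as follows. Upon receiving a description of a random $h \rand H_\lambda$ from its challenger, $\algo B$ uses \expref{Algorithm}{alg:GenDual} with parameter $\sigma = \Omega(\sqrt{m})$ to prepare the state $\sum_{\vec x \in \Z_q^m} \rho_\sigma(\vec x) \ket{\vec x}_X \otimes \ket{h(\vec x)}_Y$ and then measures register $Y$ to obtain some $\vec y \in \Z_q^n$, leaving system $X$ in the state
\[
\ket{\hat\psi_{\vec y}}_X \,=\, \sum_{\substack{\vec x \in \Z_q^m\\ h(\vec x) = \vec y}} \rho_\sigma(\vec x) \ket{\vec x}.
\]
(This is efficient because Gaussian superpositions over $\Z_q^m$ with parameter $\sigma = \Omega(\sqrt{m})$ can be prepared in polynomial time via standard techniques such as Grover--Rudolph, as noted in the excerpt preceding \expref{Algorithm}{alg:GenDual}.) Then $\algo B$ submits the pair $(\vec y, X)$ to the collapsing challenger. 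The challenger verifies that the state in $X$ is supported on pre-images of $\vec y$, which it is by construction, so the check passes with probability $1$. Depending on the challenge bit $b$, the challenger either does nothing ($b=0$) or measures $X$ in the computational basis ($b=1$), and returns $X$ to $\algo B$. Finally, $\algo B$ forwards $(h, \vec y, X)$ to the internally simulated $\algo A$ and outputs whatever $\algo A$ outputs.

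To conclude, observe that the joint distribution of $(h, \vec y, X)$ that $\algo A$ sees when run inside $\algo B$ is identical to the distribution it would see in $\mathsf{GaussCollapseExp}_{\algo H,\algo A,\lambda}(b)$: the hash function $h$ is uniform over $H_\lambda$, the syndrome $\vec y$ is distributed according to the marginal induced by measuring $Y$ in the Gaussian state, and the state in $X$ is either $\ket{\hat\psi_{\vec y}}$ (when $b=0$) or its computational-basis measurement (when $b=1$). Therefore
\[
\Pr[\mathsf{CollapseExp}_{\algo H,\algo B,\lambda}(b) = 1] \,=\, \Pr[\mathsf{GaussCollapseExp}_{\algo H,\algo A,\lambda}(b)=1], \qquad b \in \bit,
\]
and the collapsing property of $\algo H$ implies the Gaussian-collapsing property, with the same negligible distinguishing advantage. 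There is no real obstacle here; the only thing to be slightly careful about is to confirm that the Gaussian state preparation step is $\QPT$ for $\sigma = \Omega(\sqrt{m})$ and that the measurement of register $Y$ in the reduction produces the same classical-quantum state as the one defined in the Gaussian-collapsing experiment, both of which are immediate.
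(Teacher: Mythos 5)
Your reduction is correct and is exactly the argument the paper has in mind: the paper dispenses with Claim~\ref{claim:collapsing} by remarking that it ``follows immediately from the definition of Gaussian-collapsing hash functions, and the fact that the dual Gaussian state can be efficiently prepared using \expref{Algorithm}{alg:GenDual},'' which is precisely your observation that the reduction $\algo B$ can prepare the Gaussian superposition itself, submit $(\vec y, X)$ to the collapsing challenger (passing the pre-image check with probability $1$), and perfectly simulate $\mathsf{GaussCollapseExp}_{\algo H,\algo A,\lambda}(b)$ for $\algo A$. Your write-up simply makes explicit the efficiency of the state preparation for $\sigma = \Omega(\sqrt{m})$ and the exact match of the induced classical-quantum states, both of which the paper leaves implicit.
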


\subsection{Ajtai's hash function}

Liu and Zhandry~\cite{cryptoeprint:2019/262} implicitly showed that the \emph{Ajtai} hash function $h_{\vec A}(\vec x) = \vec A \vec x \Mod{q}$ is collapsing -- and thus \emph{Gaussian-collapsing} -- via the notion of \emph{lossy functions} and by assuming the superpolynomial hardness of (decisional) $\LWE$. In this section, we give a simple and direct proof that the Ajtai hash function is Gaussian-collapsing assuming (decisional) $\LWE$, which might be of independent interest.

\begin{theorem}\label{thm:GaussCollapse}
Let $n\in \N$ and $q\geq 2$ be a prime modulus with $m \geq 2n \log q$, each parameterized by $\lambda \in \N$. Let  $\sigma \in (\sqrt{8m},q/\sqrt{8m})$ be a function of $\lambda$. Then, the Ajtai hash function family $\algo H = \{H_\lambda\}_{\lambda \in \N}$ with
$$
H_\lambda = \left\{ h_{\vec A}: \Z_q^m \rightarrow \Z_q^n \, \text{ s.t. } \, h_{\vec A}(\vec x) = \vec A \cdot \vec x \Mod{q}; \, \vec A \in \Z_q^{n \times m} \right\} 
$$
is $\sigma$-Gaussian-collapsing assuming the quantum hardness of the decisional $\LWE_{n,q,\alpha q}^m$ problem, for any parameter $\alpha \in (0,1)$ with relative noise magnitude $1/\alpha= \sigma \cdot 2^{o(n)}$.
\end{theorem}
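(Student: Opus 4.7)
The plan is to reduce decisional $\LWE_{n,q,\alpha q}^m$ to the Gaussian-collapsing distinguishing task by combining two facts established earlier: the approximate $\vec Z_q$-invariance of the dual Gaussian state under $\LWE$-distributed Pauli shifts (\expref{Theorem}{thm:invariance-Pauli-Z}), and the observation from \expref{Lemma}{lem:random-Z} that a computational-basis measurement on system $X$ is equivalent to applying the uniform Pauli-$\vec Z$ dephasing channel $\algo Z$. Consequently, the $b=0$ branch of $\mathsf{GaussCollapseExp}_{\algo H,\algo A,\lambda}$ hands the adversary $\ketbra{\hat\psi_{\vec y}}{\hat\psi_{\vec y}}$, while the $b=1$ branch hands over $\algo Z(\ketbra{\hat\psi_{\vec y}}{\hat\psi_{\vec y}})$, and it suffices to argue that these are computationally indistinguishable under decisional $\LWE$.

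Concretely, given an $\LWE_{n,q,\alpha q}^m$ challenge $(\vec A,\vec z) \in \Z_q^{n\times m} \times \Z_q^m$, where $\vec z = \vec s \vec A + \vec e$ for $\vec s \rand \Z_q^n$, $\vec e \sim D_{\Z^m,\alpha q}$, or else $\vec z \rand \Z_q^m$, the reduction first invokes $\mathsf{GenDual}(\vec A,\sigma)$ (\expref{Algorithm}{alg:GenDual}) to produce the dual Gaussian state $\ket{\hat\psi_{\vec y}}$ in system $X$ together with a syndrome $\vec y \in \Z_q^n$; by \expref{Lemma}{lem:full-rank}, the columns of $\vec A \rand \Z_q^{n\times m}$ generate $\Z_q^n$ with overwhelming probability, so the preconditions of the duality lemma are met. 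The reduction then applies the generalized Pauli operator $\vec Z_q^{\vec z}$ to system $X$ and forwards the result, together with $\vec y$ and a description of $h_{\vec A}$, to a hypothetical $\QPT$ Gaussian-collapsing distinguisher $\algo A$, returning whatever bit $\algo A$ outputs.

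The analysis splits into two cases. When $\vec z = \vec s \vec A + \vec e$ is an $\LWE$ sample, the same pointwise chain of approximations that drives the proof of \expref{Theorem}{thm:invariance-Pauli-Z} --- namely \expref{Lemma}{lem:XZ-conjugation}, the duality lemma (\expref{Lemma}{lem:switching}), and the shifted-Gaussian estimate (\expref{Lemma}{lem:shifted-gaussian}) --- gives $\vec Z_q^{\vec z}\ket{\hat\psi_{\vec y}} \approx_\eps \omega_q^{\ip{\vec s,\vec y}}\ket{\hat\psi_{\vec y}}$ with negligible $\eps$, so the state handed to $\algo A$ is within negligible trace distance of $\ketbra{\hat\psi_{\vec y}}{\hat\psi_{\vec y}}$, matching the $b=0$ branch. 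When $\vec z \rand \Z_q^m$ is uniform, by linearity of $\algo A$'s acceptance probability in its input density matrix, the probability that $\algo A$ outputs $1$ averaged over $\vec z$ is exactly $\Tr[\boldsymbol{M}_{\algo A}\, \algo Z(\ketbra{\hat\psi_{\vec y}}{\hat\psi_{\vec y}})]$ (using \expref{Lemma}{lem:random-Z}), matching the $b=1$ branch. A standard triangle-inequality argument then converts any non-negligible Gaussian-collapsing advantage of $\algo A$ into a non-negligible decisional $\LWE_{n,q,\alpha q}^m$ advantage.

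The main delicate point is that the invariance statement in \expref{Theorem}{thm:invariance-Pauli-Z} is phrased as a statistical statement about the channel $\algo Z_{\LWE_{n,q,\alpha q}^m}$ averaged over $(\vec s_0,\vec e_0)$, whereas the reduction applies $\vec Z_q^{\vec z}$ for a single challenge sample. Fortunately, inspecting that proof shows that the approximation $\vec Z_q^{\vec s_0 \vec A + \vec e_0}\ket{\hat\psi_{\vec y}} \approx \omega_q^{\ip{\vec s_0,\vec y}}\ket{\hat\psi_{\vec y}}$ is established pointwise in $(\vec s_0,\vec e_0)$ before any averaging, with negligible error controlled uniformly by the parameter constraints $\sigma \in (\sqrt{8m},q/\sqrt{8m})$ and $1/\alpha = \sigma \cdot 2^{o(n)}$ from the theorem's hypotheses, so no additional work is needed here.
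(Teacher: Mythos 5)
Your proposal is correct and takes essentially the same route as the paper's own proof: the paper's reduction likewise prepares $\ket{\hat\psi_{\vec y}}$ via $\mathsf{GenDual}$, applies $\vec Z_q^{\vec b}$ with the challenge string, and matches the $\LWE$ branch to $b=0$ via \expref{Theorem}{thm:invariance-Pauli-Z} and the uniform branch to $b=1$ via \expref{Lemma}{lem:random-Z}, concluding by a triangle-inequality advantage transfer. Your handling of the averaged-versus-pointwise subtlety is also consistent with the paper, since the proof of \expref{Theorem}{thm:invariance-Pauli-Z} indeed establishes $\vec Z_q^{\vec s_0 \vec A + \vec e_0}\ket{\hat\psi_{\vec y}} \approx \omega_q^{\ip{\vec s_0,\vec y}}\ket{\hat\psi_{\vec y}}$ pointwise in $(\vec s_0,\vec e_0)$ before averaging.
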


\begin{proof} Let $\algo A$ denote the $\QPT$ adversary in the experiment $\mathsf{GaussCollapseExp}_{\algo H,\algo A,\lambda}(b)$ with $b \in \bit$.
To prove the claim, we give a reduction from the decisional $\LWE_{n,q,\alpha q}^m$ assumption. We are given as input a sample $(\vec A,\vec b)$ with $\vec A \rand \Z_q^{m \times n}$, where $\vec b=\vec s_0 \cdot \vec A  + \vec e_0 \Mod{q})$ is either a sample from the $\LWE$ distribution with $\vec s_0 \rand  \Z_q^n$ and $\vec e_0 \sim D_{\Z^m,\alpha q}$, or where $\vec b$ is a uniformly random string $\vec u \rand \Z_q^m$. 

Consider the distinguisher $\algo D$ that acts as follows on input $1^\lambda$ and $(\vec A,\vec b)$:
\begin{enumerate}
    \item $\algo D$ prepares a bipartite quantum state on systems $X$ and $Y$ with
    $$
    \ket{\hat\psi}_{XY} = \sum_{\vec x \in \Z_q^m} \rho_\sigma(\vec x) \ket{\vec x}_X \otimes \ket{\vec A \cdot \vec x \Mod{q}}_Y.
    $$
    \item $\algo D$ measures system $Y$ in the computational basis, resulting in the state
    $$
    \ket{\hat\psi_{\vec y}}_{XY} = \sum_{\substack{\vec x \in \Z_q^m:\\ \vec A \vec x= \vec y}} \rho_\sigma(\vec x) \ket{\vec x}_X \otimes \ket{\vec y}_Y.
    $$

    \item $\algo D$ applies the generalized Pauli-$\vec Z$ operator $\vec Z_q^{\vec b}$ on system $X$, resulting in the state
    
    $$
    (\vec Z_q^{\vec b} \ot \id_Y)\ket{\hat\psi_{\vec y}}_{XY} = \sum_{\substack{\vec x \in \Z_q^m:\\ \vec A \vec x= \vec y}} \rho_\sigma(\vec x) \left(\vec Z_q^{\vec b}\ket{\vec x}_X \right) \otimes \ket{\vec y}_Y.
    $$
    
    \item $\algo D$ runs the adversary $\algo A$ on input system $X$ and classical descriptions of $\vec A \in \Z_q^{n \times m}$ and $\vec y \in \Z_q^n$.
    
    \item $\algo D$ outputs whatever bit $b' \in \bit$ the adversary $\algo A$ outputs.
\end{enumerate}
Suppose that, for every $\lambda \in \N$, there exists a polynomial $p(\lambda)$ such that
\begin{align*}
|
\Pr[ \mathsf{GaussCollapseExp}_{\algo H,\algo A,\lambda}(0)=1] - \Pr[ \mathsf{GaussCollapseExp}_{\algo H,\algo A,\lambda}(1)=1]
| \geq \frac{1}{p(\lambda)}.
\end{align*}
We now show that this implies that $\algo D$ succeeds at the decisional $\LWE_{n,q,\alpha q}^m$ experiment with advantage at least $1/p(\lambda) - \negl(\lambda)$. We distinguish between the following two cases.

If $(\vec A,\vec b)$ is a sample from the $\LWE$ distribution with $\vec b= \vec s_0 \cdot \vec A  + \vec e_0 \Mod{q})$, then the adversary $\algo A$ receives as input the following quantum state in system $X$:
\begin{align*}
\algo Z_{\LWE_{n,q,\alpha q}^m}(\ketbra{\hat\psi_{\vec y}}{\hat\psi_{\vec y}}_X) &= \sum_{\vec s_0 \in \Z_q^n} \sum_{\vec e_0 \in \Z^m} q^{-n}  D_{\Z^m,\alpha q}(\vec e_0) \,\,
\vec Z_q^{s_0\cdot\vec A  + \vec e_0}  \ketbra{\hat\psi_{\vec y}}{\hat\psi_{\vec y}}_X \vec Z_q^{-(s_0\cdot\vec A  + \vec e_0)}.
\end{align*}
From \expref{Theorem}{thm:invariance-Pauli-Z} it follows that there exists a negligible function $\eps(\lambda)$ such that
\begin{align*}
\algo Z_{\LWE_{n,q,\alpha q}^m}(\ketbra{\hat\psi_{\vec y}}{\hat\psi_{\vec y}}_X) \,\,\approx_\eps \,\, \ketbra{\hat\psi_{\vec y}}{\hat\psi_{\vec y}}_X.
\end{align*}
In other words, $\algo A$ receives as input a state in system $X$ which is within negligible trace distance of the dual Gaussian state $\ket{{\hat\psi_{\vec y}}}$, which corresponds precisely to the input in $\mathsf{GaussCollapseExp}_{\algo H,\algo A,\lambda}(0)$.

If $(\vec A,\vec b)$ is a uniformly random sample, where $\vec b$ is a random string $\vec u \rand \Z_q^m$, then the adversary $\algo A$ receives as input the following quantum state in system $X$:
$$
\algo Z(\ketbra{\hat\psi_{\vec y}}{\hat\psi_{\vec y}}_X) = q^{-m} \sum_{\vec u \in \Z_q^m} \vec Z_q^{\vec u} \, \ketbra{\hat\psi_{\vec y}}{\hat\psi_{\vec y}}_X \, \vec Z_q^{-\vec u}.
$$
Because $\algo Z$ corresponds to the uniform Pauli-$\vec Z$ dephasing channel, it follows from \expref{Lemma}{lem:random-Z} that
$$
\algo Z(\ketbra{\hat\psi_{\vec y}}{\hat\psi_{\vec y}}_X) =  \sum_{\vec x \in \Z_q^m}
\big| \ip{\vec x | \hat\psi_{\vec y}}\big| ^2
\,\ketbra{\vec x}{\vec x}_X.
$$
In other words, $\algo A$ receives as input a mixed state which is the result of a computational basis measurement of the Gaussian state $\ket{{\hat\psi_{\vec y}}}$. Note that this corresponds precisely to the input in $\mathsf{GaussCollapseExp}_{\algo H,\algo A,\lambda}(1)$.

By assumption, the adversary $\algo A$ succeeds with advantage at least $1/p(\lambda)$. Therefore, the distinguisher $\algo D$ 
succeeds at the decisional $\LWE_{n,q,\alpha q}^m$ experiment with probability at least $1/p(\lambda) - \negl(\lambda)$.
\end{proof}

\begin{theorem}\label{thm:pseudorandom-SLWE}
Let $n\in \N$ and $q\geq 2$ be a prime modulus with $m \geq 2n \log q$, each parameterized by the security parameter $\lambda \in \N$. Let  $\sigma \in (\sqrt{8m},q/\sqrt{8m})$ be a function of $\lambda$ and $\vec  A \rand \Z_q^{n \times m}$ be a matrix.

Then,
the following states are computationally indistinguishable assuming the quantum hardness of decisional $\LWE_{n,q,\alpha q}^m$, for any parameter $\alpha \in (0,1)$ with relative noise magnitude $1/\alpha= \sigma \cdot 2^{o(n)}$:
\begin{itemize}
\item For any $(\ket{\hat\psi_{\vec y}},\vec y) \leftarrow \mathsf{GenDual}(\vec A,\sigma)$ in \expref{Algorithm}{alg:GenDual}:
$$
\ket{\hat\psi_{\vec y}}=\sum_{\substack{\vec x \in \Z_q^m\\ \vec A \vec x = \vec y \Mod{q}}}\rho_{\sigma}(\vec x) \,\ket{\vec x}\quad \approx_c \quad\,\, \ket{\vec x_0} \,\,: \,\, \quad\,\, \, \vec x_0 \sim D_{\Lambda_q^{\vec y}(\vec A),\frac{\sigma}{\sqrt{2}}}.\,\,\,
$$

\item For any $(\ket{\psi_{\vec y}},\vec y) \leftarrow \mathsf{GenPrimal}(\vec A,\sigma)$ in \expref{Algorithm}{alg:GenPrimal}:
$$
\ket{\psi_{\vec y}}=\sum_{\vec s \in \Z_q^n} \sum_{\vec e \in \Z_q^m} \rho_{\frac{q}{\sigma}}(\vec e) \, \omega_q^{-\ip{\vec s,\vec y}} \ket{\vec s \vec A + \vec e} \,\,\, \approx_c \,\,\,\,\, \sum_{\vec u \in \Z_q^m} \omega_q^{-\ip{\vec u,\vec x_0}}\ket{\vec u} \,\,: \,\, \quad \vec x_0 \sim D_{\Lambda_q^{\vec y}(\vec A),\frac{\sigma}{\sqrt{2}}}.
$$
\end{itemize}
Moreover, the distribution of $\vec y \in \Z_q^n$ is negligibly close in total variation distance to the uniform distribution over $\Z_q^n$. Here, $\Lambda_q^{\vec y}(\vec A) = \{ \vec x \in \Z^m  :  \vec A \vec x = \vec y \Mod{q}\}$ denotes a coset of the lattice $\Lambda_q^\bot(\vec A)$.
\end{theorem}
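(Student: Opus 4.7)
The plan is to derive both indistinguishability statements as essentially direct consequences of the Gaussian-collapsing property of Ajtai's hash function (\expref{Theorem}{thm:GaussCollapse}), together with the duality lemma (\expref{Lemma}{lem:switching}). The dual (first) statement will follow by inspecting the $\mathsf{GaussCollapseExp}$ experiment with $h_{\vec A}(\vec x) = \vec A \vec x \Mod{q}$; the primal (second) statement will then follow by applying the inverse quantum Fourier transform coherently to both sides. The uniformity of $\vec y$ will be a standard regularity application of \expref{Lemma}{lem:Gaussian-LHL}.

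For the first bullet, I first observe that the (normalized) dual Gaussian state $\ket{\hat\psi_{\vec y}}$ has computational-basis amplitudes proportional to $\rho_\sigma(\vec x)$, so a measurement of system $X$ returns a vector $\vec x_0 \in \Lambda_q^{\vec y}(\vec A) \cap (-q/2,q/2]^m$ with probability proportional to $\rho_\sigma(\vec x_0)^2 = \rho_{\sigma/\sqrt{2}}(\vec x_0)$. Because $\sigma/\sqrt{2} < q/\sqrt{8m}$ the tail bound of \expref{Lemma}{lem:tailboundII} (applied with parameter $\sigma/\sqrt{2}$) shows that the truncation to the fundamental domain changes the distribution by only a negligible amount, so the induced distribution of $\vec x_0$ is negligibly close in total variation to $D_{\Lambda_q^{\vec y}(\vec A),\sigma/\sqrt{2}}$. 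The two branches $b=0$ and $b=1$ of $\mathsf{GaussCollapseExp}$ applied to Ajtai's hash family now produce exactly the two sides of the first bullet, with $(\vec A,\vec y)$ serving as classical auxiliary information passed to the distinguisher. \expref{Theorem}{thm:GaussCollapse}, whose hypotheses on $(n,m,q,\sigma,\alpha)$ match those of the present theorem, then immediately yields the claimed computational indistinguishability.

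For the second bullet, I apply the inverse quantum Fourier transform $\FT_q^\dag$; this is an efficient unitary acting only on system $X$ and not touching the classical side information $(\vec A,\vec y)$, hence computational indistinguishability is preserved by the standard reduction that pre-applies $\FT_q$. On the left, \expref{Lemma}{lem:switching} guarantees that $\FT_q \ket{\psi_{\vec y}}$ is within negligible trace distance of $\ket{\hat\psi_{\vec y}}$, so by the triangle inequality (together with \expref{Lemma}{lem:TD_inequalities}) I may freely swap these two states at the cost of a negligible additive error in the distinguishing advantage. On the right, $\FT_q^\dag \ket{\vec x_0}$ is, up to the global normalization $q^{-m/2}$ that the paper suppresses throughout, exactly the claimed state $\sum_{\vec u \in \Z_q^m} \omega_q^{-\ip{\vec u,\vec x_0}}\ket{\vec u}$. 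Combining these two observations with the first bullet gives the desired indistinguishability for the primal mode. Finally, the uniformity of $\vec y$ follows from the fact that the marginal measurement distribution $\Pr[\vec y] \propto \rho_{\sigma/\sqrt{2}}(\Lambda_q^{\vec y}(\vec A))$ coincides (up to negligible truncation error) with the distribution of $\vec A \vec x \Mod{q}$ for $\vec x \sim D_{\Z^m,\sigma/\sqrt{2}}$; since $\sigma/\sqrt{2} = \omega(\sqrt{\log m})$ because $\sigma \geq \sqrt{8m}$, \expref{Lemma}{lem:Gaussian-LHL} applies and shows that this distribution is within negligible total variation distance of the uniform distribution over $\Z_q^n$, for all but a $2q^{-n}$ fraction of matrices $\vec A$. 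I do not anticipate a substantive obstacle beyond the bookkeeping required to verify that the parameter change $\sigma \mapsto \sigma/\sqrt{2}$ still lies within the regime demanded by each of the cited lemmas.
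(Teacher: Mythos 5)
Your proposal is correct and follows essentially the same route as the paper's own proof: the first bullet is obtained by reading the two branches of $\mathsf{GaussCollapseExp}$ off \expref{Theorem}{thm:GaussCollapse}, with the measured-pre-image distribution identified as $D_{\Lambda_q^{\vec y}(\vec A),\sigma/\sqrt{2}}$ via the squared amplitudes $\rho_\sigma(\vec x)^2 = \rho_{\sigma/\sqrt{2}}(\vec x)$ and the tail bound of \expref{Lemma}{lem:tailboundII}, the second bullet by conjugating with the Fourier transform using \expref{Lemma}{lem:switching}, and the uniformity of $\vec y$ from \expref{Lemma}{lem:Gaussian-LHL}. The only bookkeeping the paper makes explicit that you leave implicit is an appeal to \expref{Lemma}{lem:full-rank} to ensure that the columns of a uniformly random $\vec A$ generate $\Z_q^n$ with overwhelming probability, which is a hypothesis of \expref{Lemma}{lem:switching}.
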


\begin{proof}
Let $\vec  A \rand \Z_q^{n \times m}$ be a random matrix. From \expref{Lemma}{lem:full-rank} it follows that the columns of $\vec A$ generate $\Z_q^n$ with overwhelming probability. Let us also recall the following simple facts about the discrete Gaussian.
According to \expref{Lemma}{lem:Gaussian-LHL}, the distribution of the syndrome $\vec A \cdot \vec x = \vec y \Mod{q}$ is statistically close to the uniform distribution over $\Z_q^n$, whenever $\vec x \sim D_{\Z^m,\sigma}$ and $\sigma = \omega(\sqrt{\log m})$. Moreover, the conditional distribution of $\vec x \sim D_{\Z^m,\sigma}$ given the syndrome $\vec y \in \Z_q^n$ is a discrete Gaussian distribution $D_{\Lambda_q^{\vec y}(\vec A),\sigma}$.

Let us now show the first statement. Recall that in \expref{Theorem}{thm:GaussCollapse} we show that the Ajtai hash function $h_{\vec A}(\vec x) = \vec A \cdot \vec x \Mod{q}$ is $\sigma$-Gaussian-collapsing assuming the decisional $\LWE_{n,q,\alpha q}^m$ assumption and a noise ratio $1/\alpha= \sigma \cdot 2^{o(n)}$.
Therefore, for $\vec y \in \Z_q^n$, the (normalized variant of the) dual Gaussian state,
$$
\ket{\hat \psi_{\vec y}} = \sum_{\substack{\vec x \in \Z_q^m:\\ \vec A \vec x = \vec y \Mod{q}}}\rho_{\sigma}(\vec x) \,\ket{\vec x}
$$
is computationally indistinguishable from the (normalized) classical mixture,
$$
\sum_{\vec x \in \Z_q^m}
\big| \ip{\vec x | \hat\psi_{\vec y}}\big| ^2
\ketbra{\vec x}{\vec x} = \left(\sum_{\substack{\vec z \in \Z_q^m:\\ \vec A \vec z = \vec y \Mod{q}}} \rho_{\sigma/\sqrt{2}}(\vec z) \right)^{-1}\sum_{\substack{\vec x \in \Z_q^m:\\ \vec A \vec x = \vec y \Mod{q}}} \rho_{\sigma/\sqrt{2}}(\vec x) \,\, \ketbra{\vec x}{\vec x},
$$
which is the result of a computational basis measurement of $\ket{\hat \psi_{\vec y}}$.\footnote{Here, the additional factor $1/\sqrt{2}$ arises from the normalization of the dual Gaussian state $\ket{\hat \psi_{\vec y}}$.}
Since $\sigma \in (\sqrt{8m},q/\sqrt{8m})$, the tail bound in \expref{Lemma}{lem:tailboundII} implies that the above mixture is statistically close to the discrete Gaussian $D_{\Lambda_q^{\vec y}(\vec A),\frac{\sigma}{\sqrt{2}}}$.

The second statement follows immediately by applying the (inverse) Fourier transform to both of the states above. Note that in \expref{Lemma}{lem:switching} we showed that the primal Gaussian state $$\ket{\psi_{\vec y}}=
\sum_{\vec s \in \Z_q^n} \sum_{\vec e \in \Z_q^m} \rho_{\frac{q}{\sigma}}(\vec e) \, \omega_q^{-\ip{\vec s,\vec y}} \ket{\vec s \vec A + \vec e}$$ 
is within negligible trace distance of $\FT_q^\dag \ket{\hat \psi_{\vec y}}$. This proves the claim.

\end{proof}

\subsection{Strong Gaussian-collapsing conjecture}

Our quantum encryption schemes with certified deletion in \expref{Section}{sec:Dual-Regev-PKE} and \expref{Section}{sec:Dual-Regev-FHE} rely on the assumption that Ajtai's hash function satisfies a strong Gaussian-collapsing property in the presence of leakage. We formalize the property as the following simple and falsifiable conjecture.

\begin{conjecture}[Strong Gaussian-Collapsing Conjecture]\label{conj:SGC}\ \\
Let $\lambda \in \N$ be the security parameter, $n(\lambda) \in \N$, $q(\lambda) \geq 2$ be a modulus and $m \geq 2n \log q$ be an integer. Let $\sigma = \Omega(\sqrt{m})$ be a parameter and let
$\algo H = \{H_\lambda\}_{\lambda \in \N}$ be the Ajtai hash function family with
$$
H_\lambda = \left\{ h_{\vec A}: \Z_q^m \rightarrow \Z_q^n \, \text{ s.t. } \, h_{\vec A}(\vec x) = \vec A \cdot \vec x \Mod{q}; \, \vec A \in \Z_q^{n \times m} \right\}.
$$
The Strong Gaussian-Collapsing Conjecture $(\mathsf{SGC}_{n,m,q,\sigma})$ states that, for every $\QPT$ adversary $\algo A$,
$$
|
\Pr[ \mathsf{StrongGaussCollapseExp}_{\algo H,\algo A,\lambda}(0)=1] - \Pr[ \mathsf{StrongGaussCollapseExp}_{\algo H,\algo A,\lambda}(1)=1]
| \leq \negl(\lambda).
$$
Here, the experiment $\mathsf{StrongGaussCollapseExp}_{\algo H,\algo A,\lambda}(b)$ is defined as follows:
\begin{enumerate}
    \item The challenger samples $ \bar{\vec A} \rand \Z_q^{n \times (m-1)}$
    and prepares the quantum state
    $$
    \ket{\hat\psi}_{XY} = \sum_{\vec x \in \Z_q^m} \rho_\sigma(\vec x) \ket{\vec x}_X \otimes \ket{\vec A \cdot \vec x \Mod{q}}_Y,
    $$
    where $\vec A = [\bar{\vec A} | \bar{\vec A} \cdot \bar{\vec x} \Mod{q}] \in \Z_q^{n \times m}$ is a matrix with $\bar{\vec x} \rand \bit^{m-1}$.

    \item The challenger measures system $Y$ in the computational basis, resulting in the state
    $$
    \ket{\hat\psi_{\vec y}}_{XY} = \sum_{\substack{\vec x \in \Z_q^m:\\ \vec A \vec x= \vec y \Mod{q}}} \rho_\sigma(\vec x) \ket{\vec x}_X \otimes \ket{\vec y}_Y.
    $$
\item If $b=0$, the challenger does nothing. Else, if $b=1$, the challenger measures system $X$ of the quantum state $\ket{\hat\psi_{\vec y}}$ in the computational basis. Finally, the challenger sends the outcome state in systems $X$ to $\algo A$, together with the matrix $\vec A \in \Z_q^{n \times m}$ and the string $\vec y \in \Z_q^n$.

\item $\algo A$ sends a classical witness $\vec w \in \Z_q^m$ to the challenger.

\item The challenger checks whether $\vec A \cdot \vec w = \vec y \Mod{q}$ and $\| \vec w \| \leq \sqrt{m} \sigma/\sqrt{2}$. If $\vec w$ passes both checks, the challenger sends $\vec t = (\bar{\vec x},-1) \in \Z_q^m$ to $\algo A$ with $\vec A \cdot \vec t = \vec 0 \Mod{q}$. Else, the challenger aborts.
 
\item $\algo A$ returns a bit $b'$, which we define as the output of the experiment.
\end{enumerate}
\end{conjecture}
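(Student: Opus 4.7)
The plan is to attempt a reduction to the quantum subexponential hardness of decisional $\LWE$, in the spirit of \expref{Theorem}{thm:GaussCollapse}, but exploiting the fact that the structured form $\vec A = [\bar{\vec A} \mid \bar{\vec A}\bar{\vec x} \bmod q]$ lets the reduction know the trapdoor $\vec t = (\bar{\vec x},-1)$ in advance. Concretely, the reduction receives an $\LWE$ challenge $(\bar{\vec A},\vec b)$ with $\bar{\vec A}\in\Z_q^{n\times(m-1)}$, samples $\bar{\vec x}\rand\bit^{m-1}$, and sets $\vec A=[\bar{\vec A}\mid \bar{\vec A}\bar{\vec x}]$. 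It then mimics the proof of \expref{Theorem}{thm:GaussCollapse}: prepare the Gaussian state via $\mathsf{GenDual}(\vec A,\sigma)$, measure the $Y$ register to get $\vec y$, and apply $\vec Z_q^{\tilde{\vec b}}$ to the remaining state, where $\tilde{\vec b}=(\vec b,\vec b\cdot\bar{\vec x}) + \vec e' \bmod q$ and $\vec e'\sim D_{\Z_q^m,\beta q}$ is a fresh smudging Gaussian with $\beta\gg\alpha\sqrt{m}$. The reduction forwards the resulting register together with $(\vec A,\vec y)$ to $\algo A$, waits for the witness $\vec w$, and only then releases $\vec t$ before outputting $\algo A$'s bit.

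The analysis splits into two branches. In the $\LWE$ branch $\vec b=\vec s\bar{\vec A}+\vec e$, one has $\tilde{\vec b} = \vec s\vec A + (\vec e,\vec e\cdot\bar{\vec x}) + \vec e'$, and noise flooding by $\vec e'$ makes $\tilde{\vec b}$ statistically close to a fresh $\LWE_{n,q,\alpha' q}^m$ sample with respect to $\vec A$; \expref{Theorem}{thm:invariance-Pauli-Z} then gives $\vec Z_q^{\tilde{\vec b}} \ket{\hat\psi_{\vec y}}\approx\ket{\hat\psi_{\vec y}}$, perfectly matching the $b=0$ branch. In the uniform branch $\vec b\rand\Z_q^{m-1}$ one wants $\tilde{\vec b}$ to be (close to) uniform over $\Z_q^m$ so that \expref{Lemma}{lem:random-Z} dephases the state into the computational-basis mixture demanded by the $b=1$ branch.

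The hard part -- and the reason this is only a conjecture -- is precisely the uniform branch. The appended coordinate $\vec b\cdot\bar{\vec x}$ is a fixed linear function of $\vec b$, so the naive extension $(\vec b,\vec b\cdot\bar{\vec x})$ is not uniform on $\Z_q^m$, and the single smudging scalar $e'_m$ only contributes roughly $\log q$ bits of entropy -- far from enough to rerandomize the coordinate in the sense of \expref{Lemma}{lem:LHL}. Any attempt to enlarge the smudging noise enough to dominate this deterministic coordinate breaks the invariance argument in the $\LWE$ branch (the error ceases to be below $q/\sqrt{8m}$), while any attempt to draw additional $\LWE$ samples to rerandomize requires knowing a trapdoor of $\bar{\vec A}$, which is what one is trying to break. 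Bridging this gap appears to require a genuinely new idea beyond the standard $\LWE$ toolkit.

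A second complementary line of attack I would pursue in parallel is to try to leverage \expref{Theorem}{thm:uncertainty}: producing a valid short witness $\vec w$ with $\vec A\vec w=\vec y$ and $\|\vec w\|\le\sqrt{m}\sigma/\sqrt{2}$ amounts to a Fourier-basis projection onto the set $\algo S=\{\vec x\in\Z_q^m:\vec A\vec x=\vec y,\|\vec x\|\le\sqrt{m}\sigma/\sqrt{2}\}$, whose cardinality is roughly $(\sigma/\sqrt{2})^m/q^n\ll q^m$. Whenever $\algo A$ succeeds with non-negligible probability, \expref{Theorem}{thm:uncertainty} forces the post-witness state (and in particular any auxiliary register $\algo A$ is holding) to have large smooth min-entropy about the original computational-basis label, ruling out the shift-by-$\LWE$-sample attack and related schemes. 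The remaining challenge, and the obstacle I expect to be the true crux of a full proof, is upgrading this entropic obstruction to computational indistinguishability in the presence of the post-hoc trapdoor leak $\vec t$; without a fresh idea for handling the uniform branch of the reduction above, the conjecture is likely to remain just that.
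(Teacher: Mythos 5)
The statement you were asked to prove is \expref{Conjecture}{conj:SGC} itself, and the paper deliberately leaves it \emph{unproven}: its only ``proof content'' is the informal discussion following the conjecture, which explains why a reduction to $\LWE$ or $\ISIS$ seems out of reach and then uses the uncertainty relation of \expref{Theorem}{thm:uncertainty} to rule out the shift-by-$\LWE$-sample attack as plausibility evidence. Your proposal reaches exactly the same endpoint, so there is no false claim to flag; what differs is the diagnosis of where a reduction breaks, and yours is in fact sharper than the paper's. The paper argues that a reduction is stuck because it cannot produce the trapdoor $\vec t$ after the adversary's witness, since it is trying to break $\LWE$ (or $\ISIS$) with respect to $\vec A$ in the first place. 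You observe that this particular obstruction evaporates for the structured key $\vec A = [\bar{\vec A} \mid \bar{\vec A}\bar{\vec x}]$: the reduction may sample $\bar{\vec x}$ itself, know $\vec t = (\bar{\vec x},-1)$ for free, and embed an $\LWE$ challenge with respect to $\bar{\vec A}$ by extending $\vec b$ to $\tilde{\vec b} = (\vec b, \ip{\vec b,\bar{\vec x}}) + \vec e'$, mimicking \expref{Theorem}{thm:GaussCollapse}. You then correctly relocate the failure to the uniform branch, where $(\vec b,\ip{\vec b,\bar{\vec x}})$ is supported on an $(m-1)$-dimensional subset of $\Z_q^m$, so \expref{Lemma}{lem:random-Z} does not apply. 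One can make your entropy count exact: since $\ip{(\vec b,\ip{\vec b,\bar{\vec x}}),\vec x} = \ip{\vec b,\, \vec x_{1:m-1} + x_m \bar{\vec x}}$, averaging over uniform $\vec b$ kills precisely those coherences $\ketbra{\vec x}{\vec x'}$ with $\vec x_{1:m-1}+x_m\bar{\vec x} \neq \vec x'_{1:m-1}+x'_m\bar{\vec x} \Mod{q}$, i.e.\ everything \emph{except} coherences along the trapdoor direction $\vec x' - \vec x \in \Z_q \cdot \vec t$; and the smudging term $\vec e'$, whose width must stay far below $q$ for the $\LWE$ branch to satisfy \expref{Theorem}{thm:invariance-Pauli-Z}, dephases only at scales $\|\vec x - \vec x'\| \gtrsim 1/\beta \gg \sigma\sqrt{m}$, hence none of them. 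So the simulated $b=1$ branch is a mixture of coherent superpositions over the lines $\{\vec x + k\vec t\}$ rather than the fully measured state, and these become distinguishable exactly when $\vec t$ is released -- an irreducible residue showing that any proof must exploit the interaction (witness before trapdoor) rather than dephasing alone. Your second line of attack is precisely the paper's own plausibility argument, down to the bound $\hmin \geq m\log q - 2\log|\algo S|$ and the shift-by-$\LWE$-sample example, and like the paper you correctly note it is information-theoretic and does not upgrade to computational indistinguishability under the post-hoc leak of $\vec t$. In short: your conclusion matches the paper's stance, and your refined failure analysis of the natural reduction is a genuine (and correct) addition to the paper's coarser ``no trapdoor for the reduction'' explanation.
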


\begin{remark}
We also consider an $N$-fold variant of $\mathsf{SGC}_{n,m,q,\sigma}$, which we denote by $\mathsf{SGC}_{n,m,q,\sigma}^N$, where the challenger prepares $N$ independent states $\ket{\hat\psi_{\vec y_1}} \otimes \dots \otimes \ket{\hat\psi_{\vec y_N}}$ in Steps $2$--$3$, for outcomes $\vec y_1,\dots,\vec y_N \in \Z_q^n$. A simple hybrid argument shows that $\mathsf{SGC}_{n,m,q,\sigma}^N$ is implied by $\mathsf{SGC}_{n,m,q,\sigma}$, for any $N = \poly(\lambda)$.
\end{remark}

\paragraph{Towards a proof of the strong-Gaussian-collapsing conjecture.}

Unfortunately, we currently do not know how to prove \expref{Conjecture}{conj:SGC} from standard assumptions, such as $\LWE$ or $\ISIS$. The difficulty emerges when we attempt to reduce the security to the $\LWE$ (or $\ISIS$) problem with respect to the same matrix $\vec A\in \Z_q^{n \times m}$. In order to simulate the experiment $\mathsf{StrongGaussCollapseExp}_{\algo H,\algo A,\lambda}$ with respect to an adversary $\algo A$, we have to eventually forward a short trapdoor vector $\vec t \in \Z^{m}$ in order to simulate the second phase of the experiment once $\algo A$ has produced a valid witness. Notice, however, that the reduction has no way of obtaining a short vector $\vec t$ in the kernel of $\vec A$ as it is trying to break the underlying $\LWE$ (or $\ISIS$) problem with respect to $\vec A$ in the first place. Therefore, any successful security proof must necessarily exploit the fact that there is \emph{interaction} between the challenger and the adversary $\algo A$, and that a short trapdoor vector $\vec t$ is only revealed \emph{after} $\algo A$ has already produced a valid short pre-image of $\vec y \in \Z_q^n$.

When trying to distinguish between the state $\ket{\hat\psi_{\vec y}}$ and a single Gaussian pre-image $\ket{\vec x_0}$ with the property that $\vec A \cdot \vec x_0 = \vec y \Mod{q}$, it is useful to work with the Fourier basis. Without loss of generality, we can assume that $\algo A$ instead receives one of the following states during in Step $3$; namely
$$
\sum_{\vec s \in \Z_q^n} \sum_{\vec e \in \Z_q^m} \rho_{\frac{q}{\sigma}}(\vec e) \, \omega_q^{-\ip{\vec s,\vec y}} \ket{\vec s \vec A + \vec e}_X \quad\,\, \text{ or } \quad\,\, \sum_{\vec u \in \Z_q^m} \omega_q^{-\ip{\vec u,\vec x_0}}\ket{\vec u}_X.
$$

One natural approach is prepare an auxiliary system, say $B$, which could later help the adversary determine whether $X$ corresponds to a superposition of $\LWE$ samples or a superposition of uniform samples once the trapdoor $\vec t$ is revealed (ideally, without disturbing $X$ so as to allow for a Fourier basis measurement).
Because finding a valid witness $\vec w$ to the $\ISIS$ problem specified by $(\vec A,\vec y)$ now amounts to a Fourier basis projection (as in \expref{Definition}{def:fourier-basis-projection}), the entropic uncertainty relation in \expref{Theorem}{thm:uncertainty} immediately rules out large class of attacks, including the \emph{shift-by-$\LWE$-sample attack} we described in \expref{Section}{sec:overview}. There, the idea is to reversibly shift system $X$ by a fresh \LWE sample into an auxiliary system $B$. 
If system $X$ corresponds to a superposition of $\LWE$ samples, we obtain a separate \LWE sample which is \emph{re-randomized}, whereas, if $X$ is a superposition of uniform samples, the outcome remains random. Hence, if the aforementioned
procedure succeeded without disturbing system $X$, we could potentially find a valid witness $\vec w$ and simultaneously distinguish the auxiliary system $B$ with access to the trapdoor $\vec t$.
As we observed before, however, such an attack must necessarily entangle the two systems $X$ and $B$ in a way that prevents it from finding a solution to the $\ISIS$ problem specified by $(\vec A,\vec y)$. Intuitively, if the state in system $X$ yields a short-pre image $\vec w$ \emph{with high probability} via a Fourier basis measurement, then system $X$ cannot be entangled with any auxiliary systems.
Because the set $\algo S$ of valid short pre-images (i.e. the set of solution to the $\ISIS$ problem specified by $\vec A$ and $\vec y$) is much smaller than the size of $\Z_q^m$ (in particular, if $\sigma \sqrt{m} \ll q$), \expref{Theorem}{thm:uncertainty} tells us that the min-entropy of system $X$ (once it is measured in the computational basis) given system $B$ must necessarily be large. 
We remark that this statement holds \emph{information-theoretically}, and does not rely on the hardness of $\LWE$. This suggests that, even if the trapdoor $\vec t$ is later revealed, system $B$ cannot contain any relevant information about whether system $X$ initially corresponded to a superposition of $\LWE$ samples, or to a superposition of uniform samples.
While this argument is not sufficient to prove \expref{Conjecture}{conj:SGC}, it captures the inherent difficulty in extracting information encoded in two mutually unbiased bases, i.e. the computational basis and the Fourier basis.

\section{Public-Key Encryption with Certified Deletion}

In this section, we formalize the notion of public-key encryption with certified deletion.

\subsection{Definition}

In this work, we consider public-key encryption schemes with certified deletion for which verification of a
deletion certificate is \emph{public}; meaning anyone with access to the verification key can verify that deletion has taken place.
We first introduce the following definition.

\begin{definition}[Public-key encryption with certified deletion]\label{def:PKE-CD} A public-key encryption scheme with certified deletion ($\PKECD$) $\Sigma = (\KeyGen,\Enc,\Dec,\Del,\Vrfy)$ with plaintext space $\algo M$ consists of a tuple of $\QPT$ algorithms, a key generation algorithm $\KeyGen$, an encryption algorithm $\Enc$, and a decryption algorithm $\Dec$, a deletion algorithm $\Del$, and a verification algorithm $\Vrfy$.
\begin{description}
\item $\KeyGen(1^\lambda) \rightarrow (\pk,\sk):$ takes as input the parameter $1^\lambda$ and outputs a public key $\pk$ and secret key $\sk$.
\item $\Enc(\pk,m) \rightarrow (\vk,\ct):$ takes as input the public key $\pk$ and a plaintext $m \in \algo M$, and
outputs a classical (public) verification key $\vk$ together with a quantum ciphertext $\ct$.
\item $\Dec(\sk,\ct) \rightarrow m'\, \mathbf{or}\,\bot:$ takes as input the secret key $\sk$ and ciphertext $\ct$, and outputs $m'\in \algo M$ or $\bot$.
\item $\Del(\ct) \rightarrow \pi:$ takes as input a ciphertext $\ct$ and outputs a classical certificate $\pi$.
\item $\Vrfy(\vk,\pi) \rightarrow \top \, \mathbf{or} \, \bot:$ takes as input the verification key $\vk$ and certificate $\pi$, and outputs $\top$ or $\bot$.
\end{description}
\end{definition}

\begin{definition}[Correctness of $\PKECD$] We require two separate kinds of correctness properties, one for decryption and one for verification.
\begin{description}
\item (Decryption correctness:)
For any $\lambda \in \mathbb{N}$, and for any $m \in \algo M$:
$$
\Pr \left[ \Dec(\sk,\ct) \neq m \, \bigg| \, \substack{
(\pk,\sk) \leftarrow \KeyGen(1^\lambda)\\
\ct \leftarrow \Enc(\pk,m)
}\right] \, \leq \, \negl(\lambda).
$$
\item (Verification correctness:)
For any $\lambda \in \mathbb{N}$, and for any $m \in \algo M$:
$$
\Pr \left[ \Vrfy(\vk,\pi) = \bot \, \bigg| \, \substack{
(\pk,\sk) \leftarrow \KeyGen(1^\lambda)\\
(\vk,\ct) \leftarrow \Enc(\pk,m)\\
\pi \leftarrow \Del(\ct)
}\right] \, \leq \, \negl(\lambda).
$$
\end{description}
\end{definition}

The notion of $\INDCPACD$ security for public-key encryption was first introduced by Hiroka, Morimae, Nishimaki and Yamakawa~\cite{hiroka2021quantum}.

\subsection{Certified deletion security}

In terms of security, we adopt the following definition.

\begin{definition}[Certified deletion security for $\PKE$]\label{def:CD-security} Let $\Sigma = (\KeyGen,\Enc,\Dec,\Del,\Vrfy)$ be a $\PKECD$ scheme and let $\algo A$ be a $\QPT$ adversary (in terms of the security parameter $\lambda \in \N$). We define the security experiment $\Exp^{\mathsf{pk\mbox{-}cert\mbox{-}del}}_{\Sigma,\algo A,\lambda}(b)$ between $\algo A$ and a challenger as follows:
\begin{enumerate}
    \item The challenger generates a pair $(\pk,\sk) \from \KeyGen(1^\lambda)$, and sends $\pk$ to $\algo A$.
    \item $\algo A$ sends a plaintext pair $(m_0,m_1) \in \algo M \times \algo M$ to the challenger.
    \item The challenger computes $(\vk,\ct_b) \leftarrow \Enc(\pk,m_b)$, and sends $\ct_b$ to $\algo A$.
    \item At some point in time, $\algo A$ sends the certificate $\pi$ to the challenger.
    \item The challenger computes $\Vrfy(\vk, \pi)$ and sends $\sk$ to $\algo A$, if the output is $\top$, and sends $\bot$ otherwise.
    \item $\algo A$ outputs a guess $b' \in \bit$, which is also the output of the experiment.
\end{enumerate}
We say that the scheme $\Sigma$ is $\INDCPACD$-secure if, for any $\QPT$ adversary $\algo A$, it holds that
$$
\Adv_{\Sigma,\algo A}^{\mathsf{pk}\mbox{-}\mathsf{cert}\mbox{-}\mathsf{del}}(\lambda) := |\Pr[\Exp^{\mathsf{pk}\mbox{-}\mathsf{cert}\mbox{-}\mathsf{del}}_{\Sigma,\algo A,\lambda}(0)=1] - \Pr[\Exp^{\mathsf{pk}\mbox{-}\mathsf{cert}\mbox{-}\mathsf{del}}_{\Sigma,\algo A,\lambda}(1) = 1] |
 \leq \negl(\lambda).$$
\end{definition}

\section{Dual-Regev Public-Key Encryption with Certified Deletion}\label{sec:Dual-Regev-PKE}

In this section, we consider the Dual-Regev $\PKE$ scheme due to Gentry, Peikert and Vaikuntanathan \cite{cryptoeprint:2007:432}. Unlike Regev's original $\PKE$ scheme in~\cite{Regev05}, the Dual-Regev $\PKE$ scheme has the useful property that the ciphertext takes the form of a regular sample from the $\LWE$ distribution together with an additive shift which depends on the plaintext.

\subsection{Construction}

\paragraph{Parameters.} Let $\lambda \in \N$ be the security parameter. We choose the following set of parameters for our Dual-Regev $\PKE$ scheme with certified deletion (each parameterized by $\lambda$).
\begin{itemize}
\item an integer $n \in \N$.
    \item a prime modulus $q \geq 2$.
    \item an integer $m \geq  2n \log q$.
    \item a noise ratio $\alpha \in (0,1)$ such that $ \sqrt{8(m+1)} \leq \frac{1}{\alpha} \leq \frac{q}{\sqrt{8(m+1)}}$.
\end{itemize}

\begin{construction}[Dual-Regev $\PKE$ with Certified Deletion]\label{cons:dual-regev-cd}
Let $\lambda \in \N$.
The Dual-Regev $\PKE$ scheme $\DualPKECD = (\KeyGen,\Enc,\Dec,\Del,\Vrfy)$ with certified deletion is defined as follows:
\begin{description}
\item $\KeyGen(1^\lambda) \rightarrow (\pk,\sk):$ sample a random matrix $\bar{\vec A} \rand \Z_q^{n\times m}$ and a vector $\bar{\vec x} \rand \bit^{m}$
and choose $\vec A = [\bar{\vec A} | \bar{\vec A} \cdot \bar{\vec x} \Mod{q}]$.
Output $(\pk,\sk)$, where $\pk=\vec A \in \Z_q^{n \times (m+1)}$ and $\sk = (-\bar{\vec x}, 1) \in \Z_q^{m+1}$.
\item $\Enc(\pk,x) \rightarrow (\vk,\ket{\ct})$: parse $\vec A \leftarrow \pk$ and run $(\ket{\psi_{\vec y}},\vec y) \leftarrow \mathsf{GenPrimal}(\vec A,1/\alpha)$ in \expref{Algorithm}{alg:GenPrimal}, where $\vec y \in \Z_q^n$. To encrypt a single bit $b \in \bit$, output the pair
$$
\left(\vk \leftarrow (\vec A \in \Z_q^{n\times (m+1)}, \vec y  \in \Z_q^n), \quad \ket{\ct} \leftarrow \vec X_q^{(0,\dots,0, b \cdot  \lfloor\frac{q}{2} \rfloor)} \ket{\psi_{\vec y}} \right),
$$
where $\vk$ is the public verification key and $\ket{\ct}$ is an $(m+1)$-qudit quantum ciphertext.

\item $\Dec(\sk,\ket{\ct}) \rightarrow \bit:$ to decrypt, measure the ciphertext $\ket{\ct}$ in the computational basis with outcome $\vec c \in \Z_q^m$. Compute $\vec c^T \cdot \sk \in \Z_q$ and output $0$, if itis closer to $0$ than to $\lfloor\frac{q}{2}\rfloor$, and output $1$, otherwise.

\item $\Del(\ket{\ct}) \rightarrow \pi:$ Measure $\ket{\ct}$ in the Fourier basis and output the measurement outcome $\pi \in \Z_q^{m+1}$.
\item $\Vrfy(\vk,\pi) \rightarrow \{\top,\bot\}:$ to verify a deletion certificate $\pi \in \Z_q^{m+1}$, parse $(\vec A,\vec y) \leftarrow \vk$ and output $\top$, if $\vec A \cdot \pi = \vec y \Mod{q}$ and $\| \pi \| \leq \sqrt{m+1}/\sqrt{2}\alpha$, and output $\bot$, otherwise.
\end{description}
\end{construction}

\paragraph{Proof of correctness.}
Let us now establish the correctness properties of $\DualPKECD$ in \expref{Construction}{cons:dual-regev-cd}.

\begin{lemma}[Correctness of decryption]\label{lem:correctness-decryption-dual-regev-pke-cd}
Let $n\in \N$ and $q\geq 2$ be a prime modulus with $m \geq 2n \log q$, each parameterized by the security parameter $\lambda \in \N$. Let $\alpha$ be a ratio with $ \sqrt{8(m+1)} \leq \frac{1}{\alpha} \leq \frac{q}{\sqrt{8(m+1)}}$.
Then, for $b \in \bit$, the scheme $\DualPKECD = (\KeyGen,\Enc,\Dec,\Del,\Vrfy)$ in \expref{Construction}{cons:dual-regev-cd} satisfies:
    $$
    \Pr \left[\Dec(\sk,\ket{\ct}) = b \, \bigg| \, \substack{
    (\pk,\sk) \leftarrow \KeyGen(1^\lambda)\\
    (\vk,\ket{\ct}) \leftarrow \Enc(\pk,b)
    }\right] \, \geq \, 1 - \negl(\lambda).
    $$
\end{lemma}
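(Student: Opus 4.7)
The proof hinges on the identity $\vec A \cdot \sk = \vec 0 \Mod{q}$, which is immediate from the construction: since $\sk = (-\bar{\vec x}, 1)$ and $\vec A = [\bar{\vec A} \mid \bar{\vec A}\bar{\vec x} \Mod{q}]$, we have $\vec A \sk = -\bar{\vec A}\bar{\vec x} + \bar{\vec A}\bar{\vec x} = \vec 0 \Mod{q}$. The plan is to argue that a computational-basis measurement of $\ket{\ct}$ yields $\vec c$ such that $\vec c^T \sk$ is within $q/4$ of $b \lfloor q/2 \rfloor \Mod{q}$ with overwhelming probability. Writing $\vec v = b \lfloor q/2 \rfloor \cdot \vec e_{m+1}$, the ciphertext is $\ket{\ct} = \vec X_q^{\vec v} \ket{\psi_{\vec y}}$, so $\vec c = \vec c' + \vec v \Mod{q}$ where $\vec c'$ is a measurement outcome of $\ket{\psi_{\vec y}}$. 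Using that the last coordinate of $\sk$ is $1$, this yields $\vec c^T\sk = (\vec c')^T\sk + b\lfloor q/2\rfloor \Mod{q}$, so it suffices to bound $(\vec c')^T\sk$ modulo $q$.

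Next, I would analyze the distribution of $\vec c'$. From the explicit form of $\ket{\psi_{\vec y}}$ in \expref{Definition}{def:Gaussian-states}, every term in the superposition has the shape $\vec s \vec A + \vec e \Mod{q}$ with $\vec e$ weighted by $\rho_{q/\sigma}(\vec e)$. A direct amplitude computation via the Poisson summation formula (\expref{Lemma}{lem:poisson}) -- or, equivalently, via the duality lemma (\expref{Lemma}{lem:switching}) which recasts the measurement as a Fourier-basis measurement of the dual Gaussian state over $\Lambda_q^{\vec y}(\vec A)$ -- shows that $\lvert \langle \vec c' \mid \psi_{\vec y} \rangle \rvert^2$ is concentrated on outcomes admitting a decomposition $\vec c' = \vec s\vec A + \vec e \Mod{q}$ with $\vec e$ approximately a discrete Gaussian of parameter $q\alpha$. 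Crucially, since $\vec A\sk = \vec 0 \Mod{q}$, the value $(\vec c')^T\sk = \vec e^T\sk \Mod{q}$ does not depend on the chosen decomposition: any two short decompositions differ by a vector in $\Lambda_q^\bot(\vec A)$, against which $\sk$ pairs to $0 \bmod q$.

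Finally, I would bound $\vec e^T\sk$ via Gaussian concentration. Since $\bar{\vec x} \in \{0,1\}^m$, we have $\|\sk\| \leq \sqrt{m+1}$, so the one-dimensional marginal $\vec e^T\sk$ is approximately a discrete Gaussian of parameter at most $q\alpha\sqrt{m+1}$, which under the constraint $1/\alpha \geq \sqrt{8(m+1)}$ is bounded by $q/\sqrt{8}$. A Gaussian tail bound (as in \expref{Lemma}{lem:gaussian-tails}) then gives $\lvert \vec e^T\sk \rvert < q/4$ with overwhelming probability (absorbing a small tail slack into the parameters), so that $\vec c^T\sk$ lies within $q/4$ of $b\lfloor q/2\rfloor \Mod{q}$ and the rounding step of $\Dec$ recovers $b$.

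The main obstacle will be step two: passing from the quantum amplitude expression to a clean classical tail statement about the existence of a short decomposition $\vec c' = \vec s\vec A + \vec e$. A priori, distinct decompositions corresponding to different $\vec s$ can interfere in the amplitude, and this interference must be controlled. I would handle it by restricting to the overwhelming-probability event that $\vec A$ has columns generating $\Z_q^n$ (via \expref{Lemma}{lem:full-rank}), and by exploiting the parameter regime $\sigma \in (\sqrt{8(m+1)}, q/\sqrt{8(m+1)})$ to ensure that the shifted balls $\vec s \vec A + \{\vec e : \rho_{q/\sigma}(\vec e) \text{ nonnegligible}\}$ are essentially disjoint for distinct $\vec s$, so that each outcome $\vec c'$ has an essentially unique nonnegligible decomposition and the interference terms contribute only negligibly.
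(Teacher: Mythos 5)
Your high-level route is the same as the paper's: reduce, via the shift $\vec X_q^{(0,\dots,0,b\lfloor q/2\rfloor)}$ and the fact that $\sk=(-\bar{\vec x},1)$ satisfies $\vec A\cdot\sk=\vec 0 \Mod{q}$ with last coordinate $1$, to a computational-basis measurement of $\ket{\psi_{\vec y}}$, then use the duality/Poisson-summation machinery (packaged in the paper as \expref{Corollary}{cor:switching}) to argue the outcome has the form $\vec s_0\vec A+\vec e_0$ with discrete-Gaussian error, and finally bound $\vec e_0^{T}\sk$. Your remarks on the essential uniqueness of the decomposition and on the well-definedness of $\vec e^{T}\sk$ modulo $\Lambda_q^{\bot}(\vec A)$ are legitimate points the paper leaves implicit. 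One small omission: \expref{Lemma}{lem:full-rank} is stated for uniformly random $\vec A$, whereas here $\vec A=[\bar{\vec A}\mid\bar{\vec A}\bar{\vec x} \Mod{q}]$; the paper first invokes the Leftover Hash Lemma (\expref{Lemma}{lem:LHL}) to argue that $\vec A$ is statistically close to uniform before applying it, and you should too.

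The genuine gap is in your final step. First, you dropped the Born-rule squaring: measurement probabilities are proportional to $\rho_{\alpha q}^{2}=\rho_{\alpha q/\sqrt{2}}$, so the paper works with $\vec e_0\sim D_{\Z_q^{m+1},\,\alpha q/\sqrt{2}}$, not with parameter $\alpha q$. Second, and more seriously, the concluding tail bound fails as stated: your one-dimensional marginal $\vec e^{T}\sk$ has Gaussian parameter up to $\alpha q\sqrt{m+1}\leq q/\sqrt{8}$ (or $\alpha q\sqrt{(m+1)/2}\leq q/4$ after the $\sqrt{2}$ correction), and a Gaussian of parameter $q/\sqrt{8}$ --- indeed even of parameter $q/4$ --- exceeds the decryption threshold $q/4$ with \emph{constant} probability, so no tail bound in the spirit of \expref{Lemma}{lem:gaussian-tails} yields ``overwhelming probability'' here; the slack cannot be ``absorbed into the parameters'' when the hypothesis $\sqrt{8(m+1)}\leq 1/\alpha$ is saturated. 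The paper's proof avoids the marginal altogether: it uses that the \emph{truncated} discrete Gaussian $D_{\Z_q^{m+1},\,\alpha q/\sqrt{2}}$ is by definition supported on $\{\vec e_0 : \|\vec e_0\|\leq \alpha q\sqrt{(m+1)/2}\}$, giving the deterministic norm bound $\|\vec e_0\|<\lfloor q/4\rfloor$ (with only the negligible truncation loss), and concludes from there. To repair your argument you would either adopt this support/truncation argument, or strengthen the assumption on $1/\alpha$ by a superlogarithmic factor so that the marginal's tail beyond $q/4$ is genuinely negligible; note that even then the pairing with $\sk$, whose norm can be as large as $\sqrt{m+1}$, is exactly where the concentration (rather than Cauchy--Schwarz) must be invoked.
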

\begin{proof}
By the Leftover Hash Lemma (\expref{Lemma}{lem:LHL}), the distribution of $\vec A = [\bar{\vec A} | \bar{\vec A} \cdot \bar{\vec x} \Mod{q}]$ is within negligible total variation distance of the uniform distribution over $\Z_q^{n \times (m+1)}$. Moreover, from \expref{Lemma}{lem:full-rank} it follows that the columns of $\vec A$ generate $\Z_q^n$ with overwhelming probability. Since the noise ratio $\alpha \in (0,1)$ satisfies $ \sqrt{8(m+1)} \leq \frac{1}{\alpha} \leq \frac{q}{\sqrt{8(m+1)}}$, it then follows from \expref{Corollary}{cor:switching} that the ciphertext $\ket{\ct}$ is within negligible trace distance of the state
$$
\sum_{\vec s \in \Z_q^n} \sum_{\vec e \in \Z_q^{m+1}} \rho_{\alpha q}(\vec e) \, \omega_q^{-\ip{\vec s,\vec y}} \ket{\vec s \vec A + \vec e + (0,\dots,0, b \cdot  \lfloor\frac{q}{2}\rfloor)} 
$$
A measurement in computational basis yields an outcome $\vec c$ such that
$$
\vec c = \vec s_0 \vec A + \vec e_0 + (0,\dots,0, b \cdot  \lfloor\frac{q}{2}\rfloor) \in \Z_q^{m+1},
$$
where $\vec s_0 \rand \Z_q^n$ is random and where $\vec e_0 \sim D_{\Z_q^{m+1},\frac{\alpha q}{\sqrt{2}}}$ is a sample from the (truncated) discrete Gaussian such that $\| \vec e_0\| \leq \alpha q\sqrt{\frac{m+1}{2}} < \lfloor\frac{q}{4}\rfloor$. Since $\Dec(\sk,\ket{\ct})$ computes $\vec c^T \cdot \sk \in \Z \cap (-\frac{q}{2},\frac{q}{2}]$ and outputs $0$, if it
is closer to $0$ than to $\lfloor\frac{q}{2}\rfloor$ over , and $1$ otherwise, it succeeds with overwhelming probability.

\end{proof}

Let us now prove the following property.

\begin{lemma}[Correctness of verification]\label{lem:correctness-verification-dual-regev-pke-cd}
Let $n\in \N$ and $q\geq 2$ be a prime modulus with $m \geq 2n \log q$, each parameterized by the security parameter $\lambda \in \N$. Let $\alpha$ be a ratio with $ \sqrt{8(m+1)} \leq \frac{1}{\alpha} \leq \frac{q}{\sqrt{8(m+1)}}$.
Then, for $b \in \bit$, the scheme $\DualPKECD = (\KeyGen,\Enc,\Dec,\Del,\Vrfy)$ in \expref{Construction}{cons:dual-regev-cd} satisfies:
    $$
    \Pr \left[\Verify(\vk, \pi) =  \top \, \bigg| \, \substack{
    (\pk,\sk) \leftarrow \KeyGen(1^\lambda)\\
    (\vk,\ket{\ct}) \leftarrow \Enc(\pk,b)\\
    \pi \leftarrow \Del(\ket{\ct})
    }\right] \, \geq \, 1 - \negl(\lambda).
    $$
\end{lemma}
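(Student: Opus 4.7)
The plan is to show that the measurement outcome $\pi$ obtained via $\Del$ is distributed (approximately) as a truncated discrete Gaussian over the coset $\Lambda_q^{\vec y}(\vec A)$, from which both verification conditions follow immediately.

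First, I would reduce the Fourier measurement on the ciphertext to a computational basis measurement on the dual Gaussian state. Observe that $\Del$ applies $\FT_q$ to $\ket{\ct} = \vec X_q^{\vec v}\ket{\psi_{\vec y}}$ and measures, where $\vec v = (0,\dots,0, b \lfloor q/2 \rfloor) \in \Z_q^{m+1}$. By the Leftover Hash Lemma and \expref{Lemma}{lem:full-rank}, the public matrix $\vec A = [\bar{\vec A}\,|\,\bar{\vec A}\bar{\vec x}\bmod q]$ is within negligible total variation distance of a uniform matrix in $\Z_q^{n\times(m+1)}$ whose columns generate $\Z_q^n$ with overwhelming probability. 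Conditioned on this event, \expref{Corollary}{cor:switching} gives that $\FT_q \vec X_q^{\vec v}\ket{\psi_{\vec y}} \approx_\eps \vec Z_q^{\vec v}\ket{\hat\psi_{\vec y}}$ for some negligible $\eps$. Because $\vec Z_q^{\vec v}$ is diagonal in the computational basis, it affects only global phases of the measurement outcomes, so by \expref{Lemma}{lem:TD_inequalities} the distribution of $\pi$ is within negligible total variation distance of the distribution obtained by measuring $\ket{\hat\psi_{\vec y}}$ itself in the computational basis.

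Next, I would identify the outcome distribution explicitly. Since
$$\ket{\hat\psi_{\vec y}} = \sum_{\substack{\vec x \in \Z_q^{m+1}\\ \vec A \vec x = \vec y \bmod q}} \rho_{1/\alpha}(\vec x) \ket{\vec x},$$
the measurement probability for any $\pi \in \Lambda_q^{\vec y}(\vec A)$ is proportional to $\rho_{1/\alpha}(\pi)^2 = \rho_{1/(\sqrt{2}\alpha)}(\pi)$ (ignoring the negligible tail outside $(-q/2,q/2]^{m+1}$, which is controlled by the choice $1/\alpha \leq q/\sqrt{8(m+1)}$ together with \expref{Lemma}{lem:gaussian-tails}). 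Therefore $\pi$ is distributed according to the discrete Gaussian $D_{\Lambda_q^{\vec y}(\vec A),\,1/(\sqrt{2}\alpha)}$, up to negligible statistical error.

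Finally, the two verification checks follow. The first check, $\vec A \cdot \pi = \vec y \bmod q$, holds with probability $1$ since the support of the distribution is exactly the coset $\Lambda_q^{\vec y}(\vec A)$. For the norm check $\|\pi\| \leq \sqrt{m+1}/(\sqrt{2}\alpha)$, I would invoke the discrete Gaussian tail bound in \expref{Lemma}{lem:tailboundII} with dimension $m+1$ and width parameter $1/(\sqrt{2}\alpha) = \omega(\sqrt{\log m})$ (valid since $1/\alpha \geq \sqrt{8(m+1)}$), yielding $\Pr[\|\pi\| > \sqrt{m+1}/(\sqrt{2}\alpha)] \leq \negl(\lambda)$. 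Combining this with the negligible errors accumulated in the first two paragraphs via the triangle inequality completes the proof. The main (and only nontrivial) obstacle is carefully matching the $(m+1)$-dimensional norm bound to the width of the Gaussian produced by the Born rule, i.e.\ correctly tracking the $\sqrt{2}$ factor arising from squaring the amplitudes; once this is aligned, everything reduces to previously established lemmas.
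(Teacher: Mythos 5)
Your proposal is correct and follows essentially the same route as the paper's proof: invoke the Leftover Hash Lemma and \expref{Lemma}{lem:full-rank} to ensure $\vec A$ is (statistically close to) uniform and full-rank, apply \expref{Corollary}{cor:switching} to replace the Fourier-transformed ciphertext by $\vec Z_q^{\vec v}\ket{\hat\psi_{\vec y}}$ up to negligible trace distance, and conclude via the Born rule and the tail bound of \expref{Lemma}{lem:tailboundII} that $\pi \sim D_{\Lambda_q^{\vec y}(\vec A),\,1/(\sqrt{2}\alpha)}$ satisfies both verification checks. Your treatment is in fact slightly more explicit than the paper's, in that you spell out why the diagonal phase operator leaves the outcome distribution unchanged and track the $\sqrt{2}$ width factor from squaring the amplitudes, both of which the paper leaves implicit.
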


\begin{proof} By the Leftover Hash Lemma (\expref{Lemma}{lem:LHL}), the distribution of $\vec A = [\bar{\vec A} | \bar{\vec A} \cdot \bar{\vec x} \Mod{q}]$ is within negligible total variation distance of the uniform distribution over $\Z_q^{n \times (m+1)}$. From \expref{Lemma}{lem:full-rank} it follows that the columns of $\vec A$ generate $\Z_q^n$ with overwhelming probability. Since $\alpha \in (0,1)$ is a ratio parameter with $\sqrt{8(m+1)} \leq \frac{1}{\alpha} \leq \frac{q}{\sqrt{8(m+1)}}$, \expref{Corollary}{cor:switching} implies that the Fourier transform of the ciphertext $\ket{\ct}$ is within negligible trace distance of the state
$$
\ket{\widehat{\ct}}=\sum_{\substack{\vec x \in \Z_q^{m+1}:\\ \vec A \vec x = \vec y \Mod{q}}}\rho_{1/\alpha}(\vec x) \, \omega_q^{\ip{\vec x,(0,\dots,0, b \cdot  \lfloor\frac{q}{2} \rfloor)}} \,\ket{\vec x}.
$$
From \expref{Lemma}{lem:tailboundII}, it follows that the distribution of computational basis measurement outcomes is within negligible total variation distance of $\pi \sim D_{\Lambda_q^{\vec y}(\vec A),\frac{1}{\sqrt{2}\alpha}}$ with $\| \pi\| \leq \sqrt{m+1}/\sqrt{2}\alpha$.
This proves the claim.
\end{proof} 

\subsection{Proof of security}\label{sec:DualRegev-PKE-CD}

Let us now analyze the security of our Dual-Regev $\PKE$ scheme with certified deletion in \expref{Construction}{cons:dual-regev-cd}.

\paragraph{$\INDCPA$ security of $\DualPKECD$.}
We first prove that our public-key encryption scheme $\DualPKECD$ in \expref{Construction}{cons:dual-regev-cd} satisfies the notion $\INDCPA$ security according to \expref{Definition}{def:ind-cpa}. The proof follows from \expref{Theorem}{thm:pseudorandom-SLWE} and assumes the hardness of (decisional) $\LWE$ (\expref{Definition}{def:decisional-lwe}). We add it for completeness.

\begin{theorem}\label{thm:Dual-Regev-PKE-CPA} Let $n\in \N$ and $q\geq 2$ be a prime modulus with $m \geq 2n \log q$, each parameterized by the security parameter $\lambda \in \N$. Let $\alpha \in (0,1)$ be a noise ratio parameter with $\sqrt{8(m+1)} \leq \frac{1}{\alpha} \leq \frac{q}{\sqrt{8(m+1)}}$.
Then, the scheme $\DualPKECD$ in \expref{Construction}{cons:dual-regev-cd} is $\INDCPA$-secure assuming the quantum hardness of the decisional $\LWE_{n,q,\beta q}^m$ problem, for any $\beta \in (0,1)$ with $\alpha/\beta= \lambda^{\omega(1)}$.
\end{theorem}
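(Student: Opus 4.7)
The plan is to reduce the $\INDCPA$ security of $\DualPKECD$ to the pseudorandomness of primal Gaussian states established in \expref{Theorem}{thm:pseudorandom-SLWE}, which itself rests on the hardness of decisional $\LWE_{n,q,\beta q}^m$. I would proceed by a short sequence of hybrids, starting from $\Exp^{\mathsf{ind\mbox{-}cpa}}_{\Sigma,\algo A,\lambda}(b)$ and removing all dependence on the challenge bit $b$.

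The first hybrid applies the Leftover Hash Lemma (\expref{Lemma}{lem:LHL}) to the public key. Since $\bar{\vec x} \rand \bit^m$ has min-entropy $m \geq 2n \log q$, the pair $(\bar{\vec A},\, \bar{\vec A}\bar{\vec x} \Mod{q})$ is within negligible total variation distance of the uniform distribution over $\Z_q^{n \times m} \times \Z_q^n$. Hence $\pk = \vec A = [\bar{\vec A}\,|\,\bar{\vec A}\bar{\vec x}\Mod{q}]$ is statistically indistinguishable from a uniformly random matrix in $\Z_q^{n\times(m+1)}$, and I may assume $\vec A$ is uniform for the remainder of the argument. The secret key $\sk = (-\bar{\vec x},1)$ is never needed by the reduction, since it is not revealed in the $\INDCPA$ experiment.

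The second hybrid swaps the primal Gaussian state $\ket{\psi_{\vec y}}$ produced by $\mathsf{GenPrimal}(\vec A,1/\alpha)$ for the state $\ket{\phi_{\vec x_0}} := \sum_{\vec u \in \Z_q^{m+1}} \omega_q^{-\ip{\vec u,\vec x_0}} \ket{\vec u}$, where $\vec x_0 \sim D_{\Lambda_q^{\vec y}(\vec A),\,1/(\sqrt{2}\alpha)}$. By \expref{Theorem}{thm:pseudorandom-SLWE}, this swap is computationally undetectable under decisional $\LWE_{n,q,\beta q}^m$ for any $\beta$ with $\alpha/\beta = \lambda^{\omega(1)}$, which matches our hypothesis and the theorem's relative-noise condition $1/\beta = (1/\alpha)\cdot 2^{o(n)}$. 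The crucial observation now collapses the two cases $b=0$ and $b=1$: the Pauli shift $\vec X_q^{\vec t_b}$ with $\vec t_b = (0,\dots,0,b\lfloor q/2\rfloor)$ acts on $\ket{\phi_{\vec x_0}}$ as a global phase, since a change of variables $\vec u \mapsto \vec u - \vec t_b$ yields
\[
\vec X_q^{\vec t_b}\ket{\phi_{\vec x_0}} \,=\, \sum_{\vec u \in \Z_q^{m+1}} \omega_q^{-\ip{\vec u-\vec t_b,\,\vec x_0}}\ket{\vec u} \,=\, \omega_q^{\ip{\vec t_b,\vec x_0}}\ket{\phi_{\vec x_0}}.
\]
In this final hybrid the ciphertexts for $b=0$ and $b=1$ are therefore physically identical, and $\algo A$'s advantage is exactly zero.

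Stitching the hybrids together with the triangle inequality, any $\QPT$ adversary with non-negligible $\INDCPA$ advantage would break either the Leftover Hash Lemma bound or \expref{Theorem}{thm:pseudorandom-SLWE}, contradicting the hardness of decisional $\LWE_{n,q,\beta q}^m$. I expect the argument to be conceptually short because the heavy lifting is already done in \expref{Theorem}{thm:pseudorandom-SLWE} (via \expref{Theorem}{thm:GaussCollapse}). There is no serious obstacle; the only step that requires mild care is verifying that the scheme's parameter regime $1/\alpha \in (\sqrt{8(m+1)},\,q/\sqrt{8(m+1)})$ together with the super-polynomial noise gap $\alpha/\beta = \lambda^{\omega(1)}$ indeed satisfies the hypotheses of \expref{Theorem}{thm:pseudorandom-SLWE}, which is bookkeeping rather than a real difficulty.
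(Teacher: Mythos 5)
Your proposal is correct and follows essentially the same route as the paper's proof: the paper likewise reduces to \expref{Theorem}{thm:pseudorandom-SLWE} to replace the ciphertext by the state $\sum_{\vec u \in \Z_q^{m+1}} \omega_q^{-\ip{\vec u,\vec x_0}}\ket{\vec u}$ and concludes because that state is independent of $b$. Your two additions --- the explicit Leftover Hash Lemma hybrid making $\vec A$ uniform, and the change-of-variables computation showing $\vec X_q^{\vec t_b}$ acts on the replacement state only as a global phase --- are exactly the steps the paper leaves implicit, so this is the same argument spelled out slightly more carefully.
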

\begin{proof}
Let $\Sigma = \DualPKECD$. We need to show that, for any $\QPT$ adversary $\algo A$, it holds that
$$
\Adv_{\Sigma,\algo A}(\lambda) := |\Pr[\Exp^{\mathsf{ind\mbox{-}cpa}}_{\Sigma,\algo A,\lambda}(0)=1] - \Pr[\Exp^{\mathsf{ind\mbox{-}cpa}}_{\Sigma,\algo A,\lambda}(1) = 1] |
 \leq \negl(\lambda).$$
Consider the experiment $\Exp^{\mathsf{ind\mbox{-}cpa}}_{\Sigma,\algo A,\lambda}(b)$ between the adversary $\algo A$ and a challenger taking place as follows:
\begin{enumerate}
    \item The challenger generates a pair $(\pk,\sk) \from \KeyGen(1^\lambda)$, and sends $\pk$ to $\algo A$.
    \item $\algo A$ sends a distinct plaintext pair $(m_0,m_1) \in \bit \times \bit$ to the challenger.
    \item The challenger computes $(\vk,\ct_b) \leftarrow \Enc(\pk,m_b)$, and sends $\ket{\ct_b}$ to $\algo A$.
    \item $\algo A$ outputs a guess $b' \in \bit$, which is also the output of the experiment.
\end{enumerate}
Recall that the procedure $\Enc(\pk,m_b)$ outputs a pair $(\vk,\ket{\ct_b})$, where $(\vec A \in \Z_q^{n\times (m+1)}, \vec y  \in \Z_q^n) \leftarrow \vk$ is the verification key and where the ciphertext $\ket{\ct_b}$ is within negligible trace distance of
\begin{align}\label{eq:ct-DualRegevPKECD-security}
\sum_{\vec s \in \Z_q^n} \sum_{\vec e \in \Z_q^{m+1}} \rho_{\alpha q}(\vec e) \, \omega_q^{-\ip{\vec s,\vec y}} \ket{\vec s \vec A + \vec e + (0,\dots,0, m_b \cdot  \lfloor q/2 \rfloor) \Mod{q}} 
\end{align}
Let $\beta \in (0,1)$ be such that $\alpha/\beta= \lambda^{\omega(1)}$. From \expref{Theorem}{thm:pseudorandom-SLWE} it follows that, under the (decisional) $\LWE_{n,q,\beta q}^{m}$ assumption, the quantum ciphertext $\ket{\ct_b}$ is computationally indistinguishable from the state
\begin{align}\label{eq:random-state}
\sum_{\vec u \in \Z_q^{m+1}} \omega_q^{-\ip{\vec u,\vec x_0}}\ket{\vec u}, \quad \,\, \vec x_0 \sim D_{\Lambda_q^{\vec y}(\vec A),\frac{1}{\sqrt{2}\alpha}}.
\end{align}
Because the state in Eq.~\eqref{eq:random-state} is completely independent of $b \in \bit$, it follows that
$$
\Adv_{\Sigma,\algo A}(\lambda) := |\Pr[\Exp^{\mathsf{ind\mbox{-}cpa}}_{\Sigma,\algo A,\lambda}(0)=1] - \Pr[\Exp^{\mathsf{ind\mbox{-}cpa}}_{\Sigma,\algo A,\lambda}(1) = 1] |
 \leq \negl(\lambda).$$
This proves the claim.
\end{proof}

\paragraph{$\INDCPACD$ security of \DualPKECD.} In this section, we prove that our public-key encryption scheme $\DualPKECD$ in \expref{Construction}{cons:dual-regev-cd} satisfies the notion of \emph{certified deletion security} assuming the \emph{Strong Gaussian-Collapsing (SGC) Conjecture} (see \expref{Conjecture}{conj:SGC}). This is a strengthening of the Gaussian-collapsing property which we proved under the (decisional) \LWE assumption (see \expref{Theorem}{thm:GaussCollapse}).

\begin{theorem}\label{thm:Dual-Regev-PKE-CD} Let $n\in \N$ and $q\geq 2$ be a prime modulus with $m \geq 2n \log q$, each parameterized by $\lambda \in \N$. Let   Let $\alpha$ be a ratio with $\sqrt{8(m+1)} \leq \frac{1}{\alpha} \leq \frac{q}{\sqrt{8(m+1)}}$. Then, the scheme $\DualPKECD$ in \expref{Construction}{cons:dual-regev-cd} is $\INDCPACD$-secure assuming the Strong Gaussian-Collapsing property $\mathsf{SGC}_{n,m+1,q,\frac{1}{\alpha}}$ from \expref{Conjecture}{conj:SGC}.
\end{theorem}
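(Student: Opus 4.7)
The plan is to give a direct reduction from the $\INDCPACD$ game of $\DualPKECD$ to the strong Gaussian-collapsing conjecture $\mathsf{SGC}_{n,m+1,q,1/\alpha}$. The central conceptual observation is that, by \expref{Corollary}{cor:switching}, the ciphertext $\ket{\ct_b} = \vec X_q^{\vec m_b}\ket{\psi_{\vec y}}$ with $\vec m_b = (0,\ldots,0,m_b \lfloor q/2 \rfloor)$ coincides, up to negligible trace distance, with the Fourier-conjugated dual state $\FT_q^\dag \vec Z_q^{\vec m_b}\ket{\hat\psi_{\vec y}}$---exactly the object referenced in the $\mathsf{SGC}$ experiment, but with a Pauli-$\vec Z$ mask absorbing the plaintext dependence. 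Because $\vec Z_q$ is diagonal in the computational basis, if the Gaussian superposition is collapsed to a classical sample $\ket{\vec x_0}$ then $\vec Z_q^{\vec m_b}\ket{\vec x_0} = \omega_q^{\ip{\vec m_b,\vec x_0}}\ket{\vec x_0}$ is a mere global phase, so the plaintext bit $b$ is erased from the state. This is precisely what forces the $\INDCPACD$ advantage to vanish once the Gaussian superposition can be swapped for its measurement outcome via $\mathsf{SGC}$.

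Concretely, given a $\QPT$ adversary $\algo A$ with $\INDCPACD$ advantage $\epsilon$, I would build a distinguisher $\algo B$ against $\mathsf{SGC}$ as follows. $\algo B$ receives $(\vec A,\vec y,\rho_X)$ from the SGC challenger---noting that $\vec A = [\bar{\vec A}\mid \bar{\vec A}\bar{\vec x}]$ is distributed identically to $\pk$ sampled by $\KeyGen$, and the eventually released trapdoor $\vec t = (\bar{\vec x},-1)$ is the negation of $\sk = (-\bar{\vec x},1)$. $\algo B$ forwards $\pk = \vec A$ to $\algo A$, samples a uniform challenge bit $b \rand \bit$, awaits the plaintext pair $(m_0,m_1)$, prepares the ciphertext $\FT_q^\dag \vec Z_q^{\vec m_b}\rho_X$, and sends it to $\algo A$. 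When $\algo A$ returns a certificate $\pi$, $\algo B$ locally runs the verification predicate $\vec A\pi = \vec y$ and $\|\pi\|\leq \sqrt{m+1}/(\sqrt{2}\alpha)$; on failure it returns $\bot$ to $\algo A$, and on success it forwards $\pi$ to the SGC challenger (whose acceptance condition matches at $\sigma = 1/\alpha$), receives $\vec t$, and returns $\sk = -\vec t$ to $\algo A$. Finally, $\algo B$ outputs $1$ if $\algo A$'s guess $b'$ equals $b$ and $0$ otherwise.

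The analysis branches on the hidden SGC bit. When the SGC bit is $0$, \expref{Corollary}{cor:switching}, \expref{Lemma}{lem:Gaussian-LHL}, and \expref{Lemma}{lem:TD_inequalities}---applied in the parameter window $\sqrt{8(m+1)}\leq 1/\alpha \leq q/\sqrt{8(m+1)}$---jointly guarantee that the reduction simulates the real $\INDCPACD$ experiment with challenge $b$ up to negligible trace distance, yielding $|\Pr[b' = b\mid\text{SGC}=0] - 1/2| \geq \epsilon/2 - \negl(\lambda)$. When the SGC bit is $1$, $\rho_X = \proj{\vec x_0}$ is a $\vec Z_q$-eigenstate, so after global-phase absorption the ciphertext $\FT_q^\dag\ket{\vec x_0}$ and the subsequently revealed key $\sk = -\vec t$ are both functions of $(\vec A,\vec x_0)$ alone and carry no information about $b$; hence $\Pr[b' = b\mid\text{SGC}=1] = 1/2$ exactly, and this holds regardless of whether verification succeeds (the $\bot$-branch contributes $1/2$ on both sides and cancels). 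Subtracting yields $\algo B$ distinguishing advantage at least $\epsilon/2 - \negl(\lambda)$ against $\mathsf{SGC}$, contradicting \expref{Conjecture}{conj:SGC} unless $\epsilon$ is itself negligible.

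The only nontrivial bookkeeping---rather than a true obstacle---is verifying that the joint distribution $(\pk,\vk,\sk)$ produced by the reduction matches what $\KeyGen$ and $\Enc$ would produce: the matrix distribution matches by the same Leftover Hash Lemma argument used in \expref{Lemma}{lem:correctness-decryption-dual-regev-pke-cd}, the syndrome $\vec y$ is statistically close to uniform by \expref{Lemma}{lem:Gaussian-LHL}, and the dual-to-primal Fourier swap is accurate to negligible trace distance by \expref{Corollary}{cor:switching}, all within the stipulated parameter regime. The genuine conceptual hurdle has been pre-absorbed into \expref{Conjecture}{conj:SGC} itself: without the strong variant of Gaussian-collapsing one cannot replace the Gaussian superposition by its measured counterpart \emph{while still leaking the trapdoor}, and this leakage is precisely what is required to faithfully simulate the secret-key release step of the $\INDCPACD$ game.
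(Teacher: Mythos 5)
Your proposal is correct and takes essentially the same route as the paper: the paper's proof is a hybrid chain $\mathbf{H_0},\dots,\mathbf{H_7}$ that invokes the $\mathsf{SGC}$ experiment once on each side of the collapsed state and rests on exactly your two key facts --- the Fourier/Pauli duality of \expref{Corollary}{cor:switching} (via \expref{Lemma}{lem:XZ-conjugation}) and the observation that the plaintext-dependent mask $\vec Z_q^{(0,\dots,0,\lfloor q/2\rfloor)}$ acts as a global phase on the measured pre-image, so the two collapsed hybrids coincide. Your single direct reduction with an embedded random challenge bit is the standard repackaging of that hybrid argument (costing the usual factor of $2$ in advantage), and your bookkeeping --- $\pk=\vec A$ identically distributed, $\sk=-\vec t$, the norm check $\|\pi\|\leq \sqrt{m+1}/\sqrt{2}\alpha$ matching $\mathsf{SGC}$ at $\sigma=1/\alpha$, and the exact $1/2$ guessing probability in the collapsed branch --- agrees with the paper's identification of its hybrids $\mathbf{H_2}$ and $\mathbf{H_5}$ with the $\mathsf{SGC}$ experiments.
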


\begin{proof}
Let $\Sigma = \DualPKECD$. We need to show that, for any $\QPT$ adversary $\algo A$, it holds that
$$
\Adv_{\Sigma,\algo A}^{\mathsf{pk\mbox-}\mathsf{cert}\mbox{-}\mathsf{del}}(\lambda) := |\Pr[\Exp^{\mathsf{pk\mbox{-}cert\mbox{-}del}}_{\Sigma,\algo A,\lambda}(0)=1] - \Pr[\Exp^{\mathsf{pk\mbox{-}cert\mbox{-}del}}_{\Sigma,\algo A,\lambda}(1) = 1] |
 \leq \negl(\lambda).$$
 
We consider the following sequence of hybrids:

\begin{description}
\item $\mathbf{H_0:}$ This is the experiment $\Exp^{\mathsf{pk\mbox{-}cert\mbox{-}del}}_{\Sigma,\algo A,\lambda}(0)$ between $\algo A$ and a challenger:
\begin{enumerate}
\item The challenger samples a random matrix $\bar{\vec A} \rand \Z_q^{n\times m}$ and a vector $\bar{\vec x} \rand \bit^{m}$ and chooses $\vec A = [\bar{\vec A} | \bar{\vec A} \cdot \bar{\vec x} \Mod{q}]$.
The challenger chooses the secret key $\sk \leftarrow (-\bar{\vec x}, 1) \in \Z_q^{m+1}$ and the public key $\pk \leftarrow \vec A \in \Z_q^{n \times (m+1)}$.

\item $\algo A$ sends a distinct plaintext pair $(m_0,m_1) \in \bit \times \bit$ to the challenger. (Note: Without loss of generality, we can just assume that $m_0 = 0$ and $m_1=1$).
    
\item The challenger runs $(\ket{\psi_{\vec y}},\vec y) \leftarrow \mathsf{GenPrimal}(\vec A,1/\alpha)$ in \expref{Algorithm}{alg:GenPrimal}, and outputs
$$
\left(\vk \leftarrow (\vec A \in \Z_q^{n\times (m+1)}, \vec y  \in \Z_q^n), \quad \ket{\ct_0}  \leftarrow \ket{\psi_{\vec y}} \right).
$$

\item At some point in time, $\algo A$ returns a certificate $\pi$ to the challenger.

\item The challenger verifies $\pi$ and outputs $\top$, if $\vec A \cdot \pi = \vec y \Mod{q}$ and $\| \pi \| \leq \sqrt{m+1}/\sqrt{2}\alpha$, and output $\bot$, otherwise. If $\pi$ passes the test with outcome $\top$, the challenger sends $\sk$ to $\algo A$.
    
\item $\algo A$ outputs a guess $b' \in \bit$, which is also the output of the experiment.
\end{enumerate}

\item $\mathbf{H_1:}$ This is same experiment as in $\mathbf{H_0}$, except that (in Step 3) the challenger prepares the ciphertext in the Fourier basis rather than the standard basis. In other words, $\algo A$ receives the pair
$$
\left(\vk \leftarrow (\vec A \in \Z_q^{n\times (m+1)}, \vec y  \in \Z_q^n), \quad \ket{\ct_0}  \leftarrow \FT_q \ket{\psi_{\vec y}} \right).
$$

\item $\mathbf{H_2:}$ This is the experiment $\mathsf{StrongGaussCollapseExp}_{\algo H,\algo D,\lambda}(0)$ in \expref{Conjecture}{conj:SGC}:
\begin{enumerate}
\item The challenger samples a random matrix $\bar{\vec A} \rand \Z_q^{n\times m}$ and a vector $\bar{\vec x} \rand \bit^{m}$ and chooses $\vec A = [\bar{\vec A} | \bar{\vec A} \cdot \bar{\vec x} \Mod{q}]$ and $\vec t = (-\bar{\vec x}, 1) \in \Z_q^{m+1}$.
    
\item The challenger runs $(\ket{\hat\psi_{\vec y}},\vec y) \leftarrow \mathsf{GenDual}(\vec A,\sigma)$ in \expref{Algorithm}{alg:GenDual}, where $\vec y \in \Z_q^n$, and sends the triplet
$(\ket{\hat\psi_{\vec y}},\vec A,\vec y)$ to the adversary $\algo A$.

\item At some point in time, $\algo A$ returns a certificate $\pi$ to the challenger.

\item The challenger verifies $\pi$ and outputs $\top$, if $\vec A \cdot \pi = \vec y \Mod{q}$ and $\| \pi \| \leq \sqrt{m+1}/\sqrt{2}\alpha$, and output $\bot$, otherwise. If $\pi$ passes the test with outcome $\top$, the challenger sends $\vec t$ to $\algo A$.
    
\item $\algo A$ outputs a guess $b' \in \bit$, which is also the output of the experiment.
\end{enumerate}

\item $\mathbf{H_3:}$ This is the experiment $\mathsf{StrongGaussCollapseExp}_{\algo H,\algo D,\lambda}(1)$ in \expref{Conjecture}{conj:SGC}; it is the same as $\mathbf{H_2}$, except that the state $\ket{\hat\psi_{\vec y}}$ (in Step 2) is measured in the computational basis before it is sent to $\algo A$. 

\item $\mathbf{H_4:}$ This is same experiment as $\mathbf{H_3}$, except that (in Step 2) the challenger additionally applies the Pauli operator $\vec Z_q^{(0,\dots,0,\lfloor\frac{q}{2} \rfloor)}$ to the state $\ket{\hat\psi_{\vec y}}$ before it is measured in the computational basis.

\item $\mathbf{H_5:}$ This is same experiment as $\mathbf{H_4}$, except that (in Step 2) $\algo A$ receives the triplet 
$$
(\vec Z_q^{(0,\dots,0,\lfloor\frac{q}{2} \rfloor)}\ket{\hat\psi_{\vec y}},\quad\vec A \in \Z_q^{n \times (m+1)},\quad\vec y \in \Z_q^n).
$$

\item $\mathbf{H_6:}$ This is same experiment as $\mathbf{H_5}$, except that (in Step 2) the challenger prepares the quantum state $\vec Z_q^{(0,\dots,0,\lfloor\frac{q}{2} \rfloor)}\ket{\hat\psi_{\vec y}}$ in the (inverse) Fourier basis instead. In other words, $\algo A$ receives the triplet 
$$
(\FT_q^\dag\vec Z_q^{(0,\dots,0,\lfloor\frac{q}{2} \rfloor)}\ket{\hat\psi_{\vec y}},\quad\vec A \in \Z_q^{n \times (m+1)},\quad\vec y \in \Z_q^n).
$$

\item $\mathbf{H_7:}$ This is the experiment $\Exp^{\mathsf{pk\mbox{-}cert\mbox{-}del}}_{\Sigma,\algo A,\lambda}(1)$.
\end{description}

We now show that the hybrids are indistinguishable.

%% H0 vs H1
\begin{claim}
$$
 \Pr[\Exp^{\mathsf{pk\mbox{-}cert\mbox{-}del}}_{\Sigma,\algo A,\lambda}(0)=1] = \Pr[\mathbf{H_1} = 1].$$
\end{claim}
\begin{proof}
Without loss of generality, we can assume that the challenger applies the inverse Fourier transform before sending the ciphertext to $\algo A$. Therefore, the success probabilities are identical in $\mathbf{H_0}$ and $\mathbf{H_1}$. 
\end{proof}

%% H2 vs H1
\begin{claim}
$$
\Pr[\mathbf{H_1} = 1] = \Pr[\mathbf{H_2} = 1].$$
\end{claim}
\begin{proof}
Because the challenger in $\mathbf{H_1}$ always sends the ciphertext $\ket{\ct_0}$ corresponding to $m_0=0$ to the adversary $\algo A$, the two hybrids $\mathbf{H_1}$ and $\mathbf{H_2}$ are identical.
\end{proof}

%% H3 vs H2
\begin{claim} Under the Strong Gaussian-Collapsing property $\mathsf{SGC}_{n,m+1,q,\frac{1}{\alpha}}$, it holds that
$$
 | \Pr[\mathbf{H_2} = 1] - \Pr[\mathbf{H_3} = 1] |
 \leq \negl(\lambda).$$
\end{claim}
\begin{proof}
This follows directly from \expref{Conjecture}{conj:SGC}.
\end{proof}

%% H4 vs H3
\begin{claim}
$$
\Pr[\mathbf{H_3} = 1] = \Pr[\mathbf{H_4} = 1].$$
\end{claim}
\begin{proof}
Because the challenger measures the state $\ket{\hat\psi_{\vec y}}$ in Step 2 in the computational basis, applying the phase operator $\vec Z_q^{(0,\dots,0,\lfloor\frac{q}{2} \rfloor)}$ before the measurement does not affect the measurement outcome.
\end{proof}

%% H4 vs H5
\begin{claim} Under the Strong Gaussian-Collapsing property $\mathsf{SGC}_{n,m+1,q,\frac{1}{\alpha}}$, it holds that
$$
 | \Pr[\mathbf{H_4} = 1] - \Pr[\mathbf{H_5} = 1] |
 \leq \negl(\lambda).$$
\end{claim}
\begin{proof}
This follows from \expref{Conjecture}{conj:SGC} since, without loss of generality, we can assume that the challenger applies the phase operator $\vec Z_q^{(0,\dots,0,\lfloor\frac{q}{2} \rfloor)}$ before sending the state $\ket{\hat\psi_{\vec y}}$ to $\algo A$.
\end{proof}

%% H5 vs H6
\begin{claim}
$$
\Pr[\mathbf{H_5} = 1] = \Pr[\mathbf{H_6} = 1].
$$
\end{claim}

\begin{proof}
Without loss of generality, we can assume that the challenger applies the Fourier transform to $\vec Z_q^{(0,\dots,0,\lfloor\frac{q}{2} \rfloor)}\ket{\hat\psi_{\vec y}}$ before sending it to $\algo A$. Therefore, the success probabilities are identical in $\mathbf{H_5}$ and $\mathbf{H_6}$. 
\end{proof}

%% H6 vs H7
\begin{claim}
$$
|\Pr[\mathbf{H_6} = 1] -\Pr[\Exp^{\mathsf{pk\mbox{-}cert\mbox{-}del}}_{\Sigma,\algo A,\lambda}(1)=1]| \leq \negl(\lambda).$$
\end{claim}
\begin{proof}
From \expref{Lemma}{lem:XZ-conjugation}, we have $\FT_q \vec X_q^{\vec v} = \vec Z_q^{\vec v} \FT_q$, for all $\vec v \in \Z_q^m$. Hence, in $\mathbf{H_6}$, we can instead assume that the challenger runs $(\ket{\psi_{\vec y}},\vec y) \leftarrow \mathsf{GenPrimal}(\vec A,1/\alpha)$ in \expref{Algorithm}{alg:GenPrimal} and sends the following to $\algo A$:
$$
\left(\vk \leftarrow (\vec A \in \Z_q^{n\times (m+1)}, \vec y  \in \Z_q^n), \quad \ket{\ct_1}  \leftarrow \vec X_q^{(0,\dots,0,\lfloor\frac{q}{2} \rfloor)}\ket{\psi_{\vec y}} \right).
$$
From \expref{Corollary}{cor:switching}, we have that $\FT_q^\dag \vec Z_q^{\vec v} \ket{\hat\psi_{\vec y}}$ and $\vec X_q^{\vec v} \ket{{\psi_{\vec y}}}$ are within negligible trace distance, for all $\vec v \in \Z_q^m$. Because the challenger in $\mathbf{H_7}$ always sends the ciphertext $\ket{\ct_1}$ corresponding to $m_1=1$ to the adversary $\algo A$, it follows that the distinguishing advantage between $\mathbf{H_6}$ and $\mathbf{H_7}=\Exp^{\mathsf{pk\mbox{-}cert\mbox{-}del}}_{\Sigma,\algo A,\lambda}(1)$ is negligible.
\end{proof} 
Because the hybrids $\mathbf{H_0}$ and $\mathbf{H_7}$ are indistinguishable, this implies that
$$
\Adv_{\Sigma,\algo A}^{\mathsf{pk\mbox-}\mathsf{cert}\mbox{-}\mathsf{del}}(\lambda)\leq \negl(\lambda).$$
\end{proof}

Next, we show how to extend our Dual-Regev $\PKE$ scheme with certified deletion in \expref{Construction}{cons:dual-regev-cd} to a fully homomorphic encryption scheme of the same type.

\section{Fully Homomorphic Encryption with Certified Deletion}

In this section, we formalize the notion of homomorphic encryption with certified deletion which enables an untrusted quantum server to compute on encrypted data and, if requested, to simultaneously prove data deletion to a client. We also provide several notions of certified deletion security.

\subsection{Definition}

We begin with the following definition.

\begin{definition}[Homomorphic encryption with certified deletion]\label{def:INDCPACD} A homomorphic encryption scheme with certified deletion is a tuple $\HECD = (\KeyGen,\Enc,\Dec,\Eval,\Del,\Vrfy)$ of $\QPT$ algorithms (in the security parameter $\lambda \in \N$), a key generation algorithm $\KeyGen$, an encryption algorithm $\Enc$, a decryption algorithm $\Dec$, an evaluation algorithm $\Eval$, a deletion algorithm $\Del$, and a verification algorithm $\Vrfy$.
\begin{description}
\item $\KeyGen(1^\lambda) \rightarrow (\pk,\sk):$ takes as input $1^\lambda$ and outputs a public key $\pk$ and secret key $\sk$.
\item $\Enc(\pk,x) \rightarrow (\vk,\ct):$ takes as input the public key $\pk$ and a plaintext $x \in \bit$, and
outputs a classical verification key $\vk$ together with a quantum ciphertext $\ct$.
\item $\Dec(\sk,\ct) \rightarrow x'\, \mathbf{or}\,\bot:$ takes as input a key $\sk$ and ciphertext $\ct$, and outputs $x'\in \bit$ or $\bot$.
\item $\Eval(C,\ct,\pk) \rightarrow \widetilde{\ct}$: takes as input a key $\pk$ and applies a circuit $C: \bit^\ell \rightarrow \bit$ to a product of quantum ciphertexts $\ct = \ct_1 \otimes \dots \otimes \ct_\ell$ resulting in a state $\widetilde{\ct}$.
\item $\Del(\ct) \rightarrow \pi:$ takes as input a ciphertext $\ct$ and outputs a classical certificate $\pi$.
\item $\Vrfy(\vk,\pi) \rightarrow \top \, \mathbf{or} \, \bot:$ takes as input a key $\vk$ and certificate $\pi$, and outputs $\top$ or $\bot$.
\end{description}
\end{definition}

We remark that we frequently overload the functionality of the encryption and decryption procedures by allowing both procedures to take multi-bit messages as input, and to generate or decrypt a sequence of quantum ciphertexts bit-by-bit.

\begin{definition}[Compactness and full homomorphism]\label{def:compact} A homomorphic encryption scheme with certified deletion $\HECD = (\KeyGen,\Enc,\Dec,\Eval,\Del,\Vrfy)$ is fully homomorphic if, for any efficienty (in $\lambda \in \N$) computable circuit $C: \bit^\ell \rightarrow \bit$ and any set of inputs $x = (x_1,\dots,x_\ell) \in \bit^\ell$, it holds that
$$
\Pr\left[\Dec(\sk,\widetilde{\ct}) \neq C(x_1,\dots,x_\ell)
\, \bigg| \, \substack{
(\pk,\sk) \leftarrow \KeyGen(1^\lambda)\\
(\vk,\ct) \leftarrow \Enc(\pk,x)\\
\widetilde{\ct}\from \Eval(C,\ct, \pk)
}\right] \leq \negl(\lambda).
$$
We say that a fully homomorphic encryption scheme with certified deletion $(\FHECD)$ is compact if its decryption circuit is independent of
the circuit $C$. The scheme is leveled fully homomorphic if it takes $1^L$ as an additional input for its key generation procedure and can only evaluate depth $L$ Boolean circuits.

\end{definition}

\begin{definition}[Correctness of verification]\label{def:correctness-verification-HE} A homomorphic encryption scheme with certified deletion $\HECD = (\KeyGen,\Enc,\Dec,\Eval,\Del,\Vrfy)$ has correctness of verification if the following property holds for any integer $\lambda \in \mathbb{N}$ and any set of inputs $x = (x_1,\dots,x_\ell) \in \bit^\ell$
$$
\Pr \left[\Vrfy(\vk,\pi) = \bot \, \bigg| \, \substack{
(\pk,\sk) \leftarrow \KeyGen(1^\lambda)\\
(\vk,\ct) \leftarrow \Enc(\pk,x)\\
\pi \from \Del(\ct)
}\right] \, \leq \, \negl(\lambda).
$$
\end{definition}

Recall that a fully homomorphic encryption scheme with certified deletion enables
an untrusted quantum server to compute on encrypted data and to also prove data
deletion to a client. In this context, it is desirable for the client to be able to \emph{extract} (i.e., to decrypt) the outcome of the computation without irreversibly affecting the ability of the server to later prove deletion. We use the following definition.

\begin{definition}[Extractable $\FHE$ scheme with certified deletion]\label{def:extractable-FHE}
A fully homomorphic encryption scheme with certified deletion $\Sigma = (\KeyGen,\Enc,\Dec,\Eval,\Extract,\Del,\Vrfy)$ is called extractable, if
\begin{itemize}
    \item $\Eval(C,\ct_1,\dots,\ct_\ell, \pk)$ additionally outputs a circuit transcript $t_C$ besides $\widetilde{\ct}$;
    \item $\Extract\ip{\algo S(\rho,t_C),\algo R(\sk)}$ is an interactive protocol between a sender $\algo S$ (which takes as input a state $\rho$ and a circuit transcript $t_C$) and a receiver $\algo R$ (which takes as input a key $\sk$) with the property that, once the protocol is complete, $\algo S$ obtains a state $\widetilde{\rho}$ and $\algo R$ obtains a bit $y \in \bit$;
\end{itemize}
such that for any efficiently computable circuit $C: \bit^\ell \rightarrow \bit$ of depth $L$ and any input $x\in \bit^\ell$:
    $$
\Pr\left[ y \neq C(x_1,\dots,x_\ell)
\, \bigg| \, \substack{
(\pk,\sk) \leftarrow \KeyGen(1^\lambda,1^L)\\
(\vk,\ct) \leftarrow \Enc(\pk,x)\\
(\widetilde{\ct},t_C)\from \Eval(C,\ct, \pk)\\
(\widetilde{\rho},y)\from \Extract\ip{\algo S(\widetilde{\ct},t_C),\algo R(\sk)}
}\right] \leq \negl(\lambda), \quad \text{and}
$$
$$
\Pr \left[\Vrfy(\vk,\pi) = \bot \, \bigg| \, \substack{
(\pk,\sk) \leftarrow \KeyGen(1^\lambda,1^L)\\
(\vk,\ct) \leftarrow \Enc(\pk,x)\\
(\widetilde{\ct},t_C)\from \Eval(C,\ct, \pk)\\
(\widetilde{\rho},y)\from \Extract\ip{\algo S(\widetilde{\ct},t_C),\algo R(\sk)}\\
\pi \from \Del(\widetilde{\rho})
}\right] \, \leq \, \negl(\lambda).\quad\quad
$$
\end{definition}

\begin{remark}[Compactness of an extractable $\FHE$ scheme] Our notion of an extractable $\FHE$ scheme with certified deletion in \expref{Definition}{def:extractable-FHE} requires the evaluator to keep a transcript of the circuit that is being applied, which at first sight seems to violate the usual notion of compactness in \expref{Definition}{def:compact}. However, the action of the decryptor during the interactive protocol $\Extract$ is still independent of the circuit that is being applied, and so it is possible to recover an analogous form of compactness as before.

\end{remark}

\subsection{Certified deletion security}

Our notion of certified deletion security for homomorphic encryption ($\HE$) schemes is similar to the notion of $\INDCPACD$ security for public-key encryption schemes in \expref{Definition}{def:CD-security}.

\begin{definition}[Certified deletion security for $\HE$]\label{def:CD-security-HE} Let $\Sigma = (\KeyGen,\Enc,\Dec,\Eval,\Del,\Vrfy)$ be a homomorphic encryption scheme with certified deletion and let $\algo A$ be a $\QPT$ adversary. We define the security experiment $\Exp^{\mathsf{he\mbox{-}cert\mbox{-}del}}_{\Sigma,\algo A,\lambda}(b)$ between $\algo A$ and a challenger as follows:
\begin{enumerate}
 \item The challenger generates a pair $(\pk,\sk) \from \KeyGen(1^\lambda)$, and sends $\pk$ to $\algo A$.
    \item $\algo A$ sends a distinct plaintext pair $(m_0,m_1) \in \bit^\ell \times \bit^\ell$ to the challenger.
    \item The challenger computes $(\vk,\ct_b) \leftarrow \Enc(\pk,m_b)$, and sends $\ket{\ct_b}$ to $\algo A$.
     \item At some point in time, $\algo A$ sends a certificate $\pi$ to the challenger.
    \item The challenger computes $\Vrfy(\vk, \pi)$ and sends $\sk$ to $\algo A$, if the output is $1$, and $0$ otherwise.
    \item $\algo A$ outputs a guess $b' \in \bit$, which is also the output of the experiment.
\end{enumerate}
We say that the scheme $\Sigma$ is $\INDCPACD$-secure if, for any $\QPT$ adversary $\algo A$, that
$$
\Adv_{\Sigma,\algo A}^{\mathsf{he}\mbox{-}\mathsf{cert}\mbox{-}\mathsf{del}}(\lambda) := |\Pr[\Exp^{\mathsf{pk}\mbox{-}\mathsf{cert}\mbox{-}\mathsf{del}}_{\Sigma,\algo A,\lambda}(0)=1] - \Pr[\Exp^{\mathsf{he}\mbox{-}\mathsf{cert}\mbox{-}\mathsf{del}}_{\Sigma,\algo A,\lambda}(1) = 1] |
 \leq \negl(\lambda).$$
\end{definition}

\section{Dual-Regev Fully Homomorphic Encryption with Certified Deletion}\label{sec:Dual-Regev-FHE}

In this section, we describe the main result of this work. We introduce a protocol that allows an untrusted quantum server to perform homomorphic operations on encrypted data, and to simultaneously prove data deletion to a client. Our $\FHE$ scheme with certified deletion supports the evaluation of polynomial-sized Boolean circuits composed entirely of $\NAND$ gates (see \expref{Figure}{fig:NAND}) -- an assumption we can make without loss of generality, since the $\NAND$ operation is universal for classical computation.
Note that, for $a,b \in \bit$, the logical \texttt{NOT-AND} $(\NAND)$ operation is defined by
$$
\NAND(a,b) = \overline{a \land b} = 1 - a\cdot b.
$$
\begin{figure}[h!]
\centering
    \begin{circuitikz}
        \draw
        (0,0) node[rotate=90, nand port] (nand1) {};
        
        \node at (-0.3,-1.635) {$a$};
        \node at (0.3,-1.6) {$b$};
        
       % \node[circle,draw] (c) at (2,0){$x_1$};
    \end{circuitikz}
    \caption{$\NAND$ gate.}\label{fig:NAND}
\end{figure}
Recall also that a Boolean circuit with input $x \in \bit^n$ is a directed acyclic graph $G=(V,E)$
    in which each node in $V$ is either an input node (corresponding to an input bit $x_i$), an \texttt{AND} ($\land$) gate, an \texttt{OR} ($\lor$) gate, or a \texttt{NOT} ($\neg$) gate. We can naturally identify a Boolean circuit with a function $f: 
    \bit^n \rightarrow \bit$ which it computes. Due to the universality of the $\NAND$ operation, we can represent every Boolean circuit (and the function it computes) with an equivalent circuit consisting entirely of $\NAND$ gates. In \expref{Figure}{fig:Boolean-circuit}, we give an example of a Boolean circuit composed of three $\NAND$ gates that takes as input a string $x \in \bit^4$.

\begin{figure}[h!]
\centering
    \begin{circuitikz}
        \draw
        (0,0) node[rotate=90, nand port] (nand1) {};
        \node at (-0.3,-1.6) {$x_1$};
        \node at (0.3,-1.6) {$x_2$};
        
        \draw
        (2,0) node[rotate=90, nand port] (nand2) {};
        \node at (1.7,-1.6) {$x_3$};
        \node at (2.3,-1.6) {$x_4$};
    
        \draw
        (1,2) node[rotate=90, nand port] (nand3) {};
        
        \draw (nand1.out) -- (nand3.in 1);
        \draw (nand2.out) -- (nand3.in 2);
        
        \node at (1,2.4) {$C(x)$};
        
    \end{circuitikz}
    \caption{
A Boolean circuit $C$ made up of three $\NAND$ gates which takes as input a binary string of the form $x \in \bit^4$. The top-most $\NAND$ gate is the designated output node with outcome $C(x) \in \bit$. } \label{fig:Boolean-circuit}
\end{figure}

\subsection{Construction} \label{sec:FHE_construction}

In this section, we describe our fully homomorphic encryption scheme with certified deletion.
In order to define our construction, we require a so-called \emph{flattening} operation first introduced by Gentry, Sahai and Waters~\cite{GSW2013} in the context of homomorphic encryption and is also featured in the Dual-Regev $\FHE$ scheme of Mahadev~\cite{mahadev2018classical}. Let $n \in \N$, $q \geq 2$ be a prime modulus and $m \geq 2 n \log q$. We define a linear operator $\vec G \in \Z_q^{(m+1) \times N}$ called the \emph{gadget matrix}, where $N = (n+1) \cdot \lceil \log q \rceil$. The operator $\vec G$ converts a binary representation of a vector back to its original vector representation over the ring $\Z_q$. More precisely, for any binary vector $\vec a = (a_{1,0},\hdots, a_{1,l-1},\hdots,a_{m+1,0},\hdots,a_{m+1,l-1})$ of length $N$ with $\ell = \lceil \log q \rceil$, the matrix $\vec G$ produces a vector in $\Z_q^{m+1}$ as follows:

\begin{align}\label{eq:matrix-G}
\vec G(\vec a) = \left(
\sum_{j=0}^{\lceil \log q \rceil-1} 2^j \cdot a_{1,j} \,\,\,, \hdots, \sum_{j=0}^{\lceil \log q \rceil-1} 2^j \cdot a_{m+1,j}
\right).
\end{align}
We also define the associated (non-linear) inverse operation $\vec G^{-1}$ which converts a vector $\vec a \in \Z_q^{m+1}$ to its binary representation in $\bit^N$. In other words, we have that $\vec G^{-1} \cdot \vec G = \id$ acts as the identity operation.

Our (leveled) $\FHE$ scheme with certified deletion is based on the (leveled) Dual-Regev $\FHE$ scheme introduced by Mahadev~\cite{mahadev2018classical} which is a variant of the $\LWE$-based $\FHE$ scheme proposed by Gentry, Sahai and Waters~\cite{GSW2013}. We base our choice of parameters on the aforementioned two works.

Let us first recall the Dual-Regev $\FHE$ scheme below.

\begin{construction}[Dual-Regev leveled $\FHE$]\label{cons:Dual-Regev-FHE}
Let $\lambda \in \N$ be the security parameter.
The Dual-Regev leveled $\FHE$ scheme $\DualFHE = (\KeyGen,\Enc,\Dec,\Eval)$ consists of the following $\PPT$ algorithms:
\begin{description}
\item $\KeyGen(1^\lambda) \rightarrow (\pk,\sk):$ sample a uniformly random matrix $\bar{\vec A} \rand \Z_q^{n\times m}$ and vector $\bar{\vec x} \rand \bit^{m}$
and let $\vec A = [\bar{\vec A} | \bar{\vec A} \cdot \bar{\vec x} \Mod{q}]^T$.
Output $(\pk,\sk)$, where $\pk=\vec A \in \Z_q^{(m+1) \times n}$ and $\sk = (-\bar{\vec x}, 1) \in \Z_q^{m+1}$.
\item $\Enc(\pk,x):$ to encrypt $x \in \bit$, parse $\vec A \in \Z_q^{(m+1) \times n} \leftarrow \pk$, sample $\vec S \rand \Z_q^{n \times N}$ and $\vec E \sim D_{\Z^{(m+1)\times N}, \,\alpha q}$ and
output $\ct= \vec A \cdot \vec S + \vec E + x \cdot \vec G \Mod{q} \in \Z_q^{(m+1)\times N}$, where $\vec G$ is the gadget matrix in Eq.~\eqref{eq:matrix-G}.

\item $\Eval( C,\ct):$ apply the circuit $C$ composed of $\NAND$ gates on a ciphertext tuple $\ct$ as follows:
\begin{itemize}
    \item parse the ciphertext tuple as $(\ct_1,\dots,\ct_\ell) \leftarrow \ct$.
    \item repeat for every $\NAND$ gate in $C$: to apply a $\NAND$ gate on a ciphertext pair $(\ct_i,\ct_j)$, parse matrices $\vec C_i \leftarrow \ct_i$ and $\vec C_j \leftarrow \ct_j$ with $\vec C_i,\vec C_j \in \Z_q^{(m+1)\times N}$ and generate 
    $$
    \vec C_{ij} = \vec G - \vec C_i \cdot \vec G^{-1}(\vec C_j) \Mod{q}.
    $$
    Let $\ct_{ij} \leftarrow \vec C_{ij}$ denote the outcome ciphertext.
\end{itemize}

\item $\Dec(\sk,\ct):$ parse $\vec C \in \Z_q^{(m+1)\times N} \leftarrow \ct$ and compute $c = \sk^T \cdot \vec c_N \in \Z \cap (-\frac{q}{2},\frac{q}{2}]$, where $\vec c_N \in \Z_q^{m+1}$ is the $N$-th column of $\vec C$, and then output $0$, if $c$
is closer to $0$ than to $\lfloor\frac{q}{2}\rfloor$,
and output $1$, otherwise.
\end{description}
\end{construction}

The Dual-Regev $\FHE$ scheme supports the homomorphic evaluation of a $\NAND$ gate in the following sense. If $\ct_0$ and $\ct_1$ are ciphertexts that encrypt two bits $x_0$ and $x_1$, respectively, then the resulting outcome $\ct =\vec G - \ct_0 \cdot \vec G^{-1}(\ct_1) \Mod{q}$ is an encryption of $\NAND(x_0,x_1) = 1 - x_0 \cdot x_1$, where $\vec G$ is the gadget matrix that converts a binary representation of a vector back to its original representation over the ring $\Z_q$. Moreover, the new ciphertext $\ct$ maintains the form of an $\LWE$ sample with respect to the same public key $\pk$, albeit for a new $\LWE$ secret and a new (non-necessarily Gaussian) noise term of bounded magnitude. This property is crucial, as knowledge of the secret key $\sk$ (i.e., a short trapdoor vector) still allows for the decryption of the ciphertext $\ct$ once a $\NAND$ gate has been applied.

The following result is implicit in the work of Mahadev~\cite[Theorem 5.1]{mahadev2018classical}.

\begin{theorem}[\cite{mahadev2018classical}]
Let $\lambda \in \N$ be the security parameter. Let $n \in \N$, let $q\geq 2$ be a prime modulus and $m \geq 2 n \log q$. Let $N = (n+1) \cdot \lceil \log q \rceil$ be an integer and let $L$ be an upper bound on the depth of the polynomial-sized Boolean circuit which is to be evaluated. Let $\alpha \in (0,1)$ be a ratio such that 
$$
 2 \sqrt{n} \leq \alpha q \leq \frac{q}{4(m+1)\cdot  N\cdot (N+1)^L}.
$$
Then, the scheme in \expref{Construction}{cons:Dual-Regev-FHE} is an $\INDCPA$-secure leveled fully homomorphic encryption scheme under the $\LWE_{n,q,\alpha q}^{(m+1)\times N}$ assumption.
\end{theorem}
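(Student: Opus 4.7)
The plan is to establish two properties: $\INDCPA$ security and correctness of homomorphic evaluation up to depth $L$. Security follows a standard two-step template that I would execute first. By the Leftover Hash Lemma (\expref{Lemma}{lem:LHL}), since $m \geq 2 n \log q$, the public key $\vec A = [\bar{\vec A} \mid \bar{\vec A} \cdot \bar{\vec x}]^T$ is statistically close to a uniform matrix in $\Z_q^{(m+1)\times n}$. Then, applying the decisional $\LWE_{n,q,\alpha q}^{(m+1)\times N}$ assumption column-by-column (via a simple hybrid over the $N$ columns of $\vec S$), each column of $\vec A \cdot \vec S + \vec E$ is computationally indistinguishable from uniform over $\Z_q^{m+1}$. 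The encryption mask $x \cdot \vec G$ is therefore hidden, which yields $\INDCPA$ security.

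Next, I would address correctness by bounding the error growth under homomorphic $\NAND$ evaluation. The core invariant is that any ciphertext $\vec C$ produced during evaluation satisfies $\sk^T \cdot \vec C = x \cdot \sk^T \cdot \vec G + \vec e^T \Mod{q}$ for some small error vector $\vec e \in \Z^N$, where $x \in \bit$ is the underlying plaintext. For a freshly generated ciphertext, $\vec e^T = \sk^T \cdot \vec E$, whose entries are bounded by $\|\sk\| \cdot \|\vec E\|_\infty = O(\sqrt{m} \cdot \alpha q \sqrt{m+1}) = O(m \alpha q)$ with overwhelming probability by Gaussian tail bounds. Applying $\NAND$ via $\vec C_{ij} = \vec G - \vec C_i \cdot \vec G^{-1}(\vec C_j)$ produces a new error of the form $-\vec e_i^T \cdot \vec G^{-1}(\vec C_j) - x_i \cdot \vec e_j^T$, which has magnitude at most $N \cdot \|\vec e_i\|_\infty + \|\vec e_j\|_\infty$, since $\vec G^{-1}(\vec C_j)$ is a $\bit$-valued $N \times N$ matrix. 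Unwinding this recursion for a circuit of depth $L$ gives an error magnitude bounded by $(N+1)^L$ times the initial error, i.e., $O(m (N+1)^L \alpha q)$.

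The final step is to verify that this accumulated error is strictly less than $q/4$, which is exactly what the hypothesis $\alpha q \leq q/(4(m+1) N (N+1)^L)$ guarantees (with some slack for the Gaussian tail). Decryption extracts $\sk^T \cdot \vec c_N$, where $\vec c_N$ is the $N$-th column of the final ciphertext and the $N$-th column of $\vec G$ equals $(0,\ldots,0,\lfloor q/2 \rfloor)$ (or a similar high-order encoding). Then $\sk^T \cdot \vec c_N = x \cdot \lfloor q/2 \rfloor + e_N \Mod{q}$ with $|e_N| < q/4$, so rounding recovers $x$ correctly with overwhelming probability.

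The main obstacle is the bookkeeping of error growth under composition of $\NAND$ gates, particularly verifying that the asymmetric bound (multiplying the left error by $N$ but only adding the right error once) indeed yields the claimed $(N+1)^L$ multiplicative blowup rather than something worse like $N^L \cdot L$, and tracking this uniformly over all wires of the circuit. Fortunately this analysis is carried out in detail by Gentry, Sahai and Waters~\cite{GSW2013} and adapted to the dual variant by Mahadev~\cite{mahadev2018classical}, so I would simply import their error-growth lemma rather than reprove it from scratch. The remainder of the argument is a direct parameter check against the hypothesis on $\alpha q$.
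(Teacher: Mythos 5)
Your proposal is correct and is consistent with the paper's treatment of this statement: the paper gives no proof of this theorem at all, importing it as implicit in Mahadev~\cite[Theorem 5.1]{mahadev2018classical}. Your sketch---$\INDCPA$ security via the Leftover Hash Lemma plus a column-wise hybrid over decisional $\LWE$, and correctness via the invariant $\sk^T \vec C = x\cdot \sk^T \vec G + \vec e^T$ with $(N+1)^L$ error growth checked against the bound $\alpha q \leq q/(4(m+1)N(N+1)^L)$---is precisely the standard argument of \cite{GSW2013,mahadev2018classical}, and since you defer the error-growth bookkeeping to those same sources, your route coincides with the one the paper relies on by citation.
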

Note that the Dual-Regev $\FHE$ scheme is \emph{leveled} in the sense that an apriori upper bound $L$ on the $\NAND$-depth of the circuit is required to set the parameters appropriately. We remark that a proper (non-leveled) $\FHE$ scheme can be obtained under an additional circular security assumption~\cite{BrakerskiVaikuntanathan2011}.

The leveled Dual-Regev $\FHE$ scheme inherits a crucial property from its public-key counterpart. Namely, in contrast to the $\FHE$ scheme in~\cite{GSW2013}, the ciphertext takes the form of a regular sample from the $\LWE$ distribution together with an additive shift $x \cdot \vec G$ that depends on the plaintext $x \in \bit$. In particular, if a Boolean circuit $C$ of polynomial $\NAND$-depth $L$ is applied to the ciphertext corresponding to a plaintext $x \in \bit^\ell$ in \expref{Construction}{cons:Dual-Regev-FHE}, then the resulting final ciphertext is of the form $\vec A \cdot \vec S + \vec E + C(x) \vec G$,
where $\vec S \in \Z_q^{n \times N}$, $\vec E \in \Z_q^{(m+1)\times N}$ and $\| \vec E \|_\infty \leq \alpha q \sqrt{(m+1)N} \cdot(N+1)^{L}$ (see~\cite{GSW2013} for details). 
Choosing $1/\alpha$ to be sub-exponential in $N$ as in~\cite{GSW2013}, we can therefore allow for homomorphic computations of arbitrary polynomial-sized Boolean circuits of $\NAND$-depth at most $L$.
It is easy to see that the decryption procedure of the leveled Dual-Regev $\FHE$ scheme
is successful as long as the cumulative error $\vec E$ satisfies the condition $\| \vec E\|_\infty \leq \frac{q}{4\sqrt{(m+1)N}}$.

This property is essential as it allows us to extend Dual-Regev $\PKE$ scheme with certified deletion towards a leveled $\FHE$ scheme, which we denote by $\FHECD$. Using Gaussian coset states, we can again encode Dual-Regev ciphertexts for the purpose of certified deletion while simultaneously preserving their cryptographic functionality. 

\paragraph{Dual-Regev leveled $\FHE$ with certified deletion.}

Let us now describe our (leveled) $\FHE$ scheme with certified deletion. We base our choice of parameters on the Dual-Regev $\FHE$ scheme of Mahadev~\cite{mahadev2018classical} which is a variant of the scheme due to Gentry, Sahai and Waters~\cite{GSW2013}.

\paragraph{Parameters.} Let $\lambda \in \N$ be the security parameter and let $n \in \N$. Let $L$ be an upper bound on the depth of the polynomial-sized Boolean circuit which is to be evaluated. We choose the following set of parameters for the Dual-Regev leveled $\FHE$ scheme (each parameterized by the security parameter $\lambda$).
\begin{itemize}
    \item a prime modulus $q \geq 2$.
    \item an integer $m \geq 2n \log q$.
    \item an integer $N = (n+1) \cdot \lceil \log q \rceil$.
     
     \item a noise ratio $\alpha\in (0,1)$ such that
$$
\sqrt{8(m+1)N}\leq \alpha q \leq \frac{q}{\sqrt{8}(m+1)\cdot  N\cdot (N+1)^L}.
$$
\end{itemize}

\begin{construction}[Dual-Regev leveled $\FHE$ scheme with certified deletion]\label{cons:FHE-cd}
Let $\lambda \in \N$ be a parameter and $\DualFHE = (\KeyGen,\Enc,\Dec,\Eval)$ be the scheme in \expref{Construction}{cons:Dual-Regev-FHE}.
The Dual-Regev (leveled) $\FHE$ scheme $\DualFHECD = (\KeyGen,\Enc,\Dec,\Eval,\Del,\Vrfy)$ with certified deletion is defined by:
\begin{description}
\item $\KeyGen(1^\lambda) \rightarrow (\pk,\sk):$ generate $(\pk,\sk) \leftarrow \DualFHE.\KeyGen(1^\lambda)$ and output $(\pk,\sk)$.
\item $\Enc(\pk,x) \rightarrow (\vk,\ket{\ct}):$ to encrypt a bit $x\in \bit$, parse $\vec A \in \Z_q^{(m+1) \times n} \leftarrow \pk$ and, for $i \in [N]$, run $(\ket{\psi_{\vec y_i}},\vec y_i) \leftarrow \mathsf{GenPrimal}(\vec A^T,1/\alpha)$ in \expref{Algorithm}{alg:GenPrimal}, where $\vec y_i \in \Z_q^n$, and output the pair
$$
\left(\vk \leftarrow (\vec A \in \Z_q^{(m+1) \times n},(\vec y_1|\dots|\vec y_N)  \in \Z_q^{n \times N}), \quad \ket{\ct} \leftarrow \vec X_q^{x \cdot \vec g_1 } \ket{\psi_{\vec y_1}} \otimes \dots \otimes \vec X_q^{x \cdot \vec g_N } \ket{\psi_{\vec y_N}} \right),
$$
where $(\vec g_1,\dots,\vec g_N)$ are the columns of the gadget matrix $\vec G \in \Z_q^{(m+1)\times N}$ in Eq.~\eqref{eq:matrix-G}.

\item $\Eval(C,\ket{\ct}) \rightarrow (\ket{\widetilde{\ct}},t_C)$: apply the Boolean circuit $C$ composed of $\NAND$ gates to the ciphertext $\ket{\ct}$ in system $C_{\mathsf{in}} = C_1 \cdots C_\ell$ as follows: For every gate $\NAND_{ij}$ in the circuit $C$ between a ciphertext pair in systems $C_i$ and $C_j$, repeat the following two steps:
\begin{itemize}
    \item apply $U_{\NAND}$ from \expref{Definition}{def:homomorphic-NAND-gate} to systems $C_i C_j$ of the ciphertext $\ct$ by appending an auxiliary system $C_{ij}$. This results in a new ciphertext state $\ct$ which contains the additional system $C_{ij}$.
    \item add the gate $\NAND_{ij}$ to the circuit transcript $t_C$. 
\end{itemize}
Output $(\ket{\widetilde{\ct}},t_C)$, where $\ket{\widetilde{\ct}}$ is the final post-evaluation state in systems $C_{\mathsf{in}} C_\aux C_{\mathsf{out}}$ and
\begin{itemize}
    \item $C_{\mathsf{in}} = C_1 \cdots C_\ell$ denotes the initial ciphertext systems of $\ket{\ct_1} \otimes \dots \otimes \ket{\ct_\ell}$.
    \item $C_\aux$ denotes all intermediate auxiliary ciphertext systems.
    \item $C_{\mathsf{out}}$ denotes the final ciphertext system corresponding to the output of the circuit $C$.
\end{itemize}

\item $\Dec(\sk,\ket{\ct}) \rightarrow \bit^\mu \, \mathbf{or} \, \bot:$ measure the ciphertext $\ket{\ct}$ in the computational basis to obtain an outcome $\vec C$ and output $x' \leftarrow \DualFHE.\Dec(\sk,\vec C)$.

\item $\Del(\ket{\ct}) \rightarrow \pi:$ measure $\ket{\ct}$ in the Fourier basis with outcomes $\pi = (\pi_1|\dots|\pi_N) \in \Z_q^{(m+1)\times N}$.

\item $\Extract\ip{\algo S(\ket{\widetilde{\ct}},t_C),\algo R(\sk)} \rightarrow (\rho,y)$ 
this is the following interactive protocol between a sender $\algo S$ with input $\ket{\widetilde{\ct}}$ in systems $C_{\mathsf{in}} C_\aux C_{\mathsf{out}}$ and transcript $t_C$, and a receiver $\algo R$ with input $\sk$:
\begin{itemize}
    \item $\algo S$ and $\algo R$ run the rewinding protocol $\Pi = \ip{\algo S(\ket{\widetilde{\ct}},t_C),\algo R(\sk)}$ in \expref{Protocol}{prot:rewinding}. 
    
    \item Once $\Pi$ is complete, $\algo S$ obtains a state $\rho$ in system $C_{\mathsf{in}}$ and $\algo R$ obtains a bit $y \in \bit$.
\end{itemize}

\item $\Vrfy(\vk,\pk,\pi) \rightarrow \bit:$ to verify the deletion certificate $\pi = (\pi_1|\dots|\pi_N) \in \Z_q^{(m+1)\times N}$, parse $(\vec A \in \Z_q^{(m+1) \times n},(\vec y_1|\dots|\vec y_N)  \in \Z_q^{n \times N}) \leftarrow \vk$ and output $\top$, if both $\vec A^T \cdot \pi_i = \vec y_i \Mod{q}$ and $\| \pi_i \| \leq \sqrt{m+1}/\sqrt{2}\alpha$ for every $i \in [N]$, and output $\bot$, otherwise.
\end{description}
\end{construction}

\begin{protocol}[Rewinding Protocol]\label{prot:rewinding} Let $\DualFHE = (\KeyGen,\Enc,\Dec,\Eval)$ be the Dual-Regev $\FHE$ scheme in \expref{Construction}{cons:Dual-Regev-FHE}. Consider the following interactive protocol $\Pi = \ip{\algo S(\rho,t_C),\algo R(\sk)}$ between a sender $\algo S$ which takes as input state $\rho$ in systems $C_{\mathsf{in}} C_\aux C_{\mathsf{out}}$ and a transcript  $t_C$ of a Boolean circuit $C$, as well as a receiver $\algo R$ which takes as input a secret key $\sk$.
  \begin{enumerate}
        \item $\algo S$ sends system $C_{\mathsf{out}}$ of the state $\rho$ associated with the encrypted output of the circuit $C$ to $\algo R$.
        \item $\algo R$ runs $U_{\DualFHE.\Dec_{\sk}}$ (with the key $\sk$ hard coded) to reversibly decrypt system $C_{\mathsf{out}}$, where
        $$
        U_{\DualFHE.\Dec_{\sk}}: \quad \ket{\vec C}_{C_{\mathsf{out}}} \otimes \ket{0}_{M} \rightarrow \ket{\vec C}_{C_{\mathsf{out}}} \otimes \ket{\DualFHE.\Dec_{\sk}(\vec C)}_{M},
        $$
        for any matrix $\vec C \in \Z_q^{(m+1)\times N}$. $\algo R$ then measures system $M$ to obtain a bit $y \in \bit$ (the supposed output of the Boolean circuit $C$). Afterwards, $\algo R$ applies $ U_{\DualFHE.\Dec_{\sk}}^\dag$, discards the ancillary system $M$, and sends back the post-measurement system $\widetilde{C_{\mathsf{out}}}$ of the resulting ciphertext $\widetilde{\rho}$ to $\algo S$.
        \item $\algo S$ repeats the following two steps in order to uncompute the systems $C_\aux \widetilde{C_{\mathsf{out}}}$ from the state $\widetilde{\rho}$: For every gate 
    $\NAND_{ij} \in t_C$, where $i$ and $j$ denote the respective ciphertext systems $C_i$ and $C_j$, in decreasing order starting from the last gate in the circuit transcript $t_C$:
    \begin{itemize}
        \item $\algo S$ applies $U_{\NAND}^\dag$ from \expref{Definition}{def:homomorphic-NAND-gate} to systems $C_i C_j C_{ij}$ of $\widetilde{\rho}$ to uncompute system $C_{ij}$.
        \item $\algo S$ repeats the procedure starting from the new outcome state $\widetilde{\rho}$.
        \end{itemize}
    \end{enumerate}
\end{protocol}

Let us now define how to perform the homomorphic $\NAND$ gate in \expref{Construction}{cons:FHE-cd} in more detail.

\begin{definition}[Homomorphic $\NAND$ gate]\label{def:homomorphic-NAND-gate}
Let $q \geq 2$ be a modulus, and let $m$ and $N$ be integers. Let $\vec X,\vec Y,\vec Z \in \Z_q^{(m+1) \times N}$ be arbitrary matrices. We define the homomorphic $\NAND$ gate as the unitary
$$
U_\NAND: \quad \ket{\vec X}_X \otimes \ket{\vec Y}_Y \otimes \ket{\vec Z}_Z \quad \rightarrow \quad  \ket{\vec X}_X \otimes \ket{\vec Y}_Y \otimes \ket{\vec Z + \vec G - \vec X \cdot \vec G^{-1}(\vec Y) \Mod{q}}_Z,
$$
where $\vec G \in \Z_q^{(m+1) \times N}$ is the gadget matrix in Eq.~\eqref{eq:matrix-G}.
\end{definition}

To illustrate the action of our homomorphic $\NAND$ gate, we consider a simple example.

\paragraph{Example.} Consider a pair of two ciphertexts $\ket{\ct_i} \otimes \ket{\ct_j}$ which encrypt two bits $x_i,x_j \in \bit$ as in \expref{Construction}{cons:FHE-cd}. Let $U_{\NAND_{ij}}$ denote the homomorphic $\NAND$ gate applied to systems $C_i$ and $C_j$. Then,
\begin{align*}
U_{\NAND_{ij}}: \quad \ket{\ct_i}_{C_i} \otimes \ket{\ct_j}_{C_j} \otimes \, \ket{\vec 0}_{C_{ij}} \quad \rightarrow \quad  \ket{\ct_{ij}}_{C_i C_jC_{ij}}.
\end{align*}
Here, $\ket{\ct_{ij}}$ is the resulting ciphertext in systems $C_i C_jC_{ij}$. Note that $U_{\NAND_{ij}}$ is reversible in the sense that
\begin{align*}
U_{\NAND_{ij}}^\dag: \quad  \ket{\ct_{ij}}_{C_i C_jC_{ij}} \quad \rightarrow \quad \ket{\ct_i}_{C_i} \otimes \ket{\ct_j}_{C_j} \otimes \, \ket{\vec 0}_{C_{ij}}.
\end{align*}
Let us now analyze how $U_\NAND$ acts on the basis states of a pair of ciphertexts $\ket{\ct_i} \otimes \ket{\ct_j}$ that encode $\LWE$ samples as in \expref{Construction}{cons:FHE-cd}. In the following, $\vec E_i,\vec E_j\sim D_{\Z_q^{(m+1)\times N}, \,\frac{\alpha q}{\sqrt{2}}}$ have a (truncated) discrete Gaussian distribution as part of the superposition. Then,
\begin{align*}
U_{\NAND_{ij}} :\quad &\ket{\vec A \vec S_i + \vec E_i  + x_i \vec G }_{C_i}   \otimes \ket{\vec A  \vec S_j + \vec E_j + x_j \vec G }_{C_j} \otimes \, \ket{\vec 0}_{C_{ij}}\\
\rightarrow \,\, &\ket{\vec A \vec S_i + \vec E_i + x_i \vec G }_{C_i}   \otimes \ket{\vec A  \vec S_j + \vec E_j  + x_j \vec G }_{C_j} \otimes \, \ket{\vec A  \vec S_{ij} + \vec E_{ij} + (1 - x_i x_j) \vec G }_{C_{ij}},
\end{align*}
where introduced the following matrices
\begin{align*}
\vec S_{ij} &:= - \vec S_i \cdot \vec G^{-1} ( \vec A \vec S_j + \vec E_j  + x_j \vec G) - x_i \vec S_i \,\, \Mod{q} \\
\vec E_{ij} &:= - \vec E_i \cdot \vec G^{-1} ( \vec A \vec S_j + \vec E_j + x_j \vec G) - x_i \vec E_j \,\,\Mod{q}.
\end{align*}
Because the initial error terms have the property that $\|\vec E_i\|_\infty,\|\vec E_j\|_\infty \leq \alpha q \sqrt{(m+1)N/2}$, it follows that the resulting error after a single $\NAND$ gate is at most (see also \cite{GSW2013, mahadev2018classical} for more details)
$$
\|\vec E_{ij}\|_\infty \leq \alpha q \sqrt{\frac{(m+1)N}{2}} \cdot (N+1).
$$
In other words, the cumulative error term remains short relative to the modulus $q$ after every application of a homomorphic $\NAND$ gate, exactly as in the Dual-Regev $\FHE$ scheme of Mahadev~\cite{mahadev2018classical}.

\begin{figure}[h!]
\centering
    \begin{circuitikz}[> = latex, scale = 2.7]
        \draw
        (0,0) node[rotate=90, nand port,scale=3] (nand1) {};
        \node at (-0.3,-1.7) {$\ket{\ct_1}_{C_1}$};
        \node at (0.35,-1.7) {$\ket{\ct_2}_{C_2}$};
        
        \draw
        (2,0) node[rotate=90, nand port,scale=3] (nand2) {};

        \node at (1.7,-1.7) {$\ket{\ct_3}_{C_3}$};
        \node at (2.35,-1.7) {$\ket{\ct_4}_{C_4}$};
    
        \draw
        (1,2) node[rotate=90, nand port,scale=3] (nand3) {};
        
        \draw (nand1.out) -- (nand3.in 1);
        \draw (nand2.out) -- (nand3.in 2);

        \node at (-0.25,0.5) {$\ket{\ct_{12}}_{C_1 C_2 C_{12}}$};
        
        \node at (2.25,0.5) {$\ket{\ct_{34}}_{C_3 C_4 C_{34}}$};
        
        \node at (1,2.38) {$\ket{\ct_{12,34}}_{C_1 C_ 2 C_3 C_4 C_{12} C_{34} C_{12,34}}$};

        \node at (0,-0.7) {$U_\NAND$};
        
        \node at (1,1.35) {$U_\NAND$};
        
        \node at (2,-0.7) {$U_\NAND$};
        
         \node at (3.7,-1.6) {$C_{\mathsf{in}}= C_1 C_2 C_3 C_4$};
         
        \node at (3.7,0.5) {$C_\aux = C_{12} C_{34}$};
        
        \node at (3.7,2.38) {$C_{\mathsf{out}} = C_{12,34}$};
    \end{circuitikz}
    \caption{Homomorphic evaluation of a Boolean circuit $C$ composed entirely of three $\NAND$ gates. Here, the input is the quantum ciphertext $\ket{\ct_1} \otimes \ket{\ct_2} \otimes \ket{\ct_3} \otimes \ket{\ct_4}$ which corresponds to an encryption of the plaintext $x = (x_1,\dots,x_4) \in \bit^4$ as in \expref{Construction}{cons:FHE-cd}.
    The resulting ciphertext $\ket{\ct_{12,34}}$ lives on a system $C_1 C_ 2 C_3 C_4 C_{12} C_{34} C_{12,34}$ of which the last system $C_{12,34}$ contains an encryption of $C(x) \in \bit$. 
    }\label{fig:example-quantum-FHE}
\end{figure}

\subsection{Rewinding lemma}

Notice that the procedure $\DualFHECD.\Eval$ in \expref{Construction}{cons:FHE-cd} produces a highly entangled state since the unitary operation $U_\NAND$ induces entanglement between the Gaussian noise terms. In the next lemma, we show that it is possible to \emph{rewind} the evaluation procedure to be able to prove data deletion to a client.

\begin{lemma}[Rewinding lemma]\label{lem:rewinding} 
Let $\lambda \in \N$ be the security parameter. Let $n \in \N$, let $q\geq 2$ be a prime modulus and $m \geq 2 n \log q$. Let $N = (n+1) \cdot \lceil \log q \rceil$ be an integer and let $L$ be an upper bound on the depth of the polynomial-sized Boolean circuit which is to be evaluated. Let $\alpha \in (0,1)$ be a ratio such that      
$$
\sqrt{8(m+1)N}\leq \alpha q \leq \frac{q}{\sqrt{8}(m+1)\cdot  N\cdot (N+1)^L}.
$$
Let $\DualFHECD = (\KeyGen,\Enc,\Dec,\Eval,\Del,\Vrfy)$ be the Dual-Regev (leveled) $\FHE$ scheme with certified deletion in \expref{Construction}{cons:FHE-cd} and let $\Pi$ be the interactive protocol in \expref{Protocol}{prot:rewinding}. Then, the following holds for any parameter $\lambda \in \N$, plaintext $x \in \bit^\ell$ and any polynomial-sized Boolean circuit $C$: 

After the interactive protocol  $\Pi = \ip{\algo S(\ket{\widetilde{\ct}},t_C),\algo R(\sk)}$ between the sender $\algo S$ and receiver $\algo R$ is complete, the sender $\algo S$ is in possession of a quantum state $\rho$ in system $C_{\mathsf{in}}$ that satisfies
$$
\|\rho- \proj{\ct}\|_\tr \leq \negl(\lambda),
$$
where $(\ket{\widetilde{\ct}},t_C) \leftarrow \DualFHECD.\Eval(C,\ket{\ct})$ is the post-evaluation state $\ket{\widetilde{\ct}}$ in systems $C_{\mathsf{in}} C_\aux C_{\mathsf{out}}$ and where $\ket{\ct} \from \DualFHECD.\Enc(\pk,x)$ is the initial state for $(\pk,\sk) \from \DualFHECD.\KeyGen(1^\lambda)$.
\end{lemma}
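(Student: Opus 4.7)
The plan is to apply the ``Almost As Good As New'' lemma (\expref{Lemma}{lem:almost}) to argue that the reversible decryption performed by $\algo R$ disturbs the post-evaluation state $\ket{\widetilde{\ct}}$ by only a negligible amount, and then to use the fact that each $U_\NAND$ is unitary to argue that $\algo S$ can uncompute the evaluation circuit exactly from any state sufficiently close to $\ket{\widetilde{\ct}}$.

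First, I will verify that correctness of the underlying Dual-Regev $\FHE$ scheme in \expref{Construction}{cons:Dual-Regev-FHE} carries over branch-wise to our superposition-based scheme. By \expref{Corollary}{cor:switching} and \expref{Lemma}{lem:switching}, each initial ciphertext column $\ket{\psi_{\vec y_j}}$ is, up to negligible trace distance, a superposition over $\LWE$ samples whose truncated Gaussian error vector has $\ell_\infty$-magnitude deterministically bounded by $\alpha q \sqrt{(m+1)/2}$. Because each $U_\NAND$ only coherently implements the classical reversible Dual-Regev homomorphic $\NAND$ operation, every branch of the superposition after a depth-$L$ circuit yields an output ciphertext whose final column has error bounded by $\alpha q \sqrt{(m+1)N/2}\cdot(N+1)^L$, and our choice of noise ratio $\alpha$ forces this quantity to be strictly less than $q/4$. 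Hence, reversibly decrypting the output system $C_{\mathsf{out}}$ via $U_{\DualFHE.\Dec_{\sk}}$ yields the bit $C(x)$ on every branch, so that measuring the ancilla $M$ returns $C(x)$ with probability $1 - \negl(\lambda)$ (using \expref{Lemma}{lem:TD_inequalities} to transfer the ideal-state bound to the real state).

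Next, I apply \expref{Lemma}{lem:almost} to Step~2 of \expref{Protocol}{prot:rewinding}. Interpreting the pair $\bigl(U_{\DualFHE.\Dec_{\sk}},\,\{\proj{C(x)}_M,\,\id_M - \proj{C(x)}_M\}\bigr)$ as a measurement of the state $\ket{\widetilde{\ct}}$ with ancilla $M$, the outcome $C(x)$ occurs with probability $1-\negl(\lambda)$ by the previous paragraph. The ``Almost As Good As New'' lemma then implies that the state $\widetilde{\rho}$ obtained after $\algo R$ applies $U_{\DualFHE.\Dec_{\sk}}$, measures $M$, applies $U_{\DualFHE.\Dec_{\sk}}^\dag$, and discards $M$ satisfies $\|\widetilde{\rho} - \proj{\widetilde{\ct}}\|_\tr \leq \negl(\lambda)$.

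Finally, the sequence of $U_\NAND^\dag$ gates applied in reverse circuit order by $\algo S$ in Step~3 inverts $\DualFHECD.\Eval$ exactly as a unitary on $C_{\mathsf{in}} C_\aux \widetilde{C_{\mathsf{out}}}$, since every auxiliary subsystem was initialized to $\ket{\vec 0}$ during evaluation. If the state at the start of Step~3 were exactly $\proj{\widetilde{\ct}}$, then after uncomputation and tracing out the auxiliary and output registers, the reduced state on $C_{\mathsf{in}}$ would equal $\proj{\ct}$. Because trace distance is non-increasing under $\CPTP$ maps, combining this with the bound from the previous paragraph yields $\|\rho - \proj{\ct}\|_\tr \leq \negl(\lambda)$, as claimed. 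The main technical obstacle is to argue that the error analysis survives the superposition structure: because $U_\NAND$ coherently entangles two Gaussian-error registers into a third register, one must show that the \emph{worst-case} $\ell_\infty$-error across every branch of the superposition remains bounded after $L$ layers of gates. Working with truncated Gaussians ensures this branch-wise bound holds deterministically, up to negligible trace distance from the ideal (non-truncated) primal Gaussian states.
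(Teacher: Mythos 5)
Your proposal is correct and follows essentially the same route as the paper's proof: establish that the reversible decryption of system $C_{\mathsf{out}}$ succeeds with overwhelming probability, invoke the ``Almost As Good As New'' lemma (\expref{Lemma}{lem:almost}) to bound the disturbance of $\ket{\widetilde{\ct}}$ by a negligible quantity, and then conclude via the exact unitary inversion of the $U_\NAND$ gates together with the (unitary) invariance and monotonicity of trace distance. The only difference is cosmetic: you re-derive the branch-wise error analysis inline (quoting the loose threshold $q/4$ where the actual decryption condition is $\|\vec E\|_\infty \leq \frac{q}{4\sqrt{(m+1)N}}$, which your parameter regime does guarantee), whereas the paper simply cites its correctness lemma (\expref{Lemma}{lem:FHE-CD-correctness-decryption}) for this step.
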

\begin{proof}
Let $\lambda \in \N$, $x \in \bit^\ell$ be a plaintext and $C$ be any Boolean circuit of $\NAND$-depth $L=\poly(\lambda)$. Let $(\ket{\widetilde{\ct}},t_C) \leftarrow \DualFHECD.\Eval(C,\ket{\ct})$ be the post-evaluation state $\ket{\widetilde{\ct}}$ in systems $C_{\mathsf{in}} C_\aux C_{\mathsf{out}}$ with circuit transcript $t_C$ and let $\rho$ be the outcome of the interactive protocol $\Pi = \ip{\algo S(\ket{\widetilde{\ct}},t_C),\algo R(\sk)}$. Recall that, in \expref{Lemma}{lem:FHE-CD-correctness-decryption}, we established that there exists a negligible $\eps(\lambda)$ such that $\DualFHE.\Dec_{\sk}$ decrypts system $C_{\mathsf{out}}$ of $\ket{\widetilde{\ct}}$ with probability at least $1 - \eps$. By the ''Almost As Good As New Lemma`` (\expref{Lemma}{lem:almost}), performing the operation $U_{\DualFHE.\Dec_{\sk}}$, measuring the ancillary register $M$ and rewinding the computation, results in a mixed state $\widetilde{\rho}$ that is within trace distance $\sqrt{\eps}$ of the post-evaluation state $\ket{\widetilde{\ct}}$. Notice that, by reversing the sequence $U_{t_C}$ of homomorphic $\NAND$ gates according to the transcript $t_C$ with respect to $\ket{\widetilde{\ct}}$, we recover the initial ciphertext $\proj{\ct} = U_{t_C}^\dag \proj{\widetilde{\ct}} \,U_{t_C}$ in system $C_{\mathsf{in}}$. By definition, we also have that $\rho = U_{t_C}^\dag \widetilde{\rho} \,U_{t_C}$. Therefore,
$$
\|\rho - \proj{\ct}\|_\tr = \|U_{t_C}^\dag \widetilde{\rho} \,U_{t_C} - U_{t_C}^\dag \proj{\widetilde{\ct}} \,U_{t_C}\|_\tr = \|\widetilde{\rho} - \proj{\widetilde{\ct}}\|_\tr \leq \sqrt{\eps(\lambda)},
$$
where we used that the trace distance is unitarily invariant. Since $\eps(\lambda) = \negl(\lambda)$, this proves the claim.
\end{proof}

\paragraph{Proof of correctness.}
Let us now verify the correctness of decryption and verification of \expref{Construction}{cons:FHE-cd}.

\begin{lemma}[Compactness and full homomorphism of $\DualFHECD$]\label{lem:FHE-CD-correctness-decryption}
Let $\lambda \in \N$ be the security parameter. Let $n \in \N$, let $q\geq 2$ be a prime and $m \geq 2 n \log q$. Let $N = (n+1) \cdot \lceil \log q \rceil$ and let $L$ be an upper bound on the depth of the polynomial-sized Boolean circuit which is to be evaluated. Let $\alpha \in (0,1)$ be a ratio with    
$$
\sqrt{8(m+1)N}\leq \alpha q \leq \frac{q}{\sqrt{8}(m+1)\cdot  N\cdot (N+1)^L}.
$$
Then, the scheme $\DualFHECD = (\KeyGen,\Enc,\Dec,\Eval,\Del,\Vrfy)$ in \expref{Construction}{cons:FHE-cd} is a compact and fully homomorphic encryption scheme with certified deletion. In other words, for any efficienty (in $\lambda \in \N$) computable circuit $C: \bit^\ell \rightarrow \bit$ and any set of inputs $x = (x_1,\dots,x_\ell) \in \bit^\ell$, it holds that:
$$
\Pr\left[\DualFHECD.\Dec(\sk,\ket{\widetilde{\ct}}) \neq C(x_1,\dots,x_\ell)
\, \bigg| \, \substack{
(\pk,\sk) \leftarrow \DualFHECD.\KeyGen(1^\lambda,1^L)\\
(\vk,\ket{\ct}) \leftarrow \DualFHECD.\Enc(\pk,x)\\
(\ket{\widetilde{\ct}},t_C)\from \DualFHECD.\Eval(C,\ket{\ct}, \pk)
}\right] \leq \negl(\lambda).
$$
\end{lemma}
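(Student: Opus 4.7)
The plan is to reduce correctness of $\DualFHECD$ to the correctness of the classical Dual-Regev leveled $\FHE$ scheme $\DualFHE$ from \expref{Construction}{cons:Dual-Regev-FHE}, by exhibiting the quantum ciphertext as a coherent superposition of classical Dual-Regev ciphertexts that is preserved (up to the expected noise growth) by the unitary $U_\NAND$. First, I would use \expref{Corollary}{cor:switching} together with \expref{Lemma}{lem:switching} to rewrite each initial component $\ket{\ct_i} = \vec X_q^{x\cdot \vec g_i}\ket{\psi_{\vec y_i}}$ as (negligibly close to) a Gaussian-weighted superposition of $\LWE$ samples shifted by $x\cdot \vec g_i$, so that the full initial ciphertext $\ket{\ct}$ in system $C_{\mathsf{in}}$ is within negligible trace distance of
\[
\sum_{\vec S \in \Z_q^{n\times N}}\sum_{\vec E\in \Z_q^{(m+1)\times N}} \rho_{\alpha q}(\vec E)\,\omega_q^{-\Tr[\vec S^T \vec Y]}\, \ket{\vec A\cdot \vec S + \vec E + x\cdot \vec G}.
\]
By the truncation bound \expref{Lemma}{lem:tailboundII}, we may restrict the superposition to error matrices with $\|\vec E\|_\infty \leq \alpha q\sqrt{(m+1)N/2}$ at a negligible cost in trace distance.

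Next, I would show by induction on the number of homomorphic gates that, after applying $\Eval(C,\ket{\ct})$, the joint state in systems $C_{\mathsf{in}} C_\aux C_{\mathsf{out}}$ is (up to negligible trace distance) a superposition over tuples of matrices, each entry of which is a valid classical Dual-Regev ciphertext of the corresponding intermediate wire value of the circuit. The induction step follows directly from the calculation already spelled out in the example after \expref{Definition}{def:homomorphic-NAND-gate}: applying $U_\NAND$ to basis states of the form $\ket{\vec A\vec S_i+\vec E_i+x_i \vec G}\otimes \ket{\vec A\vec S_j+\vec E_j+x_j\vec G}\otimes\ket{\vec 0}$ produces, in the fresh register, exactly an $\LWE$ sample $\vec A\vec S_{ij}+\vec E_{ij}+(1-x_ix_j)\vec G$ with bounded error $\|\vec E_{ij}\|_\infty \leq \|\vec E_i\|_\infty \cdot(N+1)+\|\vec E_j\|_\infty$. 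After at most $L$ layers of $\NAND$ gates, the infinity norm of the final error in system $C_{\mathsf{out}}$ is bounded by $\alpha q\sqrt{(m+1)N/2}\cdot (N+1)^L$, which by our parameter constraint $\alpha q\leq q/(\sqrt{8}(m+1)N(N+1)^L)$ is strictly less than $q/4$.

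Finally, I would invoke correctness of classical Dual-Regev $\FHE$ decryption: measuring system $C_{\mathsf{out}}$ in the computational basis yields some matrix $\vec C$ of the form $\vec A\vec S^*+\vec E^*+C(x)\vec G\pmod q$ with $\|\vec E^*\|_\infty < q/4$; then $\sk^T\cdot\vec c_N =C(x)\cdot \lfloor q/2\rfloor + (\vec E^*)^T\cdot\sk$, whose absolute value in the representative interval $(-q/2,q/2]$ lies within $q/4$ of $C(x)\cdot\lfloor q/2\rfloor$ by the bound on $\|\vec E^*\|_\infty$ and on $\|\sk\|_\infty\leq 1$. Hence rounding recovers $C(x)$ with overwhelming probability. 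Compactness is immediate from the description of $\Dec$, which only inspects the $N$-th column of a single $(m+1)\times N$ matrix using $\sk$, independently of the circuit $C$. The main conceptual obstacle is the inductive step tracking the noise growth coherently across all branches of the superposition; fortunately this is essentially identical to the classical analysis of \cite{GSW2013,mahadev2018classical}, since $U_\NAND$ acts as a permutation on computational basis states and thus the noise bound holds uniformly across the superposition.
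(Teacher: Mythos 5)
Your proposal follows essentially the same route as the paper's proof: view the quantum ciphertext (via \expref{Corollary}{cor:switching} and truncation) as a Gaussian-weighted superposition of classical Dual-Regev/GSW ciphertexts, observe that $U_\NAND$ acts as a permutation of computational basis states implementing exactly the classical homomorphic $\NAND$, so the classical noise-growth analysis of \cite{GSW2013,mahadev2018classical} applies uniformly across the superposition, bound the cumulative error after depth $L$ by $\alpha q\sqrt{(m+1)N/2}\,(N+1)^L$, and conclude from the parameter constraint; compactness is immediate since $\Dec$ inspects only the final column of a single matrix. The paper's proof is terser -- it splits into the cases $t_C=\emptyset$ and $t_C\neq\emptyset$ and simply cites the per-gate error computation given after \expref{Definition}{def:homomorphic-NAND-gate} instead of an explicit induction -- but the substance is identical. (Minor point: the paper's formulas give the per-gate recursion $\|\vec E_{ij}\|\le N\|\vec E_i\|+\|\vec E_j\|\le (N+1)\max\{\|\vec E_i\|,\|\vec E_j\|\}$; your stated $(N+1)\|\vec E_i\|+\|\vec E_j\|$ would yield $(N+2)^L$, not the $(N+1)^L$ you then use.)

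One step of your final rounding is too loose as written: you bound $|(\vec E^*)^T_{\cdot N}\cdot\sk|$ using only $\|\vec E^*\|_\infty < q/4$ and $\|\sk\|_\infty\le 1$. The inner product aggregates $m+1$ coordinates, so in the worst case it is as large as $(m+1)\|\vec E^*\|_\infty$, and with entries near $q/4$ this wraps around modulo $q$ many times; the threshold for the entrywise error bound must therefore be far below $q/4$. Your own derivation in fact yields the much stronger cumulative bound $\alpha q\sqrt{(m+1)N/2}(N+1)^L\le \frac{q}{4\sqrt{(m+1)N}}$ (this is exactly the decryption condition the paper states), but even this combined with a plain H\"older bound gives $(m+1)\cdot\frac{q}{4\sqrt{(m+1)N}}=\frac{q}{4}\sqrt{(m+1)/N}$, which can exceed $q/4$ since $m\geq 2n\log q$ while $N=(n+1)\lceil\log q\rceil$, i.e. typically $m+1>N$. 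The clean fix, as in the classical analyses, is to track column $\ell^2$ norms of the error through the gates and apply Cauchy--Schwarz with $\|\sk\|_2\le\sqrt{m+1}$, which gives $|(\vec e^*_N)^T\sk|\le \sqrt{m+1}\cdot\frac{q}{4\sqrt{(m+1)N}}=\frac{q}{4\sqrt{N}}<\frac{q}{4}$, after which rounding recovers $C(x)$ as you intend. So the gap is a fixable quantitative slip in the last inequality rather than a flaw in the approach, which otherwise matches the paper's.
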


\begin{proof}
Let $\ket{\ct}$ be the ciphertext output by $\DualFHECD.\Enc(\pk,x)$, where $x \in \bit^\ell$ denotes the plaintext, and let $(\ket{\widetilde{\ct}},t_C) \leftarrow \DualFHECD.\Eval(C,\ket{\ct})$ be the output of the evaluation procedure.
Let us first consider the case when $t_C = \emptyset$, i.e. not a single $\NAND$ gate has been applied to the ciphertext. In this case, the claim follows from the fact that the truncated discrete Gaussian $D_{\Z_q^{(m+1)\times N},\frac{\alpha q}{\sqrt{2}}}$ is supported on $\{\vec X \in \Z_q^{(m+1)\times N} \, : \, \|\vec X\|_\infty \leq \alpha q\sqrt{N(m+1)/2}\}$.
Recall that $\DualFHECD.\Dec(\sk,\ket{\widetilde{\ct}})$ measures the ciphertext $\ket{\widetilde{\ct}}$ in the computational basis with outcome $\vec C= (\vec C_1,\dots,\vec C_\ell)$, where $\vec C_i \in \Z_q^{(m+1)\times N}$ is a matrix, and outputs $x' \leftarrow \DualFHE.\Dec(\sk,\vec C)$. By our choice of parameters, each error term satisfies
$$\| \vec E_i\|_\infty \leq \alpha q \sqrt{\frac{N(m+1)}{2}} < \frac{q}{4\sqrt{(m+1)N}}, \quad \forall i \in [\ell].$$ Hence, decryption correctness is preserved if $t_C = \emptyset$. Let us now consider the case when $t_C \neq \emptyset$, i.e. the Boolean circuit $C$ consists of at least one $\NAND$ gate which has been applied to the ciphertext $\ket{\ct}$. In this case, the cumulative error in system $C_{\mathsf{out}}$ after $L$ applications of $U_\NAND$ in \expref{Definition}{def:homomorphic-NAND-gate} is at most $\alpha q \sqrt{(m+1)N/2}(N+1)^L$, which is less than $\frac{q}{4\sqrt{(m+1)N}}$ by our choice of parameters. Therefore, the procedure $\DualFHE.\Dec_\sk$ decrypts a computational basis state in system $C_{\mathsf{out}}$ of the state $\ket{\widetilde{\ct}}$ correctly with probability at least $1 - \negl(\lambda)$. Furthermore, because the procedure $\DualFHECD.\Dec$ is independent of the circuit $C$ and its depth $L$, the scheme $\DualFHECD$ is compact. This proves the claim.
\end{proof} 

Let us now verify the correctness of verification of the scheme $\DualFHECD$ in \expref{Construction}{cons:FHE-cd} according to \expref{Definition}{def:correctness-verification-HE}. We show the following.

\begin{lemma}[Correctness of verification]\label{lem:FHE-CD-correctness-verification}
Let $\lambda \in \N$ be the security parameter. Let $n \in \N$, let $q\geq 2$ be a prime modulus and $m \geq 2 n \log q$. Let $N = (n+1) \cdot \lceil \log q \rceil$ be an integer and let $L$ be an upper bound on the depth of the polynomial-sized Boolean circuit which is to be evaluated. Let $\alpha \in (0,1)$ be a ratio with
$$
\sqrt{8(m+1)N}\leq \alpha q \leq \frac{q}{\sqrt{8}(m+1)\cdot  N\cdot (N+1)^L}.
$$
Then, the Dual-Regev $\FHE$ scheme $\DualFHECD = (\KeyGen,\Enc,\Dec,\Eval,\Del,\Vrfy)$ with certified deletion in \expref{Construction}{cons:FHE-cd} satisfies verification correctness. In other words, for any $\lambda \in \N$, any plaintext $x \in \bit^\ell$ and any polynomial-sized Boolean circuit $C$ entirely composed of $\NAND$ gates:
$$
\Pr \left[\Verify(\vk, \pi) =  1 \, \bigg| \, \substack{
    (\pk,\sk) \leftarrow \KeyGen(1^\lambda)\\
    (\vk,\ket{\ct}) \leftarrow \Enc(\pk,x)\\
    \pi \from \Del(\ket{\ct})
    }\right] \geq 1- \negl(\lambda).
$$
\end{lemma}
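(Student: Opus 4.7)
My plan is to reduce verification correctness to the single-column case, use the duality lemma to reinterpret a Fourier basis measurement of the shifted primal Gaussian state as a computational basis measurement of the shifted dual Gaussian state, and then appeal to the tail bound of \expref{Lemma}{lem:tailboundII} together with a union bound over the $N$ columns.

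First, I would observe that the public key $\vec A \in \Z_q^{(m+1)\times n}$ is generated via $\vec A = [\bar{\vec A} \mid \bar{\vec A}\bar{\vec x} \Mod{q}]^T$ with $\bar{\vec A}\rand \Z_q^{n \times m}$ and $\bar{\vec x}\rand\bit^m$. By the Leftover Hash Lemma (\expref{Lemma}{lem:LHL}), the distribution of $\vec A$ is within negligible total variation distance of the uniform distribution on $\Z_q^{(m+1) \times n}$, and by \expref{Lemma}{lem:full-rank} the columns of $\vec A^T$ generate $\Z_q^n$ with overwhelming probability. Hence we may assume we are in the full-rank regime, so that \expref{Corollary}{cor:switching} and \expref{Lemma}{lem:tailboundII} apply to $\vec A^T$.

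Next, I would exploit that the ciphertext is a tensor product
$\ket{\ct} = \bigotimes_{i=1}^{N} \vec X_q^{x \cdot \vec g_i}\ket{\psi_{\vec y_i}}$, so that the Fourier basis measurement in $\Del$ acts independently on each column. Fix $i\in [N]$; since $\sqrt{8(m+1)} \le 1/\alpha \le q/\sqrt{8(m+1)}$, the hypotheses of \expref{Lemma}{lem:switching} and \expref{Corollary}{cor:switching} are satisfied, and therefore
\[
\FT_q\, \vec X_q^{x \cdot \vec g_i}\ket{\psi_{\vec y_i}} \,\,\approx_\eps\,\, \vec Z_q^{x \cdot \vec g_i}\ket{\hat\psi_{\vec y_i}}
\]
for some negligible $\eps(\lambda)$. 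Thus the distribution of the $i$-th block of $\pi$ is within negligible total variation distance of the distribution obtained by measuring $\vec Z_q^{x\cdot\vec g_i}\ket{\hat\psi_{\vec y_i}}$ in the computational basis. Because $\vec Z_q^{x\cdot\vec g_i}$ is diagonal in the computational basis, it induces only a global phase on each basis state and hence does not alter measurement probabilities. Consequently, each $\pi_i$ is (up to negligible statistical distance) drawn according to the outcome of measuring the plain dual Gaussian state $\ket{\hat\psi_{\vec y_i}}$.

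Finally, measuring $\ket{\hat\psi_{\vec y_i}}$ in the computational basis yields an outcome supported on $\Lambda_q^{\vec y_i}(\vec A^T)$, automatically satisfying the syndrome equation $\vec A^T\cdot \pi_i = \vec y_i \Mod{q}$. Moreover, as shown in the proof of \expref{Theorem}{thm:pseudorandom-SLWE}, the induced distribution is negligibly close to the discrete Gaussian $D_{\Lambda_q^{\vec y_i}(\vec A^T), 1/(\sqrt{2}\alpha)}$. \expref{Lemma}{lem:tailboundII} then gives $\|\pi_i\| \le \sqrt{m+1}/(\sqrt{2}\alpha)$ except with negligible probability, which is precisely the norm bound checked by $\Vrfy$. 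A union bound over the $N=\poly(\lambda)$ columns yields the claim. The only mildly delicate step will be bookkeeping of the several negligible statistical errors (the $\eps$ from \expref{Corollary}{cor:switching}, the truncation error in the tail bound, and the distance of $\vec A$ from uniform), but each is negligible and there are polynomially many of them, so the overall error remains negligible.
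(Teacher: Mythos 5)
Your proposal is correct and follows essentially the same route as the paper's own proof: Leftover Hash Lemma plus \expref{Lemma}{lem:full-rank} to justify the full-rank regime, \expref{Corollary}{cor:switching} to rewrite the Fourier-transformed ciphertext as (phase-decorated) dual Gaussian states, and \expref{Lemma}{lem:tailboundII} to get the syndrome equation and the norm bound $\|\pi_i\| \leq \sqrt{m+1}/\sqrt{2}\alpha$ up to negligible error. Your additional explicit steps---noting that the diagonal $\vec Z_q^{x\cdot\vec g_i}$ phases do not affect computational-basis statistics, the union bound over the $N$ columns, and the bookkeeping of negligible errors---are details the paper leaves implicit, not a different argument.
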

\begin{proof}
Consider a bit $x \in \bit$ and a public key $\pk$ given by $\vec A = [\bar{\vec A} | \bar{\vec A} \cdot \bar{\vec x} \Mod{q}] \in \Z_q^{(m+1)\times n}$, for $\bar{\vec x} \rand \bit^{m}$. By the Leftover Hash Lemma (\expref{Lemma}{lem:LHL}), the distribution of $\vec A$ is within negligible total variation distance of the uniform distribution over $\Z_q^{(m+1) \times n}$. \expref{Lemma}{lem:full-rank} implies that the columns of $\vec A$ generate $\Z_q^n$ with overwhelming probability.
We consider the ciphertext $\ket{\ct}$ output by $\Enc(\pk,x)$, where
$$
\ket{\ct} \leftarrow \vec X_q^{x \cdot \vec g_1 } \ket{\hat\psi_{\vec y_1}} \otimes \dots \otimes \vec X_q^{x \cdot \vec g_N } \ket{\hat\psi_{\vec y_N}},
$$
and where $(\vec g_1,\dots,\vec g_N)$ are the columns of the gadget matrix $\vec G \in \Z_q^{(m+1)\times N}$ in Eq.~\eqref{eq:matrix-G}.
Given our choice,
$$
\sqrt{8(m+1)N}\leq \alpha q \leq \frac{q}{\sqrt{8}(m+1)\cdot  N\cdot (N+1)^L},
$$
\expref{Corollary}{cor:switching} implies that the Fourier transform of $\ket{\ct}$ is within negligible trace distance of the state
$$
\ket{\widehat{\ct}} =\sum_{\substack{\vec x_1 \in \Z_q^{m+1}:\\ \vec A \vec x_1 = \vec y_1 \Mod{q}}}\rho_{\frac{1}{\alpha}}(\vec x_1) \, \omega_q^{\ip{\vec x_1,x \cdot \vec g_1}} \,\ket{\vec x_1} \otimes \dots \otimes \sum_{\substack{\vec x_N \in \Z_q^{m+1}:\\ \vec A \vec x_N = \vec y_N \Mod{q}}}\rho_{\frac{1}{\alpha}}(\vec x_N) \, \omega_q^{\ip{\vec x_N,x \cdot \vec g_N}} \,\ket{\vec x_N}.
$$
From \expref{Lemma}{lem:tailboundII}, it follows that the distribution of computational basis measurement outcomes is within negligible total variation distance of the sample
$$
\pi = (\pi_1,\dots,\pi_N) \sim D_{\Lambda_q^{\vec y_1}(\vec A),\frac{1}{\sqrt{2}\alpha}} \times \dots \times D_{\Lambda_q^{\vec y_N}(\vec A),\frac{1}{\sqrt{2}\alpha}},
$$
where $\| \pi_i\| \leq \sqrt{m+1}/\sqrt{2}\alpha$ for every $i \in [N]$. This proves the claim.
\end{proof}

We now show that our scheme $\DualFHECD$ in \expref{Construction}{cons:FHE-cd} is \emph{extractable} according to \expref{Definition}{def:extractable-FHE}.

\begin{lemma}[Extractability of $\DualFHECD$]
Let $\lambda \in \N$ be the security parameter. Let $n \in \N$, let $q\geq 2$ be a prime modulus and $m \geq 2 n \log q$. Let $N = (n+1) \cdot \lceil \log q \rceil$ and let $L$ be an upper bound on the depth of the polynomial-sized Boolean circuit which is to be evaluated. Let $\alpha \in (0,1)$ be a noise ratio with
$$
\sqrt{8(m+1)N}\leq \alpha q \leq \frac{q}{\sqrt{8}(m+1)\cdot  N\cdot (N+1)^L}.
$$
Then, the Dual-Regev $\FHE$ scheme $\Sigma=\DualFHECD$ with certified deletion in \expref{Construction}{cons:FHE-cd} is extractable. In other words, for any efficiently computable circuit $C: \bit^\ell \rightarrow \bit$ and any input $x\in \bit^\ell$:
    $$
\Pr\left[ y \neq C(x_1,\dots,x_\ell)
\, \bigg| \, \substack{
(\pk,\sk) \leftarrow \KeyGen(1^\lambda,1^L)\\
(\vk,\ket{\ct}) \leftarrow \Enc(\pk,x)\\
(\ket{\widetilde{\ct}},t_C)\from \Eval(C,\ket{\ct}, \pk)\\
(\rho,y)\from \Extract\ip{\algo S(\ket{\widetilde{\ct}},t_C),\algo R(\sk)}
}\right] \leq \negl(\lambda), \quad \text{and}
$$
$$
\Pr \left[\Vrfy(\vk,\pi) = \bot \, \bigg| \, \substack{
(\pk,\sk) \leftarrow \KeyGen(1^\lambda,1^L)\\
(\vk,\ket{\ct}) \leftarrow \Enc(\pk,x)\\
(\ket{\widetilde{\ct}},t_C)\from \Eval(C,\ket{\ct}, \pk)\\
(\rho,y)\from \Extract\ip{\algo S(\ket{\widetilde{\ct}},t_C),\algo R(\sk)}\\
\pi \from \Del(\rho)
}\right] \, \leq \, \negl(\lambda).\quad\quad
$$
\end{lemma}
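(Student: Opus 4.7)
The plan is to decompose the claim into its two parts (decryption correctness and verification correctness of the extracted state), and to treat each via the Rewinding Lemma together with the already-established correctness lemmas for $\DualFHECD$. The core observation is that the only non-trivial behaviour introduced by $\Extract\ip{\algo S(\ket{\widetilde{\ct}},t_C),\algo R(\sk)}$ beyond the ordinary $\Eval/\Dec$ pipeline is a measurement on system $C_{\mathsf{out}}$ followed by uncomputation, and this is precisely what the ``Almost As Good As New'' Lemma (\expref{Lemma}{lem:almost}) controls.

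First I would handle the output bit $y$. After $\Eval$ produces $(\ket{\widetilde{\ct}},t_C)$, the state $\ket{\widetilde{\ct}}$ in system $C_{\mathsf{out}}$ is an encryption of $C(x)$ up to cumulative Gaussian error of magnitude at most $\alpha q\sqrt{(m+1)N/2}\cdot(N+1)^L < q/(4\sqrt{(m+1)N})$, by our choice of $\alpha$. Exactly as in the proof of \expref{Lemma}{lem:FHE-CD-correctness-decryption}, a computational-basis measurement of $C_{\mathsf{out}}$ followed by $\DualFHE.\Dec_{\sk}$ yields $C(x)$ with probability $1-\eps(\lambda)$ for some negligible $\eps$. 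Because $U_{\DualFHE.\Dec_{\sk}}$ is the coherent version of this classical procedure, measuring the auxiliary register $M$ in Step~2 of \expref{Protocol}{prot:rewinding} returns $y=C(x)$ with probability at least $1-\eps(\lambda)$.

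Next I would handle the verification condition. By the Rewinding Lemma (\expref{Lemma}{lem:rewinding}), once $\Extract\ip{\algo S(\ket{\widetilde{\ct}},t_C),\algo R(\sk)}$ is complete the sender holds a state $\rho$ in system $C_{\mathsf{in}}$ with
\begin{equation*}
\|\rho-\proj{\ct}\|_{\tr}\leq \sqrt{\eps(\lambda)} = \negl(\lambda),
\end{equation*}
where $\ket{\ct}$ is the initial ciphertext output by $\Enc(\pk,x)$. The deletion procedure $\Del$ is a Fourier-basis measurement followed by checking $\vec A^T\pi_i=\vec y_i\Mod{q}$ and $\|\pi_i\|\leq \sqrt{m+1}/(\sqrt{2}\alpha)$ for each $i\in[N]$; this can be expressed as a $\POVM$ element $\boldsymbol{\Pi}_{\ok}$ with $0\leq \boldsymbol{\Pi}_{\ok}\leq \id$. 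By \expref{Lemma}{lem:FHE-CD-correctness-verification}, $\tr[\boldsymbol{\Pi}_{\ok}\proj{\ct}]\geq 1-\negl(\lambda)$. Applying \expref{Lemma}{lem:TD_inequalities} gives
\begin{equation*}
\bigl|\tr[\boldsymbol{\Pi}_{\ok}\,\rho] - \tr[\boldsymbol{\Pi}_{\ok}\proj{\ct}]\bigr|\leq \|\rho-\proj{\ct}\|_{\tr}\leq \negl(\lambda),
\end{equation*}
so $\Pr[\Vrfy(\vk,\pi)=\top]\geq 1-\negl(\lambda)$ when $\pi\from\Del(\rho)$, as required.

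The main obstacle I anticipate is bookkeeping the trace-distance losses across the steps of \expref{Protocol}{prot:rewinding} — specifically making sure that the unitary invariance of the trace norm is invoked correctly when the sender reverses the sequence $U_{t_C}$ of homomorphic $\NAND$ gates, and that the negligibility of the decryption failure probability from \expref{Lemma}{lem:FHE-CD-correctness-decryption} propagates through the square root introduced by \expref{Lemma}{lem:almost}. Both issues are already absorbed in \expref{Lemma}{lem:rewinding}, so the remaining work is essentially to quote that lemma and apply \expref{Lemma}{lem:TD_inequalities}; no new technical ingredient is needed.
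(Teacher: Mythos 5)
Your proposal is correct and takes essentially the same approach as the paper's proof: the output bit $y$ is handled by reducing to \expref{Lemma}{lem:FHE-CD-correctness-decryption} via the coherent decryption step of \expref{Protocol}{prot:rewinding}, and the verification property by combining the Rewinding Lemma (\expref{Lemma}{lem:rewinding}) with \expref{Lemma}{lem:FHE-CD-correctness-verification}. Your explicit invocation of \expref{Lemma}{lem:TD_inequalities} to transfer the acceptance probability from $\proj{\ct}$ to $\rho$ merely makes precise a step the paper's proof states as following immediately from verification correctness, so this is a clarification rather than a different route.
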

\begin{proof}
Let $C: \bit^\ell \rightarrow \bit$ be an efficiently computable circuit and let $x\in \bit^\ell$ be any input.
Let $(\rho,y)\from \Extract\ip{\algo S(\ket{\widetilde{\ct}},t_C),\algo R(\sk)}$ denote the outcome of the interactive protocol between the sender $\algo S$ and the receiver $\algo R$, where $(\ket{\widetilde{\ct}},t_C)\from \Eval(C,\ket{\ct}, \pk)$ is the post-evaluation state and $\ct \from \Enc(\pk,x)$ is the initial ciphertext for $(\pk,\sk) \from \KeyGen(1^\lambda)$. Recall that the receiver $\algo R$ reversibly performs the decryption procedure $\Dec$ (with the secret key $\sk$ hard-coded) during the execution of the protocol $\Pi = \ip{\algo S(\ket{\widetilde{\ct}},t_C),\algo R(\sk)}$ in \expref{Protocol}{prot:rewinding}. Therefore, it follows that the measurement outcome $y$ is equal to $C(x_1,\dots,x_\ell)$ with overwhelming probability due \expref{Lemma}{lem:FHE-CD-correctness-decryption}. This shows the first property.

To show the second property, we can use the Rewinding Lemma (\expref{Lemma}{lem:rewinding}) to argue that after the interactive protocol  $\Pi = \ip{\algo S(\widetilde{\ct},t_C),\algo R(\sk)}$ between the sender $\algo S$ and receiver $\algo R$ is complete, the sender $\algo S$ is in possession of a quantum state $\rho$ in system $C_{\mathsf{in}}$ that satisfies
$$
\|\rho- \proj{\ct}\|_\tr \leq \negl(\lambda).
$$
Therefore, the claim follows immediately from the verification correctness of $\Sigma$ shown in \expref{Lemma}{lem:FHE-CD-correctness-verification}.
\end{proof}

\subsection{Proof of security}\label{sec:FHE-CD-security}

Let us now analyze the security of our $\FHE$ scheme with certified deletion in \expref{Construction}{cons:FHE-cd}. Note that the results in this section all essentially carry over from \expref{Section}{sec:DualRegev-PKE-CD}, where we analyzed the security of our Dual-Regev $\PKE$ scheme with certified deletion.

\paragraph{$\INDCPA$ security of $\DualFHECD$.}
We first prove that our  scheme $\FHECD$ in \expref{Construction}{cons:FHE-cd}  satisfies the notion $\INDCPA$ security according to \expref{Definition}{def:ind-cpa}. The proof is identical to the proof of $\INDCPA$-security of our $\DualPKE$ scheme in \expref{Theorem}{thm:Dual-Regev-PKE-CPA}. We add it for completeness.

\begin{theorem}\label{thm:FHECD-CPA} Let $n \in \N$, let $q \geq 2$ be a modulus, let $m \geq 2n \log q$ and let $N = (n+1)\lceil\log q \rceil$, each parameterized by the security parameter $\lambda \in \N$. Let $\alpha \in (0,1)$ be a noise ratio parameter such that $\sqrt{8(m+1)N} \leq \frac{1}{\alpha} \leq \frac{q}{\sqrt{8(m+1)N}}$. Then, the scheme $\DualFHECD$ in \expref{Construction}{cons:FHE-cd} is $\INDCPA$-secure assuming the quantum hardness of (decisional) $\LWE_{n,q,\beta q}^{(m+1)N}$, for any $\beta \in (0,1)$ with $\alpha/\beta= \lambda^{\omega(1)}$.
\end{theorem}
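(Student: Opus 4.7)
The plan is to reduce to the single-column analysis already carried out for the Dual-Regev $\PKE$ scheme in \expref{Theorem}{thm:Dual-Regev-PKE-CPA} via a standard hybrid argument over the $N$ columns of the gadget matrix. Concretely, an $\DualFHECD$ ciphertext is the tensor product
$$
\ket{\ct_b} \;\leftarrow\; \vec X_q^{m_b \cdot \vec g_1}\ket{\psi_{\vec y_1}} \otimes \dots \otimes \vec X_q^{m_b \cdot \vec g_N}\ket{\psi_{\vec y_N}},
$$
where each $(\ket{\psi_{\vec y_i}},\vec y_i) \leftarrow \mathsf{GenPrimal}(\vec A^T,1/\alpha)$ is generated independently from the same public key $\pk = \vec A \in \Z_q^{(m+1)\times n}$. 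First I would observe that, by the Leftover Hash Lemma (\expref{Lemma}{lem:LHL}), the distribution of $\vec A = [\bar{\vec A}\mid\bar{\vec A}\cdot\bar{\vec x}]^T$ is within negligible total variation distance of uniform over $\Z_q^{(m+1)\times n}$, and by \expref{Lemma}{lem:full-rank} its rows generate $\Z_q^n$ with overwhelming probability, so the hypotheses of \expref{Theorem}{thm:pseudorandom-SLWE} are met.

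Next I would define, for each $i \in \{0,1,\dots,N\}$, a hybrid distribution $\mathbf{H}_i$ in which the first $i$ tensor factors of the ciphertext are replaced by the ``random-phase'' state
$$
\sum_{\vec u \in \Z_q^{m+1}} \omega_q^{-\ip{\vec u,\vec x_0^{(j)}}}\ket{\vec u}, \qquad \vec x_0^{(j)} \sim D_{\Lambda_q^{\vec y_j}(\vec A^T),\,\tfrac{1}{\sqrt 2\alpha}},
$$
while the remaining $N-i$ factors retain the form $\vec X_q^{m_b \cdot \vec g_j}\ket{\psi_{\vec y_j}}$. The endpoints are $\mathbf{H}_0 = \ket{\ct_b}$ and $\mathbf{H}_N$, which is manifestly independent of $b \in \bit$ because the state in each factor has been replaced by a shift-invariant object that does not depend on $m_b \cdot \vec g_j$ (the action of $\vec X_q^{m_b \cdot \vec g_j}$ on the random-phase state only modifies the global phase inside the superposition by a factor of $\omega_q^{-\ip{m_b \vec g_j, \vec x_0^{(j)}}}$, which is a scalar).

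For the indistinguishability of consecutive hybrids $\mathbf{H}_{i-1}$ and $\mathbf{H}_i$, I would invoke \expref{Theorem}{thm:pseudorandom-SLWE} on the $i$-th tensor factor. Since the noise ratio satisfies $1/\alpha = \sigma \cdot \lambda^{\omega(1)}$ with $\sigma = 1/\beta$ for any $\beta$ with $\alpha/\beta=\lambda^{\omega(1)}$, the relative noise magnitude condition of \expref{Theorem}{thm:pseudorandom-SLWE} is satisfied, and the primal Gaussian state $\ket{\psi_{\vec y_i}}$ is computationally indistinguishable from the random-phase state under $\LWE_{n,q,\beta q}^{m+1}$. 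The $\vec X_q^{m_b \cdot \vec g_i}$ shift in front is a fixed efficient unitary applied by the reduction itself, so computational indistinguishability is preserved. A standard $N = \poly(\lambda)$-step hybrid therefore yields
$$
\Adv_{\Sigma,\algo A}(\lambda) \;\leq\; N \cdot \negl(\lambda) \;\leq\; \negl(\lambda),
$$
establishing $\INDCPA$ security.

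The one subtlety -- and the only nontrivial point beyond the $\PKE$ case -- is that all $N$ primal states share the \emph{same} public matrix $\vec A$, so the intermediate hybrids involve one LWE-superposition alongside many independently-sampled Dual-Regev-style states with respect to the same $\vec A$. This is handled by the $N$-fold version of the pseudorandomness statement, which follows from \expref{Theorem}{thm:pseudorandom-SLWE} by a direct hybrid argument that re-uses the challenge $\vec A$ across invocations (the reduction simulates the other $N-1$ factors internally since $\mathsf{GenPrimal}(\vec A^T,1/\alpha)$ is a publicly-efficient procedure given $\vec A$). No reliance on \expref{Conjecture}{conj:SGC} is needed here, since the adversary never receives the secret key in the $\INDCPA$ game.
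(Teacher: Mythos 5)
Your proposal is correct and follows essentially the same route as the paper: the paper's proof likewise reduces to \expref{Theorem}{thm:pseudorandom-SLWE}, writing the ciphertext in matrix form and replacing the whole tensor product with the $b$-independent random-phase state in a single step, so your explicit column-by-column hybrid merely spells out the $N$-fold extension (with shared matrix $\vec A$) that the paper leaves implicit. One small slip worth fixing: the parameter roles in your noise condition are inverted --- it should read $1/\beta = \sigma \cdot \lambda^{\omega(1)}$ with $\sigma = 1/\alpha$ the width of the primal states, not $1/\alpha = \sigma \cdot \lambda^{\omega(1)}$ with $\sigma = 1/\beta$ --- although your actual use of the hypothesis $\alpha/\beta = \lambda^{\omega(1)}$ and the resulting $\LWE_{n,q,\beta q}$ invocation are consistent with the theorem, so this does not affect the soundness of the argument.
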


\begin{proof}
Let $\Sigma = \DualFHECD$. We need to show that, for any $\QPT$ adversary $\algo A$, it holds that
$$
\Adv_{\Sigma,\algo A}(\lambda) := |\Pr[\Exp^{\mathsf{ind\mbox{-}cpa}}_{\Sigma,\algo A,\lambda}(0)=1] - \Pr[\Exp^{\mathsf{ind\mbox{-}cpa}}_{\Sigma,\algo A,\lambda}(1) = 1] |
 \leq \negl(\lambda).$$
Consider the experiment $\Exp^{\mathsf{ind\mbox{-}cpa}}_{\Sigma,\algo A,\lambda}(b)$ between the adversary $\algo A$ and a challenger taking place as follows:
\begin{enumerate}
    \item The challenger generates a pair $(\pk,\sk) \from \KeyGen(1^\lambda)$, and sends $\pk$ to $\algo A$.
    \item $\algo A$ sends a distinct plaintext pair $(m_0,m_1) \in \bit^\ell \times \bit^\ell$ to the challenger.
    \item The challenger computes $(\vk,\ct_b) \leftarrow \DualFHECD.\Enc(\pk,m_b)$, and sends $\ket{\ct_b}$ to $\algo A$.
    \item $\algo A$ outputs a guess $b' \in \bit$, which is also the output of the experiment.
\end{enumerate}
Recall that the procedure $\Enc(\pk,m_b)$ outputs a pair $(\vk,\ket{\ct_b})$, where 
$$
\left(\vec A \in \Z_q^{(m+1) \times n},(\vec y_1|\dots|\vec y_N)  \in \Z_q^{n \times N} \right) \leftarrow \vk
$$
is the verification key and where the ciphertext $\ket{\ct_b}$ is within negligible trace distance of
\begin{align}\label{eq:ct-DualRegevPKECD-security}
\sum_{\vec S \in \Z_q^{n \times N}} \sum_{\vec E \in \Z_q^{(m+1)\times N}} \rho_{\alpha q}(\vec E) \, \omega_q^{-\Tr[\vec S^T \vec Y]} \ket{\vec A\cdot \vec S + \vec E + m_b \cdot \vec G \Mod{q}}.
\end{align}
Here, $\vec Y \in \Z_q^{n\times N}$ is the matrix composed of the columns $\vec y_1,\dots,\vec y_N$.
Let $\beta \in (0,1)$ be any parameter with $\alpha/\beta= \lambda^{\omega(1)}$. Then, it follows from
\expref{Theorem}{thm:pseudorandom-SLWE} that, under the (decisional) $\LWE_{n,q,\beta q}^{(m+1)N}$ assumption, $\ket{\ct_b}$ is computationally indistinguishable from the state
\begin{align}\label{eq:random-state}
\sum_{\vec U \in \Z_q^{(m+1)\times N}} \omega_q^{\Tr[\vec U^T \bar{\vec X}]}\ket{\vec U}, \quad \,\, \bar{\vec X} = (\bar{\vec x}_1,\dots,\bar{\vec x}_N)\sim D_{\Lambda_q^{\vec y_1}(\vec A),\frac{1}{\sqrt{2}\alpha}} \times \dots \times D_{\Lambda_q^{\vec y_N}(\vec A),\frac{1}{\sqrt{2}\alpha}}.
\end{align}
Here $(\bar{\vec x}_1,\dots,\bar{\vec x}_N)$ refer to the columns of the matrix $ \bar{\vec X} \in \Z_q^{(m+1)\times N}$.
Finally, because the state in Eq.~\eqref{eq:random-state} is completely independent of the bit $b \in \bit$, it follows that
$$
\Adv_{\Sigma,\algo A}(\lambda) := |\Pr[\Exp^{\mathsf{ind\mbox{-}cpa}}_{\Sigma,\algo A,\lambda}(0)=1] - \Pr[\Exp^{\mathsf{ind\mbox{-}cpa}}_{\Sigma,\algo A,\lambda}(1) = 1] |
 \leq \negl(\lambda).$$
This proves the claim.
\end{proof}

\paragraph{$\INDCPACD$ security of \DualFHECD.} 
Let us now analyze the security of our Dual-Regev homomorphic encryption scheme $\DualFHECD$ in \expref{Construction}{cons:FHE-cd}. We prove that it satisfies \emph{certified deletion security} assuming the \emph{Strong Gaussian-Collapsing (SGC) Conjecture} (see \expref{Conjecture}{conj:SGC}). This is a strengthening of the Gaussian-collapsing property which we proved under the (decisional) \LWE assumption (see \expref{Theorem}{thm:GaussCollapse}). The proof is similar to the proof of \expref{Theorem}{thm:Dual-Regev-PKE-CD}. We add it for completeness.

\begin{theorem}\label{thm:FHE-CD-security}Let $\lambda \in \N$ be the security parameter. Let $n \in \N$, let $q\geq 2$ be a prime modulus and $m \geq 2 n \log q$. Let $N = (n+1) \cdot \lceil \log q \rceil$ be an integer and let $L$ be an upper bound on the depth of the polynomial-sized Boolean circuit which is to be evaluated. Let $\alpha \in (0,1)$ be a noise ratio such that$$
\sqrt{8(m+1)N}\leq \alpha q \leq \frac{q}{\sqrt{8}(m+1)\cdot  N\cdot (N+1)^L}.
$$
Then, the Dual-Regev homomorphic
encryption scheme $\DualFHECD$ in \expref{Construction}{cons:FHE-cd} is $\INDCPACD$-secure assuming the Strong Gaussian-Collapsing property $\mathsf{SGC}_{n,(m+1),q,\frac{1}{\alpha}}^N$ from \expref{Conjecture}{conj:SGC}.
\end{theorem}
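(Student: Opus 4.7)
The plan is to adapt the hybrid argument used to prove \expref{Theorem}{thm:Dual-Regev-PKE-CD} to the homomorphic setting. Although the ciphertext in \expref{Construction}{cons:FHE-cd} is more elaborate (it is a tensor product of $N \cdot \ell$ Gaussian superpositions, one per column of the gadget matrix $\vec G$ and per plaintext bit), the certified deletion procedure still consists of an independent Fourier basis measurement on each component. This structural parallelism with the $\PKECD$ scheme strongly suggests that the same hybrid strategy applies, now using the $N$-fold variant $\mathsf{SGC}^{N\ell}_{n,m+1,q,1/\alpha}$ of the strong Gaussian-collapsing conjecture (which, as observed in the remark following \expref{Conjecture}{conj:SGC}, is implied by $\mathsf{SGC}_{n,m+1,q,1/\alpha}$ via a standard polynomial hybrid).

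First, I would define $\mathbf{H_0}$ to be $\Exp^{\mathsf{he\mbox{-}cert\mbox{-}del}}_{\Sigma,\algo A,\lambda}(0)$ with $\Sigma = \DualFHECD$, and then move through a sequence of hybrids in direct analogy with the seven hybrids in the proof of \expref{Theorem}{thm:Dual-Regev-PKE-CD}. The first step is to replace, for every bit $i \in [\ell]$ of the plaintext and every column index $j \in [N]$, the primal ciphertext component $\vec X_q^{m_{0,i} \cdot \vec g_j} \ket{\psi_{\vec y_{i,j}}}$ by its Fourier dual $\vec Z_q^{m_{0,i} \cdot \vec g_j} \ket{\hat\psi_{\vec y_{i,j}}}$, invoking \expref{Corollary}{cor:switching} (which incurs only negligible trace distance per component, and hence negligible overall by the triangle inequality since $N\ell = \poly(\lambda)$). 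Next, using the $N\ell$-fold variant of \expref{Conjecture}{conj:SGC}, I would collapse all $N\ell$ dual Gaussian superpositions to classical Gaussian samples, then (using the invariance under computational basis measurements to commute the Pauli-$\vec Z$ operations through) switch from the $\vec Z$-phases encoding $m_0$ to those encoding $m_1$, re-expand via a second application of the $N\ell$-fold $\mathsf{SGC}$, and finally revert to the primal basis. Because deletion verification $\Vrfy$ checks each $\pi_{i,j}$ independently against its corresponding syndrome $\vec y_{i,j}$, the trapdoor $\vec t = (-\bar{\vec x},1)$ revealed to $\algo A$ in the experiment is exactly the trapdoor promised in the $\mathsf{SGC}$ challenger's final message, so the reduction can simulate the post-deletion phase faithfully.

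The main technical subtlety, and the part I would treat most carefully, is ensuring that the reduction to the $N\ell$-fold $\mathsf{SGC}$ game actually commutes with the tensor-product structure of the FHE ciphertext. Concretely, the challenger of $\mathsf{SGC}^{N\ell}$ hands the reduction a single public matrix $\vec A \in \Z_q^{n \times (m+1)}$ together with $N\ell$ independent Gaussian states $\ket{\hat\psi_{\vec y_{1,1}}}\otimes \dots \otimes \ket{\hat\psi_{\vec y_{\ell,N}}}$ (or their measured counterparts), and only reveals the trapdoor $\vec t$ after \emph{all} of the $N\ell$ short-preimage witnesses have been supplied. The reduction must accordingly hold back the trapdoor until $\algo A$ produces a full certificate $\pi = (\pi_{i,j})_{i,j}$ passing the per-column $\Vrfy$ checks, at which point it forwards all $N\ell$ witnesses in one batch to the $\mathsf{SGC}^{N\ell}$ challenger to obtain $\vec t$. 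A clean way to organize this is via an intermediate sequence of $N\ell$ sub-hybrids that collapses one tensor factor at a time, with each transition justified by $\mathsf{SGC}_{n,m+1,q,1/\alpha}$ applied to a single coordinate (with the remaining $N\ell-1$ factors simulated internally using a freshly sampled $\bar{\vec x}$, which is consistent since the public key $\vec A$ is the same across all factors).

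The hardest part, I expect, is the bookkeeping around the Pauli commutation step: when switching from $\vec Z_q^{m_{0,i} \cdot \vec g_j}$ to $\vec Z_q^{m_{1,i} \cdot \vec g_j}$ on an already computational-basis-measured register, the resulting classical distribution picks up only a global phase $\omega_q^{\ip{\vec x_{i,j}, (m_{1,i}-m_{0,i})\vec g_j}}$ that is unobservable, making the $\mathbf{H_3} \to \mathbf{H_4}$ and $\mathbf{H_5} \to \mathbf{H_6}$ analogues trivial; the two nontrivial steps are the two applications of $\mathsf{SGC}^{N\ell}$ bracketing this switch, and verifying that the trapdoor $\vec t = (-\bar{\vec x},1)$ used by $\DualFHECD$ coincides exactly in distribution with the trapdoor $(\bar{\vec x},-1)$ (up to a global sign flip that does not affect shortness) supplied by the $\mathsf{SGC}$ challenger. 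Once these correspondences are established, concluding $\Adv^{\mathsf{he\mbox{-}cert\mbox{-}del}}_{\Sigma,\algo A}(\lambda) \leq \negl(\lambda)$ follows by a triangle inequality across the (polynomially many) hybrids.
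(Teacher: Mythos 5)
Your proposal follows essentially the same route as the paper's proof: the actual argument for \expref{Theorem}{thm:FHE-CD-security} is exactly the hybrid chain $\mathbf{H_0},\dots,\mathbf{H_7}$ from \expref{Theorem}{thm:Dual-Regev-PKE-CD}, transported to the $N$-fold tensor product of Gaussian states via \expref{Corollary}{cor:switching} at the endpoints and two invocations of the $N$-fold conjecture $\mathsf{SGC}^N_{n,m+1,q,\frac{1}{\alpha}}$ bracketing the (trivial, since the register is already measured) switch of the Pauli phases $\vec Z_q^{\vec g_1} \otimes \dots \otimes \vec Z_q^{\vec g_N}$, just as you describe. One small slip in your aside: in the one-factor-at-a-time sub-hybrids deriving $\mathsf{SGC}^N$ from $\mathsf{SGC}$, the remaining factors cannot be simulated ``using a freshly sampled $\bar{\vec x}$'' (all factors must share the challenger's matrix $\vec A$, whose $\bar{\vec x}$ is unknown to the reduction); instead the reduction simply runs $\mathsf{GenDual}(\vec A,\sigma)$ on the challenger's $\vec A$, which needs no trapdoor --- this is the content of the remark following \expref{Conjecture}{conj:SGC}, which the paper invokes wholesale rather than re-proving inside the theorem.
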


\begin{proof}
Let $\Sigma = \DualFHECD$. We need to show that, for any $\QPT$ adversary $\algo A$, it holds that
$$
\Adv_{\Sigma,\algo A}^{\mathsf{he}\mbox{-}\mathsf{cert}\mbox{-}\mathsf{del}}(\lambda) := |\Pr[\Exp^{\mathsf{he}\mbox{-}\mathsf{cert}\mbox{-}\mathsf{del}}_{\Sigma,\algo A,\lambda}(0)=1] - \Pr[\Exp^{\mathsf{he}\mbox{-}\mathsf{cert}\mbox{-}\mathsf{del}}_{\Sigma,\algo A,\lambda}(1) = 1] |
 \leq \negl(\lambda).$$
We consider the following sequence of hybrids:

\begin{description}
\item $\mathbf{H_0:}$ This is the experiment $\Exp^{\mathsf{he\mbox{-}cert\mbox{-}del}}_{\Sigma,\algo A,\lambda}(0)$ between $\algo A$ and a challenger:
\begin{enumerate}
\item The challenger samples a random matrix $\bar{\vec A} \rand \Z_q^{n\times m}$ and a vector $\bar{\vec x} \rand \bit^{m}$ and chooses $\vec A = [\bar{\vec A} | \bar{\vec A} \cdot \bar{\vec x} \Mod{q}]^T$.
The challenger chooses the secret key $\sk \leftarrow (-\bar{\vec x}, 1) \in \Z_q^{m+1}$ and the public key $\pk \leftarrow \vec A \in \Z_q^{(m+1) \times n}$.

\item $\algo A$ sends a distinct plaintext pair $(m_0,m_1) \in \bit \times \bit$ to the challenger. (Note: Without loss of generality, we can just assume that $m_0 = 0$ and $m_1=1$).
    
\item The challenger runs $(\ket{\psi_{\vec y_i}},\vec y_i) \leftarrow \mathsf{GenPrimal}(\vec A^T,\sigma)$ in \expref{Algorithm}{alg:GenPrimal}, for $i \in [N]$, and outputs
$$
\left(\vk \leftarrow (\vec A \in \Z_q^{(m+1) \times n},(\vec y_1|\dots|\vec y_N)  \in \Z_q^{n \times N}), \quad \ket{\ct_0} \leftarrow \ket{\psi_{\vec y_1}} \otimes \dots \otimes \ket{\psi_{\vec y_N}} \right).
$$

\item At some point in time, $\algo A$ returns a certificate $\pi = (\pi_1,\dots,\pi_N)$ to the challenger.

\item The challenger outputs $\top$, if $\vec A^T \cdot \pi_i = \vec y_i \Mod{q}$ and $\| \pi_i \| \leq \sqrt{m+1}/\sqrt{2}\alpha$ for $i \in [N]$, and outputs $\bot$, otherwise. If $\pi$ passes the test with outcome $\top$, the challenger sends $\sk$ to $\algo A$.
    
\item $\algo A$ outputs a guess $b' \in \bit$, which is also the output of the experiment.
\end{enumerate}

\item $\mathbf{H_1:}$ This is same experiment as in $\mathbf{H_0}$, except that (in Step 3) the challenger prepares the ciphertext in the Fourier basis rather than the standard basis. In other words, $\algo A$ receives the pair
$$
\left(\vk \leftarrow (\vec A \in \Z_q^{(m+1) \times n},(\vec y_1,\dots,\vec y_N)  \in \Z_q^{n \times N}), \quad \ket{\ct_0}  \leftarrow \FT_q \ket{\psi_{\vec y_1}} \otimes \dots \otimes \FT_q \ket{\psi_{\vec y_N}} \right).
$$

\item $\mathbf{H_2:}$ This experiment is an $N$-fold variant of $\mathsf{StrongGaussCollapseExp}_{\algo H,\algo D,\lambda}(0)$ in \expref{Conjecture}{conj:SGC}:
\begin{enumerate}
\item The challenger samples a random matrix $\bar{\vec A} \rand \Z_q^{n\times m}$ and a vector $\bar{\vec x} \rand \bit^{m}$ and chooses $\vec A = [\bar{\vec A} | \bar{\vec A} \cdot \bar{\vec x} \Mod{q}]$ and $\vec t = (-\bar{\vec x}, 1) \in \Z_q^{m+1}$.

\item The challenger runs $(\ket{\hat\psi_{\vec y_i}},\vec y_i) \leftarrow \mathsf{GenDual}(\vec A^T,\sigma)$ in \expref{Algorithm}{alg:GenDual}, for $i \in [N]$, and sends the following tiplet to the adversary $\algo A$:
$$
\left(\ket{\hat\psi_{\vec y_1}} \otimes \dots \otimes \ket{\hat\psi_{\vec y_N}}, \quad \vec A^T \in \Z_q^{n \times (m+1)},\quad \vec Y=(\vec y_1|\dots|\vec y_N)  \in \Z_q^{n \times N} \right).
$$

\item At some point in time, $\algo A$ returns a certificate $\pi$ to the challenger.

\item The challenger outputs $\top$, if $\vec A^T \cdot \pi_i = \vec y_i \Mod{q}$ and $\| \pi_i \| \leq \sqrt{m+1}/\sqrt{2}\alpha$ for $i \in [N]$, and outputs $\bot$, otherwise. If $\pi$ passes the test with outcome $\top$, the challenger sends $\sk$ to $\algo A$.
    
\item $\algo A$ outputs a guess $b' \in \bit$, which is also the output of the experiment.
\end{enumerate}

\item $\mathbf{H_3:}$ This is an $N$-fold variant of the experiment in $\mathsf{StrongGaussCollapseExp}_{\algo H,\algo D,\lambda}(1)$ in \expref{Conjecture}{conj:SGC}; it is the same as $\mathbf{H_2}$, except that the states $\ket{\hat\psi_{\vec y_1}} \otimes \dots \otimes \ket{\hat\psi_{\vec y_N}}$ (in Step 2) are measured in the computational basis before they are sent to $\algo A$. 

\item $\mathbf{H_4:}$ This is same experiment as $\mathbf{H_3}$, except that (in Step 2) the challenger additionally applies the Pauli operators $\vec Z_q^{\vec g_1} \otimes \dots \otimes \vec Z_q^{\vec g_N}$ to the states $\ket{\hat\psi_{\vec y_1}} \otimes \dots \otimes \ket{\hat\psi_{\vec y_N}}$ before they are measured in the computational basis, where $(\vec g_1,\dots,\vec g_N)$ are the rows of the gadget matrix $\vec G \in \Z_q^{(m+1)\times N}$ in Eq.~\eqref{eq:matrix-G}.

\item $\mathbf{H_5:}$ This is same experiment as $\mathbf{H_4}$, except that (in Step 2) $\algo A$ receives the triplet 
$$
\left(\vec Z_q^{ \vec g_1}\ket{\hat\psi_{\vec y_1}} \otimes \dots \otimes \vec Z_q^{\vec g_N}\ket{\hat\psi_{\vec y_N}},\quad\vec A^T \in \Z_q^{n \times (m+1)},\quad\vec y=(\vec y_1|\dots|\vec y_N) \in \Z_q^{n \times N} \right).
$$

\item $\mathbf{H_6:}$ This is same experiment as $\mathbf{H_5}$, except that (in Step 2) the challenger prepares the quantum states in the Fourier basis instead. In other words, $\algo A$ receives the triplet 
$$
\left(\FT_q^\dag \vec Z_q^{ \vec g_1}\ket{\hat\psi_{\vec y_1}} \otimes \dots \otimes \FT_q^\dag \vec Z_q^{\vec g_N}\ket{\hat\psi_{\vec y_N}},\quad\vec A^T \in \Z_q^{n \times (m+1)},\quad\vec y=(\vec y_1|\dots|\vec y_N) \in \Z_q^{n \times N} \right).
$$

\item $\mathbf{H_7:}$ This is the experiment $\Exp^{\mathsf{he\mbox{-}cert\mbox{-}del}}_{\Sigma,\algo A,\lambda}(1)$.
\end{description}

We now show that the hybrids are indistinguishable.

%% H0 vs H1
\begin{claim}
$$
 \Pr[\Exp^{\mathsf{he\mbox{-}cert\mbox{-}del}}_{\Sigma,\algo A,\lambda}(0)=1] = \Pr[\mathbf{H_1} = 1].$$
\end{claim}
\begin{proof}
Without loss of generality, we can assume that the challenger applies the inverse Fourier transform before sending the ciphertext to $\algo A$. Therefore, the success probabilities are identical in $\mathbf{H_0}$ and $\mathbf{H_1}$. 
\end{proof}

%% H2 vs H1
\begin{claim}
$$
\Pr[\mathbf{H_1} = 1] = \Pr[\mathbf{H_2} = 1].$$
\end{claim}
\begin{proof}
Because the challenger in $\mathbf{H_1}$ always sends the ciphertext $\ket{\ct_0}$ corresponding to $m_0=0$ to the adversary $\algo A$, the two hybrids $\mathbf{H_1}$ and $\mathbf{H_2}$ are identical.
\end{proof}

%% H3 vs H2
\begin{claim} Under the Strong Gaussian-Collapsing property $\mathsf{SGC}_{n,(m+1),q,\frac{1}{\alpha}}^N$, it holds that
$$
 | \Pr[\mathbf{H_2} = 1] - \Pr[\mathbf{H_3} = 1] |
 \leq \negl(\lambda).$$
\end{claim}
\begin{proof}
This follows from \expref{Conjecture}{conj:SGC}.
\end{proof}

%% H4 vs H3
\begin{claim}
$$
\Pr[\mathbf{H_3} = 1] = \Pr[\mathbf{H_4} = 1].$$
\end{claim}
\begin{proof}
Because the challenger measures the state $\ket{\hat\psi_{\vec y_1}} \otimes \dots \otimes \ket{\hat\psi_{\vec y_N}}$ in Step 2 in the computational basis, applying the phase operators $\vec Z_q^{\vec g_1} \otimes \dots \otimes \vec Z_q^{\vec g_N}$ before the measurement does not affect the outcome.
\end{proof}

%% H4 vs H5
\begin{claim} Under the Strong Gaussian-Collapsing property $\mathsf{SGC}_{n,(m+1),q,\frac{1}{\alpha}}^N$, it holds that
$$
 | \Pr[\mathbf{H_4} = 1] - \Pr[\mathbf{H_5} = 1] |
 \leq \negl(\lambda).$$
\end{claim}
\begin{proof}
This follows from \expref{Conjecture}{conj:SGC} since, without loss of generality, we can assume that the challenger applies the phase operators $\vec Z_q^{\vec g_1} \otimes \dots \otimes \vec Z_q^{\vec g_N}$ before sending the states $\ket{\hat\psi_{\vec y_1}} \otimes \dots \otimes \ket{\hat\psi_{\vec y_N}}$ to $\algo A$ as input.
\end{proof}

%% H5 vs H6
\begin{claim}
$$
\Pr[\mathbf{H_5} = 1] = \Pr[\mathbf{H_6} = 1].
$$
\end{claim}

\begin{proof}
Without loss of generality, we can assume that the challenger applies the Fourier transform to the state $\vec Z_q^{ \vec g_1}\ket{\hat\psi_{\vec y_1}} \otimes \dots \otimes \vec Z_q^{\vec g_N}\ket{\hat\psi_{\vec y_N}}$ before sending it to the adversary $\algo A$. Therefore, the success probabilities in $\mathbf{H_5}$ and $\mathbf{H_6}$ are identical. 
\end{proof}

%% H6 vs H7
\begin{claim}
$$
|\Pr[\mathbf{H_6} = 1] -\Pr[\Exp^{\mathsf{pk\mbox{-}cert\mbox{-}del}}_{\Sigma,\algo A,\lambda}(1)=1]| \leq \negl(\lambda).$$
\end{claim}
\begin{proof}
From \expref{Lemma}{lem:XZ-conjugation}, we have $\FT_q \vec X_q^{\vec v} = \vec Z_q^{\vec v} \FT_q$, for all $\vec v \in \Z_q^m$. Hence, in $\mathbf{H_6}$, we can instead assume that the challenger runs $(\ket{\psi_{\vec y_i}},\vec y_i) \leftarrow \mathsf{GenPrimal}(\vec A^T,1/\alpha)$ in \expref{Algorithm}{alg:GenPrimal}, for $i \in [N]$, and then sends the following to $\algo A$:
$$
\left(\vk \leftarrow (\vec A \in \Z_q^{(m+1)\times n}, (\vec y_1|\dots|\vec y_N)  \in \Z_q^{n \times N}), \quad \ket{\ct_1}  \leftarrow \vec X_q^{\vec g_1}\ket{\psi_{\vec y_1}} \otimes \dots \otimes \vec X_q^{\vec g_N}\ket{\psi_{\vec y_N}} \right).
$$
From \expref{Corollary}{cor:switching}, it follows that the states $\FT_q^\dag \vec Z_q^{\vec v} \ket{\hat\psi_{\vec y}}$ and $\vec X_q^{\vec v} \ket{{\psi_{\vec y}}}$ are within negligible trace distance, for all $\vec v \in \Z_q^m$. Because the challenger in $\mathbf{H_7}$ always sends $\ket{\ct_1}$ corresponding to $m_1=1$ to the adversary $\algo A$, it follows that the distinguishing advantage between $\mathbf{H_6}$ and $\mathbf{H_7}=\Exp^{\mathsf{he\mbox{-}cert\mbox{-}del}}_{\Sigma,\algo A,\lambda}(1)$ is negligible.
\end{proof} 
Because the hybrids $\mathbf{H_0}$ and $\mathbf{H_7}$ are indistinguishable, this implies that
$$
\Adv_{\Sigma,\algo A}^{\mathsf{he\mbox-}\mathsf{cert}\mbox{-}\mathsf{del}}(\lambda)\leq \negl(\lambda).$$
\end{proof}

\appendix

\bibliographystyle{alpha}
\bibliography{references}

\end{document}